\theoremstyle{plain}
\newtheorem{theorem}{Theorem}
\newtheorem{claim}[theorem]{Claim}
\newtheorem{assumption}[theorem]{Assumption}
\newtheorem{lemma}[theorem]{Lemma}
\newtheorem{proposition}[theorem]{Proposition}
\newtheorem{fact}[theorem]{Fact}
\newtheorem{definition}[theorem]{Definition}
\newtheorem{note}[theorem]{Note}
\newcommand{\sgn}{\mathrm{sgn}}
\newcommand{\GL}{\mathrm{GL}}
\renewcommand{\S}{\mathrm{S}}
\providecommand{\ket}[1]{\ensuremath{\left|#1\right\rangle}}
\providecommand{\bra}[1]{\ensuremath{\left\langle#1\right |}}
\providecommand{\proj}[1]{\ensuremath{\ket{#1}\!\bra{#1}}}
\providecommand{\braket}[2]{\ensuremath{\left\langle#1\middle|#2\right\rangle}}
\DeclareMathOperator{\Span}{span}
\newcommand{\lp}{\left}
\newcommand{\rp}{\right}
\newcommand{\snorm}[2]{\lp\|#2\rp\|_{#1}}
\DeclareMathOperator{\Var}{Var}
\DeclareMathOperator{\tr}{Tr}
\newcommand{\partition}{\vdash}
\DeclareMathOperator{\Img}{Img}
\renewcommand{\C}{\mathbbm C}
\newcommand{\abs}[1]{\lp | #1 \rp|}
\newcommand{\eps}{\epsilon}
\newcommand*{\defeq}{\stackrel{\text{def}}{:=}}
\newcommand{\swapc}{\mathrm{Swap}}
\newcommand{\swap}{\mathrm{Swap}_{o}}
\newcommand{\swapt}{\mathrm{Swap}_{o}}
\newcommand{\swapts}{\mathrm{OrderedSwap}_{o}}
\newcommand{\gp}{generic pre-processing}
\newcommand{\povm}{row symmetric measurement}
\newcommand{\Gp}{Generic Pre-Processing}
\newcommand{\Povm}{Row Symmetric Measurement}
\newcommand{\pisym}{\Pi_{\mathrm{sym}}^{(\lambda)}}
\newcommand{\pisyms}[1]{\Pi_{\mathrm{sym}}^{(#1)}}
\newcommand{\Un}[1]{ U^{\otimes #1} }
\newcommand{\Und}[1]{ U^{\dagger\otimes #1} }
\newcommand{\U}{\mathrm U} 
\newcommand{\leftComment}{\State $\triangleright\;$} 
\DeclareMathOperator{\rank}{rank}
\newcommand{\so}{S_1}
\newcommand{\st}{S_2}
\newcommand{\soo}{S_{11}}
\newcommand{\sot}{S_{12}}
\newcommand{\pisymt}{\Pi_{\mathrm{sym}}^{(2)}}
\newcommand{\pisch}{\Pi_{\textrm{Sch}}}
\newcommand{\us}{\chi}
\newcommand{\bigO}{ \mathcal{O} }
\newcommand{\N}{ \mathbb{N} }
\newcommand{\obs}{\mathrm{Obs}}
\newcommand{\sh}{\mathrm{S}}
\newcommand{\nc}{n_{\textrm{c}}}
\newcommand{\np}{n_{\textrm{p}}}
\newcommand{\iid}{{i.i.d.}\ }
\title{Improved classical shadows from local symmetries in the Schur basis}
\author{Daniel Grier\thanks{Department of Computer Science and Engineering and Department of Mathematics, UC San Diego. \texttt{dgrier@ucsd.edu}}  \and Sihan Liu\thanks{Department of Computer Science and Engineering, UCSD, California, CA 92092. \texttt{sil046@ucsd.edu}} \and Gaurav Mahajan\thanks{Institute for Foundations of Data Science, Yale University, Connecticut, CT 06511. \texttt{gaurav.mahajan@yale.edu}}}
\date{}
\begin{document}
\maketitle
\begin{abstract}
    We study the sample complexity of the classical shadows task: what is the fewest number of copies of an unknown state you need to measure to predict expected values with respect to some class of observables? Large joint measurements are likely required in order to minimize sample complexity, but previous joint measurement protocols only work when the unknown state is pure. We present the first joint measurement protocol for classical shadows whose sample complexity scales with the rank of the unknown state. In particular we prove $\mathcal O(\sqrt{rB}/\epsilon^2)$ samples suffice, where $r$ is the rank of the state, $B$ is a bound on the squared Frobenius norm of the observables, and $\epsilon$ is the target accuracy. In the low-rank regime, this is a nearly quadratic advantage over traditional approaches that use single-copy measurements. 
    
    We present several intermediate results that may be of independent interest: a solution to a new formulation of classical shadows that captures functions of non-identical input states; a generalization of a ``nice'' Schur basis used for optimal qubit purification and quantum majority vote; and a measurement strategy that allows us to use local symmetries in the Schur basis to avoid intractable Weingarten calculations in the analysis.
\end{abstract}

\section{Introduction}
One of the most fundamental questions in quantum state learning is as follows: given access to many copies of some unknown state $\us \in \C^{d \times d}$, what is the fewest number of copies you can measure to learn $\us$ or some property of it? Of course, the number of copies---known as the \emph{sample complexity}---depends on many specifics of the question. That said, in essentially every variant, the optimal strategy is to perform entangling measurements on large groups of copies simultaneously. Evidently, single-copy measurements are less efficient than joint measurements on $\us^{\otimes n}$ at extracting information about the unknown state.

Consider the shadow tomography task (originally introduced in \cite{aaronson2018shadow}), where the goal is to estimate $\tr(O_i \us)$ to additive imprecision $\epsilon$ for all observables $O_1, O_2, \ldots, O_M$ with high probability. 
While $\tilde{\mathcal O}(\epsilon^{-4} \log^2 M \log d)$ copies suffice with entangling measurements \cite{buadescu2021improved}, 
single-copy measurements require $\tilde{\Omega}(\min\{d, M\} \epsilon^{-2})$ samples \cite{chen2022exponential}.

In this paper, we consider the well-known communication variant of the shadow tomography task called \emph{classical shadows} (originally introduced in \cite{hkp_shadows}) for which much less is known about entangling measurements. In this setting, the measurements are not allowed to depend on the set of observables. This constraint models a scenario in which one party has access to the unknown quantum states and wants to send a classical snapshot (i.e., the ``shadow'') of the state for another party to analyze. With single-copy measurements, 
$\mathcal O(B \log M / \epsilon^2)$ samples suffice for the classical shadows task, where $B \ge \max\{\tr(O_1^2), \ldots, \tr(O_M^2)\}$ is a bound on the squared Frobenius norm of the observables  \cite{hkp_shadows}.\footnote{The sample complexity of the classical shadows protocol in \cite{hkp_shadows} depends on the exact measurement scheme, but we choose to focus on those schemes that are designed for global rather that local observables.}

While the sample complexity's linear dependency on $B$ is benign for low-rank observables, i.e., the rank-$1$ projectors $\proj{\psi}$ used for fidelity estimation, the dependency can be rather costly in general.
Indeed, even for observables with bounded spectral norm (i.e., $\snorm{\infty}{O} \le 1$), the squared Frobenius norm $B$ can be as large as $d$, the dimension of the space $\chi$ lies in. Concretely, this linear-in-dimension sample complexity is required to compute the overlap of $\us$ with some high-dimensional subspace or to estimate the expected value of $\us$ with respect to a high-weight Pauli observable.\footnote{Technically, low-weight Pauli observables also have high Frobenius norm, but in such cases, there exist more sample-efficient sampling techniques \cite{hkp_shadows}.}

This raises a natural question: is it possible to improve the dependency on $B$ with joint measurements? A first attempt at addressing this question was given in \cite{grier2022sample}, where multi-copy measurements lead to the improved sample complexity of $\mathcal O(\sqrt{B}/\epsilon + 1/\epsilon^2)$ for \emph{pure states} $\us$. This mirrors the history in multi-copy tomography where the pure state case \cite{holevo2011probabilistic, hayashi1998asymptotic} was solved before the general one \cite{haah2016sample, odonnellwright2016}.

From a technical perspective, this history may be attributed to the fact that
the joint state $\us^{\otimes n}$ lives in the \emph{symmetric subspace} when $\us$ is pure, and consequently enjoys many desirable properties that make analysis easier (see, e.g., \cite{harrow2013church}).
Meanwhile, the design of joint measurement protocols for mixed states generally requires a deeper understanding of the  actions of the unitary and the permutation groups on the entire space of $(\C^d)^{\otimes n}$, 
making a general solution significantly more complicated.
Indeed, the difficulty in analyzing joint measurement schemes, and the subsequent complications arising from the representations of $(\C^d)^{\otimes n}$, are reflected in the fact that for certain quantum information processing tasks, optimal protocols are only known for $d = 2$ \cite{buhrman2022quantum, cirac1999optimal, keyl2001rate}.

The main result of this paper is the first nontrivial joint measurement classical shadow protocol for \emph{mixed states}. Specifically, its sample complexity scales with the rank of the unknown state:
\begin{theorem} 
\label{thm:main_informal}
The multi-copy sample complexity of the classical shadows task\footnote{See \Cref{defn:classical_shadows} for a precise definition of the classical shadows task.} for unknown state $\us \in \C^{d \times d}$ of rank $r$ and observables $O_1, \ldots, O_M$ with $\snorm{\infty}{O_i} \le 1$ and $\tr(O_i^2) \le B$ is $\mathcal O( \sqrt{r B} \eps^{-2} \log M)$.
\end{theorem}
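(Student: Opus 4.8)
The plan is to reduce, via a median-of-means argument, to constructing a single \emph{observable-oblivious} measurement on $\us^{\otimes n}$ together with a classical decoder that outputs (data determining) a Hermitian estimate $\wt\us$ with $\mathbb E[\wt\us]=\us$ and $\Var\big(\Trb{O\wt\us}\big)=\mathcal O(\sqrt{rB}/n)$ for every $O$ with $\snorm{\infty}{O}\le 1$ and $\Trb{O^2}\le B$. Given such a gadget, set $n=\Theta(\sqrt{rB}\,\eps^{-2})$ so that this variance is at most $\eps^2/100$; collect the data from $\Theta(\log M)$ independent runs, compute $\Trb{O_i\wt\us}$ on each run for every $i$, and report the per-observable median. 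Chebyshev places each run within $\eps$ of $\Trb{O_i\us}$ with probability $\ge 0.99$, so the median is correct except with probability $\mathrm{poly}(1/M)$; a union bound over the $M$ observables then gives all estimates simultaneously, and the total number of copies consumed is $n\cdot\Theta(\log M)=\mathcal O(\sqrt{rB}\,\eps^{-2}\log M)$. Note the individual snapshot never depends on the $O_i$'s or on $M$, as required of a classical shadow.

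The measurement has two conceptual stages (combinable into one POVM). The first is weak Schur sampling: project $\us^{\otimes n}$ onto its Schur--Weyl isotypic components via $\{\pisyms{\lambda}\}_{\lambda\partition n}$ and record $\lambda$. Since $\us$ has rank $r$ and $n$ is large, the sampled $\lambda$ has at most $r$ rows except with negligible probability, so we restrict to such $\lambda$; conditioned on $\lambda$, the post-measurement state on the Weyl ($\GL_d$) factor $W_\lambda$ is proportional to $W_\lambda(\us)$, an ``amplified'' copy of $\us$ whose eigenvectors, written in the eigenbasis of $\us$, are indexed by semistandard tableaux of shape $\lambda$ with entries in $\{1,\dots,r\}$. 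The second stage is the \povm: rather than conjugating by a Haar-random unitary and then reading out a fixed Schur basis---which would force an intractable Weingarten integral over $\GL_d$---we apply a \emph{randomized, row-structure-preserving} transformation assembled from the $\swapts$ operators and then measure in a \emph{generalized ``nice'' Schur basis}, namely the extension to Young diagrams with up to $r$ rows of the Gelfand--Tsetlin-type basis underlying optimal qubit purification and quantum majority vote. The decisive feature is that this basis retains a large \emph{local symmetry}: it is stabilized by the Young subgroup $\prod_i\S_{\lambda_i}$ permuting entries within the rows of a tableau, so the only group averaging entering the analysis is over a symmetric group. From $(\lambda,\text{tableau data})$ the decoder produces $\wt\us$ by applying the inverse of the induced measurement channel---explicit and well-defined once restricted to the rank-$\le r$ sector, thanks to the nice-basis structure---and partial-tracing down to one copy.

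It remains to establish unbiasedness and the variance bound. For unbiasedness I would route through the announced generalization of classical shadows to \emph{non-identical} inputs---where the $n$ systems carry possibly distinct states $\us_1,\dots,\us_n$ and one estimates, say, $\tfrac1n\sum_j\Trb{O\us_j}$ oblivious to $O$---since the in-block state is cleanest to analyze if the $n$ registers are allowed to be formally distinct; solving that variant at no extra cost and then specializing all $\us_j=\us$ yields an \emph{exactly} unbiased estimator, in particular cancelling the $\mathcal O(1/n)$ bias one would otherwise incur by truncating to $\le r$ rows. The variance is then a second-moment calculation in which the row symmetry pays off: $\mathbb E\big[\Trb{O\wt\us}^2\big]$ collapses into elementary sums over $\prod_i\S_{\lambda_i}$ together with ordinary Schur orthogonality for $\S_n$---no Weingarten functions---and evaluates to $\mathcal O(\sqrt{rB}/n)$, with the $\sqrt{rB}$ appearing because the rank restriction confines $\wt\us$ to a sector of effective dimension $\mathrm{poly}(r,n)$ rather than $\mathrm{poly}(d,n)$. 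I expect this last computation to be the crux---extracting the \emph{tight} $\sqrt{rB}$ (rather than $r\sqrt B$ or $\sqrt B$ times a worse power of $r$) from the row-symmetric second-moment expansion, while simultaneously verifying that the generalized nice Schur basis really does carry enough local symmetry for that expansion to collapse; by comparison the weak-Schur-sampling reduction, the channel inversion on the rank-$\le r$ sector, and the median-of-means wrapper are routine.
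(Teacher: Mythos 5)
There is a genuine gap, and it sits exactly where you flag the ``crux'': the claimed single-run variance bound $\Var\big(\tr(O\wt\us)\big)=\mathcal O(\sqrt{rB}/n)$ for one monolithic measurement on all $n$ copies. Everything else in your plan (median over $\Theta(\log M)$ independent runs, Chebyshev, the population/non-identical-input reformulation, the rank-$\le r$ restriction of $\lambda$) is routine and matches the paper, but the whole argument rests on a per-run variance that decays with $n$, and this is precisely what does \emph{not} hold for the natural Schur-sampling-plus-row-symmetric scheme. The paper's analysis gives, for a single run on $n$ copies, a scaled variance of order $1+rB/n^2$: a constant floor that does not vanish as $n\to\infty$, caused by a cross term that appears as soon as the rank is $\ge 2$ (for rank $1$ the state is symmetric and the term disappears, which is why the pure-state result of \cite{grier2022sample} has better $\eps$-scaling). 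Moreover, \Cref{app:var-lb} shows this floor is not an artifact of loose bookkeeping: for $\lambda=(n)$, $\ket{\tau}\in\Span\{P_\pi\ket{0}^{\otimes p}\ket{1}^{\otimes q}\}$ with $p,q=\Theta(n)$, and an explicit $O$, the scaled variance is $\Omega(1)$ for all $n$. So your claim that the row-symmetric second-moment expansion ``evaluates to $\mathcal O(\sqrt{rB}/n)$'' is not merely unproven; for the measurement you sketch it is false, and asserting it silently assumes away the main technical obstruction.

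The fix in the paper is an extra ingredient your proposal lacks: split the $n$ copies into $T=\Theta(1/\eps^2)$ independent segments of size $n'=\Theta(\sqrt{rB})$ each, run the pre-processing and the row-symmetric POVM separately on each segment, form the unbiased per-segment estimator $\frac{T}{n}(\Psi^{(t)}-h(\lambda^{(t)})I)$ with $\Psi^{(t)}=\sum_i(d+\lambda_i)\psi_i$, and average; each segment has variance $\mathcal O(1)$, so the average has variance $\mathcal O(\eps^2)$, and only then does Chebyshev (plus median-of-means for $\log M$) go through. This segmentation is where the $\eps^{-2}$ in the theorem actually comes from, and it is the reason the bound is $\sqrt{rB}/\eps^2$ rather than $\sqrt{rB}/\eps+1/\eps^2$. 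To rescue your version as written you would need a genuinely different measurement whose single-run variance really decays in $n$ for mixed states, which is an open problem the paper explicitly raises, not a computation one can expect to ``collapse'' from row symmetry alone. (Two smaller points: the paper's final readout is not a projective measurement in the nice Schur basis after a randomized swap, but a continuous POVM of Haar-random product states $\kappa_{\lambda_1}\psi_1^{\otimes\lambda_1}\otimes\cdots\otimes\kappa_{\lambda_k}\psi_k^{\otimes\lambda_k}$, whose Haar integrals reduce to permutation averages; and conditioned on rank $r$, partitions with more than $r$ rows occur with probability zero, not merely negligible probability.)
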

In the regime where both the rank $r$ and desired accuracy $\epsilon$ are constant, this sample complexity matches that of \cite{grier2022sample}, which is provably optimal. For a full comparison with previous works, see \Cref{table:sample_complexity_overview}. The astute reader may notice that the sample complexity of \Cref{thm:main_informal} appears worse than \cite{grier2022sample} when $r=1$ (i.e., $\us$ is pure) or worse than \cite{hkp_shadows} when $r \ge B$. In fact, our measurement procedure is nearly \emph{identical} to the previous works in those two extremes, so we can in fact achieve the same sample complexity as the prior works in these two regimes. 
This reveals an interesting property of our measurement scheme---it is in some sense a smooth interpolation between the two measurement schemes based on the rank of the unknown state.

\begin{table}
\centering
    \begin{tabular}{| r | c | c | c | }  \hline
    & \rule[-2ex]{0pt}{5ex} $\rank(\us) = 1$ & $\rank(\us) = r$ & $\rank(\us) = d$ \\ \hline
   \rule[-2ex]{0pt}{5ex}1-copy measurements  & $\mathcal O(\sqrt{Bd}/\epsilon + 1/\epsilon^2)$ \cite{grier2022sample} & $\mathcal O(B/\epsilon^2)$ \cite{hkp_shadows} & $\Theta(B/\epsilon^2)$ \cite{hkp_shadows} \\ \hline
    \rule[-2ex]{0pt}{5ex} Joint measurements & $\tilde{\Theta}(\sqrt{B}/\epsilon + 1/\epsilon^2)$ \cite{grier2022sample} & $\mathcal O(\sqrt{Br}/\epsilon^2)$ [Thm.~\ref{thm:main_informal}] & $\mathcal O(B/\epsilon^2)$ \cite{hkp_shadows} \\ \hline
    \end{tabular}
    \caption{Sample-complexity of classical shadows for single-copy measurements and joint measurements in various rank regimes. We write $\mathcal O(\cdot)$ to imply there is an algorithm that achieves that sample complexity, and we write $\Theta(\cdot)$ to imply there is both an algorithm and a matching lower bound. For simplicity, we have assumed the number of target observables $M = 1$ since all known algorithms can be amplified using a standard median-of-means trick to scale with $\mathcal O(\log M)$. Any algorithm for higher rank can be used for lower rank when it gives a better guarantee. For example, the general single-copy algorithm of \cite{hkp_shadows} (sample complexity $\mathcal O(B/\epsilon^2)$) can be used instead the pure state algorithm of \cite{grier2022sample} (sample complexity $\mathcal O(\sqrt{Bd}/\epsilon + 1/\epsilon^2)$) in regimes of $B$ and $\epsilon$ where it is superior.}
    \label{table:sample_complexity_overview}
\end{table}

\subsection{Proof overview and technical contributions}
Our solution starts by reducing the classical shadow task for mixed states to another problem we call the \emph{Population Classical Shadow Task}.

\paragraph{Population Classical Shadows}
To motivate the definition of this task, let's start by diagonalizing $\us$, so that $\us = U (\sum_{i=1}^d \alpha_i \proj{i}) U^\dag$. Therefore, we can expand $\chi^{\otimes n}$ as
\[
\us^{\otimes n} = \sum_{e \in [d]^n} \left( \prod_{i =1}^n \alpha_{e_i} \right) U^{\otimes n} \proj{e} (U^{\otimes n} )^\dag.
\]
In other words, we can view $\us^{\otimes n}$ as a statistical mixture of pure states of the form $U^{\otimes n} \ket{e}$ such that every $\ket{e_i}$ is distributed \iid according to the eigenvalues of $\chi$.
Consider the ``average'' state 
\[
P(\ket{e}, U):= \frac{1}{n} \sum_{i=1}^n U \ket{e_i} \bra{e_i} U^{\dagger},
\]
and notice that an unbiased estimator of $P(\ket{e}, U)$ immediately implies an unbiased estimator of $\chi$ by linearity. When clear, we suppress the explicit dependence on $\ket{e}$ and $U$, and simply write $P$ for this state.

Motivated by this, we define a variant of the classical shadow task in which one builds a ``population'' classical description $\hat P$ for the joint state $U^{\otimes n}\ket{e}$ such that $\tr(O \hat P)$ approximates $\tr\lp( O P \rp)$ (see \Cref{def:population-shadow}).
We show that this population classical shadow task can be used as a stepping stone (in a black-box manner) for the original classical shadow task for mixed states with a small $1/\eps^2$ additive overhead in the sample complexity (see \Cref{lem:population-reduction}). 
Moreover, the model is interesting in its own right---it generalizes the notion of classical shadow to settings where the input registers contain states that are not necessarily all identical. 

From now on, we focus on solving the population classical shadow task where the input is a pure state of the form $U^{\otimes n} \ket{e}$.
At the highest level, our classical shadow scheme has the usual basic outline: design a joint measurement on the input state $U^{\otimes n}\ket{e}$; from the outcome, construct an unbiased estimator $\hat P$ such that $\E[\hat P] = P$ (where the expectation is over the randomness in the measurement outcome); finally, for any observable $O$, bound the variance of $\tr(O\hat P)$ to show that it is a good estimate of the true expectation $\tr(O P)$ for sufficiently large $n$.

\paragraph{Measurements in Schur Basis} To construct the measurement, we first need to understand the structure of the space $(\C^d)^{\otimes n}$ in which $U^{\otimes n} \ket{e}$ lives. Our starting point will be the famous Schur-Weyl decomposition:
\[
(\C^d)^{\otimes n} \cong \bigoplus_{ \lambda \partition n } Q_{\lambda} \otimes P_{\lambda}
\]
where subspaces $Q_{\lambda}$ and $P_{\lambda}$ are irreps\footnote{We try to give sufficient background of all representation theory needed for our main result. While our treatment of the topic is far from exhaustive (see, e.g., textbooks \cite{etingof2011introduction, fulton2013representation}), we refer the reader to \Cref{sec:key_representation_theory} for key definitions and theorem statements. Additional proof details can be found in appendices \ref{app:more-rep-theory} and \ref{app:nice-schur-basis}.} of the unitary group $\U(d)$ and symmetric group $\S_n$, respectively. 
In other words, the decomposition gives a \emph{Schur basis} for $(\C^d)^{\otimes n}$: vectors $\ket{(\lambda, i, j)}$ indexed by $\lambda \partition n$, $i \in [\dim Q_\lambda]$, and $j \in [\dim  P_\lambda]$ 
such that $\{\ket{(\lambda, i, j)}\}_i$ spans an irrep of $\U(d)$ isomorphic to $Q_\lambda$ and $\{\ket{(\lambda, i, j)}\}_j$ spans an irrep of $\S_n$ isomorphic to $P_\lambda$.
The main advantage of this basis is that it  ``decouples'' the actions of the unitary and the permutation group on the space. 

To see why this is helpful, we remark that a key difficulty in handling 
$U^{\otimes n} \ket{e}$ is that it contains a lot of ``noisy'' permutation information irrelevant to the shadow task (i.e., permuting the elements of $e$ leads to a population state with the same average state $P$). 
When $\ket{e}$ contains only one type of symbol, this joint state lives in the symmetric subspace, and therefore contains no permutation information. 
In general, $\ket{e}$ could be far from a symmetric state.
Nonetheless, measuring in the Schur basis allows us to ``symmetrize'' the input state, and thereby discard the irrelevant permutation information.
This effectively distills the most important information about the unknown state---the partition $\lambda$, which is related to the frequencies of each symbol $i \in [d]$ in $\ket{e}$,  and the vector in $Q_\lambda$, which is related to the unitary $U$.

The key question is how we formally leverage measurements in the Schur basis 
to create an unbiased estimator. 
Unfortunately, while there are Schur basis measurement protocols used for full state tomography that output a $\hat\us$ that concentrates around $\us$ \cite{haah2016sample, odonnellwright2016}, 
their analysis requires the sample size $n$ to be at least $d$, which is prohibitively expensive for classical shadows. 
Furthermore, Schur bases are not unique and previous constructions tend to be fairly abstract, making rigorous analysis of the measurement's behavior in the low-sample regime difficult, i.e., $n \ll d$. A central challenge in our approach is to construct a \emph{concrete} Schur basis and a corresponding measurement protocol
such that the calculation of the expectation/variance of the induced estimator is tractable.

\paragraph{Nice Schur Basis and Local Symmetries} As Schur bases are \emph{not} unique, the specific choice that we make will affect the usefulness of the measurement result for analysis. 
A particularly useful property that has appeared before in various quantum information processing tasks \cite{buhrman2022quantum,cirac1999optimal}
on \emph{qubits} (i.e., $d=2$) is the following: each Schur basis vector completely lives in some subspace spanned by the standard basis vectors with the same \emph{Hamming weight}.
At a high-level, the property helps ensure that measuring in the Schur basis  will not disturb the information encoded in the relative frequencies of the symbols in $\ket{e}$ (which are related to the eigenvalues of $\us$).
To our knowledge, this ``nice'' Schur basis has never been generalized to arbitrary local dimension $d$. A key technical and conceptual contribution of this work, which may be of independent interest, is to construct one:
%
\begin{theorem}
\label{thm:nice_schur_basis_intro}
There is an algorithm to construct an orthonormal Schur basis $\{\ket{(\lambda, i, j)}\}_{\lambda, i, j}$ for $(\C^d)^{\otimes n}$ such that each $\ket{(\lambda, i, j)} \in \Span \{P_{\pi} \ket{v}  \mid \pi \in \S_n \}$, 
where $\ket{v}$ is some standard basis vector of $(\C^d)^{\otimes n}$ and $P_{\pi}$ permutes the tensor factors of $(\C^d)^{\otimes n}$ according to $\pi$.
\end{theorem}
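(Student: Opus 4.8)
The plan is to exploit the commuting actions of the diagonal torus $T \subset \U(d)$ and the symmetric group $\S_n$ on $(\C^d)^{\otimes n}$, and to produce a Schur basis consisting entirely of $T$-weight vectors. First I would record the elementary observation that for a standard basis vector $\ket{v} = \ket{v_1}\otimes\cdots\otimes\ket{v_n}$, the subspace $\Span\{P_\pi\ket{v} \mid \pi \in \S_n\}$ depends only on the \emph{content} of $v$: writing $\mu = \mu(v) = (\mu_1,\dots,\mu_d)$ for the tuple of symbol multiplicities, this span equals the \emph{weight space} $V_\mu := \Span\{\ket{w} \mid \mu(w) = \mu\}$. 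Moreover $V_\mu$ is exactly the subspace on which $\Un{n}$, restricted to a diagonal $U = \operatorname{diag}(t_1,\dots,t_d)$, acts by the scalar $\prod_i t_i^{\mu_i}$; that is, $V_\mu$ is the $\mu$-weight space for the $T$-action. Hence any vector that is a $T$-weight vector lies in a single $V_\mu$, and it therefore suffices to construct an orthonormal Schur basis each of whose members is a $T$-weight vector.

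Next I would use that in the Schur--Weyl decomposition $(\C^d)^{\otimes n} \cong \bigoplus_{\lambda \vdash n} Q_\lambda \otimes P_\lambda$, the torus $T$ acts only on the $\U(d)$-factor $Q_\lambda$ and $\S_n$ acts only on $P_\lambda$, so the two actions commute and both preserve the isotypic blocks. It is then enough to choose, for each $\lambda$, an orthonormal basis of $Q_\lambda$ made of $T$-weight vectors together with any orthonormal basis of $P_\lambda$; the tensor products of these, over all $\lambda$, form an orthonormal Schur basis (mutual orthogonality across blocks being automatic). For $Q_\lambda$ the canonical such basis is the Gelfand--Tsetlin basis, which diagonalizes the chain $\U(1) \subset \U(2) \subset \cdots \subset \U(d)$ — in particular it diagonalizes $T$ — and is naturally indexed by semistandard Young tableaux of shape $\lambda$ with entries in $[d]$, the $T$-weight of a GT vector being the content of the corresponding tableau. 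For $P_\lambda$ I would take Young's orthogonal (seminormal) basis, indexed by standard Young tableaux of shape $\lambda$. The resulting Schur basis vector $\ket{(\lambda,i,j)}$ is thus labeled by a pair (SSYT, SYT), and the SSYT directly names the weight space $V_\mu$, equivalently the permutation module $\Span\{P_\pi\ket{v}\}$, in which it lives; a dimension count ($\dim V_\mu = \binom{n}{\mu_1,\dots,\mu_d} = \sum_\lambda K_{\lambda\mu}\dim P_\lambda$) confirms consistency.

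Finally, for the algorithmic content I would build the basis recursively, adding one tensor factor at a time: given the Schur basis of $(\C^d)^{\otimes(k-1)}$, decompose each $Q^{(k-1)}_\mu \otimes \C^d \cong \bigoplus_{\lambda = \mu + \square} Q^{(k)}_\lambda$ via the Clebsch--Gordan (Pieri) rule for $\U(d)$ with the defining representation, and correspondingly embed $P^{(k-1)}_\mu$ into the relevant $P^{(k)}_\lambda$ via the branching rule $\S_k \downarrow \S_{k-1}$; both steps have explicit, multiplicity-free intertwiners. Since appending a box to a GT pattern (equivalently an entry to an SSYT) is a weight-tracking operation, this recursion produces precisely the GT$\times$Young basis above; alternatively one may invoke an existing explicit Schur-transform construction and observe it is already GT-adapted. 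The step I expect to require the most care is verifying orthonormality simultaneously with the claim that each Clebsch--Gordan step genuinely preserves $T$-weights (so that no recursive step leaks a basis vector out of a single $V_\mu$); once that is established, the desired containment $\ket{(\lambda,i,j)} \in V_{\mu} = \Span\{P_\pi\ket{v}\}$ follows immediately from the $T$-weight-vector property shown in the first paragraph.
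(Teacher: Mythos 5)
Your proposal is correct, but it takes a genuinely different route from the paper. You reduce the statement to producing a Schur basis of torus-weight vectors (correctly observing that $\Span\{P_\pi\ket{v} \mid \pi\in\S_n\}$ is exactly the weight subspace $V_{\mu(v)}$, i.e.\ the $T$-eigenspace for the diagonal subgroup), and then realize such a basis by tensoring a Gelfand--Tsetlin basis of $Q_\lambda$ with Young's orthogonal basis of $P_\lambda$, transported into $(\C^d)^{\otimes n}$ by a unitary $\U(d)\times\S_n$-equivariant isomorphism (concretely, the recursive Clebsch--Gordan/branching construction of the Schur transform à la Bacon--Chuang--Harrow). The one point you rightly flag as needing care is indeed the crux: the containment in a single $V_\mu$ and the orthonormality both hinge on the chosen isomorphism being simultaneously unitary and $T$-equivariant; this holds because each isotypic block is irreducible under $\U(d)\times\S_n$, so the equivariant map is a scalar multiple of an isometry, and the GT-adapted recursion manifestly tracks weights. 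The paper instead avoids GT machinery entirely: it takes $\mathcal Q_\lambda := Y_\lambda(\C^d)^{\otimes n}$, builds a weight-respecting orthonormal basis of $\mathcal Q_\lambda$ by Gram--Schmidt inside each $Y_\lambda V(w)$, and then ``completes'' to a full Schur basis by reusing the permutation coefficients $\alpha_\pi^{(\lambda,j)}$ of a basis of $\Span\{P_\pi\ket{(\lambda,0,0)}\}$ (\Cref{func:schur_basis_completion}, \Cref{lem:interpolate-basis}); orthonormality is then proved by hand via the vanishing lemma for the Young symmetrizer (\Cref{clm:non-vanish}, \Cref{lem:1d-min-weight-space}) and irreducibility (\Cref{lem:basis-orthogonality}). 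Your approach buys a shorter conceptual argument at the price of invoking heavier standard machinery (GT bases, Young's orthogonal form, the unitary Schur transform); the paper's approach is longer but self-contained and produces the explicit Young-symmetrizer basis that its measurement analysis (via \Cref{lem:live-in-row-sym}) later exploits, since $\mathcal Q_\lambda = Y_\lambda(\C^d)^{\otimes n}$ sits inside the row symmetric subspace by construction.
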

We briefly discuss the ideas behind the construction.
Our starting point is a procedure (for qubits) used in \cite{buhrman2022quantum,cirac1999optimal} that can extend an arbitrary basis for an irrep $Q_{\lambda}$ (viewed as a subspace of $(\C^2)^{\otimes n}$) to a full Schur basis.
Using a representation-theoretic argument, we show the same procedure works in $(\C^d)^{\otimes n}$ as well (see \Cref{lem:interpolate-basis}). 
The more involved part is to ensure that the resulting Schur basis respects the additional weight structure and is \emph{orthonormal}.
Towards this goal, we use the fact that the Young Symmetrizer $Y_{\lambda}$ (see \Cref{def:young_symmetrizer}) is (proportional to) a projector onto some irrep $Q_{\lambda}$. 
This allows us to construct our initial $Q_{\lambda}$ basis by running the Gram-Schmidt process on the projected weight subspace $Y_{\lambda} \; \text{span} \{ P_{\pi} \ket{e}  \mid \pi \in S_n \}$ for each individual standard basis vector $\ket{e}$.
This effectively ensures orthonormality and the weight structure on the initial $Q_{\lambda}$ basis.
Lastly, we exploit irreducibility of $Q_{\lambda}$ and properties of the Young symmetrizer to show orthonormality of the full induced Schur basis (see \Cref{cor:base-orthogonality}).

Using this nice Schur basis, 
we obtain a general ``pre-processing'' routine (inspired by \cite{buhrman2022quantum}) for $U^{\otimes n} \ket{e}$ which produces a state $U^{\otimes n} \ket{\tau}$ 
where $\ket{\tau} \in \Span \{P_{\pi} \ket{e}  \mid \pi \in S_n \}$, and a partition $\lambda \partition n$ (arising from Schur sampling). 
Moreover, $\ket{\tau}$  has local symmetries reflecting the measured partition $\lambda = (\lambda_1, \lambda_2, \ldots, \lambda_r)$.
Specifically, if we partition the qudits of $\ket{\tau}$ according to $\lambda$ (i.e., into parts of size $\lambda_1, \ldots, \lambda_r$ in order), then the state is invariant under all local permutations within each part. 

We can then exploit these local symmetries by measuring our processed stated $U^{\otimes n} \ket{\tau}$ with a continuous POVM of the form
\[\{
(\proj{\psi_1})^{\otimes \lambda_1} \otimes (\proj{\psi_2})^{\otimes \lambda_2} \otimes \cdots \otimes (\proj{\psi_r})^{\otimes \lambda_r}
\}_{\psi_1, \psi_2, \ldots, \psi_r} \, ,
\]
where each $\psi_1, \psi_2, \ldots, \psi_r$ are drawn independently from the Haar measure. This construction ensures that the POVM elements conveniently integrate to a uniform sum of permutations that also respect the $\lambda$ partition.
As a direct consequence, 
the measurement is guaranteed to succeed due to the invariance of $\ket{\tau}$ under these local permutations.
After the measurement, our estimator $\hat P$ will be a simple linear combination of the $\psi_i$.
Moreover, the integrals involved in the calculation of the expectation/variance of $\hat P$ will simplify into a uniform sum of permutations as well.
This allows us to avoid complicated calculations with the Weingarten calculus that are often required for Haar integrals over general states \cite{roberts2017chaos}.

Putting all these techniques together (and brushing over some details), we arrive at the following theorem:\begin{theorem}[Informal]
\label{thm:variance_informal}
From the output of the POVM on the pre-processed state of $U^{\otimes n} \ket{e}$, there is an unbiased estimator $\hat P$ such that $\E[\hat P] = P$ and
\begin{align*}
    \Var[\tr(O\hat P)] = \bigO \! \lp(1+ \frac{rB}{n^2} \rp)    
\end{align*}
for any observable $O$ with $\snorm{\infty}{O} = 1, \tr(O^2) \le B$ and for any $e \in [d]^n$ with at most $r$ different symbols.
\end{theorem}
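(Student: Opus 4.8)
The plan is to track the estimator $\hat P$ explicitly through the measurement procedure and reduce the variance computation to a Haar integral that, thanks to the pre-processing, collapses to a sum over permutations. After the pre-processing routine we hold a state $U^{\otimes n}\ket{\tau}$ together with a partition $\lambda=(\lambda_1,\dots,\lambda_r)$, where $\ket{\tau}$ lies in $\Span\{P_\pi\ket{e}\}$ and is invariant under local permutations within each $\lambda$-block. We then apply the product POVM with elements $M_{\vec\psi}=(\proj{\psi_1})^{\otimes\lambda_1}\otimes\cdots\otimes(\proj{\psi_r})^{\otimes\lambda_r}$ and set
\[
\hat P \defeq \sum_{k=1}^r c_{\lambda_k}\,\proj{\psi_k}
\]
for suitable coefficients $c_{\lambda_k}$ (roughly $(d+\lambda_k)/n$ after the standard Haar purification identity $\int (\proj\psi)^{\otimes m}\,d\psi \propto \Pi_{\mathrm{sym}}^{(m)}$) chosen to make $\hat P$ unbiased. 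The first step is therefore to verify $\E[\hat P]=P$: the probability density of outcome $\vec\psi$ is $\bra{\tau}U^{\dagger\otimes n}M_{\vec\psi}U^{\otimes n}\ket{\tau}$, and multiplying by $\hat P$ and integrating each $\psi_k$ separately produces on each block a symmetric projector; contracting against the local-permutation-invariant $\ket{\tau}$ and using that $\ket{\tau}\in\Span\{P_\pi\ket e\}$ recovers exactly $\frac1n\sum_i U\proj{e_i}U^\dagger$. (One also has to check the POVM is complete on the relevant subspace — this is where invariance of $\ket\tau$ under local permutations is used, exactly as flagged in the overview.)

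**Next I would** compute the second moment $\E[\tr(O\hat P)^2]=\sum_{k,\ell}c_{\lambda_k}c_{\lambda_\ell}\,\E[\tr(O\proj{\psi_k})\tr(O\proj{\psi_\ell})]$. Split into the diagonal terms $k=\ell$ and off-diagonal $k\neq\ell$. For $k\neq\ell$ the two Haar averages factorize conditionally, so those terms reassemble into (a piece of) $\tr(OP)^2$; the variance contribution comes from the diagonal terms and from the difference between the factorized off-diagonal terms and $\tr(OP)^2$. The diagonal term requires the second-moment Haar integral $\int \proj\psi^{\otimes\lambda_k}\otimes\proj\psi^{\otimes\lambda_k}\,d\psi$, i.e. the symmetric projector on $2\lambda_k$ copies — again a clean sum of permutations rather than a Weingarten mess — contracted with $O\otimes O$ and with $\ket\tau$. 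Expanding $\Pi_{\mathrm{sym}}^{(2\lambda_k)}$ as $\frac{1}{(2\lambda_k)!}\sum_\sigma P_\sigma$ and pushing the $P_\sigma$ through onto $\ket\tau$ (using its block symmetry to absorb most of $\sigma$), the surviving terms are governed by how $\sigma$ crosses between the two ``$O$-slots'': each crossing contributes a factor $\tr(O^2)\le B$ or $\tr(O)\le r$ (since $O$ has $\snorm\infty O=1$ and effective rank at most... bounded via $\tr(O^2)$), while the combinatorial count of such $\sigma$ is $O(\lambda_k)$, and the prefactor $c_{\lambda_k}^2=O(\lambda_k^2/n^2)$ together with $\sum_k\lambda_k=n$ and $r$ blocks yields the claimed $O(1+rB/n^2)$.

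**The main obstacle** is the bookkeeping in the diagonal term: one must carefully enumerate which permutations $\sigma\in\S_{2\lambda_k}$ survive after using the block-symmetry of $\ket\tau$, and bound the resulting trace expressions uniformly in $O$ and in the (unknown) structure of $\ket{e}$ — in particular showing that the ``bad'' permutations contributing a factor of $B$ number only $O(\lambda_k/n)$ relative to the total $(2\lambda_k)!$, so that after multiplying by $c_{\lambda_k}^2$ one gets $B/n^2$ per block and hence $rB/n^2$ overall, rather than something like $B/n$ or $rB/n$. A secondary subtlety is handling the cross terms between blocks of different sizes and confirming that the $k\neq\ell$ contributions genuinely telescope against $\tr(OP)^2$ up to lower-order error; this should follow from the same permutation expansion but needs the coefficients $c_{\lambda_k}$ to be exactly right. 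I expect the bound $\snorm\infty O=1$ to be used only to control single-$O$ traces ($\abs{\tr(O\cdot\text{projector})}\le 1$), while $\tr(O^2)\le B$ controls the two-$O$-in-one-trace terms; keeping these two uses separate is what makes the final additive-$1$-plus-$rB/n^2$ shape come out cleanly.
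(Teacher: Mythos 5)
Your outline follows the paper's route (an estimator built linearly from the POVM outcome, Haar integrals collapsing to uniform permutation sums via the row symmetry of $\ket{\tau}$, and a diagonal/off-diagonal split of the second moment), but two of your key steps are not right as stated. First, unbiasedness: a pure linear combination $\hat P=\sum_k c_{\lambda_k}\proj{\psi_k}$ cannot have expectation $P$, because each block integral $\E\lp[(d+\lambda_k)\,\psi_k\rp]$ carries an identity component; the paper's estimator is $\frac{1}{n}\lp(\Psi-h(\lambda)I\rp)$ with $\Psi=\sum_k(d+\lambda_k)\psi_k$, and \Cref{lem:expectation} gives $\E[\Psi]=kI+\sum_i w_i\,U\proj{i}U^{\dagger}$, so the affine shift by $h(\lambda)I$ is essential, not optional. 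Second, the diagonal ($k=\ell$) Haar integral is not $\int(\proj{\psi})^{\otimes 2\lambda_k}d\psi$ (a symmetric projector on $2\lambda_k$ copies) but $\E\lp[\psi_k\otimes\psi_k\otimes\kappa_{\lambda_k}\psi_k^{\otimes\lambda_k}\rp]$, i.e.\ the symmetric projector on $\lambda_k+2$ systems; and your budget for it (a fraction $O(\lambda_k/n)$ of ``bad'' permutations times $c_{\lambda_k}^2=O(\lambda_k^2/n^2)$) does not produce $B/n^2$ per block. What actually happens in \Cref{lem:second-moment} is that the weight $(d+\lambda_k)^2$ is cancelled exactly by $\kappa_{\lambda_k}/\kappa_{\lambda_k+2}$ together with the $1/\Theta(\lambda_k^2)$ from the uniform sum over swap positions, so each block contributes $\Theta(\tr(O^2))$ with no residual $\lambda_k$ factor, giving $k\,\tr(O^2)\le rB$ before the $1/n^2$ rescaling. (Also, $\tr(O)\le r$ is false; the paper instead reduces to traceless $O$ at the outset.)

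The more substantive gap is your treatment of the $k\neq\ell$ terms. They do not ``telescope against $\tr(OP)^2$ up to lower-order error,'' and the block outcomes do not factorize, because the measurement density couples the blocks through $\ket{\tau}$, which is generally entangled across them. Writing $\ket{\tau}=\sum_a c_a\ket{a}$ over standard basis vectors of the same weight, the cross terms $\ket{a}\bra{a'}$ with $\ket{a'}=\swapc((j,p),(j',p'))\ket{a}$ survive the partial trace (the two mismatched positions are precisely the ones routed to the output registers), and they contribute $O(n^2)\snorm{\infty}{O}^2$: this, not the diagonal term, is the source of the additive constant in $O(1+rB/n^2)$, and \Cref{app:var-lb} shows it is genuinely $\Omega(n^2)$ already for inputs with two symbols, so no bookkeeping can demote it to lower order. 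The same transposition cross terms also arise within a single block (Case IV in the proof of \Cref{lem:second-moment}). Your plan can still reach the stated bound, but only by replacing the hoped-for telescoping with an explicit per-term bound of these cross contributions by $\snorm{\infty}{O}^2$ combined with a Cauchy--Schwarz bound on $\sum_{a,a'}\abs{c_a c_{a'}}$, which is exactly how the paper closes the argument.
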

In particular, we will have that $\Var[\tr(O\hat P)] \leq O(1)$ whenever $n \gg \sqrt{rB}$.
Unfortunately, the variance bound from \Cref{thm:variance_informal} stays constant even as $n$ goes to infinity.\footnote{We discuss some intuition why this is happening for our estimator in \Cref{sec:var-comp}.}
Hence, we circumvent the issue with a repetition trick: partition the input state $U^{\otimes n}\ket{e}$ into $T = \Theta(1/\eps^2)$ many ``segments'', i.e., 
$U^{\otimes n}\ket{e} = \bigotimes_{t=1}^{T} 
U^{ \otimes (n / T)} |e^{(t)} \rangle $; compute a shadow estimator $\hat{P}^{(t)}$ for each segment separately; finally, take the empirical average.\footnote{If we are in the common setting where the input is given by $\chi^{\otimes n}$, this corresponds to repeating the algorithm for $1/\eps^2$ times and taking the mean estimator in the end.}
It is not hard to see that the resulting estimator is still unbiased and the variance now shrinks by a factor of $T$. 
The result then follows directly from applying Chebyshev's inequality with this variance bound.

\subsection{Discussion and open problems}
Our work leaves open many possible future directions. Perhaps the most straightforward is to find optimal bounds for classical shadows that scale with the rank of the unknown state. For example, for full state quantum tomography, the sample-complexity of the optimal joint measurement procedure scales linearly with the rank $r$---precisely, $\mathcal O(d r / \epsilon^2)$ \cite{odonnellwright2016, haah2016sample}. Interestingly, our \Cref{thm:main_informal} shows that sample complexity of classical shadows scales at most with the square root of $r$. Could this be improved? Technically, the only lower bound we have comes from the pure state setting, where it was shown that $\Omega(\sqrt{B}/\epsilon + 1/\epsilon^2)$ samples are required \cite{grier2022sample}. 

There are two reasons one might suspect that our algorithm is not yet achieving optimal sample complexity. First, when $\chi$ has full rank, our measurement can be reinterpreted as $\mathcal O(d)$ \emph{independent} measurements on $\us^{\otimes n}$ with high probability. In other words, there are no large entangling measurements, which are crucial for achieving an advantage in sample complexity. Indeed, this is why we achieve the same sample complexity as \cite{hkp_shadows} in the high rank regime.

Second, our sample complexity does not scale smoothly with rank. In particular, for rank 1 states, our measurement is identical to that of \cite{grier2022sample}, and so we achieve the optimal sample complexity: $\mathcal O(\sqrt{B}/\epsilon + 1/\epsilon^2)$. However, for rank 2 states, our sample complexity jumps to $\mathcal O(\sqrt{B}/\epsilon^2)$. Technically, this happens because of a cross term in the variance that does not appear in the pure state case. Unfortunately, we show that this jump in scaling is inherent with our current measurement and analysis (see \Cref{app:var-lb}), suggesting that the measurement may need to be changed if one believes that the scaling with rank should be smooth.

Finally, we ask if there is a version of our algorithm which is robust to small perturbations. Rank is an inherently fragile measure, and one might suspect that when $\chi$ has large overlap with a low-rank subspace, then the sample complexity should be small, even if $\chi$ is technically full rank.

\section{Preliminaries}
Given a vector space $V$, we denote by $L(V)$ the set of all linear operators on $V$, and $\GL(V)$ the set of all invertible linear operators on $V$ (also known as the \emph{general linear group} on $V$).
Given a linear operator $H \in L( \C^d )$, we write $\snorm{p}{H}$ to mean the Schatten $p$-norm. Specifically, $\snorm{1}{H} = \tr(|H|)$ is the trace norm, 
$\snorm{2}{H} = \sqrt{\tr(H H^{\dagger})}$ is the Frobenius norm, and 
$\snorm{\infty}{H}$ is the spectral norm. 
For any $B \in (0, d]$, we define $\obs(B)$ as the set of Hermitian observables $O \in \C^{d \times d}$ with $\snorm{\infty}{O} = 1$ and $\tr(O^2) \leq B$



The task of classical shadow is formulated as follows:
\begin{definition}[Classical Shadows Task]
\label{defn:classical_shadows}
We say an algorithm $\mathcal A$ is a valid protocol for classical shadows task with sample complexity $\nc(\eps, r, B)$ if for any unknown state $\us$ with rank $\leq r$ and for all $\eps, B>0$, algorithm $\mathcal A$ takes $\nc(\eps, r, B)$ many copies of state $\us$ and outputs a classical representation $\hat \us$ satisfying the following: for any observable $O\in \obs(B)$, with probability $>2/3$, 
\[
\left| \tr(O \hat \us) - \tr(O\us)\right| \leq \eps.
\] where the probability is over the randomness of algorithm $\mathcal A$.
\end{definition}

For simplicity, we have given a definition of the classical shadow task which is more stringent than the classical shadows task introduced by \cite{hkp_shadows} or \cite{grier2022sample}. Technically, the shadow can have any form which allows us to approximate $\tr(O\chi)$. In particular, the standard median-of-means procedure used to boost probability of success does not fall into this framework. That said, any algorithm satisfying our procedure can be boosted in the same way: repeat the shadow procedure $k$ times to produce a list of shadows $\hat \us_1, \ldots, \hat \us_k$; then, given an observable $O$, output the median of $\{\tr(O\hat \us_i)\}_{i  \in k}$. By standard analysis \cite{hkp_shadows}, the probability of failure decreases exponentially in~$k$. 

A mixed state $\us^{\otimes n}$ can be written as a statistical mixture of pure states $\Un{n} \ket{e}$ where $\ket{e}$ is some standard basis vector in $(\C^d)^{\otimes n}$ and $U$ is some unitary matrix. Motivated by this, we define a variant of classical shadow task, where the input is given by a list of non-identical but independent pure states.
\begin{definition}[Population Classical Shadow Task]
\label{def:population-shadow}
We say an algorithm $\mathcal A$ is a valid protocol for population classical shadows task with sample complexity $n = \np(\eps, r, B)$ if the following holds. Let $U$ be any unknown unitary and $\ket{e}$ be any unknown standard basis vector in $(\C^d)^{\otimes n}$. Assume $\ket{e}$ contains at most $r$ symbols. 
Then, the algorithm $\mathcal A$ takes state $U^{\otimes n} \ket{e}$ as input, and outputs a classical representation $\hat P$ satisfying: for any observable $O \in \obs(B)$, with probability $>9/10$, 
\[
\left| \tr(O \hat P) - 
\frac{1}{n} \sum_{i=1}^n
\tr(O  U \ket{e_i} \bra{e_i} U^{\dagger} )\right| \leq \eps \, ,
\] where the probability is over the randomness of algorithm $\mathcal A$.
\end{definition}

\subsection{Representation theory}
\label{sec:key_representation_theory}
In this section, we discuss the representation theory of the permutation group $\S_n$ and the unitary group $\U(d)$ in the space $(\C^d)^{\otimes n}$. 
For a more thorough walk through of the subject, we refer the readers to \cite{fulton2013representation}.
For simplicity, we assume all vector spaces have finite dimensions, and are over the field of complex numbers $\C$. 
\begin{definition}[Group Representation]
    A \emph{representation} of a group $G$ on a vector space $V$ is a group homomorphism $\rho \colon G \to \GL(V)$. The space $V$ is called the \emph{representation space}. But when the homomorphism is clear from context, it is also customary to refer to $V$ simply as the representation.
    A \emph{subrepresentation space} $W \subseteq V$ is a subspace of $V$ for which $\rho(G) W \subseteq W$.
    If a representation space $V$ has no proper, non-zero subrepresentation space, it is called \emph{irreducible} or an \emph{irrep} for short.
\end{definition} 
An important property of an irrep is that the vectors within the subspace are all ``connected'' by the corresponding group operations.
\begin{lemma}[Irrep Connectivity]
\label{lem:irrep-decomposition}
    Suppose the vector space $V$ is irreducible under the group homomorhism $\rho \colon G \to \GL(V)$. 
    Then for any vectors $\ket{v},\ket{w} \in V$ there exists at most $m \leq \dim V$ 
    group elements
    $g_1, \ldots, g_m \in G$ and scalars $\alpha_1, \ldots, \alpha_m \in \C$ such~that 
    $$
    \ket{v} = \sum_{i=1}^m \alpha_i \rho(g_i) \ket{w}.
    $$
\end{lemma}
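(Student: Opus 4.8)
The plan is to prove \Cref{lem:irrep-decomposition} by considering the span of the orbit of $\ket{w}$ under the group action. Concretely, fix $\ket{w} \in V$ (if $\ket{w} = 0$ the claim is trivial since $V$ irreducible means $V = \{0\}$, or $\ket{v} = 0$ and we take the empty sum), and define
\[
W \defeq \Span\{\rho(g)\ket{w} \mid g \in G\}.
\]
First I would observe that $W$ is a subrepresentation space: it is clearly a subspace, and for any $h \in G$ and any basis vector $\rho(g)\ket{w}$ of $W$, we have $\rho(h)\rho(g)\ket{w} = \rho(hg)\ket{w} \in W$, so $\rho(h) W \subseteq W$. Moreover $W \neq \{0\}$ because it contains $\rho(\mathrm{id})\ket{w} = \ket{w} \neq 0$. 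Since $V$ is irreducible, the only nonzero subrepresentation space is $V$ itself, so $W = V$. In particular $\ket{v} \in W$, which means $\ket{v}$ is a finite linear combination of elements of the form $\rho(g)\ket{w}$.

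The remaining point is the cardinality bound $m \le \dim V$. This follows because $W = V$ is spanned by the (possibly infinite) family $\{\rho(g)\ket{w}\}_{g \in G}$, and any spanning set of a finite-dimensional space contains a subset of size at most $\dim V$ that still spans; in particular $\ket{v}$, lying in this span, can be written using at most $\dim V$ of these vectors with suitable coefficients $\alpha_1, \ldots, \alpha_m \in \C$. One clean way to phrase this: pick a basis of $V$ consisting of vectors of the form $\rho(g_i)\ket{w}$ (possible since they span), then express $\ket{v}$ in that basis; the number of nonzero coordinates is at most $\dim V$.

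I do not anticipate a serious obstacle here — the lemma is essentially the statement that a nonzero vector generates the whole space under an irreducible action, packaged with an elementary dimension-counting remark. The only thing to be a little careful about is the degenerate cases ($\ket{v} = 0$, or $G$ possibly infinite so that the orbit $\{\rho(g)\ket{w}\}_{g}$ is an infinite family), but both are handled by the standard fact that any spanning family of a finite-dimensional vector space admits a finite spanning subfamily of size at most the dimension.
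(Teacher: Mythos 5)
Your proof is correct and follows essentially the same route as the paper: define $W = \Span\{\rho(g)\ket{w} \mid g \in G\}$, note it is a nonzero subrepresentation, and invoke irreducibility to conclude $W = V$. Your additional care with the cardinality bound $m \le \dim V$ and the degenerate cases only makes explicit what the paper leaves implicit.
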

\begin{proof}
    Let $W \coloneqq \Span\{ \rho(g) \ket{w} \mid g \in G \}$. If $\ket{v} \not\in W$, then $W \subsetneq V$ is a subrepresentation.
\end{proof}

We are interested in representations of the permutation and general linear groups, $\S_n \mapsto \GL((\C^d)^{\otimes n})$ and $\U(d) \mapsto \GL((\C^d)^{\otimes n})$. 
Let us start by describing the basic operators.
To each permutation $\pi \in \S_n$, we will associate a permutation operator over the tensor factors of $(\C^d)^{\otimes n}$. We define $P_\pi \colon (\C^d)^{\otimes n} \to (\C^d)^{\otimes n}$ as the linear operator
$$
P_\pi \coloneqq \sum_{i_1, \ldots, i_n \in [d]} \ket{i_{\pi^{-1}(1)}, i_{\pi^{-1}(2)}, \ldots, i_{\pi^{-1}(n)}}\bra{i_1, i_2, \ldots, i_n}.
$$
For each element $U \in \U(d)$, we associate it with the linear operator $\Un{n}$ in $(\C^d)^{\otimes n}$ that applies $U$ to each tensor factor simultaneously. In summary, we will be using the following representations:
$\pi \mapsto P_{\pi}$ for $\pi \in \S_n$ and
$U \mapsto \Un{n}$ for $U \in \U(d)$.
Such a pair of representations satisfy the following commutative property known as the Schur-Weyl duality:
\begin{lemma}[Schur-Weyl duality]
\label{lem:schur-weyl}
Let $K \in L( (\C^d)^{\otimes n} )$.
Suppose $K$ commutes with all $P_{\pi}$ where $\pi \in \S_n$.
Then $K = \sum c_i U_i^{\otimes n}$ for finitely many $U_i \in \U(d)$ and coefficients $c_i \in \C$.
Suppose $K$ commutes with all $\Un{n}$ where $U \in \U(d)$.
Then $K = \sum_{\pi} c_{\pi} P_{\pi}$ for coefficients $c_{\pi} \in \C$.
\end{lemma}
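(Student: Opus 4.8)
The plan is to prove this via the finite-dimensional double commutant theorem together with an analysis of the symmetric subspace of $L((\C^d)^{\otimes n})$. Write $\mathcal A_{\S} \coloneqq \Span\{P_\pi : \pi \in \S_n\}$ and $\mathcal A_{\U} \coloneqq \Span\{U^{\otimes n} : U \in \U(d)\}$; both are unital $*$-closed subalgebras of $L((\C^d)^{\otimes n})$ (closure under adjoints uses $P_\pi^\dagger = P_{\pi^{-1}}$ and $(\Un{n})^\dagger = (U^\dagger)^{\otimes n}$). The two assertions of the lemma are precisely $\mathcal A_{\S}' \subseteq \mathcal A_{\U}$ and $\mathcal A_{\U}' \subseteq \mathcal A_{\S}$, where $(\cdot)'$ denotes the commutant inside $L((\C^d)^{\otimes n})$. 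The reverse inclusions $\mathcal A_{\U} \subseteq \mathcal A_{\S}'$ and $\mathcal A_{\S} \subseteq \mathcal A_{\U}'$ are immediate from the identity $P_\pi \Un{n} = \Un{n} P_\pi$, so it suffices to establish the equalities $\mathcal A_{\S}' = \mathcal A_{\U}$ and $\mathcal A_{\U}' = \mathcal A_{\S}$.

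First I would prove $\mathcal A_{\S}' = \mathcal A_{\U}$ directly. Under the canonical identification $L((\C^d)^{\otimes n}) \cong L(\C^d)^{\otimes n}$, the conjugation $K \mapsto P_\pi K P_\pi^{-1}$ becomes the action of $\S_n$ permuting the $n$ tensor factors of $L(\C^d)^{\otimes n}$; hence $K \in \mathcal A_{\S}'$ if and only if $K$ lies in the symmetric subspace $\mathrm{Sym}^n(L(\C^d))$. A polarization identity (expressing the image of the symmetrizing projector on an elementary tensor $M_1 \otimes \cdots \otimes M_n$ as a signed sum of pure powers $(\sum_{i \in S} M_i)^{\otimes n}$ over subsets $S \subseteq [n]$) shows $\mathrm{Sym}^n(L(\C^d)) = \Span\{M^{\otimes n} : M \in L(\C^d)\}$. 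It then remains to see that $n$-th powers of unitaries span the same subspace as $n$-th powers of arbitrary operators: if a linear functional $\phi$ annihilates every $\Un{n}$, then $M \mapsto \phi(M^{\otimes n})$ is a homogeneous degree-$n$ polynomial in the entries of $M$ vanishing on $\U(d)$; since $\U(d)$ is Zariski dense in $\C^{d \times d}$, this polynomial is identically zero, so $\phi$ also annihilates $\mathrm{Sym}^n(L(\C^d))$, and by duality $\Span\{\Un{n}\} \supseteq \mathrm{Sym}^n(L(\C^d)) = \mathcal A_{\S}'$. Together with $\mathcal A_{\U} \subseteq \mathcal A_{\S}'$ this gives the first equality, which is the first statement of the lemma.

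For the second statement I would invoke the finite-dimensional double commutant theorem: for any unital $*$-closed subalgebra $\mathcal A \subseteq L(V)$ one has $\mathcal A'' = \mathcal A$ (proved by decomposing $V$ into $\mathcal A$-isotypic components $\bigoplus_\lambda Q_\lambda \otimes P_\lambda$ and checking $\mathcal A = \bigoplus_\lambda L(Q_\lambda) \otimes I$, $\mathcal A' = \bigoplus_\lambda I \otimes L(P_\lambda)$, $\mathcal A'' = \mathcal A$). Applying this with $\mathcal A = \mathcal A_{\S}$ and using $\mathcal A_{\S}' = \mathcal A_{\U}$ from the previous step yields $\mathcal A_{\U}' = \mathcal A_{\S}'' = \mathcal A_{\S}$, which is exactly the second statement. (Alternatively, one can argue by hand: writing $K \in \mathcal A_{\U}'$ in block form with respect to $(\C^d)^{\otimes n} \cong \bigoplus_\lambda Q_\lambda \otimes P_\lambda$, each block lies in $L(P_\lambda)$, which equals $\Span\{\rho_\lambda(\pi) : \pi \in \S_n\}$ by Burnside's theorem on irreducible representations; since the $P_\lambda$ are pairwise non-isomorphic, these blocks are simultaneously realizable by a single $\C$-combination of the $P_\pi$.)

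The main obstacle is the Zariski-density input used in the first step, namely that a polynomial in the matrix entries vanishing on $\U(d)$ vanishes identically. I would dispatch it either by citing it as standard, or by the short argument: such a polynomial $f$ satisfies $f(e^{iA}) = 0$ for every Hermitian $A$; the holomorphic extension $Z \mapsto f(e^{iZ})$ on $\C^{d \times d}$ then vanishes on the totally real subspace of Hermitian matrices and hence identically, and since $\exp$ surjects onto $\GL(d, \C)$ and $\GL(d,\C)$ is Zariski dense in $\C^{d\times d}$, we get $f \equiv 0$. The remaining ingredients (the polarization identity, the finite-dimensional bicommutant theorem, and Burnside's theorem) are routine linear algebra and standard representation theory, which I would state without belaboring.
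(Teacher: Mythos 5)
The paper never proves this lemma: it is stated as standard background (Schur--Weyl duality) with the representation-theoretic details deferred to the cited textbooks, so there is no internal proof to compare against. Your argument is a correct, self-contained rendition of the classical proof. The identification of $\mathcal A_{\S}'$ with $\mathrm{Sym}^n(L(\C^d))$, the polarization identity reducing to pure powers $M^{\otimes n}$, the Zariski-density (or holomorphic-extension) step replacing arbitrary $M$ by unitaries, and the finite-dimensional bicommutant theorem giving $\mathcal A_{\U}' = \mathcal A_{\S}'' = \mathcal A_{\S}$ are all sound, and the finite-dimensionality of $\Span\{U^{\otimes n}\}$ takes care of the ``finitely many $U_i$'' phrasing. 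The only caveat is your parenthetical ``alternatively'' for the second statement: asserting that each multiplicity space $P_\lambda$ carries an irreducible $\S_n$-representation, with the $P_\lambda$ pairwise non-isomorphic, is itself part of the content of Schur--Weyl duality and would need its own justification (it follows from the bicommutant route, not independently of it), so that sketch is somewhat circular as stated --- but since your primary argument does not rely on it, the proof as a whole stands.
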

The space $(\C^d)^{\otimes n}$ is reducible under either of the representations.
In particular, it can be decomposed into irreps that can be classified by the notion of \emph{isomorphism}.
\begin{definition}[$G$-linear maps and subrepresentation space isomorphism]
\label{def:irrep-isomorphism}
Let  $\rho \colon G \mapsto \GL(V)$ be a representation in the space $V$, and $U,W$ be two subrepresentations of $V$.
We say that the linear map $F \colon {U} \to W$ is \emph{$G$-linear} if $ F ( \rho(g) \ket{x} ) = \rho(g) F( \ket{x} )$ for all $g \in G, \ket{x} \in U$. 
If there is an invertible $G$-linear map from $V$ to $W$, we say the two subrepresentations  are isomorphic, and write $V \cong W$.
\end{definition}
We state only the isomorphism between subrepresentation spaces.
We defer the more general definition of isomorphism between two arbitrary representations to appendix
(see \Cref{def:general-isomorphism}).


The classes of irreps of $\S_n$ and $\GL(d)$ induced by isomorphism can be labeled by partitions of $n$. 
A partition $\lambda$ of $n$ is a sorted integer vector $\lambda = (\lambda_1, \lambda_2, \ldots, \lambda_k)$ such that $\sum_{i=1}^k \lambda_i = n$.
We assume the convention that $\lambda_1 \ge \lambda_2 \ge \ldots \ge \lambda_k$. 
We write $\lambda \partition n$ for an arbitrary partition $\lambda$ of $n$, and write $\lambda \partition_d n$ when $\lambda$ is a partition of $n$ into at most $d$ parts.

\begin{lemma}[Section 5.12-5.14, 5.18-5.23 of \cite{etingof2011introduction}, or Section 3 of
\cite{haah2016sample} for a concise summary]
\label{lem:irrep_summary}
There exists a set of irreps $\mathcal Q_{\lambda}$, for every partition $\lambda \partition_d n$, such that any irrep of $\U(d)$ in $(\C^d)^{\otimes n}$
is isomorphic to exactly one $ \mathcal Q_{\lambda}$.
Similarly, there exists a set of irreps $\mathcal P_{\lambda}$, for every partition $\lambda \partition_d n$, such that any irrep of $\S_n$ in $(\C^d)^{\otimes n}$ is isomorphic to exactly one $\mathcal P_{\lambda}$.\footnote{If we do not restrict our attention to the specific representations of $ U \mapsto U^{\otimes n} $ and $\pi \mapsto P_{\pi}$ in $(\C^d)^{\otimes n}$,
there are other types of irreps of $\S_n$ and $\U(d)$ that are in general labeled by $\lambda \partition n$ that could have more than $d$ parts. That being said, in our context, the irreps of $\S_n$ and $\U(d)$ are indeed all labeled by partitions with at most $d$ parts.
}
\end{lemma}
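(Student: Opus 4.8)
The plan is to derive everything from Schur--Weyl duality (\Cref{lem:schur-weyl}) together with the standard dictionary relating a pair of mutually centralizing semisimple algebras to the joint decomposition of the module they act on. Write $V = (\C^d)^{\otimes n}$, let $\mathcal A \subseteq L(V)$ be the algebra generated by $\{P_\pi : \pi \in \S_n\}$, and let $\mathcal B \subseteq L(V)$ be the algebra generated by $\{\Un{n} : U \in \U(d)\}$. Since each $\Un{n}$ commutes with each $P_\pi$ we trivially have $\mathcal B$ inside the commutant of $\mathcal A$ and vice versa; \Cref{lem:schur-weyl} supplies the reverse inclusions, so $\mathcal A$ and $\mathcal B$ are full mutual commutants in $L(V)$. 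Because $\mathrm{char}\,\C = 0$, the group algebra $\C[\S_n]$ is semisimple (Maschke), hence so is its homomorphic image $\mathcal A$, and therefore so is its commutant $\mathcal B$.

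First I would invoke the double-commutant / isotypic decomposition theorem for such a pair (Sections 5.18--5.23 of \cite{etingof2011introduction}): there is an isomorphism
\[
V \;\cong\; \bigoplus_{\lambda \in I} Q_\lambda \otimes P_\lambda ,
\]
where $I$ indexes the isomorphism classes of simple $\mathcal A$-modules that actually occur in $V$, each $P_\lambda$ is the corresponding simple $\S_n$-module realized on the multiplicity space, each $Q_\lambda$ is a simple $\mathcal B$-module (equivalently, an irrep of $\U(d)$), the $Q_\lambda$ are pairwise non-isomorphic, and $\lambda \mapsto Q_\lambda$ is a bijection from $I$ onto the set of $\U(d)$-irreps occurring in $V$. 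In particular every irrep of $\U(d)$ appearing in $V$ is isomorphic to exactly one $Q_\lambda$, every irrep of $\S_n$ appearing in $V$ is isomorphic to exactly one $P_\lambda$, and the two families are matched up; setting $\mathcal Q_\lambda := Q_\lambda$ and $\mathcal P_\lambda := P_\lambda$ gives the asserted representatives. All that then remains is to identify the index set $I$ with $\{\lambda : \lambda \partition_d n\}$.

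Next I would identify $I$. The irreps of $\S_n$ are the Specht modules, classified by all partitions $\lambda \partition n$ via Young symmetrizers (cf. \Cref{def:young_symmetrizer}); the multiplicity with which the $\lambda$-Specht module occurs in $V$ equals $\dim Q_\lambda$, which by the Schur--Weyl correspondence is the number of semistandard Young tableaux of shape $\lambda$ with entries in $[d]$, i.e.\ the Schur polynomial $s_\lambda(\underbrace{1,\dots,1}_d)$. This is nonzero precisely when $\lambda$ has at most $d$ parts (a column of length $>d$ cannot be strictly increasingly filled from $[d]$, while for $\ell(\lambda)\le d$ the tableau with row $i$ filled by $i$'s exists), so $I = \{\lambda : \lambda \partition_d n\}$; equivalently, one can argue directly that if $\ell(\lambda)>d$ then $Y_\lambda V = 0$ since $Y_\lambda$ antisymmetrizes over more than $d$ tensor legs each living in $\C^d$. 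Finally, that the $Q_\lambda$ are genuinely labeled by $\lambda$ (and pairwise non-isomorphic) follows by exhibiting $\lambda$ as the highest weight of $Q_\lambda$ with respect to the diagonal torus of $\U(d)$: for $\lambda \partition_d n$, $Y_\lambda$ applied to $\ket{1}^{\otimes \lambda_1}\otimes\ket{2}^{\otimes \lambda_2}\otimes\cdots$ (with tensor legs ordered to match the tableau) is a nonzero highest-weight vector of weight $\lambda$, and distinct partitions give distinct highest weights, hence non-isomorphic irreps (see \cite{fulton2013representation}).

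The genuinely hard content here---that $\mathcal A$ and $\mathcal B$ are full mutual commutants---is exactly \Cref{lem:schur-weyl}, which is given, so the main remaining obstacle is the bookkeeping around the labeling set: proving that the $\lambda$-Specht module occurs in $(\C^d)^{\otimes n}$ precisely when $\ell(\lambda)\le d$. I expect the cleanest route is the semistandard-tableaux count for $\dim Q_\lambda$ (equivalently, the Schur-polynomial/Weyl-dimension argument), with the "$Y_\lambda$ antisymmetrizes over $>d$ legs, hence vanishes" observation as a self-contained check of one direction; everything else is a routine application of the standard structure theory of commuting semisimple algebras.
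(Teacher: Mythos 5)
Your argument is correct, but it takes a different route from the one the paper actually develops. You run the abstract double-centralizer machinery: Schur--Weyl duality (\Cref{lem:schur-weyl}) makes the algebras generated by $\{P_\pi\}$ and $\{\Un{n}\}$ full mutual commutants, semisimplicity plus the double-commutant theorem gives $(\C^d)^{\otimes n}\cong\bigoplus_{\lambda\in I}Q_\lambda\otimes P_\lambda$ with paired, pairwise non-isomorphic simples, and you then pin down $I=\{\lambda\partition_d n\}$ via the vanishing of $Y_\lambda$ on columns longer than $d$ (together with the standard fact that $Y_\lambda$ is, up to scalar, a primitive idempotent, so $Y_\lambda V=0$ is equivalent to the $\lambda$-Specht module not occurring --- a step you leave implicit but which is routine) and a highest-weight vector $Y_\lambda(\ket{1}^{\otimes\lambda_1}\otimes\ket{2}^{\otimes\lambda_2}\otimes\cdots)$ for existence. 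This is essentially the textbook treatment the paper cites for this lemma. The paper itself, however, proceeds constructively in \Cref{app:more-rep-theory}: it fixes concrete canonical representatives inside the tensor space, $\mathcal Q_\lambda:=Y_\lambda(\C^d)^{\otimes n}$ (classification cited from Fulton--Harris, \Cref{thm:young-Q-projector}) and $\mathcal P_\lambda:=\Span\{P_\pi Y_\lambda\ket{(\lambda)}\mid\pi\in\S_n\}$, and proves directly (\Cref{thm:canonic-P}) that the latter is isomorphic to the Specht module $\C[\S_n]Y_\lambda$ via the $\S_n$-linear map $\pi\mapsto P_\pi\ket{(\lambda)}$ and a linear-independence argument over standard tableaux. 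The trade-off: your approach is shorter and yields the full bipartite decomposition with the correct labeling in one stroke, but it only produces the irreps up to isomorphism; the paper's route yields explicit subspaces of $(\C^d)^{\otimes n}$ with the weight and symmetrizer structure that the subsequent nice-Schur-basis construction (\Cref{prop:nice-schur-basis}) and the measurement analysis actually rely on.
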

In \Cref{app:more-rep-theory}, we discuss in more detail what these irreps are (see 
\Cref{def:Q-canonical}/\Cref{thm:young-Q-projector} for $\mathcal Q_{\lambda}$ and 
 \Cref{def:P-canonical}/\Cref{thm:canonic-P} for $\mathcal P_{\lambda}$).

As a consequence of \nameref{lem:schur-weyl} and the connection between the irreps and partitions, we have that the space $(\C^d)^{\otimes n}$ decomposes into irreps of $\S_n$ and $\U(d)$ in a highly structured form.
\begin{theorem}[Schur-Weyl decomposition]
\label{thm:schur-duality}
$(\C^d)^{\otimes n}$ can be decomposed into a
set of orthogonal subspaces $\Pi_{\lambda}$ labeled by $\lambda \partition_d n$, typically known as the $\lambda$-isotypic component,  such that the following hold:
\begin{enumerate}
    \item $(\C^d)^{\otimes n} = \bigoplus_{\lambda \partition_d n} \Pi_\lambda$.
    \item $\dim \Pi_{\lambda} = \dim \mathcal Q_{\lambda} \times \dim \mathcal P_{\lambda}$.
    \item \label{schur_basis_condition} There exists vectors $\ket{(\lambda,i ,j)}\in (\C^d)^{\otimes n}$ with $\lambda \partition_d n$, $0 \leq i < \dim \mathcal Q_{\lambda}$, and $0 \leq j < \dim \mathcal P_{\lambda}$ such that
    \begin{center}
    \begin{tabular}{l c l}
    $\forall \lambda$ & & $\Pi_\lambda = \Span \{ \ket{(\lambda, i, j)} \mid \forall i,j  \} $, \\
    $\forall \lambda, j$ & & $\mathcal Q_{\lambda} \cong \Span \{ \ket{(\lambda, i, j)} \mid \forall i  \} $, \\
    $\forall \lambda, i$ & & $\mathcal P_{\lambda} \cong \Span \{ \ket{(\lambda, i, j)} \mid \forall j  \}$.
    \end{tabular}
    \end{center}
\end{enumerate}

We refer to any set of vectors $\{\ket{(\lambda, i, j)}\}_{\lambda, i, j}$  satisfying \Cref{schur_basis_condition} above as a \emph{Schur basis}.
\end{theorem}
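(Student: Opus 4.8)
The plan is to obtain the decomposition from the isotypic decomposition of $(\C^d)^{\otimes n}$ under the $\S_n$-action $\pi \mapsto P_\pi$, combined with the mutual-commutant content of \nameref{lem:schur-weyl}. Since every $P_\pi$ is unitary, this representation is unitary and hence completely reducible: choose an orthogonal decomposition of $(\C^d)^{\otimes n}$ into irreducible $\S_n$-subrepresentations and group the summands by isomorphism type. By \Cref{lem:irrep_summary} each type is some $\mathcal P_\lambda$ with $\lambda \partition_d n$, so writing $\Pi_\lambda$ for the sum of the summands isomorphic to $\mathcal P_\lambda$ (the $\lambda$-isotypic component) gives mutually orthogonal subspaces with $(\C^d)^{\otimes n} = \bigoplus_{\lambda \partition_d n} \Pi_\lambda$, which is item~1. (It is standard — implicit in the references of \Cref{lem:irrep_summary} — that every $\lambda \partition_d n$ actually occurs, so no $\Pi_\lambda$ is zero.) Fix an $\S_n$-equivariant isomorphism $\Pi_\lambda \cong \mathcal P_\lambda \otimes M_\lambda$, where $M_\lambda \coloneqq \operatorname{Hom}_{\S_n}(\mathcal P_\lambda, \Pi_\lambda)$ is the multiplicity space and $\S_n$ acts only on the first leg.

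Next I would identify $M_\lambda$ as a $\U(d)$-irrep. Let $\mathcal B \subseteq L((\C^d)^{\otimes n})$ be the linear span of $\{\Un{n} : U \in \U(d)\}$. Every element of $\mathcal B$ commutes with all $P_\pi$, hence preserves each $\Pi_\lambda$ and, by Schur's lemma on the $\mathcal P_\lambda$ leg, acts on $\Pi_\lambda \cong \mathcal P_\lambda \otimes M_\lambda$ as $I_{\mathcal P_\lambda} \otimes X$ for some $X \in L(M_\lambda)$; this makes $M_\lambda$ a $\U(d)$-representation $\sigma_\lambda$. Conversely, given $X \in L(M_\lambda)$, extend $I_{\mathcal P_\lambda} \otimes X$ by zero on $\bigoplus_{\mu \ne \lambda} \Pi_\mu$: the resulting operator commutes with every $P_\pi$, so by the first half of \nameref{lem:schur-weyl} it lies in $\mathcal B$. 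Thus the span of $\{\sigma_\lambda(U)\}$ is all of $L(M_\lambda)$, and Burnside's theorem forces $M_\lambda$ to be an irreducible $\U(d)$-representation; by \Cref{lem:irrep_summary} it is isomorphic to $\mathcal Q_\mu$ for a unique $\mu \partition_d n$. A counting argument (running the same analysis with the roles of $\S_n$ and $\U(d)$ swapped shows each $\U(d)$-isotypic component is a single $\mathcal Q$-irrep's worth of multiplicity) shows $\lambda \mapsto \mu$ is a bijection; and $\mu = \lambda$ follows from the concrete definitions of $\mathcal Q_\lambda$ and $\mathcal P_\lambda$ in \Cref{app:more-rep-theory} — concretely, the Young symmetrizer $Y_\lambda$ projects onto a copy of $\mathcal Q_\lambda$ sitting inside $\Pi_\lambda$. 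Hence $\Pi_\lambda \cong \mathcal Q_\lambda \otimes \mathcal P_\lambda$ as an $\S_n \times \U(d)$-representation, giving item~2, $\dim \Pi_\lambda = \dim \mathcal Q_\lambda \cdot \dim \mathcal P_\lambda$.

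For item~3, I would promote the isomorphism $\Pi_\lambda \cong \mathcal Q_\lambda \otimes \mathcal P_\lambda$ to an isometry: both sides carry an inner product invariant under $\S_n \times \U(d)$ (the restriction of the standard inner product on the left, a tensor of invariant inner products on the irreps on the right), and on a product of two irreducibles such an invariant inner product is unique up to a positive scalar, so after rescaling the isomorphism is unitary. Picking orthonormal bases $\{\ket{i}\}_{0 \le i < \dim \mathcal Q_\lambda}$ of $\mathcal Q_\lambda$ and $\{\ket{j}\}_{0 \le j < \dim \mathcal P_\lambda}$ of $\mathcal P_\lambda$ and letting $\ket{(\lambda,i,j)}$ be the image of $\ket{i} \otimes \ket{j}$, the family $\{\ket{(\lambda,i,j)}\}_{\lambda,i,j}$ is orthonormal (using orthogonality of the $\Pi_\lambda$), spans $\Pi_\lambda$ for each fixed $\lambda$, and — since $\U(d)$ acts only on the $\mathcal Q_\lambda$ leg and $\S_n$ only on the $\mathcal P_\lambda$ leg — fixing $j$ and varying $i$ spans a $\U(d)$-subrepresentation isomorphic to $\mathcal Q_\lambda$ while fixing $i$ and varying $j$ spans an $\S_n$-subrepresentation isomorphic to $\mathcal P_\lambda$. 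This is exactly \Cref{schur_basis_condition}.

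The main obstacle is the bookkeeping in the second paragraph: the heavy lifting is done by \nameref{lem:schur-weyl}, but isolating that the multiplicity module of the $\S_n$-irrep $\mathcal P_\lambda$ is $\mathcal Q_\lambda$ with the \emph{same} partition label — rather than just establishing some bijection between $\S_n$-labels and $\U(d)$-labels — requires committing to the explicit Young-symmetrizer constructions of $\mathcal Q_\lambda$ and $\mathcal P_\lambda$ deferred to the appendix. Everything else (orthogonality of isotypic components, irreducibility of the multiplicity module via Burnside, unitarity of the tensor decomposition) is routine given Schur's lemma, complete reducibility of unitary representations, and uniqueness of invariant inner products on irreducibles.
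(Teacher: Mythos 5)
Your argument is correct (and proves something slightly stronger, since the theorem as stated does not require orthonormality), but it takes a genuinely different route from the paper. The paper presents \Cref{thm:schur-duality} as a consequence of the cited classification results, and the work it actually does is constructive and lives in the appendices: it fixes the concrete canonical irreps $\mathcal Q_\lambda = Y_\lambda(\C^d)^{\otimes n}$ and $\mathcal P_\lambda = \Span\{P_\pi Y_\lambda \ket{(\lambda)} \mid \pi \in \S_n\}$, proves the irrep correspondence \Cref{lemma:grid}, uses the commutant statement of \nameref{lem:schur-weyl} to show the $\U(d)$- and $\S_n$-isotypic components coincide and equal $\Span\{P_\pi \Un{n} Y_\lambda\ket{(\lambda)}\}$, and then obtains item 3 via the explicit coefficient interpolation of \Cref{func:schur_basis_completion} (\Cref{lem:interpolate-basis}), quoting the dimension formula $\dim\Pi_\lambda = \dim\mathcal Q_\lambda \cdot \dim\mathcal P_\lambda$ from the literature. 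You instead run the classical multiplicity-space argument: decompose under $\S_n$ into isotypic components, use Schur's lemma together with the converse half of \nameref{lem:schur-weyl} and Burnside to show the multiplicity module $M_\lambda$ is an irreducible $\U(d)$-representation, pin its label to $\lambda$ through the Young symmetrizer, and take a product basis after rescaling the identification to a unitary. Your route derives item 2 rather than citing it and yields orthonormality for free; the paper's route is constructive (it is what \Cref{alg:orthonormal_schur_basis} actually executes) and never leaves concrete subspaces of $(\C^d)^{\otimes n}$. Two points worth tightening: \Cref{lem:irrep_summary} as stated classifies irreps sitting \emph{inside} $(\C^d)^{\otimes n}$, so to invoke it for $M_\lambda$ you should first realize $M_\lambda$ as a concrete $\U(d)$-subrepresentation of $\Pi_\lambda$ (any irreducible $\U(d)$-summand of $\Pi_\lambda$ works, since in your picture they are all isomorphic to $M_\lambda$); and your label matching $\mu = \lambda$ implicitly uses that the $\S_n$-span of $Y_\lambda\ket{(\lambda)}$ is an irrep of type $\mathcal P_\lambda$, which is precisely \Cref{thm:canonic-P} and should be cited explicitly rather than waved at through the appendix.
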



%
\subsection{Nice Schur Basis}
\label{sec:nicebasis}

In this section, we present an explicit construction of an \emph{orthonormal} Schur basis enjoying some additional \emph{weight} structure.
To describe this extra structure, we will introduce the notion of the \emph{weight subspace}.
\begin{definition}[Weight vectors and weight subspaces]
Let $\ket{e} = \ket{e_1, e_2, \ldots, e_n} \in (\C^d)^{\otimes n}$ where $e_i \in [d]$ represents the index of some standard basis vector in $\C^d$. We define the \emph{weight vector} $\mathcal W(\ket{e})$ of $\ket{e}$ to be the vector $w \in [n]^d$ such that $w_i \coloneqq \lp|\{j \in [n] \mid e_j = i\}\rp|$. In other words, the weight vector counts the repetitions of each basis element in $\ket{e}$.  We say $w \in [n]^d$ is a weight vector to imply that $w$ has the implicit constraint $\sum_i w_i = n$. 

For a weight vector $w \in [n]^d$\, we define the \emph{weight subspace} $V(w) \coloneqq \Span\{ \ket{e} \mid \mathcal W(\ket{e}) = w \} \subseteq (\C^d)^{\otimes n}$ to be the subspace spanned by the standard basis vectors whose weight vectors are $w$. 
\end{definition}

{We define a orthonormal Schur Basis with the property that each vector completely lies in some weight subspace. This property will be crucial in designing our measurement.}

\begin{definition}[Nice Schur Basis]
\label{defn:nice_schur_basis}
We say the set of vectors $\{\ket{(\lambda, i, j)} \mid 
\lambda \partition_d n,
0 \leq i < \dim \mathcal Q_{\lambda}, 0 \leq j < \dim \mathcal P_{\lambda}\}$ form a \emph{nice Schur Basis} if (i) they form an orthonormal Schur Basis, and (ii) for each vector $\ket{(\lambda, i, j)}$, there exists a weight vector $w \in [n]^d$ such that $\ket{(\lambda, i, j)} \in V(w)$.
\end{definition}

As the main technical contribution of the section, we provide a construction of a nice Schur basis.
\begin{restatable}[Explicit construction of nice Schur basis]{theorem}{niceschurbasistheorem}
\label{prop:nice-schur-basis}
Given an input of local dimension $d \in \mathbb N$ and number of copies $n\in \mathbb N$, \Cref{alg:orthonormal_schur_basis} outputs a nice Schur Basis 
$\{\ket{(\lambda, i, j)} \mid 
\lambda \partition_d n,
0 \leq i < \dim \mathcal Q_{\lambda}, 0 \leq j < \dim \mathcal P_{\lambda}\}$ in~$(\C^d)^{\otimes n}$.
\end{restatable}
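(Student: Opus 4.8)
The plan is to analyze \Cref{alg:orthonormal_schur_basis} as a two‑pass construction: first, for each partition $\lambda \partition_d n$, build an orthonormal basis of a \emph{single} copy of the unitary irrep $\mathcal Q_\lambda$ inside $(\C^d)^{\otimes n}$ that is already weight‑graded; then bootstrap this to a full Schur basis using an interpolation operator coming from $\C[\S_n]$. At every step I would track two invariants: (a) weight subspaces $V(w)$ are mapped into themselves, and (b) orthonormality is preserved. Invariant (a) is what ultimately delivers property (ii) of \Cref{defn:nice_schur_basis}.

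For the first pass I would start from the fact (\Cref{def:young_symmetrizer}, \Cref{thm:young-Q-projector}) that the Young symmetrizer $Y_\lambda$ is, up to a positive scalar, a projector onto a copy of $\mathcal Q_\lambda$, of dimension exactly $\dim\mathcal Q_\lambda$. Since $Y_\lambda$ lies in the group algebra $\C[\S_n]$ -- a linear combination of the tensor‑permutation operators $P_\pi$ -- and every $P_\pi$ sends a weight subspace $V(w)$ to itself (permuting tensor factors does not change the multiset of symbols), the image of $Y_\lambda$ splits along the weight grading as $\bigoplus_w Y_\lambda V(w)$, an orthogonal decomposition because distinct $V(w)$ are orthogonal. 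So I would iterate over standard basis vectors $\ket{e}$, form $W_e := Y_\lambda\,\Span\{P_\pi\ket{e}\mid\pi\in\S_n\}\subseteq V(\mathcal W(\ket e))$, run Gram--Schmidt inside $W_e$ against all vectors kept so far, and retain the new orthonormal vectors; grouping the $\ket e$ by weight, this produces orthonormal bases of each $Y_\lambda V(w)$ whose union is an orthonormal basis $\{q^\lambda_i\}_i$ of $Y_\lambda(\C^d)^{\otimes n}\cong\mathcal Q_\lambda$ (termination is by the dimension bookkeeping of \Cref{thm:schur-duality} together with \nameref{lem:schur-weyl}). By construction each $q^\lambda_i$ lies in a single $V(w)$.

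For the second pass I would invoke \Cref{lem:interpolate-basis}, the $d$‑dimensional version of the qubit interpolation trick of \cite{buhrman2022quantum, cirac1999optimal}: given the orthonormal basis $\{q^\lambda_i\}_i$ of one copy of $\mathcal Q_\lambda$, it builds the full family $\{\ket{(\lambda,i,j)}\}$ by applying, for each $j$, an operator from the commutant of the $\U(d)$‑action -- hence again an element of $\C[\S_n]$ -- that carries the distinguished copy isomorphically onto the $j$‑th copy of $\mathcal Q_\lambda$, and such that the $\mathcal P_\lambda$‑structure on the index $j$ is respected. Because these interpolation operators are combinations of the $P_\pi$, they preserve weight subspaces, so $\ket{(\lambda,i,j)}$ stays in the \emph{same} $V(w)$ as $q^\lambda_i$; this gives the weight property of \Cref{defn:nice_schur_basis} for free. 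Orthonormality of the full basis then splits into: across distinct $\lambda$, orthogonality is automatic since the isotypic components $\Pi_\lambda$ are mutually orthogonal (\Cref{thm:schur-duality}); within a fixed $\Pi_\lambda$ it is exactly \Cref{cor:base-orthogonality}, whose proof uses irreducibility of $\mathcal Q_\lambda$ and the projector property of $Y_\lambda$ to reduce an inner product $\langle \rho(g)q^\lambda_i \mid \rho(h)q^\lambda_{i'}\rangle$ to the action of $\rho(g^{\dagger}h)$ on the irrep, and then to normalize the interpolation operators into partial isometries.

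I expect the main obstacle to be precisely this last point: arranging the interpolation operators so that they simultaneously (a) map the chosen copy of $\mathcal Q_\lambda$ onto every other copy and (b) act as partial isometries, so that the induced $\mathcal P_\lambda$‑directions come out orthonormal and not merely linearly independent. The weight‑preservation claims are immediate once everything is phrased inside $\C[\S_n]$, and orthonormalizing a single copy of $\mathcal Q_\lambda$ is just Gram--Schmidt; it is orthonormality of the \emph{entire} induced Schur basis that forces one to use the fine structure of the Young symmetrizer and Schur's lemma for $\C[\S_n]$, and this is the step that genuinely goes beyond the known $d=2$ constructions.
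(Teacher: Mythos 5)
Your construction is the same as the paper's (Gram--Schmidt inside each projected weight space $Y_\lambda V(w)$, then completion via \Cref{lem:interpolate-basis}), and the parts you treat in detail are fine: the orthogonal splitting $Y_\lambda(\C^d)^{\otimes n}=\bigoplus_w Y_\lambda V(w)$, weight preservation under the interpolation operators (they lie in $\Span\{P_\pi\}$), and orthogonality across distinct $\lambda$ via the isotypic decomposition.

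The genuine gap is exactly the point you defer: orthonormality of the \emph{completed} basis inside a fixed $\Pi_\lambda$, i.e.\ $\braket{(\lambda,i,k)}{(\lambda,j,\ell)}=\delta_{ij}\delta_{k\ell}$ including $i,j>0$. You attribute this to \Cref{cor:base-orthogonality}, but that statement is only the base case (vectors $\ket{(\lambda,i,\ell)}$ with $i>0$ against the column $\ket{(\lambda,0,k)}$), and its proof is a pure weight argument resting on \Cref{lem:1d-min-weight-space}: the lexicographically largest admissible weight space meets $\Img Y_\lambda$ in dimension one, so $\ket{(\lambda,0,0)}$ shares its weight with no other $\ket{(\lambda,i,0)}$ --- this is why the algorithm processes weights in reverse lexicographic order, a feature your first pass never uses. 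The actual workhorse is \Cref{lem:basis-orthogonality}, which the paper proves by $\U(d)$-irrep connectivity (\Cref{lem:irrep-decomposition}): expand $\ket{(\lambda,j,0)}=\sum_U\beta_U U^{\otimes n}\ket{(\lambda,0,0)}$, commute $U^{\otimes n}$ past the permutations, re-expand in the $\{\ket{(\lambda,q,0)}\}$ basis, and reduce every inner product to the base case plus orthonormality of the chosen $\mathcal P$-basis and of the Gram--Schmidt basis. Your sketch (``reduce $\langle\rho(g)q_i\mid\rho(h)q_{i'}\rangle$ to $\rho(g^\dagger h)$'', ``normalize the interpolation operators into partial isometries'', ``Schur's lemma for $\C[\S_n]$'') is not carried out and does not work as phrased: the interpolation operators $T_j=\sum_\pi\alpha_\pi^{(\lambda,j)}P_\pi$ are not group elements, and the relevant Schur's lemma application is for the $\U(d)$ action, not for $\S_n$ --- one would argue that $T_j$ lies in the commutant of $U^{\otimes n}$, so the compression $\Pi_{\mathcal Q_\lambda}T_k^\dagger T_\ell\big|_{\mathcal Q_\lambda}$ is a scalar, equal to $\braket{(\lambda,0,k)}{(\lambda,0,\ell)}=\delta_{k\ell}$, which would indeed yield the claim; but no such argument (nor the paper's) appears in your write-up, and you explicitly label this step the ``main obstacle'' rather than resolving it. As submitted, the theorem's central claim --- that the output is \emph{orthonormal} --- is unproven.
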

Our starting point is a procedure 
introduced in 
\cite{cirac1999optimal}, and subsequently advocated in
\cite{buhrman2022quantum}, that can generate a complete Schur basis given a basis for each canonical irrep $ \mathcal Q_{\lambda}$ when $d=2$.
\begin{algorithm}
\caption{Schur basis completion}
\label{func:schur_basis_completion}

\begin{algorithmic}
\leftComment Input: For each $\lambda \partition_d n$, an orthonormal basis $\{\ket{(\lambda, i, 0)} \mid 0 \leq i < \dim \mathcal Q_{\lambda}\}$ for $\mathcal Q_\lambda$ 
\leftComment Output: Schur basis $\{\ket{(\lambda, i, j)} \mid \lambda \partition_d n, 0 \leq i < \dim \mathcal Q_{\lambda}, 0 \leq j < \dim \mathcal P_{\lambda}\}$ for $(\C^d)^{\otimes n}$
\vspace{.5em}

\Function{SchurBasisCompletion}{$\{\ket{(\lambda, i, 0)}\}_{i,\lambda}$}
\For{$\lambda \partition_d n$}
    \State $\mathcal P \gets \Span\{  P_{\pi} \ket{(\lambda, 0, 0)}  \mid \pi \in \S_n\}$
    \State $\{\ket{(\lambda, 0, j)}\}_{j=0}^{\dim \mathcal P -1} \gets$ Any orthonormal basis of $\mathcal P$
    \vspace{.5em}
    \For{$j$ from $0$ to $\dim \mathcal P - 1$}
        \State Write $\ket{(\lambda, 0, j)}
        = \sum_{\pi \in \S_n} \alpha_{ \pi}^{(\lambda, j)} P_{\pi} 
        \ket{(\lambda, 0, 0)}$ for coefficients $\alpha_{\pi}^{(\lambda, j)} \in \C$
    \EndFor
    \vspace{.5em}
    \For{$i$ from $0$ to $\dim \mathcal Q_\lambda - 1$}
        \For{$j$ from $0$ to $\dim \mathcal P - 1$} 
            \State $\ket{(\lambda, i, j)} \gets \sum_{\pi \in \S_n} \alpha_{\pi}^{(\lambda, j)} P_{\pi} \ket{(\lambda, i, 0)}$
        \EndFor
    \EndFor
    \vspace{.5em}
\EndFor
\State \Return $\{\ket{(\lambda, i, j)} \}_{\lambda, i,j}$.
\EndFunction
\end{algorithmic}
\end{algorithm}
We now describe this procedure. 
Its pseudocode can be found in \Cref{func:schur_basis_completion}.
For each $\lambda \partition_d n$, let $\{\ket{(\lambda, i, 0)}\}_{i=0}^{\dim \mathcal Q_{\lambda} -1}$ be the basis of some canonical irrep $\mathcal Q_{\lambda}$.
Consider the subspace $\mathcal P = \Span\{  P_{\pi} \ket{(\lambda, 0, 0)}  \mid \pi \in \S_n\}$. 
It turns out that $\mathcal P$ is an irrep of $S_n$ isomorphic to $\mathcal P_{\lambda}$
(see \Cref{lemma:grid}).
Thus, we have $\dim \mathcal P = \dim \mathcal P_{\lambda}$. 

Let $\{ \ket{(\lambda, 0, j)} \}_{j = 0}^{\dim \mathcal P - 1}$ be some basis for $\mathcal P$. By the construction of $\mathcal P$, each basis vector can expanded as
\begin{align*}
    \ket{(\lambda, 0, j)} \coloneqq \sum_{\pi \in \S_n} \alpha_{\pi}^{(\lambda, j)} P_{\pi} \ket{(\lambda, 0, 0)}.
\end{align*}
where the $\alpha_{\pi}^{(\lambda,j)} \in \C$ are complex coefficients. We will reuse these coefficients to ``interpolate'' the rest of the basis in the $\lambda$-isotypic subspace $\Pi_{\lambda}$.
In particular, for all $i$ such that $0 \le i < \dim \mathcal Q_\lambda$, we define
\begin{align*}
    \ket{(\lambda, i, j)} \coloneqq \sum_{\pi \in \S_n} \alpha_{\pi}^{(\lambda, j)} P_{\pi} \ket{(\lambda, i, 0)}.
\end{align*}
In \cite{buhrman2022quantum}, it has been shown that the above process yields a Schur basis for the space $(\C^2)^{\otimes n}$.
We show that this can be generalized to the space $(\C^d)^{\otimes n}$ for arbitrary $d$.
We provide the proof in \Cref{app:schur-basis-interpolation}.
\begin{restatable}[Schur Basis Completion]{lemma}{interpolatebasis}
\label{lem:interpolate-basis}
Given as input a basis $\{\ket{(\lambda, i, 0)}\}_{i=0}^{\dim \mathcal Q_{\lambda} -1}$ for some canonical irrep $Q_\lambda$ for each partition $\lambda \partition_d n$, Schur Basis Completion algorithm (\Cref{func:schur_basis_completion}) outputs a set of vectors $
\{\ket{(\lambda, i, j)}\}_{\lambda, i, j}$ which form a Schur basis.
\end{restatable}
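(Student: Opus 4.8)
The plan is to fix one partition $\lambda \partition_d n$ and prove that the $\dim \mathcal Q_\lambda \cdot \dim \mathcal P$ vectors $\{\ket{(\lambda,i,j)}\}_{i,j}$ produced for that $\lambda$ all lie in the isotypic component $\Pi_\lambda$ and form a basis of $\Pi_\lambda$ satisfying the three requirements of \Cref{schur_basis_condition} within $\Pi_\lambda$. Since $\mathcal P = \Span\{P_\pi \ket{(\lambda,0,0)} \mid \pi \in \S_n\}$ is an $\S_n$-irrep isomorphic to $\mathcal P_\lambda$ (\Cref{lemma:grid}), we have $\dim \mathcal P = \dim \mathcal P_\lambda$, so there are exactly $\dim\mathcal Q_\lambda \cdot \dim\mathcal P_\lambda = \dim \Pi_\lambda$ such vectors; and because $\sum_{\lambda \partition_d n} \dim\Pi_\lambda = d^n$, assembling the pieces over all $\lambda$ automatically yields a genuine Schur basis for $(\C^d)^{\otimes n}$.

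I would then introduce the operator $A_j \defeq \sum_{\pi \in \S_n} \alpha_\pi^{(\lambda,j)} P_\pi$, so that by construction $\ket{(\lambda,i,j)} = A_j \ket{(\lambda,i,0)}$ for every $i$ (in particular $\ket{(\lambda,0,j)} = A_j\ket{(\lambda,0,0)}$). Being a linear combination of permutation operators, $A_j$ commutes with every $\Un{n}$ by \Cref{lem:schur-weyl}. Two consequences are immediate. First, each $\ket{(\lambda,i,j)} \in \Pi_\lambda$: the input vectors $\{\ket{(\lambda,i,0)}\}_i$ span an irrep $Q_\lambda \cong \mathcal Q_\lambda$, which lies in the $\mathcal Q_\lambda$-isotypic component $\Pi_\lambda$, and $\Pi_\lambda$ is stable under all $P_\pi$. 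Second, for each fixed $j$ the restriction $A_j|_{Q_\lambda}$ is a $\U(d)$-linear map out of an irrep, so its kernel is $\{0\}$ or all of $Q_\lambda$; it is not all of $Q_\lambda$ since $A_j \ket{(\lambda,0,0)} = \ket{(\lambda,0,j)} \neq 0$, hence $A_j|_{Q_\lambda}$ is injective and $\Span_i \{\ket{(\lambda,i,j)}\} = A_j(Q_\lambda) \cong \mathcal Q_\lambda$ — this is the second Schur-basis condition, and in particular these $\dim\mathcal Q_\lambda$ vectors are independent for each fixed $j$. Symmetrically, expanding $\ket{(\lambda,i,0)} = \sum_k \beta_k^{(i)} U_k^{\otimes n} \ket{(\lambda,0,0)}$ via \Cref{lem:irrep-decomposition} and setting $C_i \defeq \sum_k \beta_k^{(i)} U_k^{\otimes n}$ (which commutes with every $P_\pi$), one checks $\ket{(\lambda,i,j)} = C_i \ket{(\lambda,0,j)}$, and then the same argument applied to the $\S_n$-irrep $\mathcal P$ gives $\Span_j\{\ket{(\lambda,i,j)}\} = C_i(\mathcal P) \cong \mathcal P_\lambda$ — the third condition.

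The remaining, and in my view main, obstacle is the first condition: that the full family $\{\ket{(\lambda,i,j)}\}_{i,j}$ spans $\Pi_\lambda$, equivalently (by the dimension count) is linearly independent, i.e. that the $\dim\mathcal P_\lambda$ isomorphic copies $A_0(Q_\lambda), A_1(Q_\lambda), \dots$ of $\mathcal Q_\lambda$ sit inside $\Pi_\lambda$ in direct sum. This is genuinely not forced by either group action alone — an isotypic component can contain many non-independent copies of a given irrep — so I would use both commuting actions together through the factorized Schur--Weyl decomposition $\Pi_\lambda \cong \mathcal Q_\lambda \otimes \mathcal P_\lambda$ (\Cref{thm:schur-duality}; see \Cref{app:more-rep-theory}), under which $\Un{n}$ acts as $\rho_{\mathcal Q_\lambda}(U) \otimes \mathbbm{1}$ and $P_\pi$ as $\mathbbm{1} \otimes \rho_{\mathcal P_\lambda}(\pi)$. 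Writing $T$ for this isomorphism: $A_j$ acts as $\mathbbm{1} \otimes a_j$ with $a_j \defeq \sum_\pi \alpha_\pi^{(\lambda,j)} \rho_{\mathcal P_\lambda}(\pi)$, and $T(Q_\lambda)$ — a $\U(d)$-subrep of $\mathcal Q_\lambda \otimes \mathcal P_\lambda$ isomorphic to $\mathcal Q_\lambda$ — must equal $\mathcal Q_\lambda \otimes \ket{\phi_0}$ for some nonzero $\ket{\phi_0} \in \mathcal P_\lambda$ by Schur's lemma. Thus $T\ket{(\lambda,i,0)} = \ket{q_i} \otimes \ket{\phi_0}$ for a basis $\{\ket{q_i}\}_i$ of $\mathcal Q_\lambda$, and
\[
T\ket{(\lambda,i,j)} = (\mathbbm{1} \otimes a_j)(\ket{q_i} \otimes \ket{\phi_0}) = \ket{q_i} \otimes \ket{p_j}, \qquad \ket{p_j} \defeq a_j \ket{\phi_0},
\]
with $\ket{p_j}$ independent of $i$. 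Finally, because $\{\ket{(\lambda,0,j)}\}_j$ is a basis of $\mathcal P$ (the algorithm picks it to be one) and $T(\mathcal P) = \ket{q_0} \otimes \Span\{\rho_{\mathcal P_\lambda}(\pi)\ket{\phi_0} \mid \pi\} = \ket{q_0} \otimes \mathcal P_\lambda$ (irreducibility of $\mathcal P_\lambda$ together with $\ket{\phi_0} \neq 0$, via \Cref{lem:irrep-decomposition}), the vectors $\ket{q_0} \otimes \ket{p_j} = T\ket{(\lambda,0,j)}$ form a basis of $\ket{q_0} \otimes \mathcal P_\lambda$, so $\{\ket{p_j}\}_j$ is a basis of $\mathcal P_\lambda$. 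Hence $\{T\ket{(\lambda,i,j)}\}_{i,j} = \{\ket{q_i} \otimes \ket{p_j}\}_{i,j}$ is a tensor-product basis of $\mathcal Q_\lambda \otimes \mathcal P_\lambda$, so $\{\ket{(\lambda,i,j)}\}_{i,j}$ is a basis of $\Pi_\lambda$; reading off $T^{-1}(\mathcal Q_\lambda \otimes \ket{p_j})$ and $T^{-1}(\ket{q_i} \otimes \mathcal P_\lambda)$ re-derives the other two conditions as a bonus.

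Finally, I would note why the jump from $d=2$ (as in \cite{buhrman2022quantum,cirac1999optimal}) to general $d$ causes no trouble: every step above is insensitive to the local dimension — $d$ enters only through which $\lambda \partition_d n$ occur and the sizes $\dim\mathcal Q_\lambda,\ \dim\mathcal P_\lambda$ — so the sole $d$-dependent ingredient is that the procedure is handed a correct basis of each canonical irrep $\mathcal Q_\lambda$ to start from, which is exactly the hypothesis of the lemma.
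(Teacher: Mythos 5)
Your proof is correct, but it takes a genuinely different route from the paper's. The paper works entirely inside $(\C^d)^{\otimes n}$: it first shows the $\U(d)$- and $\S_n$-isotypic components coincide and equal $\Span\{P_\pi \Un{n} Y_\lambda\ket{(\lambda)}\}$, then introduces the subspaces $P_{\lambda,i}=\Span\{P_\pi\ket{(\lambda,i,0)}\mid \pi\in\S_n\}$ and $Q_{\lambda,j}=\Span\{\Un{n}\ket{(\lambda,0,j)}\mid U\in\U(d)\}$, identifies them as irreps via \Cref{lemma:grid}, and proves three claims---these subspaces span $\Pi_\lambda$, they are linearly independent by the dimension count $\dim\Pi_\lambda=\dim\mathcal Q_\lambda\cdot\dim\mathcal P_\lambda$, and $\ket{(\lambda,i,j)}\in P_{\lambda,i}\cap Q_{\lambda,j}$---from which independence of the full family follows. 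You instead package the construction into the commuting operators $A_j$ and $C_i$, obtain the row/column span conditions from Schur's lemma (injectivity of a nonzero $G$-linear map out of an irrep), and settle the decisive spanning question by passing to the bimodule isomorphism $T\colon\Pi_\lambda\to\mathcal Q_\lambda\otimes\mathcal P_\lambda$, under which $T\ket{(\lambda,i,j)}=\ket{q_i}\otimes\ket{p_j}$ is manifestly a product basis; this is cleaner and more conceptual, and it re-derives the other two Schur-basis conditions for free. The one caveat: the step where $\Un{n}$ acts as $\rho_{\mathcal Q_\lambda}(U)\otimes I$ and $P_\pi$ as $I\otimes\rho_{\mathcal P_\lambda}(\pi)$ \emph{simultaneously} is the full $\U(d)\times\S_n$ form of Schur--Weyl duality, which is strictly stronger than what \Cref{thm:schur-duality} asserts (condition 3 there does not say the matrix coefficients of the action on the index $i$ are independent of $j$), and it is not established in \Cref{app:more-rep-theory} either; likewise, $T(Q_\lambda)=\mathcal Q_\lambda\otimes\ket{\phi_0}$ needs the standard ``decompose along a basis of $\mathcal P_\lambda$'' refinement of \Cref{lem:schurs_lemma}. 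Both facts are classical, so your argument stands if you cite the duality externally or derive it from \Cref{lem:schur-weyl} together with complete reducibility, but this is exactly the dependency the paper's own proof is organized to avoid, getting by with \Cref{lem:iso-unique}, \Cref{lemma:grid}, \Cref{lem:irrep-decomposition} and the dimension count instead.
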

A Schur basis constructed through \Cref{func:schur_basis_completion} starting from an arbitrary basis for $\mathcal Q_\lambda$ does not necessarily enjoy orthonormality and the required weight structure of a nice Schur basis.
We now describe a specific orthonormal Schur basis for a choice of canonical irrep $Q_\lambda$, which ensures that the output of \Cref{func:schur_basis_completion} is a \textit{nice} Schur basis.

We first specify the canonical irreps $\mathcal Q_{\lambda}$ for $\U(d)$.
To do so, we need to introduce the concept of the Young diagram/tableau and the Young symmetrizer.
\begin{definition}[Young diagram/tableau]
A \emph{Young diagram} is a visualization of an ordered partition $\lambda = (\lambda_1, \ldots, \lambda_d) \partition n$ with $\lambda_1 \ge \lambda_2 \ge \ldots \ge \lambda_d$ in which there are $\lambda_i$ boxes in the $i$-th row of the diagram. 
A \emph{Young tableau} with partition $\lambda$ is a Young diagram where each box has been filled by some number.
\end{definition}
For an element from $\S_n$, there is a natural corresponding group action\footnote{A group action is a function $\alpha \colon G \times X \to X$ that takes an element from a group $G$ and a set $X$ and produces the ``action'' of the group on the set. Formally, a group action $\alpha$ must map $x \in X$ to itself on the identity group element and be composable in the expected way: $\alpha(g, \alpha(h,x)) = \alpha(gh, x)$ for all $g,h \in G$ and $x \in X$.} on a Young tableau that permutes the boxes of the Young tableau (assuming the boxes are labeled from $1$ to $n$ in some order).
Combining this correspondence with the partition information of a Young tableau gives rise to two natural subgroups of $\S_n$.

\begin{definition}[Row/Column Permutation Groups] \label{def:row-column-permute}
Fix a partition $\lambda \partition n$, and denote the set of Young tableaux of partition $\lambda$ as $\mathcal Y_{\lambda}$.
We define a group action $\mathcal Y_{\lambda} \times \S_n \mapsto \mathcal Y_{\lambda}$
on Young tableaux:  $\pi$ 
sends the number $i$ to the box where the number $\pi_i$ lies in.
Now, fix a canonical standard tableau $T$ of shape $\lambda$ with the numbers $1$ up to $n$ filled from top to bottom and then from left to right in order.\footnote{We remark this choice is arbitrary. Any order of indexing is valid as long as the definition is consistent.}
The group action induces two subgroups of $\S_n$ which we denote as $A_{\lambda}$ and $B_{\lambda}$. Specifically, $A_{\lambda}$ is the set of permutations $\pi$ such that the group action only permutes the numbers within the same rows of the tableau, and $B_{\lambda}$ is the analogous set for columns.\footnote{The more standard way of defining these two subgroups is to parameterize them with a Young tableau of partition $\lambda$ filled by numbers from $1$ to $n$. 
For our purpose, it suffices for us to consider the subgroups defined by the Young tableau filled with the numbers from left to right and then top to down in order. Hence, we assume the subgroups are defined with only the partition $\lambda$.
}
\end{definition}

By taking a (signed) sum of the corresponding linear operators of the elements from $A_{\lambda}$ and $B_{\lambda}$ from \Cref{def:row-column-permute}, we arrive at the definition of the Young symmetrizer. 
\begin{restatable}[Young symmetrizer]{definition}{youngsymmetrizerdefinition}
\label{def:young_symmetrizer}
For partition $\lambda \partition n$, the \emph{Young symmetrizer} $Y_{\lambda} \colon (\C^d)^{\otimes n} \to (\C^d)^{\otimes n}$ is the linear operator given by
$$
Y_{\lambda} = \sum_{a \in A_{\lambda}} P_a \sum_{ b \in B_{\lambda}  } \sgn(  b ) \; P_{b}.
$$
\end{restatable}
The Young symmetrizer $Y_{\lambda}$ is proportional to a projector onto some irrep of $\U(d)$ (see \Cref{thm:young-Q-projector}). Thus, we will simply use the image of $Y_{\lambda}$ as our canonical irrep, i.e., $\mathcal Q_{\lambda}:= Y_{\lambda} (\C^d)^{\otimes n}$.

We need a set of vectors
$\{\ket{(\lambda, i, 0)}\}_{i=0}^{\dim \mathcal Q_{\lambda} -1}$ that form a basis of $\mathcal Q_{\lambda} = Y_{\lambda} (\C^d)^{\otimes n}$.
Instead of choosing these vectors to be an arbitrary basis, we require each basis vector to live in some weight subspace completely.
We achieve this through the  process described in \Cref{alg:orthonormal_schur_basis}.
We remark that the basis  constructed via \Cref{alg:orthonormal_schur_basis} and \Cref{func:schur_basis_completion} can be viewed as a natural generalization of the nice Schur basis from \cite{buhrman2022quantum} with a subtle twist:
\begin{note}[Connection with the Nice Schur Basis from \cite{buhrman2022quantum}]
Recall that the Nice Schur Basis from \cite{buhrman2022quantum} 
for some $\lambda$ that has two parts is defined to be
$$
\ket{01}^{\otimes \lambda_2} \otimes \ket{ s_{\lambda_1 - \lambda_2} ( y - \lambda_2  ) } \, ,
$$
 where $\ket{s_{\ell}(w)}$ is the uniform superposition of bitstrings of length $\ell$ with $w$ number of $1$s.
 Suppose we invert the order of row and column permutations in the definition of $Y_{\lambda}$. 
 In particular, define $\bar Y_{\lambda} := \sum_{\pi \in B_{\lambda}} \sgn(\pi) P_{\pi} \sum_{\sigma \in A_{\lambda}} P_{\sigma}$ \footnote{In fact, the two definitions are morally equivalent as both of them are proportional to projectors onto irreps of $U(d)$.}.
Then we have
$\ket{01}^{\otimes \lambda_2} \otimes \ket{ s_{\lambda_1 - \lambda_2} ( y - \lambda_2  ) } \propto \bar Y_{\lambda} \lp( \ket{0}^{\otimes n-\ell} \ket{1}^{\otimes \ell} \rp)
$ whereas our Schur basis will contain the vector 
$Y_{\lambda} \lp( \ket{0}^{\otimes n-\ell} \ket{1}^{\otimes \ell} \rp)$.
\end{note} 

To prove the correctness of \Cref{alg:orthonormal_schur_basis}, we first show that the output of \Cref{alg:orthonormal_schur_basis} is a Schur basis. By the properties of the Schur Basis Completion procedure (\Cref{func:schur_basis_completion}), it is enough to show that the set of vectors $\{ \ket{(\lambda, i, 0)} \}_i$ created on \crefrange{line:Q_lambda_basis_begin}{line:Q_lambda_basis_end} form a basis for the canonical irrep $\mathcal Q_{\lambda} = Y_{\lambda} (\C^d)^{\otimes n}$.
But this is true because $\Span(\bigcup_{w \in [n]^d} V(w)) = (\C^d)^{\otimes n}$, so $\Span(\bigcup_{w \in [n]^d} Y_\lambda V(w)) = Y_\lambda (\C^d)^{\otimes n}$.

Next, we observe that each basis vector lies entirely within a single weight subspace. To see this, notice that the Young symmetrizer is a linear combination of permutation operators and every weight subspace is invariant under permutation operators.

Finally, we need to show the basis is orthonormal. Since the proof is somewhat involved, we defer the majority of the details to \Cref{app:orthogonality}. Below, we sketch the key ideas in a sequence of lemmas. 
To start, we require a sufficient condition for the null space of $Y_{\lambda}$. 
This is a standard property of the Young symmetrizer (see \cite{haah2016sample} e.g.), and
we provide its proof in \Cref{app:orthogonality} for completeness.
\begin{restatable}[Vanishing Condition for Young Symmetrizer]{lemma}{nullspaceyoungsymmetrizer}
\label{clm:non-vanish}
Let $\lambda \partition n$.
Let $\ket{e}$ be a standard basis vector with weight vector $\mathcal W(\ket{e})$.
Suppose $\mathcal W(\ket{e})$ is not majorized by $\lambda$, i.e.,
there exists some $m$ such that
$\sum_{i=1}^m \mathcal W(\ket{e})_{(i)}^{\downarrow}
> \sum_{i=1}^m \lambda_i$ where $\mathcal W(\ket{e})_{(i)}^{\downarrow}$ denotes the $i$-th largest entry of $\mathcal W(\ket{e})$.
Then, $Y_{\lambda} \ket{e} = 0$.
\end{restatable}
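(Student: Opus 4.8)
The plan is to prove the vanishing condition by exploiting the antisymmetrization over columns: if the weight vector of $\ket{e}$ fails to be majorized by $\lambda$, then by a pigeonhole argument some column of the Young diagram must contain two boxes filled (under $\ket{e}$) with the same symbol, and antisymmetrizing over that column kills the vector.

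\begin{proof}
Recall that $Y_\lambda = \sum_{a \in A_\lambda} P_a \sum_{b \in B_\lambda} \sgn(b) P_b$, so it suffices to show that $\left( \sum_{b \in B_\lambda} \sgn(b) P_b \right) \ket{e} = 0$, since $Y_\lambda$ is obtained from this by further left-multiplication. Write $\ket{e} = \ket{e_1, \ldots, e_n}$ and recall that the subgroup $B_\lambda \le \S_n$ is the stabilizer of the column structure of the canonical tableau $T$ of shape $\lambda$ (filled top-to-bottom then left-to-right). Thus $B_\lambda = B_1 \times \cdots \times B_{\lambda_1}$, a direct product over the columns, where $B_c$ permutes exactly the indices occupying column $c$ of $T$. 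Correspondingly, $\sum_{b \in B_\lambda} \sgn(b) P_b = \bigotimes_{c} \left( \sum_{b \in B_c} \sgn(b) P_b \right)$ as operators acting on the respective tensor factors. Hence it is enough to exhibit a single column $c$ such that two of the indices in that column carry the same symbol in $\ket{e}$: the antisymmetrizer over $B_c$ applied to a standard basis tensor with a repeated symbol in two of its slots yields $0$ (the transposition swapping those two slots fixes the tensor but contributes a $-1$).

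It remains to argue that the majorization failure forces such a repeated column entry. Let $w = \mathcal W(\ket{e})$ and let $w^\downarrow$ be its decreasing rearrangement; suppose $\sum_{i=1}^m w^\downarrow_{(i)} > \sum_{i=1}^m \lambda_i$ for some $m$. Consider the $m$ symbols of $[d]$ that are most frequent in $\ket{e}$, and let $S \subseteq [n]$ be the set of positions holding one of these $m$ symbols, so $|S| = \sum_{i=1}^m w^\downarrow_{(i)} > \sum_{i=1}^m \lambda_i$. Now distribute the positions of $S$ into the columns of $T$ according to which box of the canonical tableau each position occupies. Each column of $\lambda$ has some height $h \le$ (number of rows), and the first $m$ rows intersect any column in at most $\min(h, m) \le m$ boxes; summing over all columns, the total number of boxes lying in the first $m$ rows is exactly $\sum_{i=1}^m \lambda_i$. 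Since $|S| > \sum_{i=1}^m \lambda_i$, by pigeonhole $S$ cannot be contained in the first $m$ rows... more carefully: we want two positions of $S$ in the \emph{same} column. The $m$ most frequent symbols occupy $|S|$ positions; if no column contained two positions with symbols from this frequent set \emph{that are equal}, we would need a sharper count. The cleanest route is: assign to column $c$ the multiset of symbols $\{e_j : j \text{ in column } c\}$; if every column had all-distinct symbols among the frequent $m$, then each column contains at most $m$ positions from $S$ only if... Instead, argue directly on all symbols: if \emph{every} column of $T$ had pairwise-distinct symbols under $\ket{e}$, then each symbol $i$ appears at most once per column, hence $w_i \le \lambda_1$ (the number of columns) — that alone is too weak. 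The correct standard argument: if every column has distinct entries, then the positions of the $m$ most frequent symbols can be placed greedily filling rows, and a counting/rearrangement (this is exactly the Gale--Ryser / RSK-type inequality) shows $\sum_{i=1}^m w^\downarrow_{(i)} \le \sum_{i=1}^m \lambda_i$, contradicting the hypothesis. I will carry this counting out via the following: order positions so that within each column the frequent symbols come first; since columns have distinct entries, at most $\lambda^T_c$ — wait.

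Let me state the counting cleanly. Suppose, for contradiction, that for every column $c$ of $T$ the symbols $\{e_j : j \in \text{column } c\}$ are pairwise distinct. For each symbol $i \in [d]$ and each column $c$, symbol $i$ appears in column $c$ at most once; let $x_{i,c} \in \{0,1\}$ be this indicator, so $w_i = \sum_c x_{i,c}$. Relabel symbols so that $w_1 \ge w_2 \ge \cdots$, i.e. $w_i = w^\downarrow_{(i)}$. For a fixed $m$, $\sum_{i=1}^m w_i = \sum_c \sum_{i=1}^m x_{i,c} = \sum_c |\{i \le m : i \text{ appears in column } c\}|$, and the inner count is at most $\min(m, \text{height of column } c)$. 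Summing $\min(m, h_c)$ over all columns $c$ of the diagram $\lambda$ equals precisely the number of cells of $\lambda$ lying in the first $m$ rows, which is $\sum_{i=1}^m \lambda_i$. Therefore $\sum_{i=1}^m w^\downarrow_{(i)} \le \sum_{i=1}^m \lambda_i$, contradicting the assumed majorization failure. Hence some column of $T$ has a repeated symbol under $\ket{e}$, so the antisymmetrizer over that column annihilates $\ket{e}$, and therefore $Y_\lambda \ket{e} = 0$.
\end{proof}

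I expect the main obstacle to be presenting the pigeonhole/counting step crisply: one must match ``cells of $\lambda$ in the first $m$ rows'' with ``$\sum_{i=1}^m \lambda_i$'' and with ``$\sum_c \min(m, h_c)$'', where $h_c$ is the height of column $c$ (equivalently $\lambda^T_c$). The identity $\sum_c \min(m, h_c) = \sum_{i \le m} \lambda_i$ is the combinatorial heart and should be stated as a one-line observation about transposing Young diagrams; everything else (the factorization of the column antisymmetrizer, the fact that a repeated entry in an antisymmetrized slot gives $0$, and the reduction from $Y_\lambda$ to just its $B_\lambda$-part) is routine.
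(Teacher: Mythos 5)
Your proof is correct and follows essentially the same route as the paper: reduce to the column antisymmetrizer (a repeated symbol in some column of the tableau filled by $\ket{e}$ forces $\sum_{b\in B_\lambda}\sgn(b)P_b\ket{e}=0$), then show non-majorization forces such a repeat, with your contrapositive double-count via $\sum_c \min(m,h_c)=\sum_{i\le m}\lambda_i$ being just a cleaner packaging of the paper's sort-columns-and-pigeonhole argument. The only improvement needed is editorial: strip out the exploratory false starts and keep the final counting paragraph, which is the complete and correct argument.
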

Since \Cref{alg:orthonormal_schur_basis} iterates over weight vectors in reverse lexicographic order, this claim immediately gives a key property about $\ket{(\lambda, 0, 0)}$, which lives in the lexicographically largest weight subspace that has
nontrivial intersection with the image of $Y_\lambda$:


\begin{algorithm}
\caption{Nice Schur basis}\label{alg:orthonormal_schur_basis}
\begin{algorithmic}[1]
\leftComment Input: Local dimension $d \in \mathbb N$ and number of copies $n \in \mathbb N$
\leftComment Output: Nice Schur basis $\{\ket{(\lambda, i, j)} \mid \lambda \partition_d n, 0 \leq i < \dim \mathcal Q_{\lambda}, 0 \leq j < \dim \mathcal P_{\lambda}\}$ for $(\C^d)^{\otimes n}$
\vspace{.5em}
\For{$\lambda \partition_d n$}
\State Initialize $S$ to be an empty list
\For{$w \in [n]^d$ in reverse lexicographic order\footnotemark}
    \If{$
    Y_{\lambda} V(w) \neq 0$}
        \State Append to $S$ any orthonormal basis of $Y_{\lambda} V(w)$
    \EndIf
\EndFor
\For{$i$ from $0$ to $|S|-1$} \label{line:Q_lambda_basis_begin} 
\State $\ket{(\lambda, i, 0)} \gets S_i$
\EndFor \label{line:Q_lambda_basis_end}
\EndFor
\State \Return \hyperref[func:schur_basis_completion]{\textsc{SchurBasisCompletion}}($\{\ket{(\lambda, i, 0)}\}_{i,\lambda}$)
\end{algorithmic}
\end{algorithm}
\footnotetext{For vectors $v, w \in [n]^d$, we say that $v < w$ in lexicographic order if there exists index $i \in [d]$ such that $v_i < w_i$ and for all $j < i$, $v_j = w_j$. The vectors $w^{(1)}, w^{(2)}, \ldots, w^{(k)}$ are in reverse lexicographic order whenever $w^{(1)} \ge w^{(2)} \ge \ldots \ge w^{(k)}$.}

\begin{restatable}{lemma}{onedminweightspace}
\label{lem:1d-min-weight-space}
Let $\lambda = (\lambda_1, \ldots, \lambda_\ell) \partition_d n$.
Define $w \in [n]^d$ such that $w_i \coloneqq \left\{\begin{smallmatrix}  \lambda_i & i \le \ell \\ 0 & \text{otherwise} \end{smallmatrix}\right.$. In other words,
$w$ is the lexicographically largest weight vector whose non-zero elements match those of the partition $\lambda$. We have 
\begin{itemize}[itemsep = 0pt]
    \item $\dim Y_{\lambda} V(w) = 1$, and
    \item $\dim Y_{\lambda} V(v) = 0$ for all weight vectors $v \in [n]^d$ lexicographically larger than $w$.
\end{itemize}
\end{restatable}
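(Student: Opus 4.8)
The plan is to prove both bullets together by analyzing the dimension of $Y_\lambda V(v)$ as $v$ ranges over weight vectors. Throughout, fix $\lambda = (\lambda_1, \ldots, \lambda_\ell) \partition_d n$ and let $w$ be as in the statement. The first bullet of the claim follows immediately from \Cref{clm:non-vanish}: if $v$ is lexicographically larger than $w$, I claim its sorted-descending version $v^\downarrow$ is not majorized by $\lambda$. Indeed, comparing the first coordinate where $v$ and $w$ differ, being lexicographically larger forces $v$ to place ``too much mass early'': after reordering, some prefix sum $\sum_{i=1}^m v^\downarrow_{(i)}$ must exceed $\sum_{i=1}^m \lambda_i$, since $w$ already saturates the partial-sum bound $\sum_{i\le m}\lambda_i$ with its first $\ell$ entries (the nonzero ones) and any lexicographically larger vector must overshoot one of these prefixes. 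Hence $Y_\lambda \ket{e} = 0$ for every standard basis vector $\ket{e} \in V(v)$, so $Y_\lambda V(v) = 0$; this handles the second bullet.

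For the first bullet, I need $\dim Y_\lambda V(w) = 1$. The lower bound $\dim Y_\lambda V(w) \ge 1$ holds because $Y_\lambda (\C^d)^{\otimes n} = \mathcal Q_\lambda \ne 0$, and since $V(w)$ is the lexicographically maximal weight subspace on which $Y_\lambda$ does not vanish (by the second bullet), the span $\Span(\bigcup_{v} Y_\lambda V(v))$ equals $Y_\lambda(\C^d)^{\otimes n} \ne 0$, forcing $Y_\lambda V(w) \ne 0$. For the upper bound, the natural approach is to exhibit a single standard basis vector $\ket{e^*}$ of weight $w$ that generates everything: concretely, the ``canonical'' filling in which the standard tableau $T$ (used to define $A_\lambda, B_\lambda$) has its $i$-th row entirely labeled by the symbol $i$. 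Then $V(w)$ is spanned by $\{P_\pi \ket{e^*} \mid \pi \in \S_n\}$ as $\pi$ ranges over coset representatives of the stabilizer. The key point is that for such $e^*$, the row group $A_\lambda$ stabilizes $\ket{e^*}$ (it permutes within rows, i.e., among equal symbols), so $\sum_{a \in A_\lambda} P_a \ket{e^*} = |A_\lambda| \ket{e^*}$; and applying $\sum_{b\in B_\lambda}\sgn(b)P_b$ — which antisymmetrizes within columns — to $\ket{e^*}$, whose column entries are all distinct ($1,2,\ldots$ down each column), yields a nonzero vector. I then need to show that $Y_\lambda P_\pi \ket{e^*}$ is, for every $\pi$, a scalar multiple of $Y_\lambda \ket{e^*}$ whenever it is nonzero. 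This is where I would invoke that $Y_\lambda (\C^d)^{\otimes n}$ is an irrep $\mathcal Q_\lambda$ of $\U(d)$ together with a weight/highest-weight argument: $Y_\lambda \ket{e^*}$ is (proportional to) the highest weight vector of $\mathcal Q_\lambda$ with weight $w = \lambda$, and the weight-$w$ subspace of an irrep $\mathcal Q_\lambda$ with highest weight $\lambda$ is one-dimensional. Since $Y_\lambda V(w) \subseteq \mathcal Q_\lambda \cap V(w)$ and $V(w)$ is exactly the weight-$w$ eigenspace of the diagonal torus action, we get $\dim Y_\lambda V(w) \le \dim(\mathcal Q_\lambda \cap V(w)) = 1$.

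The main obstacle I anticipate is making the ``weight-$\lambda$ subspace of $\mathcal Q_\lambda$ is one-dimensional'' step rigorous within the paper's self-contained framework. There are two routes: (a) cite the standard fact from the representation theory of $\GL(d)$ that the highest weight space of an irrep is one-dimensional (this is essentially \Cref{lem:irrep_summary} combined with standard highest-weight theory, and the paper already allows citing \cite{etingof2011introduction, fulton2013representation}); or (b) argue combinatorially that any standard basis vector $\ket{e}$ of weight $w$ satisfies $Y_\lambda \ket{e} \in \C \cdot Y_\lambda \ket{e^*}$ directly — since $\ket{e}$ has exactly $\lambda_i$ copies of symbol $i$, after permuting we can assume the symbols are sorted, and then a column-antisymmetrization argument shows that either $Y_\lambda \ket{e} = 0$ (if two equal symbols share a column) or $\ket{e}$ is a row-permutation of $e^*$ (hence $Y_\lambda \ket{e} = \pm |A_\lambda| \cdot (\text{antisymmetrization of } e^*)$, all proportional). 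Route (b) is more elementary and fits the paper's style; I would carry it out by a pigeonhole argument on the column structure of the canonical tableau. The bookkeeping — tracking which permutations move $e$ into ``sorted-by-rows'' form and verifying the scalars match — is the routine-but-fiddly part, and I would relegate it to the appendix as the excerpt already signals.
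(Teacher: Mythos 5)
Your handling of the second bullet is fine and matches the paper: lexicographically larger than $w$ forces a violated prefix sum, hence non-majorization, hence $Y_\lambda V(v)=0$ by \Cref{clm:non-vanish}. The problems are in the first bullet. First, your stated lower-bound argument is a non sequitur: knowing that $Y_\lambda$ vanishes on every weight space lexicographically above $w$ and that $Y_\lambda(\C^d)^{\otimes n}\neq 0$ does not force $Y_\lambda V(w)\neq 0$ --- a priori the image could be generated entirely by lexicographically smaller weight spaces. The correct fix is exactly the $e^*$ computation (the paper checks $Y_\lambda\ket{(\lambda)}\neq 0$ directly), but your verification applies the two factors of $Y_\lambda$ in the wrong order: in this paper $Y_\lambda=\sum_{a\in A_\lambda}P_a\sum_{b\in B_\lambda}\sgn(b)P_b$, so observing that $\sum_a P_a\ket{e^*}=|A_\lambda|\,\ket{e^*}$ and that the column antisymmetrizer does not kill $\ket{e^*}$ only shows the \emph{reversed} product $\sum_b\sgn(b)P_b\sum_a P_a$ is nonzero on $\ket{e^*}$. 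For $Y_\lambda$ itself you must additionally rule out that the row symmetrization annihilates $\sum_b\sgn(b)P_b\ket{e^*}$; the standard way is to note that the coefficient of $\ket{e^*}$ in $Y_\lambda\ket{e^*}$ equals $|A_\lambda|$, since $P_a\ket{e^*}=\ket{e^*}$ for all $a\in A_\lambda$ while the only $b\in B_\lambda$ fixing $\ket{e^*}$ is the identity (columns have distinct entries).

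For the upper bound, your route (a) --- $Y_\lambda V(w)\subseteq \mathcal Q_\lambda\cap V(w)$ and the highest-weight space of the $\GL(d)$ irrep with highest weight $\lambda$ is one-dimensional --- is a legitimate citation-based shortcut and genuinely different from the paper, which deliberately stays elementary. But route (b), which is closer to what the paper actually does, is misstated: a weight-$w$ basis vector with no repeated symbol in any column is a \emph{column} permutation ($B_\lambda$-permutation) of $e^*$, not a row permutation. For example, with $\lambda=(2,1)$ the vector $\ket{2,1,1}$ has distinct entries in each column yet is not in the $A_\lambda$-orbit of $\ket{e^*}=\ket{1,1,2}$; it is obtained by swapping within the first column. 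The $\pm$ sign bookkeeping you invoke also only makes sense for column permutations, since row permutations carry no sign in $Y_\lambda$. The paper's proof is precisely the corrected version of this idea: any weight-$w$ basis vector surviving the column antisymmetrizer must have each column filled with exactly $\{1,\dots,\text{column height}\}$, hence any two such vectors differ by some $\beta\in B_\lambda$ and their antisymmetrizations agree up to $\sgn(\beta)$; thus $\bigl(\sum_b\sgn(b)P_b\bigr)V(w)$ is one-dimensional, and applying $\sum_a P_a$ can only keep or lower the dimension, giving $\dim Y_\lambda V(w)\le 1$.
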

As an immediate corollary, we obtain pairwise orthogonality between vectors of the form $\ket{(\lambda, 0, k)}$ and $\ket{(\lambda, i, \ell)}$ for $i > 0$ as they lie in different weight subspaces.
\begin{restatable}[Base Case Orthogonality]{lemma}{basecaseorthogonality}
\label{cor:base-orthogonality}
For all $\lambda \partition_d n$, $k, \ell \ge 0$, and $i > 0$, $\braket{(\lambda, i, \ell)}{(\lambda, 0, k)} = 0$.
\end{restatable}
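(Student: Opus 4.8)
The plan is to show that $\ket{(\lambda, i, \ell)}$ and $\ket{(\lambda, 0, k)}$ lie in different weight subspaces whenever $i > 0$, so that orthogonality follows immediately from the orthogonality of distinct weight subspaces $V(w)$. The key is to track exactly which weight subspace each of these basis vectors lives in, using the structure of \Cref{alg:orthonormal_schur_basis} together with \Cref{lem:1d-min-weight-space}.

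First I would pin down the weight subspace of $\ket{(\lambda, 0, k)}$. By construction in \Cref{alg:orthonormal_schur_basis}, the vectors $\ket{(\lambda, i, 0)}$ are obtained by concatenating orthonormal bases of $Y_\lambda V(w)$ as $w$ ranges over weight vectors in reverse lexicographic order, and only the \emph{first} such $w$ — namely the lexicographically largest $w$ with $Y_\lambda V(w) \neq 0$ — contributes. By \Cref{lem:1d-min-weight-space}, that largest $w$ is $w^\star = (\lambda_1, \ldots, \lambda_\ell, 0, \ldots, 0)$ and $\dim Y_\lambda V(w^\star) = 1$. Hence $\ket{(\lambda, 0, 0)} \in V(w^\star)$, and it is the unique index-$0$ basis vector living in $V(w^\star)$; every $\ket{(\lambda, i, 0)}$ with $i > 0$ lies in some $V(w)$ with $w$ strictly lexicographically smaller than $w^\star$. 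Now \textsc{SchurBasisCompletion} (\Cref{func:schur_basis_completion}) builds $\ket{(\lambda, 0, k)} = \sum_{\pi \in \S_n} \alpha_\pi^{(\lambda, k)} P_\pi \ket{(\lambda, 0, 0)}$. Since each $P_\pi$ permutes tensor factors, it maps $V(w^\star)$ to itself (permuting factors does not change the multiset of symbols, hence not the weight vector). Therefore $\ket{(\lambda, 0, k)} \in V(w^\star)$ for every $k$.

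Next I would locate $\ket{(\lambda, i, \ell)}$ for $i > 0$. Again by \textsc{SchurBasisCompletion}, $\ket{(\lambda, i, \ell)} = \sum_{\pi} \alpha_\pi^{(\lambda, \ell)} P_\pi \ket{(\lambda, i, 0)}$ reuses the same permutation coefficients applied to $\ket{(\lambda, i, 0)}$. By the previous paragraph, $\ket{(\lambda, i, 0)} \in V(w)$ for some weight vector $w \neq w^\star$ (indeed $w$ strictly lexicographically below $w^\star$), and applying the $P_\pi$'s keeps us inside $V(w)$. So $\ket{(\lambda, i, \ell)} \in V(w)$ with $w \neq w^\star$. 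Since distinct weight subspaces $V(w)$ and $V(w^\star)$ are spanned by disjoint sets of standard basis vectors, they are orthogonal, and thus $\braket{(\lambda, i, \ell)}{(\lambda, 0, k)} = 0$.

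The main obstacle — really the only subtle point — is justifying that $\ket{(\lambda, i, 0)}$ for $i > 0$ lies in a \emph{single} weight subspace that is moreover distinct from $w^\star$. The single-weight-subspace claim is already established in the run-up to the lemma (each $Y_\lambda V(w)$ lies in $V(w)$ because $Y_\lambda$ is a combination of permutation operators, which preserve weight subspaces, and \Cref{alg:orthonormal_schur_basis} takes basis vectors of these individual $Y_\lambda V(w)$'s). Distinctness from $w^\star$ then follows purely from the fact that $\dim Y_\lambda V(w^\star) = 1$ (\Cref{lem:1d-min-weight-space}): only $\ket{(\lambda, 0, 0)}$ is allocated to the $V(w^\star)$ block, and all subsequently indexed vectors come from weight subspaces $V(w)$ with $w$ lexicographically smaller. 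Everything else is bookkeeping about \Cref{alg:orthonormal_schur_basis} and the permutation-invariance of weight subspaces, which I would state in one or two lines.
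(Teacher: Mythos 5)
Your proposal is correct and follows essentially the same route as the paper's proof: both reduce the claim to showing that $\ket{(\lambda,0,k)}$ lies in the weight subspace $V(w^\star)$ of the lexicographically largest admissible weight vector (which is one-dimensional under $Y_\lambda$ by \Cref{lem:1d-min-weight-space}), while each $\ket{(\lambda,i,\ell)}$ with $i>0$ lies in a different weight subspace, using that permutation operators preserve weight subspaces. Your write-up is in fact somewhat more explicit about the bookkeeping in \Cref{alg:orthonormal_schur_basis} than the paper's, but it is the same argument.
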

To show orthogonality among the rest of vectors, 
we will make crucial use of the irreducibility of the subspace $\mathcal Q_{\lambda}$.
At a high-level, irreducibility allows us to 
rewrite $\ket{(\lambda, j, k)}$ as
$\sum_{ U } \beta_U \Un{n} \ket{(\lambda, 0, k)}$ (see \Cref{lem:irrep-decomposition}).
Then, one can reduce orthogonality of other pairs of vectors into orthogonality between the vectors $\ket{(\lambda, 0, k)}, \ket{(\lambda, i, \ell)}$, which follows from their definitions when $i = 0$ and $k \neq \ell$, or
our base case orthogonality shown in \Cref{cor:base-orthogonality} when $j \neq 0$. 
The actual proof is a bit more subtle than that, and we refer the readers to \Cref{app:orthogonality} for the detail.
\begin{restatable}[Orthonormality]{lemma}{completeschurbasisorthogonality}
\label{lem:basis-orthogonality}
For all $\lambda \partition n$ and $i,j,k,\ell \ge 0$, we have $\braket{(\lambda, i, k)}{(\lambda, j, \ell)} = \delta_{i, j} \; \delta_{k, \ell}$.
\end{restatable}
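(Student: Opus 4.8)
The plan is to prove \Cref{lem:basis-orthogonality} by reducing the general case to the base case established in \Cref{cor:base-orthogonality}, using the irreducibility of $\mathcal Q_\lambda$ together with the defining relations of the Schur Basis Completion procedure. Fix $\lambda \partition_d n$. Recall that by construction $\ket{(\lambda, i, j)} = \sum_{\pi \in \S_n} \alpha_\pi^{(\lambda,j)} P_\pi \ket{(\lambda, i, 0)}$, so the coefficients $\alpha_\pi^{(\lambda,j)}$ are independent of $i$; moreover $\{\ket{(\lambda,0,j)}\}_j$ is by construction an orthonormal basis of the $\S_n$-irrep $\mathcal P = \Span\{P_\pi\ket{(\lambda,0,0)}\}$, so $\braket{(\lambda,0,k)}{(\lambda,0,\ell)} = \delta_{k,\ell}$ is immediate. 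Also note each $\{\ket{(\lambda,i,0)}\}_i$ is orthonormal by \Cref{alg:orthonormal_schur_basis} (it is a concatenation of orthonormal bases of the mutually orthogonal weight subspaces $Y_\lambda V(w)$), which gives $\braket{(\lambda,i,0)}{(\lambda,j,0)} = \delta_{i,j}$.

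The crux is to handle the inner product $\braket{(\lambda,i,k)}{(\lambda,j,\ell)}$ for general $i,j$. First I would establish the key intermediate fact that for each fixed $j$, the operator $T_j := \sum_{\pi} \alpha_\pi^{(\lambda,j)} P_\pi$ acts on $\mathcal Q_\lambda$ (the $\mathcal{Q}_\lambda$-copy indexed by the ``$0$'' value of the permutation register, i.e.\ $\Span\{\ket{(\lambda,i,0)}\}_i$), and the adjoint-composition $T_k^\dagger T_\ell$ restricted to $\mathcal Q_\lambda$ commutes with the $\U(d)$-action $U \mapsto U^{\otimes n}$ restricted to $\mathcal Q_\lambda$. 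The commutation holds because each $P_\pi$ commutes with every $U^{\otimes n}$ (Schur--Weyl, \Cref{lem:schur-weyl}), hence so does any linear combination $T_k^\dagger T_\ell$, and $\mathcal Q_\lambda$ is $\U(d)$-invariant. Since $\mathcal Q_\lambda$ is an irrep of $\U(d)$, Schur's lemma forces $T_k^\dagger T_\ell|_{\mathcal Q_\lambda} = c_{k\ell}\,\mathrm{Id}_{\mathcal Q_\lambda}$ for some scalar $c_{k\ell} \in \C$. Here one must be slightly careful that $T_\ell$ maps $\mathcal Q_\lambda$ into the larger space $(\C^d)^{\otimes n}$ and $T_k^\dagger$ maps back; but $\braket{(\lambda,i,k)}{(\lambda,j,\ell)} = \bra{(\lambda,i,0)} T_k^\dagger T_\ell \ket{(\lambda,j,0)}$ only depends on the compression of $T_k^\dagger T_\ell$ to $\mathcal Q_\lambda$, and $\Pi_{\mathcal Q_\lambda} T_k^\dagger T_\ell \Pi_{\mathcal Q_\lambda}$ is $\U(d)$-equivariant as an endomorphism of $\mathcal Q_\lambda$, so Schur's lemma applies to it. Therefore $\braket{(\lambda,i,k)}{(\lambda,j,\ell)} = c_{k\ell}\,\delta_{i,j}$.

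It then remains to pin down the scalars $c_{k\ell}$. Evaluating at $i = j = 0$ gives $c_{k\ell} = \braket{(\lambda,0,k)}{(\lambda,0,\ell)} = \delta_{k,\ell}$ from the orthonormality of the $\mathcal P$-basis noted above. This completes the identification $\braket{(\lambda,i,k)}{(\lambda,j,\ell)} = \delta_{i,j}\delta_{k,\ell}$. (The cross-$\lambda$ orthogonality $\braket{(\lambda,i,j)}{(\lambda',i',j')}=0$ for $\lambda \neq \lambda'$ is not part of this statement but would follow since the vectors lie in the orthogonal isotypic components $\Pi_\lambda$, $\Pi_{\lambda'}$ of \Cref{thm:schur-duality}.)

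The main obstacle I anticipate is making the Schur's-lemma step fully rigorous: one needs that $\Pi_{\mathcal Q_\lambda} T_k^\dagger T_\ell \Pi_{\mathcal Q_\lambda}$ really is $\U(d)$-linear \emph{as a map $\mathcal Q_\lambda \to \mathcal Q_\lambda$}, which uses both that $\mathcal Q_\lambda$ is $U^{\otimes n}$-invariant and that $T_k^\dagger T_\ell$ commutes with $U^{\otimes n}$ globally; and one needs the version of Schur's lemma for an equivariant endomorphism of an irrep over $\C$ (scalar multiples of identity), which is standard but should be cited or stated. A secondary subtlety, foreshadowed in the excerpt's remark that ``the actual proof is a bit more subtle,'' is that one might alternatively want to expand $\ket{(\lambda,j,k)} = \sum_U \beta_U U^{\otimes n}\ket{(\lambda,0,k)}$ via \Cref{lem:irrep-decomposition} and push the $U^{\otimes n}$ past the $P_\pi$; this route works too but requires tracking that the expansion coefficients $\beta_U$ are shared across the relevant vectors, so the operator-theoretic Schur's-lemma packaging above seems cleaner and I would present that.
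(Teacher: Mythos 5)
Your proposal is correct, but it takes a genuinely different route from the paper's proof. The paper argues ``by hand'': it expands $\ket{(\lambda,j,0)}$ as a finite combination $\sum_U \beta_{U,j}\,\Un{n}\ket{(\lambda,0,0)}$ via \Cref{lem:irrep-decomposition}, commutes the $\Un{n}$ past the permutation operators, re-expands $\Und{n}\ket{(\lambda,i,0)}$ in the chosen basis of $\mathcal Q_\lambda$, and thereby reduces every inner product to terms $\braket{(\lambda,q,k)}{(\lambda,0,\ell)}$; these vanish either because $\{\ket{(\lambda,0,j)}\}_j$ was chosen orthonormal or by the weight-based base case (\Cref{cor:base-orthogonality}), and the surviving $q=0$, $k=\ell$ terms are identified with $\braket{(\lambda,i,0)}{(\lambda,j,0)}$, which is $\delta_{i,j}$ by construction. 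You instead package the completion coefficients into operators $T_j=\sum_{\pi}\alpha^{(\lambda,j)}_\pi P_\pi$, note that $T_k^\dagger T_\ell$ is again a combination of permutation operators and hence commutes with every $\Un{n}$, and apply Schur's lemma (part 2 of \Cref{lem:schurs_lemma}) to the compression of $T_k^\dagger T_\ell$ to the irrep $\mathcal Q_\lambda$, obtaining $\braket{(\lambda,i,k)}{(\lambda,j,\ell)}=c_{k\ell}\,\delta_{i,j}$ with $c_{k\ell}=\braket{(\lambda,0,k)}{(\lambda,0,\ell)}=\delta_{k,\ell}$. Your two auxiliary inputs are indeed available: $\{\ket{(\lambda,0,j)}\}_j$ is orthonormal by the choice made in \Cref{func:schur_basis_completion}, and $\{\ket{(\lambda,i,0)}\}_i$ is orthonormal because \Cref{alg:orthonormal_schur_basis} concatenates orthonormal bases of the subspaces $Y_\lambda V(w)$, which lie in mutually orthogonal weight subspaces. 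The one step to spell out is that the orthogonal projector onto $\mathcal Q_\lambda$ commutes with every $\Un{n}$ (the orthogonal complement of a $\U(d)$-invariant subspace is invariant since $\Un{n}$ is unitary), which makes the compression a $\U(d)$-linear endomorphism of $\mathcal Q_\lambda$; you flag exactly this point and it is routine. What your route buys is generality and economy: it never uses the weight structure or \Cref{cor:base-orthogonality}, so it in fact shows that \Cref{func:schur_basis_completion} outputs an orthonormal set within each isotypic component for \emph{any} orthonormal starting basis of $\mathcal Q_\lambda$ and any orthonormal choice of the $\mathcal P$-basis, whereas the paper's argument is tied to the specific nice basis of \Cref{alg:orthonormal_schur_basis}; the paper's route, in exchange, stays elementary at this point and leans on the weight-subspace machinery it has already built.
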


\section{Building classical shadows}
We now present our algorithm to build classical representation $\hat \us$ for a $1$-qudit quantum state $\us$.  Later, when we receive the observable $O$, we just output $\tr(O \hat \us)$. 

\subsection{Reduction to Population Classical Shadows}
Using standard probability theory, we can show that an algorithm for population classical shadow task implies an algorithm for the (original) classical shadow task for mixed states.

\begin{lemma}[Classical Shadows using Population Classical Shadows]
\label{lem:population-reduction}
Suppose algorithm $\mathcal A$ is a valid protocol for population classical shadows task with sample complexity $\np(\eps, r, B)$.
Then it is also a valid for classical shadows task with sample complexity $\nc(\eps, r, B) = \max(\np(\eps/2, r, B), \eps^{-2})$.
\end{lemma}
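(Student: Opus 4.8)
The plan is to run the population protocol $\mathcal A$ on $n \coloneqq \nc(\eps, r, B) = \max\!\big(\np(\eps/2, r, B),\, \eps^{-2}\big)$ copies of $\us$ and return its output $\hat P$ as the shadow $\hat\us$; this respects \Cref{defn:classical_shadows} since $\mathcal A$ never sees the observable. To analyze it, first diagonalize $\us = U\big(\sum_{i=1}^d \alpha_i \proj{i}\big)U^\dagger$; because $\rank(\us)\le r$, the probability vector $(\alpha_i)_i$ is supported on at most $r$ coordinates. Expanding tensor powers as in the proof overview gives $\us^{\otimes n} = \E_{e\sim \alpha^{\otimes n}}\!\big[\Un{n}\proj{e}(\Un{n})^\dagger\big]$, so by linearity of quantum operations, running $\mathcal A$ on $\us^{\otimes n}$ is the same random experiment as drawing $e\sim\alpha^{\otimes n}$ and running $\mathcal A$ on the pure state $\Un{n}\ket{e}$. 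Every such $\ket{e}$ contains at most $r$ distinct symbols, so \Cref{def:population-shadow}, applied with accuracy $\eps/2$ (valid since $n\ge \np(\eps/2,r,B)$), guarantees that for any fixed $O\in\obs(B)$, conditioned on $e$, with probability $>9/10$ over the internal randomness of $\mathcal A$ we have $\big|\tr(O\hat P) - \mu(e)\big|\le \eps/2$, where $\mu(e)\coloneqq \frac1n\sum_{i=1}^n \tr\!\big(O\,U\proj{e_i}U^\dagger\big)$.

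It then remains to show that the ``sampled'' quantity $\mu(e)$ is close to the true value $\tr(O\us)$. The key observation is that $\mu(e)$ is the empirical mean of the i.i.d.\ random variables $X_i\coloneqq \tr\!\big(O\,U\proj{e_i}U^\dagger\big)$ with $e_i\sim\alpha$, and each $X_i$ lies in $[-1,1]$ because $\snorm{\infty}{O}=1$ and $U\proj{e_i}U^\dagger$ is a rank-one density matrix; hence $\Var(X_i)\le 1$ while $\E[X_i]=\sum_a \alpha_a\tr\!\big(O\,U\proj{a}U^\dagger\big)=\tr(O\us)$. Therefore $\Var(\mu(e))=\Var(X_1)/n\le 1/n$, and Chebyshev's inequality yields $\Pr_e\big[\,|\mu(e)-\tr(O\us)|>\eps/2\,\big]\le 4/(n\eps^2)$, which is a small constant once $n$ is a sufficiently large constant multiple of $\eps^{-2}$. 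A union bound over the two failure events---the population protocol erring, and the draw of $e$ being unrepresentative---shows that with probability $>2/3$ both $|\tr(O\hat P)-\mu(e)|\le\eps/2$ and $|\mu(e)-\tr(O\us)|\le\eps/2$, and hence $|\tr(O\hat\us)-\tr(O\us)|\le\eps$, as required.

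The argument is otherwise routine, and there is no serious obstacle. The main things to get right are conceptual rather than computational: cleanly separating the two sources of randomness---the internal coins of $\mathcal A$, controlled by the population guarantee of \Cref{def:population-shadow}, and the ``virtual'' draw of the eigenbasis string $e$, controlled by Chebyshev---and the elementary fact that a single-copy overlap $\tr\!\big(O\,U\proj{a}U^\dagger\big)$ has magnitude at most $1$, and hence variance at most $1$, thanks to the spectral-norm bound on $O$. (If one is fussy about $\mathcal A$ being stated for an exact number of copies, one simply invokes it on all $n\ge\np(\eps/2,r,B)$ copies at accuracy $\eps/2$, which only helps and leaves the rest of the analysis unchanged.)
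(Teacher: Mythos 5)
Your proposal is correct and follows essentially the same route as the paper's proof: decompose $\us^{\otimes n}$ as a mixture of pure states $\Un{n}\ket{e}$ with at most $r$ symbols, invoke the population-shadow guarantee at accuracy $\eps/2$, and control the deviation of the empirical quantity $\frac{1}{n}\sum_i \tr(O\,U\proj{e_i}U^\dagger)$ from $\tr(O\us)$ by a variance bound plus Chebyshev, finishing with a union bound. The only cosmetic difference is that you phrase the sampling step via i.i.d.\ draws $e_i\sim\alpha$ rather than the paper's equivalent multinomial random variable $S=\sum_i X_i\,\tr(O\,U\proj{i}U^\dagger)$, and like the paper you implicitly need $n$ to be a sufficiently large constant multiple of $\eps^{-2}$ for the Chebyshev failure probability to be small.
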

\begin{proof}
Say we run algorithm $\mathcal A$ on $n$ copies of an unknown state $\us$. We will show that this means, with probability at least $9/10$, we run algorithm $\mathcal A$ on the joint state $U^{\otimes n} \ket{e}$ satisfying that
$$
\abs{\frac{1}{n}
\sum_{i=1}^n \tr( O U \ket{e_i}\bra{e_i} U^{\dagger} )
- \tr( O \chi )
} \leq \eps.
$$
This will be enough for our main claim, because $\mathcal A$ is a valid protocol for population classical shadows task implies conditioned on running algorithm $\mathcal A$ on the state $U^{\otimes n} \ket{e}$, the algorithm outputs $\hat P$ which satisfies: for any observable $O$, with probability $9/10$,
$$\abs{\tr( O \hat P ) - \frac{1}{n}
\sum_{i=1}^n \tr( O U \ket{e_i}\bra{e_i} U^{\dagger} )} \leq \eps$$
Therefore, together this implies: on running algorithm $\mathcal A$ on joint state $\us^{\otimes n}$, it outputs $\hat P$ such that for any observable $O$, with probability $>4/5$, 
$$
\abs{
\tr( O \hat P )
- \tr( O \chi )
} \leq 2\eps \, .
$$

We now prove our claim. Suppose the unknown $1$-qudit rank $r$  quantum state $\us$ can be written as $\sum_{i=1}^r \sigma_i U \ket{i} \bra{i} U^\dagger$. Then, the state $\us^{\otimes n}$ can be written as a statistical mixture of the form:
$$
\us^{\otimes n} =
\Un{n}
\sum_{ w \in [n]^d: |w| = n }
\Pr[X = w] 
\sum_{ \pi \in S_n }
P_{\pi} \bigotimes_{i=1}^d 
\lp( \ket{i}\bra{i} \rp)^{ \otimes w_i }
P_{\pi}^{\dagger}
{U^{\dagger}}^{\otimes n} 
\, ,
$$
where $X$ follows the multinomial distribution
$\text{Mul}(n, \sigma)$.
Define the random variable
\begin{align*}
S:= 
\sum_{i=1}^d X_i \tr \lp( O U \ket{i} \bra{i} U^{\dagger}  \rp) \, ,
\end{align*}
where $X$ follows the multinomial distribution $\text{Mul}(n, \sigma)$.
We note that
$$
\E[S] = 
\sum_{i=1}^d \E[X_i] \tr \lp( O U \ket{i} \bra{i} U^{\dagger}  \rp)
= n \; \sum_{i=1}^d \sigma_i U \ket{i} \bra{i} U^{\dagger}
= n \tr(O \chi)
$$
For convenience, we define
$o_i =: \tr \lp( O U \proj{i} U^{\dagger} \rp)$.
Then
we can bound the variance of $S$ from above by
\begin{align*}
\Var[S] = \E[ S^2 ] - \E^2[S]    
&= 
\sum_{i=1}^d 
o_i^2
\Var[X_i]
+ \sum_{i \neq j}  
o_i \; o_j \;
\lp(
\E[ X_i X_j ] 
- \E[X_i] \E[X_j]
\rp) \\
&\leq 
\sum_{i=1}^d \Var[X_i]
\leq \sum_{i=1}^d n \sigma_i = n \, ,
\end{align*}
where in the first inequality 
we use the fact that $\tr( O U \ket{i} \bra{i} U^{\dagger} ) \leq 1$, and that the covariance between any two coordinates
$X_i, X_j$ of a multinomial distribution is non-positive, 
in the second inequality we use the fact that $\Var[X_i] = n \sigma_i (1 - \sigma_i)$.
Therefore, by Chebyshev's inequality, we have
\begin{align*}
\Pr\lp[ \abs{ S/n - \tr(O \chi) }
\leq \eps
\rp]  \geq 9/10 \, ,
\end{align*}
as long as $n \gg \eps^{-2}$. This concludes our proof.
\end{proof}

\subsection{Algorithm for Population Classical Shadow Task}
In previous subsection, we proved a reduction, \Cref{lem:population-reduction}, which showed that an algorithm for population classical shadows task is also an algorithm for population classical shadows task. In this subsection, we present an algorithm for population classical shadows task. In particular, we show the following:
\begin{proposition}
\label{prop:population-shadow}
\Cref{alg:build} is a valid protocol for population classical shadows task with sample complexity $\np(\eps, r, B) = \bigO( \sqrt{ B r} / \eps^2 )$.
\end{proposition}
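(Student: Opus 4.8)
The plan is to follow the four-step outline advertised in the proof overview: (i) pre-process the input state $U^{\otimes n}\ket{e}$ using the nice Schur basis to extract a partition $\lambda\partition_d n$ and a state $U^{\otimes n}\ket{\tau}$ with local permutation symmetry respecting $\lambda$; (ii) measure the processed state with the product-Haar POVM $\{(\proj{\psi_1})^{\otimes\lambda_1}\otimes\cdots\otimes(\proj{\psi_r})^{\otimes\lambda_r}\}$, using the local symmetry of $\ket{\tau}$ to guarantee the measurement ``succeeds'' deterministically; (iii) from the outcome $(\psi_1,\ldots,\psi_r)$ build the estimator $\hat P$ as an explicit linear combination $\hat P = \sum_i c_i \proj{\psi_i}$ (with $c_i$ proportional to $\lambda_i/n$), and verify $\E[\hat P]=P$ by integrating the POVM elements against $\ket{\tau}\!\bra{\tau}$ — the key simplification being that the Haar integral $\int (\proj{\psi})^{\otimes k}\,d\psi$ is proportional to $\Pi_{\mathrm{sym}}$ on $k$ qudits, which collapses to a uniform average of permutations that the local symmetry of $\ket{\tau}$ absorbs; (iv) bound $\Var[\tr(O\hat P)]$ and invoke \Cref{thm:variance_informal} to get $\Var = \bigO(1 + rB/n^2)$, then apply the segmentation/repetition trick with $T = \Theta(1/\eps^2)$ segments to drive the variance to $\bigO(\eps^2)$ and finish with Chebyshev.

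Concretely, I would first state precisely the pre-processing subroutine (the ``generic pre-processing'' alluded to after \Cref{thm:nice_schur_basis_intro}): measure the $\lambda$-label by Schur sampling (a projective measurement onto the isotypic components $\Pi_\lambda$), and then, within $\Pi_\lambda \cong \mathcal Q_\lambda \otimes \mathcal P_\lambda$, apply a unitary taking the $\mathcal P_\lambda$-register into a fixed reference vector; because the nice Schur basis vectors lie in weight subspaces and are built from $Y_\lambda$ applied to the canonical minimum-weight standard basis vector (\Cref{lem:1d-min-weight-space}), the resulting state is $U^{\otimes n}\ket{\tau}$ with $\ket{\tau}\in\Span\{P_\pi\ket{e'}\mid\pi\in\S_n\}$ for the sorted standard basis vector $\ket{e'}$ of weight $\lambda$, and $\ket{\tau}$ is invariant under $P_\sigma$ for every $\sigma$ in the Young subgroup $A_\lambda = \S_{\lambda_1}\times\cdots\times\S_{\lambda_r}$. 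I would isolate this as a lemma. Then the measurement-succeeds claim is: $\tr\big[(\proj{\psi_1})^{\otimes\lambda_1}\otimes\cdots\otimes(\proj{\psi_r})^{\otimes\lambda_r}\, U^{\otimes n}\proj{\tau}U^{\dagger\otimes n}\big]$ integrates to $1$ over the Haar measures, which follows since each $\int(\proj{\psi_i})^{\otimes\lambda_i}d\psi_i = \binom{d+\lambda_i-1}{\lambda_i}^{-1}\Pi^{(\lambda_i)}_{\mathrm{sym}}$ and $\Pi^{(\lambda_i)}_{\mathrm{sym}}$ is the average of $P_\sigma$ over $\S_{\lambda_i}$, under which (the relevant blocks of) $\ket{\tau}$ are fixed; a short normalization computation closes it. Unbiasedness of $\hat P$ is then a parallel computation where one extra tensor factor is left un-traced in each block, again reducing a Haar integral over $\lambda_i+1$ copies to a symmetric projector and using the symmetry of $\ket{\tau}$ to show the result equals $\frac1n\sum_j U\proj{e_j}U^\dagger = P$ after choosing the coefficients $c_i$ correctly.

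The main obstacle is step (iv): the variance calculation. Even though the product-Haar structure and the local symmetry of $\ket{\tau}$ let us replace all Haar integrals by uniform averages over the Young subgroup $A_\lambda$ (avoiding Weingarten calculus entirely, as the overview stresses), one still has to carefully track the second-moment quantity $\E[\tr(O\hat P)^2] = \sum_{i,i'} c_i c_{i'}\,\E[\tr(O\proj{\psi_i})\tr(O\proj{\psi_{i'}})]$, split into the ``diagonal'' terms $i=i'$ (each an integral over $\lambda_i+2$ copies against $\proj{\tau}$, contributing the $\bigO(1)$ piece plus an $\bigO(B/\lambda_i)$-type piece that sums to $\bigO(rB/n^2)$ after weighting by $c_i^2\sim(\lambda_i/n)^2$) and the ``cross'' terms $i\neq i'$ (which, crucially, do not collapse as cleanly — this is exactly the rank-$\geq 2$ cross term flagged in the discussion that forces the jump from $\sqrt B/\eps$ to $\sqrt B/\eps^2$). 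For the cross terms one bounds the joint integral using $\tr(O^2)\le B$, $\snorm\infty O\le1$, and Cauchy–Schwarz, carefully using that the frequencies $\lambda_i/n$ sum to $1$. Granting \Cref{thm:variance_informal}, which packages exactly this bound, the remainder is routine: with $T=\Theta(1/\eps^2)$ segments of size $n/T$, choosing $n = \Theta(\sqrt{rB}/\eps^2)$ makes $n/T = \Theta(\sqrt{rB})\gg\sqrt{rB}$ so each segment has variance $\bigO(1)$, the averaged estimator has variance $\bigO(1/T)=\bigO(\eps^2)$, and Chebyshev gives the stated $9/10$ success probability, yielding $\np(\eps,r,B)=\bigO(\sqrt{rB}/\eps^2)$.
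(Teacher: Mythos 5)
Your overall architecture is exactly the paper's: pre-process with the nice Schur basis, run the row-symmetric product-Haar POVM, turn the outcome into an estimator whose Haar integrals collapse to averages over the Young subgroup $A_\lambda$, invoke the variance bound (your step (iv), i.e.\ \Cref{lem:second-moment}/\Cref{thm:variance_informal}), split into $T=\Theta(1/\eps^2)$ segments, and finish with Chebyshev. This is the same route as the paper's proof of \Cref{prop:population-shadow}, which combines \Cref{lem:preprocess-weight-space}, \Cref{lem:expectation} and \Cref{lem:second-moment} with the segmentation, a short law-of-total-variance step over the random $\lambda^{(t)}$, and Chebyshev.

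Two details in your sketch are wrong as stated, and both sit at the crux of unbiasedness. First, you assert that the pre-processed state satisfies $\ket{\tau}\in\Span\{P_\pi\ket{e'}\mid\pi\in\S_n\}$ for the sorted standard basis vector $\ket{e'}$ of weight $\lambda$. That is not what the pre-processing gives (and cannot be, since distinct weight subspaces are orthogonal and permutation-invariant): by \Cref{lem:preprocess-weight-space}, $\ket{\tau}$ remains in the weight subspace $V(\mathcal W(\ket e))$ of the \emph{input}, while the sampled $\lambda$ only majorizes $\mathcal W(\ket e)$. This preservation of the input's weight is precisely why the estimator is unbiased for the population state $P=\frac1n\sum_j U\proj{e_j}U^{\dagger}$; if $\ket\tau$ really had weight $\lambda$, your step (iii) computation would output $\frac1n\sum_i\lambda_i\,U\proj{i}U^{\dagger}$, a different, $\lambda$-dependent matrix. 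Second, no estimator of the form $\hat P=\sum_i c_i\proj{\psi_i}$ can satisfy $\E[\hat P]=P$: the Haar integral over $\lambda_i+1$ copies in row $i$ produces, besides the row's qudit contribution, an unavoidable identity term (the ``swap with itself'' summand), so that $\E\bigl[(d+\lambda_i)\psi_i\bigr]=I+\sum_{p}U\proj{a_{i,p}}U^{\dagger}$ on each basis component. The correct post-processing, as in \Cref{alg:build} and \Cref{lem:expectation}, is $\hat P=\frac1n\bigl(\sum_i(d+\lambda_i)\psi_i-h(\lambda)I\bigr)$: coefficients $(d+\lambda_i)$ rather than proportional to $\lambda_i$, together with an explicit subtraction of $h(\lambda)I$. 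Because you ultimately defer to the paper's expectation and variance lemmas, the final counting ($n/T=\Theta(\sqrt{rB})$ gives $O(1)$ variance per segment, hence $O(\eps^2)$ after averaging, then Chebyshev) goes through; but as written, your unbiasedness step targets the wrong state with an estimator that cannot be made unbiased by tuning the $c_i$ alone.
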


\begin{algorithm}[H]
\caption{Algorithm for Population Classical Shadow Task} \label{alg:build}
    \begin{algorithmic}[1]
    \Require Accuracy parameter $\eps$, and input state $U^{\otimes n} \ket{e}$ with $n \gg \eps^{-2}$.
    \State Set $T = 10 \eps^{-2}$.
    \State Initialize $\hat \sh = \{\}$.
    \State Partition the input state $U^{\otimes n} \ket{e}$ into
    $ 
    \bigotimes_{t=1}^T
    U^{\otimes (n/T)} \ket{e^{(t)}} $.
    \For{ $t = 1 \cdots T$ }
    \State Step 1: Perform generic pre-processing $\Gamma$ on the state $U^{\otimes n} \ket{e^{(t)}}$ (see \Cref{alg:pp}).
    \State \qquad \quad~ Obtain the resulting state $U^{\otimes n}\ket{\tau}$, and a partition $\lambda$. 
    \label{step:pp}
    \State Step 2: Run \povm~$M^{(\lambda)}$ on $U^{\otimes n} \ket{\tau}$ (see \Cref{def:measurement}). 
    \State \qquad \quad~Obtain a classical description $\Psi$ from the measurement result (see \Cref{sec:exp-comp}). \label{step:povm} 
    \State \qquad \quad~Add 
    $ \frac{T}{n} \; \lp( \Psi - h(\lambda) \; I \rp) $ into the list $\hat \sh$, where $h(\lambda)$ is the number of parts in the partition~$\lambda$.
    \EndFor
    \State Return $\hat P$, the average of the matrices in $\hat \sh$.
    \end{algorithmic}
\end{algorithm}

We now describe the outline of \Cref{alg:build}. The algorithm consists of two main steps at a high level: \textit{\Gp}~and \textit{\Povm}.
In the first step, \Gp, we perform {projective measurements followed by a
unitary transformation, both defined by the set of nice Schur basis from \Cref{prop:nice-schur-basis}},
that translates an arbitrary state of the form $\Un{n} \ket{e}$, where $\ket{e}$ is some standard basis vector
in $(\C^d)^{\otimes n}$, 
into a superposition of 
the form 
$\Un{n}  \ket{\tau} = \Un{n} \sum_{i} \alpha_i \ket{( \lambda^*, i, 0 )} $, where $\ket{( \lambda^*, i, 0 )}$ is the orthonormal Schur basis constructed in \Cref{prop:nice-schur-basis}. We describe this 
{pre-processing step} in detail in \Cref{sec:schur-transform}.

Notably, due to the weight structure of the Schur basis, 
this transformation also preserves the weight structure --- the state $\ket{\tau}$ will live in the same weight subspace as $\ket{e}$.
Furthermore, the output state is highly symmetrized as it lives in the image of the Young symmetrizer (since each $\ket{(\lambda^*, i, 0)}$ is a basis vector for $\mathcal Q_{\lambda} = Y_{\lambda} (\C^d)^{\otimes n}$
).

Motivated by this, 
as the second step of the algorithm, we define a joint measurement for the \emph{Row Symmetric Subspace} (\Cref{def:rowsym}) in \Cref{sec:row-subspace-measurement}.
Specifically, given a partition $\lambda \partition_d n$, this is the subspace invariant under $P_{\pi}$ for any $\pi \in A_{\lambda}$, where $A_{\lambda}$ is the Row Permutation Group defined in \Cref{def:row-column-permute}.
Since the Young symmetrizer is of the form 
$Y_{\lambda} = \sum_{ a \in A_{\lambda} } P_a \sum_{b \in B_{\lambda}} \sgn(b) P(b) $, it is not hard to check that the image of $Y_{\lambda}$, in which the post-processing state $U^{\otimes n} \ket{\tau}$ lives,
is included in the row symmetric subspace.
We discuss this in more detail in \Cref{lem:live-in-row-sym}.
Moreover, we remark that the measurement can be viewed as a natural generalization of the standard symmetric joint measurement used in the prior work \cite{grier2022sample}.
\footnote{Indeed, we will recover their measurement if we set $\lambda = (n)$.}

We then proceed to analyze the resulting estimator in a way similar to \cite{grier2022sample}.
In \Cref{sec:exp-comp},
we show that the measurement (after some elementary algebraic post-processing) gives us an unbiased estimator of the target state $\us$. 
In \Cref{sec:var-comp}, we show that the final estimator $\tr(O \hat \us)$ has its variance appropriately bounded.
In \Cref{sec:combine}, we combine the analysis to conclude the proof of \Cref{prop:population-shadow}, which implies \Cref{thm:main_informal} by \Cref{lem:population-reduction}.


\subsection{Step 1: Generic Pre-Processing\texorpdfstring{~$\Gamma$}{}}
\label{sec:schur-transform}
In \Cref{sec:nicebasis}, we showed an explicit construction of a nice Schur basis with basis vectors $\ket{(\lambda, i, j)}$. 
In our first \gp~step, we use this nice Schur basis to project the input state onto a symmetrized state 
in a similar fashion to weak Schur Sampling~\cite{harrow2005,weakschur2007,testing2015,buhrman2022quantum}. In particular, we first use this orthonormal Schur basis to define a projective measurement which on measuring $(\lambda, j)$ projects the input state onto some irrep $Q_{\lambda, j} \cong \mathcal Q_{\lambda}$.

\begin{definition}[Schur Projective Measurement] \label{def:ss}We define the Schur Projective Measurement on $n$-qudits as $\pisch = \{\Pi_{\lambda, j}\}_{\lambda, j}$ with elements $\Pi_{\lambda, j}$ denoting the projection operator onto the subspace formed by the nice Schur basis vectors corresponding to indices $(\lambda,\cdot, j)$, for each partition $\lambda \partition_d n$ and $0\leq j < \dim \mathcal P_{\lambda}$.
\end{definition}

The projective measurement is followed by a unitary transformation
$H_{\lambda, j}$ defined as 
the change of basis operation such that each pair of the basis $\lp( \ket{ (\lambda, i, 0)  }, \ket{ (\lambda, i, j)  } \rp)_{i=0}^{\dim \mathcal Q_{\lambda}-1}$  are swapped and all the other basis are left unchanged. 
In other words, this is an invertible linear map between
the two irreps $Q_{\lambda, j}$ and $Q_{\lambda, 0} = \mathcal Q_{\lambda}$.
One may wonder whether the map is $\U(d)$-linear. We show this is indeed the case in \Cref{lem:change-of-basis-commute}.

With this, we present the first step of our algorithm, \gp~$\Gamma$, in \Cref{alg:pp}.

\begin{algorithm}[H]
\caption{Generic Pre-Processing~$\Gamma$} \label{alg:pp}
    \begin{algorithmic}[1]
    \leftComment Input: An $n$-qudit quantum state $U^{\otimes n} \ket{e}$.
\vspace{.5em}
    \State Run Schur Projective Measurement $\pisch$ on input state $U^{\otimes n} \ket{e}$. 
    \State Obtain a measurement $(\lambda,j)$ and resulting state $\Pi_{\lambda,j} U^{\otimes n} \ket{e}$.
    \label{line:schur-projection}
    \State Perform the basis change operation $H_{\lambda,j}$.
    \State Return partition $\lambda$ and resulting state $H_{\lambda,j} \Pi_{\lambda,j} U^{\otimes n} \ket{e}$.
    \end{algorithmic}
\end{algorithm}



We now show generic pre-processing $\Gamma$ which transforms an arbitrary state of the form $ \Un{n} \ket{e} $ for some standard basis vector $\ket{e} \in (\C^d)^{\otimes n}$ into a highly ``symmetrized'' state.
In particular, 
\begin{lemma}
\label{lem:preprocess-weight-space}
Given an input state of the form $ \Un{n} \ket{e} $ where $\ket{e}$ is a standard basis vector in $(\C^d)^{\otimes n}$, 
the \gp~$\Gamma$ (given by \Cref{alg:pp}) outputs a partition 
$ \lambda \partition_d n$,
and a state $\Un{n} \ket{\tau}$ 
that can be written as
$$
\Un{n} \ket{\tau} =
\Un{n}  \sum_{ 0 \leq i < \dim \mathcal Q_{\lambda}} \alpha_i  \ket{(\lambda, i, 0)}\, ,
$$
where $\alpha_i \neq 0$ only if $\ket{(\lambda, i, 0)}$ lives in the weight subspace $V(\mathcal W(\ket{e}))$.
Moreover, if $\ket{e}$ contains at most $r$ symbols, $\lambda$ will have at most $r$ parts with probability $1$.
\end{lemma}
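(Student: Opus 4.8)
The plan is to trace the action of \Cref{alg:pp} step by step and read off both claims from the structure of the nice Schur basis. First I would analyze the Schur projective measurement $\pisch$ applied to $U^{\otimes n}\ket{e}$. Writing $\ket{e} = \sum_{\lambda, i, j} c_{\lambda, i, j} \ket{(\lambda, i, j)}$ in the nice Schur basis, I want to use the fact that $U^{\otimes n}$ acts only on the $Q_\lambda$-factor (the $i$ index) and commutes with each $\Pi_{\lambda, j}$, since $\Pi_{\lambda, j}$ is a sum of projectors onto $\U(d)$-irreps. Hence measuring $(\lambda, j)$ and post-selecting on that outcome yields (up to normalization) the state $U^{\otimes n}\sum_i c_{\lambda, i, j}\ket{(\lambda, i, j)}$. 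Then applying the basis-change $H_{\lambda, j}$, which by \Cref{lem:change-of-basis-commute} is $\U(d)$-linear and swaps $\ket{(\lambda, i, j)} \leftrightarrow \ket{(\lambda, i, 0)}$, I get $U^{\otimes n}\sum_i c_{\lambda, i, j}\ket{(\lambda, i, 0)}$; setting $\alpha_i \defeq c_{\lambda, i, j}$ (up to the overall normalization constant) gives the claimed form $U^{\otimes n}\ket{\tau} = U^{\otimes n}\sum_i \alpha_i \ket{(\lambda, i, 0)}$.

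Next I would establish the weight-subspace constraint on the nonzero $\alpha_i$. The key point is that the nice Schur basis is \emph{weight-homogeneous}: each $\ket{(\lambda, i, j)}$ lies entirely in some weight subspace $V(w)$. Since $\ket{e}$ is a single standard basis vector, it lies in $V(\mathcal W(\ket{e}))$, and the weight subspaces are mutually orthogonal and span $(\C^d)^{\otimes n}$; therefore in the expansion $\ket{e} = \sum c_{\lambda, i, j}\ket{(\lambda, i, j)}$, only basis vectors living in $V(\mathcal W(\ket{e}))$ can have nonzero coefficient. This survives the projection (which only drops terms) and the basis-change $H_{\lambda, j}$, which is weight-preserving because it swaps $\ket{(\lambda, i, j)}$ with $\ket{(\lambda, i, 0)}$ and \emph{both} live in the same weight subspace (each vector of a nice Schur basis lies in a single weight subspace, and the swapped pair must match since they are related by the same permutation-operator coefficients $\alpha_\pi^{(\lambda, j)}$, all of which preserve weight). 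So $\alpha_i \neq 0$ forces $\ket{(\lambda, i, 0)} \in V(\mathcal W(\ket{e}))$, as claimed.

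Finally, for the ``at most $r$ parts'' statement: if $\ket{e}$ contains at most $r$ distinct symbols, its weight vector $\mathcal W(\ket{e})$ has at most $r$ nonzero entries. By \Cref{clm:non-vanish} (the vanishing condition for the Young symmetrizer), $Y_\lambda \ket{e'} = 0$ whenever $\mathcal W(\ket{e'})$ is not majorized by $\lambda$. If $\lambda$ had more than $r$ parts, then a weight vector with at most $r$ nonzero entries cannot be majorized by $\lambda$ (the partial sums of $\lambda$ over the first $r$ coordinates would be strictly smaller than $n$, while $\mathcal W(\ket{e})$ already sums to $n$ over its $\le r$ nonzero coordinates). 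Since every vector of $\mathcal Q_\lambda = Y_\lambda(\C^d)^{\otimes n}$ in the weight subspace $V(\mathcal W(\ket{e}))$ would then have to be zero, no basis vector $\ket{(\lambda, i, 0)}$ with $\lambda$ having $> r$ parts lies in $V(\mathcal W(\ket{e}))$; combined with the previous paragraph, this forces all $\alpha_i = 0$ for such $\lambda$, i.e.\ the projection onto such $\Pi_{\lambda, j}$ has zero amplitude, so $\lambda$ has at most $r$ parts with probability $1$.

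I expect the main obstacle to be the careful bookkeeping in the first paragraph: verifying that $U^{\otimes n}$ genuinely commutes with $\pisch$ and that $H_{\lambda, j}$ commutes with $U^{\otimes n}$ (cited as \Cref{lem:change-of-basis-commute}), so that the post-measurement state really factors as $U^{\otimes n}$ applied to a vector supported on $\{\ket{(\lambda, i, 0)}\}_i$ with the correct weight support — rather than something where $U^{\otimes n}$ has already scrambled the weight structure. Once the commutation is in place, the weight-homogeneity of the nice Schur basis and the majorization criterion do the rest essentially for free.
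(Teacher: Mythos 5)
Your proposal is correct and follows essentially the same route as the paper: expand $\ket{e}$ in the nice Schur basis, use the weight-homogeneity of the basis vectors together with the $\U(d)$-invariance of $Q_{\lambda,j}$ (your commutation of $\Pi_{\lambda,j}$ with $\Un{n}$ is just a repackaging of the paper's re-expansion of $\Un{n}\ket{(\lambda,i,j)}$ via coefficients $\eta_i^{(\lambda,j,U)}$) and the $\U(d)$-linearity of $H_{\lambda,j}$ from \Cref{lem:change-of-basis-commute}, then invoke \Cref{clm:non-vanish} for the bound on the number of parts. If anything, you are slightly more explicit than the paper on why $\ket{(\lambda,i,j)}$ and $\ket{(\lambda,i,0)}$ share a weight subspace (they are related by weight-preserving permutation operators), which is a point the paper leaves implicit.
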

\begin{proof}
    Since $\{\ket{(\lambda, i, j)}\}$ forms an orthonormal basis, we can write
\begin{align}
\label{eq:schur-decomposition}
\ket{e} 
= \sum_{ \lambda \partition_d n}
\sum_{ 0 \leq  j < \dim \mathcal P_{\lambda}}
\alpha_{\lambda, j}
\sum_{ 0 \leq  i < \dim \mathcal Q_{\lambda}}
\eta_i^{(\lambda, j)} 
\ket{(\lambda, i, j)} \, ,
\end{align}
for some $\alpha_{\lambda,j}, \eta_i^{(\lambda, j)} \in \C$.
Recall that we have constructed our Schur basis to respect certain weight structure. In particular, for each $\ket{(\lambda, i, j)}$, there exists a weight vector $w \in [n]^d$ such that $\ket{(\lambda, i, j)}$ lives in the weight subspace $V(w)$.
This gives us an important observation: 
$\eta_{i}^{(\lambda, j)}$ must be $0$ if 
$ \ket{(\lambda, i, j)} \not \in V(\mathcal W(\ket{e}))
$.
Moreover, we note that for any $\lambda$ that has more than $r$ parts, since the weight vector of any $\ket{e}$ that contains at most $r$ different symbols is not majorized by such $\lambda$,
we have $Y_{\lambda} \ket{e} = 0$ by \Cref{clm:non-vanish}.
This suggests that we will never measure a partition $\lambda$ that has more than $r$ parts.

Since $\ket{(\lambda, i, j)}$ lives in the irrep $Q_{\lambda, j}$ of $\U(d)$
spanned by $ \{ \ket{(\lambda, i', j)} \mid 0 \leq i' < \dim \mathcal Q_{\lambda}\} $
, for any unitary matrix $U$,
we can always write
\begin{align}
\label{eq:subspace-unitary-invariance}
\sum_{ 0 \leq  i < \dim \mathcal Q_{\lambda}}
\eta_i^{(\lambda, j)}
\Un{n} \ket{(\lambda, i, j)} 
= 
\sum_{ 0 \leq  i < \dim \mathcal Q_{\lambda}}
\eta_i^{(\lambda, j, U)} \ket{(\lambda, i, j)} \, ,
\end{align}
for some $\eta_i^{(\lambda, j, U)} \in \C$.
Substituting \Cref{eq:subspace-unitary-invariance} into \Cref{eq:schur-decomposition} then gives
\begin{align}
\Un{n} \ket{e} 
= \sum_{ \lambda \partition n}
\sum_{ 0 \leq  j < \dim \mathcal P_{\lambda}}
\alpha_{\lambda, j}
\sum_{ 0 \leq  i < \dim \mathcal Q_{\lambda}}
\eta_i^{(\lambda, j, U)}
\ket{(\lambda, i, j)}.
\end{align}
Now, suppose we do Schur Projective measurement, and obtain measurement outcome $(\lambda, j)$.
This brings us to the state
$$
\Pi_{\lambda, j}\lp( 
\Un{n} \ket{e} 
\rp)
= 
C_{\lambda, j}
\sum_{ 0 \leq  i < \dim \mathcal Q_{\lambda}}
\eta_i^{(\lambda, j, U)} 
\ket{(\lambda, i, j)}.
$$
where $\Pi_{\lambda, j}$ is the projection operator onto the subspace formed by the nice Schur basis vectors corresponding to indices $(\lambda,\cdot, j)$ and 
$C_{\lambda, j} \in \C$ is some normalization constant.
Substituting back our definition of $\eta_i^{(\lambda, j, U)}$ (\Cref{eq:subspace-unitary-invariance}), 
we can alternatively write this as
$$
\Pi_{\lambda, j}\lp( 
\Un{n} \ket{e} 
\rp)
= 
C_{\lambda, j}
\sum_{ 0 \leq  i < \dim \mathcal Q_{\lambda}}
\eta_i^{(\lambda, j)}
\Un{n}
\ket{(\lambda, i, j)}.
$$
Now, our state lives entirely in the space of $Q_{\lambda, j}$ and we perform the change of basis $H_{\lambda, j}$.

Since it is a change of orthonormal basis, it is a unitary operation that we can implement on a quantum device. Moreover, we show in \Cref{lem:change-of-basis-commute} that the map $H_{\lambda, j}$ is a $\U(d)$-linear map, i.e., it commutes with $\Un{n}$ for all $U \in \U(d)$. Applying the map on the resulting state gives
\begin{align*}
H_{\lambda, j}
\Pi_{\lambda, j}\lp( 
\Un{n} \ket{e} 
\rp)
= 
C_{\lambda, j}
\sum_{ 0 \leq  i < \dim \mathcal Q_{\lambda}}
\eta_i^{(\lambda, j)}
\Un{n}
\ket{(\lambda, i, 0)} \, ,
\end{align*}
where $\eta_i^{(\lambda, j)}$ is non zero only if $ \ket{(\lambda, i, 0)}
$ and $\ket{e}$ live in some common weight subspace.
\end{proof}

\begin{claim}[$\U(d)$-linearity of the change of basis operation $H_{\lambda, j}$]
\label{lem:change-of-basis-commute}
Fix $\lambda \partition_d n$ with at most $d$ parts and $0 \leq j < \dim \mathcal P_{\lambda}$.
Define $H_{\lambda,j}$ as the change of basis operation that swaps the pairs $( \ket{(\lambda, i, 0)}, \ket{(\lambda, i, j)} )$ and leaves the other basis unchanged. 
Then $ H_{\lambda, j}$ commutes with $\Un{n}$ for any $U \in \U(d)$.
\end{claim}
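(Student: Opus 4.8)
The plan is to show that $H_{\lambda,j}$ agrees with a $\U(d)$-linear map, by exhibiting $\U(d)$-linear isomorphisms between the irreps $Q_{\lambda,0}$ and $Q_{\lambda,j}$ and checking that $H_{\lambda,j}$ is built from exactly such maps. Recall that $H_{\lambda,j}$ acts as the identity on every isotypic component $\Pi_{\mu}$ with $\mu \neq \lambda$, and within $\Pi_\lambda$ it acts as the identity on every irrep $Q_{\lambda,k}$ with $k \neq 0, j$, while swapping $Q_{\lambda,0} \leftrightarrow Q_{\lambda,j}$ via the correspondence $\ket{(\lambda,i,0)} \leftrightarrow \ket{(\lambda,i,j)}$. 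Since $\Un n$ preserves each $\Pi_\mu$ and each $Q_{\lambda,k}$ (these are $\U(d)$-subrepresentations), it suffices to verify commutation block-by-block: on the untouched blocks this is trivial, so the whole content is the $Q_{\lambda,0} \oplus Q_{\lambda,j}$ block.

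First I would set up the map $F_{0 \to j}\colon Q_{\lambda,0} \to Q_{\lambda,j}$ defined on basis vectors by $F_{0\to j}\ket{(\lambda,i,0)} = \ket{(\lambda,i,j)}$ (and its inverse $F_{j\to 0}$ the other way). I claim this is $\U(d)$-linear. The key fact is the construction in \Cref{func:schur_basis_completion}: both $\ket{(\lambda,i,0)}$ and $\ket{(\lambda,i,j)}$ are obtained from $\ket{(\lambda,i,0)}$ — more precisely, $\ket{(\lambda,i,j)} = \sum_{\pi \in \S_n} \alpha_\pi^{(\lambda,j)} P_\pi \ket{(\lambda,i,0)}$ with coefficients $\alpha_\pi^{(\lambda,j)}$ that do \emph{not} depend on $i$. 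So for any $U \in \U(d)$, using that $\Un n$ commutes with every $P_\pi$ (Schur--Weyl, the actions of $\S_n$ and $\U(d)$ commute), I can write, for a general vector $\ket{v} = \sum_i \beta_i \ket{(\lambda,i,0)} \in Q_{\lambda,0}$,
\begin{align*}
F_{0\to j}\bigl(\Un n \ket{v}\bigr)
&= F_{0\to j}\Bigl(\sum_i \beta_i \Un n \ket{(\lambda,i,0)}\Bigr).
\end{align*}
Now $\Un n \ket{(\lambda,i,0)} = \sum_{i'} M_{i'i}\ket{(\lambda,i',0)}$ for the matrix $M = M(U)$ representing $U$ on $Q_\lambda$ (same matrix for every $j$ because $Q_{\lambda,j} \cong Q_{\lambda,0}$ via the $P_\pi$-intertwiner); applying $F_{0\to j}$ replaces $\ket{(\lambda,i',0)}$ by $\ket{(\lambda,i',j)}$, which equals $\sum_\pi \alpha_\pi^{(\lambda,j)} P_\pi \ket{(\lambda,i',0)}$, and pulling $\Un n$ back out (it commutes with each $P_\pi$) shows $F_{0\to j}(\Un n \ket{v}) = \Un n F_{0\to j}(\ket{v})$. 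The cleanest way to phrase this is: $F_{0\to j}$ equals the restriction to $Q_{\lambda,0}$ of the operator $\sum_\pi \alpha_\pi^{(\lambda,j)} P_\pi$ (post-composed with projection onto $Q_{\lambda,j}$), and any polynomial in the $P_\pi$'s commutes with $\Un n$; this is essentially the content of \Cref{lem:schur-weyl} / the Schur--Weyl commutation. I would also need $\braket{(\lambda,i,j)}{(\lambda,i',j)} = \delta_{ii'}$, i.e. orthonormality within $Q_{\lambda,j}$, which is \Cref{lem:basis-orthogonality}, to guarantee $F_{0\to j}$ is a genuine (unitary) isomorphism and not just a linear map.

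Finally I would assemble the pieces: on $\Pi_\mu$, $\mu\neq\lambda$, $H_{\lambda,j} = I$ commutes with $\Un n$; on $Q_{\lambda,k}$, $k \notin\{0,j\}$, again $H_{\lambda,j}=I$ commutes with $\Un n$; and on $Q_{\lambda,0}\oplus Q_{\lambda,j}$, $H_{\lambda,j} = F_{0\to j}\oplus F_{j\to 0}$, which commutes with $\Un n$ because each summand does and because $\Un n$ is block-diagonal with respect to this decomposition (as the $Q_{\lambda,k}$ are $\U(d)$-invariant). Since $(\C^d)^{\otimes n}$ is the direct sum of all these blocks and $\Un n$ respects the decomposition, $H_{\lambda,j}\Un n = \Un n H_{\lambda,j}$, as desired. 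The main obstacle — really the only non-formal point — is the observation that the interpolation coefficients $\alpha_\pi^{(\lambda,j)}$ are independent of the $\mathcal Q_\lambda$-index $i$, so that the ``swap'' $F_{0\to j}$ is realized by a single $\S_n$-group-algebra element acting uniformly across the irrep; once that is noted, $\U(d)$-linearity is immediate from Schur--Weyl commutation, and everything else is bookkeeping over the block decomposition.
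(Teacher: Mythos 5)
Your proposal is correct and follows essentially the same route as the paper: the core of both arguments is that $\ket{(\lambda,i,j)} = \sum_{\pi}\alpha_{\pi}^{(\lambda,j)}P_{\pi}\ket{(\lambda,i,0)}$ with coefficients independent of $i$, so the swap is implemented by a fixed element of the $\S_n$-group algebra, which commutes with $\Un{n}$ by Schur--Weyl, combined with expanding $\Un{n}\ket{(\lambda,i,0)}$ back in the $Q_{\lambda,0}$ basis. Your block-by-block framing merely makes explicit the (easy) check on the untouched components that the paper handles implicitly, so the two proofs are the same in substance.
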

\begin{proof}
It suffices to show that for any
$\ket{v} \in \{ \ket{(\lambda, i, 0)}, \ket{(\lambda, i, j)} \mid 0 \leq i < \dim \mathcal Q_{\lambda}\}$, 
we have that
$$
\Un{n} H_{\lambda, j} \ket{v}
= H_{\lambda, j} \Un{n} \ket{v}.
$$
Since the argument for $\ket{(\lambda, i, j)}$ is symmetric to that for $\ket{(\lambda, i, 0)}$, we focus on the case $\ket{v} = \ket{(\lambda, i, 0)}$.
Assume that 
\begin{align}
\label{eq:unitary-expansion}
\Un{n} \ket{(\lambda, i, 0)}
= 
\sum_{0 \leq k < \dim \mathcal Q_{\lambda}}
\gamma_{\lambda} (k, U, i)  \ket{(\lambda, k, 0)}.    
\end{align}
Then we have
\begin{align*}
\Un{n} H_{\lambda, j} \ket{(\lambda, i, 0)}
&= \Un{n} \ket{(\lambda, i, j)} 
& \text{(Definition of $H_{\lambda, j}$)}
\\
&= \Un{n} \sum_{ \pi \in S_n }
\alpha_{\pi}^{ (\lambda, j) } P_{\pi} \ket{(\lambda, i, 0)} 
& \text{(Definition of $\ket{(\lambda, i, j)}$)}
\\
&=  \sum_{ \pi \in S_n }
\alpha_{\pi}^{ (\lambda, j) } P_{\pi} 
\Un{n}
\ket{(\lambda, i, 0)} 
&\text{($\Un{n}$ commutes with $P_{\pi}$)}
\\
&= \sum_{ \pi \in S_n }
\alpha_{\pi}^{ (\lambda, j) } P_{\pi} 
\sum_{0 \leq k < \dim \mathcal Q_{\lambda}}
\gamma_{\lambda} (k, U, i)  \ket{(\lambda, k, 0)} 
&\text{(\Cref{eq:unitary-expansion})}
\\
&= 
\sum_{0 \leq k < \dim \mathcal Q_{\lambda}}
\gamma_{\lambda} (k, U, i)
\sum_{ \pi \in S_n }
\alpha_{\pi}^{ (\lambda, j) } P_{\pi} \ket{(\lambda, k, 0)} 
\\
&= \sum_{0 \leq k < \dim \mathcal Q_{\lambda}}
\gamma_{\lambda} (k, U, i) 
\ket{(\lambda, k, j)}
&\text{(Definition of $\ket{(\lambda, k, j)}$)}.
\end{align*}
On the other side, we have
$$
H_{\lambda, j} \Un{n}  \ket{(\lambda, i, 0)}
= 
\sum_{0 \leq k < \dim \mathcal Q_{\lambda}}
\gamma_{\lambda} (k, U, i) 
H_{\lambda, j}  \ket{(\lambda, k, 0)}
= 
\sum_{0 \leq k < \dim \mathcal Q_{\lambda}}
\gamma_{\lambda} (k, U, i) \ket{(\lambda, k, j)} \, ,
$$
where the first equality follows from \Cref{eq:unitary-expansion}, and the second follows from the definition of $H_{\lambda, j}$.
\end{proof}

\subsection{Step 2: Row Symmetric Measurement for partition\texorpdfstring{~$\lambda$}{}}
\label{sec:row-subspace-measurement}
In this subsection, we describe our joint measurement. Throughout {the rest of the paper}, we will assume a correspondence between the qudits of a quantum state in $(\C^d)^{\otimes n}$ and the boxes of a Young diagram of partition $\lambda \partition_d  n$. 
Here we assume $\lambda$ is the output of the generic pre-processing $\Gamma$ in the algorithm. We will associate each qudit of this state with a box in the Young diagram from left to right and then from top to bottom. Assume $\lambda$ has $k$ non-zero entries and $\lambda_i$ is the length of the $i$-th row in the corresponding Young diagram. 
{For a standard basis vector $\ket{a} \in (\C^d)^{\otimes n}$, we denote by $\ket{a_{j,p}}$ 
the ``$p$-th qudit in the $j$-th row'' under this Young diagram based indexing, i.e., 
the $(\lambda_1 + \lambda_2 + \ldots + \lambda_{j-1} + p)$-th qudit from $\ket{a}$ in the standard indexing.}
The number in the cell in figure below represents the qudit index for $n=8$ and $\lambda = {4,3,1}$.

\begin{center}
    \begin{ytableau}
    1 & 
    2 & 
    3 & 
    4 \\
    5 & 6 & 7 \\
    8
\end{ytableau}
\end{center}

Recall the row-stabilizing permutation group $A_\lambda$, a sub-group of $\S_n$ induced by the partition $\lambda$ (see \Cref{def:row-column-permute} ). We now define \textit{row symmetric subspace}, a subspace invariant under the permutations from $A_\lambda$.

\begin{definition}[Row symmetric subspace]\label{def:rowsym}
The $\lambda$-row symmetric subspace of an $n$-qudit system $(\C^d)^{\otimes n}$ is a subspace invariant under $P_\pi$ for all $\pi \in A_\lambda$. We define $\pisym$ to be the projector onto it\[
\pisym \defeq \frac{1}{|A_{\lambda}|} \sum_{\pi \in A_{\lambda}} P_\pi\, .
\]
\end{definition}
We note that the subspace $\pisym$ is simply the tensor product of many ``local'' symmetric subspaces, each corresponding to a row in the Young tableau of shape $\lambda$.

\begin{fact}[Dimension of row symmetric subspace]
Let $\pisyms{s}$ be the projector on the symmetric subspace of an $s$-qudit system, i.e., 
$ \pisyms{s} = \frac{1}{s!}\sum_{\pi \in \S_s} P_{\pi}$.
Then  $\pisyms{s}$ and $\pisym$, the projector onto $\lambda$-row symmetric subspace for partition $\lambda \partition_d n$ with $k \le d$ parts, are related by
\begin{equation}\label{fact:row-symmetric-decomposition}
\pisym = \bigotimes_{i=1}^k \pisyms{\lambda_i}\,. 
\end{equation}
The dimension formula of the row symmetric subspace follows from the dimension formula of the symmetric subspace. The dimension of $\lambda$-row symmetric subspace $\pisym$, denoted by $\kappa_\lambda$, and dimension of symmetric subspace  $\pisyms{s}$, denoted by $\kappa_s$ are given by
\begin{equation}
    \label{fact:row-symmetric-dimension}
    \kappa_s = \binom{s + d - 1}{d-1} \quad \text{and}\quad \kappa_\lambda = \prod_{i=1}^k \kappa_{\lambda_i}\, .
\end{equation}
\end{fact}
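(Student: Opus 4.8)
The plan is to first establish the tensor-product factorization \eqref{fact:row-symmetric-decomposition} and then read the dimension formula \eqref{fact:row-symmetric-dimension} off of it, using the classical count for the dimension of the symmetric subspace.

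First I would unpack the structure of the row permutation group $A_\lambda$ under the qudit-to-box correspondence fixed above. Since the rows of the Young diagram of shape $\lambda$ partition the index set $[n]$ into \emph{consecutive} blocks $B_1, \ldots, B_k$ of sizes $\lambda_1, \ldots, \lambda_k$, every $\pi \in A_\lambda$ factors uniquely as a product $\pi = \pi_1 \cdots \pi_k$ with $\pi_i$ a permutation supported on $B_i$. This yields a group isomorphism $A_\lambda \cong \S_{\lambda_1} \times \cdots \times \S_{\lambda_k}$, and in particular $|A_\lambda| = \prod_{i=1}^k \lambda_i!$. Because the $B_i$ are disjoint and consecutive, the permutation operator factorizes along the corresponding tensor factors: $P_\pi = P_{\pi_1} \otimes \cdots \otimes P_{\pi_k}$, where $P_{\pi_i}$ acts on the $\lambda_i$ qudits of row $i$.

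Next I would substitute this into $\pisym = \frac{1}{|A_\lambda|}\sum_{\pi \in A_\lambda} P_\pi$ and distribute the sum through the tensor product:
\[
\pisym = \frac{1}{\prod_{i=1}^k \lambda_i!} \sum_{\pi_1 \in \S_{\lambda_1}} \cdots \sum_{\pi_k \in \S_{\lambda_k}} \bigotimes_{i=1}^k P_{\pi_i} = \bigotimes_{i=1}^k \left( \frac{1}{\lambda_i!} \sum_{\pi_i \in \S_{\lambda_i}} P_{\pi_i} \right) = \bigotimes_{i=1}^k \pisyms{\lambda_i},
\]
which is \eqref{fact:row-symmetric-decomposition}. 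For the dimensions, I would use that projectors acting on separate tensor factors satisfy $\rank(\bigotimes_i \Pi_i) = \prod_i \rank(\Pi_i)$, which gives $\kappa_\lambda = \prod_{i=1}^k \kappa_{\lambda_i}$ immediately. Finally, $\kappa_s$ is the dimension of the image of $\pisyms{s}$ on $(\C^d)^{\otimes s}$, i.e., the dimension of the $s$-th symmetric power of $\C^d$, whose monomial basis is indexed by size-$s$ multisets drawn from $[d]$; a stars-and-bars count gives $\kappa_s = \binom{s+d-1}{d-1}$.

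I do not anticipate a real obstacle: the only point needing care is the claim that $A_\lambda$ splits as a direct product of symmetric groups acting on disjoint \emph{consecutive} tensor factors, which is exactly where the convention (fixed earlier) of assigning qudits to boxes ``left to right, then top to bottom'' is used — it is what makes each row a contiguous block so that $P_\pi$ factorizes cleanly. Everything else is elementary linear algebra together with the standard dimension formula for the symmetric subspace.
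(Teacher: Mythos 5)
Your proposal is correct and follows exactly the reasoning the paper leaves implicit: the paper states this as a Fact without proof, observing only that $\pisym$ is the tensor product of the local symmetric-subspace projectors over the rows (which are contiguous blocks under the fixed qudit-to-box ordering) and that the dimension formula then follows from the standard $\binom{s+d-1}{d-1}$ count. Your factorization of $A_\lambda \cong \S_{\lambda_1}\times\cdots\times\S_{\lambda_k}$ and the rank-of-tensor-product step fill in precisely that argument.
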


Our measurement is based on an important measure over the unitary group --- the Haar measure. The Haar measure on unitary group $\U(d)$ is the unique probability measure $\nu_H$ that is both left and right invariant over the group $\U(d)$. It will be more natural for us to consider the measure induced by Haar measure over density matrices of pure states. This is a measure over $\C^{d\times d}$ which we call Haar random state measure, and denote by $\mu_H$.
We use $\mu_H^k$ to denote the corresponding product measure on $(\C^{d \times d})^{\otimes k}$. Our measurement depends on the $\lambda$ partition outputted by the \gp~$\Gamma$ (see \Cref{lem:preprocess-weight-space}).

\begin{definition}[Row symmetric joint measurement] \label{def:measurement}We define the row symmetric joint measurement on $n$-qudits with partition $\lambda \partition_d  [n]$ {that has exactly $k$ parts}
as the following POVM: $$\mathcal{M}^{(\lambda)} = \left\{ F^{(\lambda)}(\psi_1, \ldots, \psi_k)\right\}_{(\psi_1, \ldots, \psi_k)} \cup \{I - \pisym\}$$ with elements $$F^{(\lambda)}(\psi_1, \ldots, \psi_k) \defeq \bigotimes_{i=1}^k \kappa_{\lambda_i}
    \psi_i^{\otimes \lambda_i}\, ,$$ for all density matrices $(\psi_1, \ldots, \psi_k)$ proportional to the product Haar measure $\mu_H^k$, plus a ``fail'' outcome $I - \pisym$ for states in orthogonal complement of the row symmetric subspace. Here $\kappa_{\lambda_i}$ is the dimension of symmetric subspace of an $\lambda_i$-qudit system (see \Cref{fact:row-symmetric-dimension} for the definition of $\kappa_{\lambda_i}$).
\end{definition}
The validity of our measurement follows from the validity of the standard joint measurement over the symmetric subspace.
\begin{lemma}[Orthogonal projector on row symmetric subspace] 
\label{lemma:povm} 
For any partition $\lambda = \{\lambda_1, \ldots, \lambda_k\}$, the expectation of $F^{(\lambda)}(\psi_1, \ldots, \psi_k)$ under the Haar measure is a projection onto the row symmetric subspace: 
\[
\E_{(\psi_1, \psi_2 \ldots , \psi_k) \sim \mu_H^k} \left[F^{(\lambda)}(\psi_1, \ldots, \psi_k)\right] = \pisym.
    \]
\end{lemma}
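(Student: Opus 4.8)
The plan is to reduce the claim to the well-known fact that the (unnormalized) integral of $\proj{\psi}^{\otimes s}$ over Haar-random pure states is proportional to the projector onto the symmetric subspace of $(\C^d)^{\otimes s}$, and then tensor these identities together using the product structure of both the measure $\mu_H^k$ and the row symmetric subspace projector. First I would recall the single-register statement: for any $s$, $\E_{\psi \sim \mu_H}[\psi^{\otimes s}] = \frac{1}{\kappa_s}\,\pisyms{s}$, where $\kappa_s = \binom{s+d-1}{d-1} = \dim \pisyms{s}$; this is a standard consequence of Schur--Weyl duality (the left-hand side commutes with $U^{\otimes s}$ for all $U$ and with all $P_\pi$, hence is a scalar multiple of a sum of permutation operators, and symmetry plus the trace normalization pins down the constant). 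Equivalently, $\E_{\psi}[\kappa_s\,\psi^{\otimes s}] = \pisyms{s}$.

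Next I would plug this into the definition of $F^{(\lambda)}$. Since $F^{(\lambda)}(\psi_1,\ldots,\psi_k) = \bigotimes_{i=1}^k \kappa_{\lambda_i}\,\psi_i^{\otimes \lambda_i}$ and the $\psi_i$ are drawn \emph{independently} from $\mu_H$ (i.e.\ $(\psi_1,\ldots,\psi_k)\sim \mu_H^k$), expectation factorizes over the tensor factors:
\[
\E_{(\psi_1,\ldots,\psi_k)\sim \mu_H^k}\!\left[F^{(\lambda)}(\psi_1,\ldots,\psi_k)\right]
= \bigotimes_{i=1}^k \E_{\psi_i \sim \mu_H}\!\left[\kappa_{\lambda_i}\,\psi_i^{\otimes \lambda_i}\right]
= \bigotimes_{i=1}^k \pisyms{\lambda_i}.
\]
By \Cref{fact:row-symmetric-decomposition}, $\bigotimes_{i=1}^k \pisyms{\lambda_i} = \pisym$, which is exactly the claim. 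I would also note that $\pisym$ is a genuine orthogonal projector (each $\pisyms{\lambda_i}$ is, and tensor products of orthogonal projectors acting on disjoint registers are orthogonal projectors), so adding the element $I - \pisym$ indeed completes $\mathcal{M}^{(\lambda)}$ to a valid POVM whose elements sum to $I$ after integration.

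The only genuinely nontrivial ingredient here is the single-register Haar identity $\E_\psi[\kappa_s\,\psi^{\otimes s}] = \pisyms{s}$, so I expect the main obstacle to be deciding how much of that to prove versus cite; since it is completely standard (e.g.\ it appears in the ``church of the symmetric subspace'' literature and underlies the measurement in \cite{grier2022sample}), I would state it as a known fact with a one-line justification via Schur--Weyl duality and the dimension count, and keep the proof of \Cref{lemma:povm} to essentially the factorization display above.
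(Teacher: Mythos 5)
Your proposal is correct and follows essentially the same route as the paper: the paper's proof likewise combines the standard Haar identity $\kappa_s\,\E_{\psi\sim\mu_H}[\psi^{\otimes s}] = \pisyms{s}$ (cited as a known result) with the tensor-product decomposition $\pisym = \bigotimes_{i=1}^k \pisyms{\lambda_i}$ from \Cref{fact:row-symmetric-decomposition}. Your version simply spells out the factorization of the expectation over the independent $\psi_i$, which the paper leaves implicit.
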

\begin{proof}
The lemma follows from 
\Cref{fact:row-symmetric-decomposition} and
the following standard equality (see \cite{scott2006tight}) concerning integral over the Haar measure:
$
\kappa_s \E_{ \psi \sim \mu_H }\lp[ 
\psi^{\otimes s}
\rp] = \Pi_{\text{sym}}^{(s)}.
$
\end{proof}
The row symmetric joint measurement is appealing to us for the following reason.
The \gp~$\Gamma$ produces a state that lives in the canonical irrep $\mathcal Q_\lambda := Y_{\lambda} (\C^d)^{\otimes n}${, which is included in the $\lambda$-row symmetric subspace}.
Thus, the $\lambda$-row symmetric joint measurement 
 always outputs meaningful outcome of the form $F^{(\lambda)}(\psi_1, \ldots, \psi_k)$ for some pure states $\psi_1, \ldots, \psi_k$ instead of the ``failure'' outcome $I - \pisym$.
This is {much more efficient than the} naive strategy of directly performing the standard joint measurement over the symmetric subspace on the initial input $\us^{\otimes n}$, which produces ``meaningful'' outcome with negligible probability as $\us^{\otimes n}$ may have trivial components within the symmetric subspace when $\us$ has rank $>1$.
\begin{lemma}
\label{lem:live-in-row-sym}
Let $U^{\otimes n}\ket{\tau}$ be a state obtained via the pre-processing step from \Cref{lem:preprocess-weight-space}. 
Then we have
$$
\pisym  U^{\otimes n}\ket{\tau}  = U^{\otimes n} \ket{\tau}.
$$
\end{lemma}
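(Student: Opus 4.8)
The plan is to reduce the statement to the fact that $\pisym$ acts as the identity on the canonical irrep $\mathcal Q_\lambda = Y_\lambda (\C^d)^{\otimes n}$, and then to verify that fact directly from the definitions of $\pisym$ and $Y_\lambda$. First I would recall, from \Cref{lem:preprocess-weight-space}, that $\Un{n}\ket{\tau} = \Un{n}\sum_{i}\alpha_i\ket{(\lambda,i,0)}$, so that $\ket{\tau} = \sum_i \alpha_i \ket{(\lambda,i,0)}$ lies in $\Span\{\ket{(\lambda,i,0)}\}_i = \mathcal Q_\lambda = \Img(Y_\lambda)$. Next I would use that each $P_\pi$ commutes with $\Un{n}$ (one direction of Schur--Weyl duality, immediate from the definitions in \Cref{sec:key_representation_theory}), hence $\pisym = \frac{1}{|A_\lambda|}\sum_{\pi\in A_\lambda}P_\pi$ commutes with $\Un{n}$ as well. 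Therefore $\pisym\Un{n}\ket{\tau} = \Un{n}\pisym\ket{\tau}$, and it suffices to show $\pisym\ket{\tau} = \ket{\tau}$, i.e. that $\pisym$ fixes every vector in $\Img(Y_\lambda)$.

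The core step is then the identity $\pisym Y_\lambda = Y_\lambda$. Since $\pi\mapsto P_\pi$ is a representation and $Y_\lambda = \bigl(\sum_{a\in A_\lambda}P_a\bigr)\bigl(\sum_{b\in B_\lambda}\sgn(b)P_b\bigr)$ by \Cref{def:young_symmetrizer}, one has
\[
\pisym Y_\lambda = \frac{1}{|A_\lambda|}\Bigl(\sum_{\sigma\in A_\lambda}P_\sigma\Bigr)\Bigl(\sum_{a\in A_\lambda}P_a\Bigr)\Bigl(\sum_{b\in B_\lambda}\sgn(b)P_b\Bigr) = \frac{1}{|A_\lambda|}\Bigl(\sum_{\sigma,a\in A_\lambda}P_{\sigma a}\Bigr)\Bigl(\sum_{b\in B_\lambda}\sgn(b)P_b\Bigr).
\]
Because $A_\lambda$ is a group, for each fixed $a$ the inner sum $\sum_{\sigma\in A_\lambda}P_{\sigma a}$ equals $\sum_{\sigma'\in A_\lambda}P_{\sigma'}$, so $\sum_{\sigma,a\in A_\lambda}P_{\sigma a} = |A_\lambda|\sum_{\sigma'\in A_\lambda}P_{\sigma'}$, giving $\pisym Y_\lambda = \bigl(\sum_{\sigma'\in A_\lambda}P_{\sigma'}\bigr)\bigl(\sum_{b\in B_\lambda}\sgn(b)P_b\bigr) = Y_\lambda$. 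Consequently $\pisym$ fixes $\Img(Y_\lambda)$ pointwise, so $\pisym\ket{\tau} = \ket{\tau}$, and combining with the commutation of $\pisym$ and $\Un{n}$ yields $\pisym\Un{n}\ket{\tau} = \Un{n}\pisym\ket{\tau} = \Un{n}\ket{\tau}$.

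I do not expect a genuine obstacle here; the proof is a short manipulation. The only points requiring care are the bookkeeping conventions — that $P_\sigma P_a = P_{\sigma a}$, so that left-multiplication by $A_\lambda$ permutes the row-symmetrizer sum, and that $\pisym$ really does commute with $\Un{n}$ — both of which follow immediately from the definitions already set up. (An equivalent framing, if one prefers to avoid the explicit computation, is simply to note that $\Img(Y_\lambda)$ is contained in the image of the row-symmetrizer $\sum_{a\in A_\lambda}P_a$, which is exactly the $\lambda$-row symmetric subspace by \Cref{def:rowsym}.)
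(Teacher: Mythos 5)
Your proof is correct and follows essentially the same route as the paper: identify $\ket{\tau}$ as an element of $\Img(Y_\lambda)$, use the absorption property $P_\pi Y_\lambda = Y_\lambda$ for $\pi \in A_\lambda$ (you phrase it as $\pisym Y_\lambda = Y_\lambda$ via group averaging, which is the same computation), and conclude using the fact that $\pisym$ commutes with $\Un{n}$. No gaps.
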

\begin{proof}
Recall that we define $\mathcal Q_{\lambda}$ to be the image of $Y_{\lambda}$ in $(\C^d)^{\otimes n}$.
Thus, we have
${\ket{\tau}} = Y_{\lambda} \ket{u}$ for some state $\ket{u} \in  (\C^d)^{\otimes n}$.
From the definition of row symmetric subspace above, \Cref{def:rowsym}, we have
$$
\pisym= 
\frac{1}{ \abs{A_{\lambda}} }
\sum_{ \pi \in A_{\lambda} } P_\pi.
$$
Recall the definition of Young symmetrizer $Y_\lambda$, \Cref{def:young_symmetrizer}, implies for any permutation $\pi \in A_\lambda$,  $$
P_\pi Y_{\lambda} = P_\pi \sum_{a \in A_{\lambda}} P_a \sum_{ b \in B_{\lambda}  } \sgn(  b ) \; P_{b} = Y_\lambda .$$ 
Since $\pisym$ commutes with $U^{\otimes n}$,  the lemma hence follows.
\end{proof}

\subsection{Expected value of estimator\texorpdfstring{~$\Psi$}{}}
\label{sec:exp-comp}
Suppose the \gp~$\Gamma$ produces a partition $\lambda \partition_d n$ {that has exactly $k$ parts} and a state $\U^{\otimes n} \ket{\tau}$ (see \Cref{lem:preprocess-weight-space}).
We then proceed to run the POVM $\mathcal M^{(\lambda)}$ (see \Cref{def:measurement}), which gives a measurement outcome $F^{(\lambda)}(\psi_1, \ldots, \psi_k)$. We then combine $\psi_1, \ldots, \psi_k$ into a single $d \times d$ shadow matrix description {$\Psi$}.
In particular, we consider the following linear combination parametrized by the partition $\lambda$:
\begin{align}
\label{eq:shadow-def}
\Psi = 
\sum_{i=1}^k
\lp( d + \lambda_i\rp) \psi_i.    
\end{align}
In this subsection, we compute the expected value of the shadow matrix $\Psi$ conditioned on the post-processing state $U^{\otimes n} \ket{\tau}$ and partition $\lambda$.
{From there, it is not hard to see that it can be processed into an unbiased estimator of the initial mixed state $\us$ with some elementary algebraic operations.}

\begin{lemma}
\label{lem:expectation}
Suppose the output of our pre-processing step $\Gamma$ is a state $\Un{n} \ket{\tau}$ and some partition $\lambda$ with exactly $k$ parts
(see \Cref{lem:preprocess-weight-space} for description of the pre-processing step). Moreover, assume that $\ket{\tau} \in V(w)$ for some weight vector $w \in [n]^d$.
Let $F^{(\lambda)}(\psi_1, \cdots, \psi_k)$ be the measurement outcome after performing the joint measurement $\mathcal M^{(\lambda)}$, 
and $\Psi$ be defined as in \Cref{eq:shadow-def}.
Then we have
$$
\E[ \Psi ] = 
k I + \sum_{i=1}^d w_i U \ket{i}\bra{i} U^{\dagger} \, ,
$$ 
where the expectation is over the randomness of the measurement $\mathcal M^{(\lambda)}$.
\end{lemma}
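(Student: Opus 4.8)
The plan is to compute $\E[\Psi] = \sum_{i=1}^k (d+\lambda_i)\, \E[\psi_i]$ by first computing the joint expectation $\E\!\left[F^{(\lambda)}(\psi_1,\dots,\psi_k)\right]$ and then extracting each $\E[\psi_i]$ via a suitable partial trace. Concretely, since $\ket{\tau}$ lies in $\mathcal Q_\lambda \subseteq \pisym(\C^d)^{\otimes n}$ (by \Cref{lem:live-in-row-sym}), the ``fail'' outcome $I - \pisym$ occurs with probability zero, so the measurement always returns some $F^{(\lambda)}(\psi_1,\dots,\psi_k)$ with the $\psi_i$ distributed according to the product Haar measure weighted by the Born rule. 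The key observation is that because $\ket\tau$ is invariant under $P_\pi$ for all $\pi \in A_\lambda$, and $F^{(\lambda)}(\psi_1,\dots,\psi_k) = \bigotimes_i \kappa_{\lambda_i}\psi_i^{\otimes\lambda_i}$ is likewise block-wise permutation-symmetric, the Haar integrals will collapse to uniform sums of permutation operators within each row-block rather than requiring the full Weingarten machinery. By \Cref{lemma:povm}, $\E[F^{(\lambda)}] = \pisym = \bigotimes_{i=1}^k \Pi_{\mathrm{sym}}^{(\lambda_i)}$, and more to the point one needs the ``one marked copy'' version: for a single row of size $s$, $\kappa_s\, \E_{\psi\sim\mu_H}[\psi \otimes \psi^{\otimes(s-1)}]$ equals (up to the known normalization) a uniform sum over $\S_s$ of permutation operators, which is the standard fact underlying the pretty-good-measurement / symmetric-subspace estimator of \cite{grier2022sample}.

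The main steps, in order, are: (1) Write $\E[\Psi]$ using linearity and reduce to computing, for each block $i$, the quantity $\tr_{\neq (i,1)}\!\big[(\proj{\tau})\, \kappa_{\lambda_i} \E[\psi_i \otimes \psi_i^{\otimes(\lambda_i-1)}] \otimes (\text{identity-like factors from the other blocks})\big]$ — i.e.\ identify $\E[\psi_i]$ as a normalized partial trace of $\ket\tau\!\bra\tau$ against the appropriate POVM element averaged over Haar. (2) Invoke the single-block Haar identity $\kappa_s \E_{\psi}[\psi^{\otimes s}] = \Pi_{\mathrm{sym}}^{(s)} = \frac{1}{s!}\sum_{\sigma\in\S_s}P_\sigma$ and, for the ``one copy pulled out'' variant, express $\kappa_s\,\E_\psi[\psi\otimes\psi^{\otimes(s-1)}]$ as a sum over $\S_s$ of permutation operators acting on $s$ registers; this is exactly the computation done in \cite{grier2022sample} for $\lambda=(n)$, and here it is applied blockwise. (3) Use the row-symmetry $P_\pi\ket\tau = \ket\tau$ for $\pi\in A_{\lambda}$ to simplify: when the resulting permutation operators are traced against $\proj{\tau}$, every permutation that stays inside block $i$ acts trivially, so the sum over $\S_{\lambda_i}$ collapses. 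What survives is a term proportional to the identity (from the permutations that move the pulled-out copy back into place, contributing $\tr$ over that register) plus a term proportional to $U\rho_i U^\dagger$ where $\rho_i$ is the reduced state of $\ket\tau$ on one qudit of block $i$. (4) Finally use the weight-subspace structure: since $\ket\tau \in V(w)$ and is invariant under $A_\lambda$, the single-qudit reduced density matrix on block $i$ is diagonal with the appropriate entries, and summing the coefficients $(d+\lambda_i)$ against these contributions yields exactly $kI + \sum_{i=1}^d w_i\, U\ket{i}\!\bra{i}U^\dagger$. The specific weights $d+\lambda_i$ in \Cref{eq:shadow-def} are chosen precisely so the ``identity'' pieces collect to $kI$ and the ``state'' pieces collect with the right normalization to recover the weight vector $w$.

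The step I expect to be the main obstacle is (3)–(4): correctly tracking how the blockwise Haar averages, once pulled out as sums of permutation operators, interact with $\proj{\tau}$, and in particular showing that the contribution of block $i$ to $\E[\psi_i]$ is $\frac{1}{d+\lambda_i}\big(I + \lambda_i \cdot (\text{diagonal reduced state})\big)$ — i.e.\ that the normalization constants $\kappa_{\lambda_i}$, the dimension factors, and the collapse of the $\S_{\lambda_i}$-sum all conspire to give clean coefficients. Getting the bookkeeping right for which permutations fix $\ket\tau$ (all of $A_\lambda$) versus which ones genuinely move the distinguished copy across the block boundary, and confirming that the cross-block factors contribute only a multiplicative $1$ (because $\Pi_{\mathrm{sym}}^{(\lambda_{i'})}\ket{\tau}$-restricted traces are $1$ by normalization), is where the argument needs care. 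Everything else — linearity, the reduction to partial traces, and the invocation of the standard symmetric-subspace Haar identities — is routine given the structural lemmas already established.
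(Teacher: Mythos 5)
Your plan is correct and follows essentially the same route as the paper's proof: introduce a marked/output copy of $\psi_j$, apply the symmetric-subspace Haar identity blockwise so the integral collapses to swaps of the marked copy with row positions times the row-permutation average, kill the within-row permutations using $P_\pi\ket{\tau}=\ket{\tau}$ for $\pi\in A_\lambda$, and use the weight-subspace structure (distinct equal-weight basis vectors differ in at least two sites, so cross terms vanish under the single-site partial trace) to get the diagonal reduced states, with $\tfrac{1}{\lambda_i+1}\tfrac{\kappa_{\lambda_i}}{\kappa_{\lambda_i+1}}=\tfrac{1}{d+\lambda_i}$ giving the clean coefficients. One small indexing slip: the relevant Haar integral is $\kappa_{\lambda_i}\,\E[\psi_i\otimes\psi_i^{\otimes \lambda_i}]$ over $\lambda_i+1$ registers (the marked copy is in addition to the $\lambda_i$ row copies in the POVM element), not $\psi_i\otimes\psi_i^{\otimes(\lambda_i-1)}$, but your stated final block contribution $\tfrac{1}{d+\lambda_i}\bigl(I+\lambda_i\cdot(\text{diagonal reduced state})\bigr)$ is exactly right, so this is a typo rather than a gap.
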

\begin{proof}
In this proof, we will be working with an $(n+1)$ qudit quantum state, where we will refer to the first qudit as the \textit{output} qudit, and associate a Young diagram with partition $\lambda \partition_d  [n]$ to the remaining $n$-qudit system. 
Specifically, we use $(j,p)$ to index the qudit corresponding to the $p$-th cell in the $j$-th row of the Young diagram.

The expectation of $\Psi$ is given by
\begin{align}
\E[ \Psi ] &=
\sum_{j=1}^k
(d + \lambda_j)
\E_{\mu_H^k}\left[
\psi_j
\tr \lp( \bigg(\bigotimes_{i=1}^k 
    \kappa_{\lambda_i} \;
    \psi_i^{\otimes \lambda_i}\bigg) \cdot \bigg(
    \Un{n}\ket{\tau} \bra{\tau} \Und{n}\bigg)
     \rp) \right]
    \;     
    \nonumber \\
&=
\sum_{j=1}^k
(d + \lambda_j)
\E_{\mu_H^k}\left[
\tr_{-1} \lp(\bigg( 
\psi_j \otimes \bigotimes_{i=1}^k 
    \kappa_{\lambda_i} \;
    \psi_i^{\otimes \lambda_i}\bigg)\cdot \bigg(I \otimes (\Un{n}\ket{\tau} \bra{\tau}\Und{n})\bigg) \rp)
    \right] \nonumber \\
&=
\sum_{j=1}^k
(d + \lambda_j)
\tr_{-1} \lp( 
\E_{\mu_H^k}\bigg[\psi_j \otimes \bigotimes_{i=1}^k 
    \kappa_{\lambda_i} \;
    \psi_i^{\otimes \lambda_i}\bigg]
    \cdot \bigg(I \otimes (\Un{n}\ket{\tau} \bra{\tau}\Und{n})\bigg) \rp).
\label{eq:exp-part-I}
\end{align}
A key step in the computation is to tackle the following expectation expression
$$ \E_{\mu_H^k} \lp[ \psi_j \otimes \bigotimes_{i=1}^k \kappa_{\lambda_i} \psi_i^{\otimes \lambda_i} \rp]\, , $$ where $(\psi_1, \cdots, \psi_k) \sim \mu_H^k$, the $k$-fold Haar measure over a $(n+1)$-qudit system. Similar to the proof of \Cref{lemma:povm}, this can be simplified into the form of $\bigotimes_{i=1}^k L_i$, where $L_i$ is the average of all permutations over the qudits lying in the $i$-th row if $i \neq j$, and {$L_j$ is the average of all permutations over the qudits lying in the $j$-th row and the output qudit}.
One can see that this is almost the definition of $ \frac{1}{ |A_{\lambda}| } 
\sum_{\pi \in A_{\lambda}} P_{\pi}
$ except for the fact that $L_j$ contains permutations involving the output qudit. {To simplify this, note that we can further decompose each permutations in $L_j$ into two parts: a permutation over the qudits in the $j$-th row and an operation that swaps the output qudit with some qudit from the $j$-th row or itself.
We formalize the swap between the output qudit and an arbitrary qudit in $j$th row or itself with the operation below.}
\begin{definition}[Output Swap Permutation, $\swap$]\label{def:swap}
We use $\swap((i, j))$ to denote the permutation which swaps the output qudit with the $(i,j)$ qudit. We overload this definition to define $\swap((i, \lambda_i + 1))$ to be the identity matrix. 
\end{definition}
The observation leads to the following equality:
\begin{align}
\label{eq:exp-integral-simplify}
&\E_{\mu_H^k}\left[\psi_j \otimes \bigotimes_{i=1}^k 
    \kappa_{\lambda_i} \;
    \psi_i^{\otimes \lambda_i}\right] = \frac{\kappa_{\lambda_j}}{\kappa_{\lambda_j+1}} 
    \frac{1}{ \lambda_j + 1 }
    \left(\sum_{p=1}^{\lambda_j + 1} \swap((j, p))\right)  \; \lp( I \otimes \frac{1}{|A_{\lambda}|} \sum_{ \pi \in A_{\lambda}} P_{\pi} \rp).
\end{align}
To simplify this further, note that 
\begin{align}
\label{eq:exp-coefficient-simplify}
\frac{1}{ 1 + \lambda_j }
\; \frac{\kappa_{\lambda_j} }{ \kappa_{\lambda_j+1} } = \frac{1}{ d + \lambda_j }.    
\end{align}
Moreover, an important property (\Cref{lem:live-in-row-sym}) of our post-processed state $\Un{n}\ket{\tau}$ is that: 
\begin{align}
\label{eq:exp-permutation-simplify}    
\frac{1}{ |A_{\lambda}| }
\sum_{\pi \in A_{\lambda}}
P_{\pi} \Un{n}\ket{\tau} = \Un{n}\ket{\tau}\,.
\end{align}
Substitute \Cref{eq:exp-coefficient-simplify} into \Cref{eq:exp-integral-simplify}, then substitute \Cref{eq:exp-integral-simplify} into \Cref{eq:exp-part-I}, and lastly simplify the expression using \Cref{eq:exp-permutation-simplify}.
We thus reach the identity
\begin{align}
\label{eq:permutation-simplify}
\E[ \Psi]
&= 
\sum_{j=1}^k
\sum_{p=1}^{ \lambda_j + 1}
\tr_{-1} \lp( 
\swap((j, p)) 
\Big( 
I \otimes (\Un{n} \ket{\tau} \bra{\tau} U^{\dagger\otimes n})\Big)  \rp).
\end{align}
By the property of our pre-processing step (\Cref{lem:preprocess-weight-space}),
we know $\ket{\tau}$ lives in the same weight subspace as $\ket{e}$. 
Hence, it can be written as the following linear combination:
\begin{align*}
\ket{\tau} = \sum_{ \ket{a} \in E_n(w)}
c_{a} \ket{a} \, ,    
\end{align*}
where we denote by $E_n(w)$ the set of standard basis vectors on $n$ qudits that 
lie in the weight subspace of $V(w)$, i.e., $\ket{a} \in E_n(w)$ have $w_i$ many qudits equal to $\ket{i}$ for $i \in [d]$.
Therefore, we can expand \Cref{eq:permutation-simplify}:
\begin{align*}
\E[ \Psi]
&= 
\sum_{\ket{a}, \ket{a'} \in E_n(w)}
\sum_{j=1}^k
\sum_{p=1}^{ \lambda_j  + 1}
c_a \; c_{a'}^{\dagger} \;
\tr_{-1} \lp( 
\swap((j, p)) \;
\bigg( I \otimes (\Un{n}\ket{a} \bra{a'} \Und{n})\bigg) \rp).
\end{align*}
Fix some $a, a'$.
We claim that the inner expression vanishes if $\ket{a} \neq \ket{a'}$.
Assume $\ket{a} \neq \ket{a'}$. 
Since the weight vectors of $\ket{a}$ and $\ket{a'}$ are the same, $\ket{a}$ must disagree with $\ket{a'}$ in at least two qudits. 
Hence, in the end, there must be at least one qudit where $\ket{a}$ disagrees with $\ket{a'}$ that has been traced out, making the partial trace vanish. 
It then follows that
\begin{align*}
\E[ \Psi]
&= 
\sum_{\ket{a} \in E_n(w)}
|c_a|^2
\sum_{j=1}^k
\sum_{p=1}^{ \lambda_j  + 1}
\tr_{-1} \lp( 
\swap((j, p)) \;
\bigg( I \otimes (\Un{n}\ket{a} \bra{a} \Und{n})\bigg) \rp).
\end{align*}
Now, the partial trace 
$$
\tr_{-1} \lp( 
\swap((j, p)) \;
\bigg( I \otimes (\Un{n}\ket{a} \bra{a} \Und{n})\bigg) \rp)
$$
can be evaluated for a fixed $j$ and $p$.
In particular, it evaluates to $I$ whenever $p=\lambda_j + 1$, and evaluates to $U \ket{a_{j,p}}\bra{a_{j,p}} U^{\dagger}$ otherwise. 
Since $\ket{a} \in E_n(w)$, we get
\begin{align}
\sum_{j=1}^k
\sum_{p=1}^{ \lambda_j  + 1}
\tr_{-1} \lp( 
\swap((j, p)) \;
\bigg( I \otimes \Un{n}\ket{a} \bra{a} \Und{n}\bigg) \rp) \nonumber
=
k \; I 
+ \sum_{i=1}^d w_i \; U \ket{i} \bra{i} U^{\dagger}.
\end{align}
Since the above sum is the same for each $\ket{a} \in E_n(w)$ and $\sum_{\ket{a} \in E_n(w)} |c_a|^2 = 1\, ,$ 
we reach the conclusion
\begin{align*}
\E [ \Psi ]    
= k I + 
\sum_{i=1}^d w_i U \ket{i}\bra{i} U^{\dagger}.
\end{align*}
\end{proof}

\subsection{Variance of \texorpdfstring{$\tr(O\Psi)$}{estimator}}
\label{sec:var-comp}
In this subsection, we show how to bound the variance of $\tr(O\Psi)$ conditioned on the partition $\lambda$. 
\begin{lemma}    
\label{lem:second-moment}
Let $O$ be the observable received and
$\Psi$ be the shadow matrix defined as in \Cref{lem:expectation}. Then we have 
$$
\Var[\tr(O \Psi)] = \E[ \tr(O \Psi)^2 ]
- \E[ \tr(O \Psi) ]^2
\leq k \; \tr(O^2) + \bigO(n) \snorm{\infty}{O^2}
+ \bigO(n^2) \snorm{\infty}{O}^2 \, ,
$$ where the variance is over the randomness of the measurement $\mathcal M^{(\lambda)}$.
\end{lemma}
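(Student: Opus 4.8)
The plan is to mirror the expected-value computation from \Cref{lem:expectation}, but now for the second moment $\E[\tr(O\Psi)^2]$. Writing $\Psi = \sum_{j=1}^k (d+\lambda_j)\psi_j$, we have $\tr(O\Psi)^2 = \sum_{j,j'} (d+\lambda_j)(d+\lambda_{j'}) \tr(O\psi_j)\tr(O\psi_{j'})$, so the key quantity is $\E_{\mu_H^k}[\psi_j \otimes \psi_{j'} \otimes \bigotimes_{i=1}^k \kappa_{\lambda_i}\psi_i^{\otimes \lambda_i}]$, now working in an $(n+2)$-qudit system with two output qudits. As in the proof of \Cref{lemma:povm}, each $\psi_i$-factor is Haar-integrated, so the integral collapses into a tensor product $\bigotimes_i L_i$ where $L_i$ is (proportional to) a uniform sum of permutations over the qudits in row $i$ — together with the two output qudits when $i$ equals $j$ and/or $j'$. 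When $j \neq j'$, the first output qudit couples only to row $j$ and the second only to row $j'$; when $j = j'$, both output qudits couple to row $j$. I would decompose these permutations, exactly as in \Cref{eq:exp-integral-simplify}, into a permutation internal to the row (which acts trivially on $U^{\otimes n}\ket{\tau}$ by \Cref{eq:exp-permutation-simplify}) times a "residual" action on the output qudits and a bounded number of row qudits. This reduces $\E[\tr(O\Psi)^2]$ to a sum over pairs of swap-type operators, analogous to \Cref{eq:permutation-simplify}, with combinatorial prefactors that cancel against the $(d+\lambda_j)$ weights just as in \Cref{eq:exp-coefficient-simplify}.

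After this reduction, the surviving terms fall into a few types according to how the two output qudits attach to the row qudits: (i) both output qudits map to the "identity slot" ($p = \lambda_j+1$), contributing $\tr(O)^2$-like terms that combine to reproduce $\E[\tr(O\Psi)]^2$ and cancel in the variance; (ii) one output qudit hits an identity slot and the other hits a genuine qudit, contributing terms like $\tr(O)\cdot\tr(O\, U\proj{a_{j,p}}U^\dagger)$; (iii) both output qudits hit genuine qudits in the same or different rows, contributing terms like $\tr(O\,U\proj{a_{j,p}}U^\dagger)\tr(O\,U\proj{a_{j',p'}}U^\dagger)$ or $\tr(O\,U\proj{a_{j,p}}U^\dagger O\,U\proj{a_{j,p'}}U^\dagger)$ when the two output qudits get swapped into the same register by the residual permutation. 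I would expand $\ket{\tau} = \sum_{\ket{a}\in E_n(w)} c_a\ket{a}$ as before; the weight-subspace structure again forces $\ket{a}=\ket{a'}$ in most cross terms after the partial trace, though one must be careful: with two traced-out output qudits there can now be terms where $\ket{a}$ and $\ket{a'}$ differ in exactly the two positions that are "pulled out" by the swaps, so those survive and must be bounded rather than killed. Bounding the type-(iii) "diagonal" contributions, where two copies of $O$ multiply inside a single trace, is where $\tr(O^2)$ enters with the leading coefficient $k$; the type-(ii) and off-diagonal type-(iii) terms, being products of two separate traces each bounded by $\snorm{\infty}{O}\cdot(\text{something})$, contribute the $\bigO(n)\snorm{\infty}{O^2}$ and $\bigO(n^2)\snorm{\infty}{O}^2$ pieces, using $\sum_i w_i = n$ and $k \le d$.

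The main obstacle I anticipate is the bookkeeping in case (iii) when $j = j'$: here both output qudits are symmetrized into the same $\lambda_j$-qudit row, and the residual permutation can either send them to two distinct row qudits (giving a product of traces) or "collide" them onto the same row qudit (giving a single trace $\tr(O^2 U\proj{a_{j,p}}U^\dagger)$, or more precisely $\tr(OU\proj{a}U^\dagger O U\proj{a}U^\dagger)$-type expressions), and correctly enumerating these with the right multiplicities — while tracking which terms cancel against $\E[\tr(O\Psi)]^2$ — is delicate. A secondary subtlety is handling the cross terms $\ket{a}\neq\ket{a'}$ that are no longer annihilated by the partial trace; these require arguing that the surviving pieces are still controlled, e.g.\ by Cauchy–Schwarz over the $c_a$ coefficients together with $\sum_a |c_a|^2 = 1$. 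Once all term types are enumerated and bounded, collecting the leading $k\tr(O^2)$ term and the lower-order $\snorm{\infty}{O}$-dependent terms yields the claimed bound; the variance follows by subtracting $\E[\tr(O\Psi)]^2$, which exactly removes the type-(i) contributions.
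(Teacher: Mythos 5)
Your outline follows essentially the same route as the paper (collapse the Haar integral into row-wise sums of permutations on an $(n+2)$-qudit system, split by how the two output qudits attach, kill most cross terms $\ket{a}\neq\ket{a'}$ via the weight structure, and control the surviving two-position cross terms by Cauchy--Schwarz), but two concrete points would make the argument as written fail. First, you never center the observable. The paper's first step replaces $O$ by $O_0 = O - \tr(O)I/d$, which is legitimate because $\tr(\Psi)=\sum_j(d+\lambda_j)$ is deterministic given $\lambda$, so $\Var[\tr(O\Psi)]=\Var[\tr(O_0\Psi)]$, and $\tr(O_0^2),\snorm{\infty}{O_0^2},\snorm{\infty}{O_0}$ are controlled by the corresponding quantities for $O$. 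Without this reduction, the terms you label type (ii), of the form $\tr(O)\cdot\bra{x}O\ket{x}$, and the $\tr(O)^2$ part of the same-row both-identity-slot terms, are \emph{not} bounded by $\snorm{\infty}{O^2}$ or $\tr(O^2)$: with $\snorm{\infty}{O}=1$ one can have $\tr(O)$ of order $d$, which dwarfs $n$ in the relevant regime $n\ll d$, so your claimed $\bigO(n)\snorm{\infty}{O^2}$ and $k\tr(O^2)$ bounds on those term types are false for general $O$. The centering step is a genuinely needed ingredient, not a cosmetic one.

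Second, the accounting of where the leading $k\,\tr(O^2)$ term comes from, and of what cancels against $\E[\tr(O\Psi)]^2$, is misattributed. Terms in which both output registers end up holding genuine row qudits (your type (iii), including the ``collision'' terms) only produce quantities of the form $\bra{x}O^2\ket{x}$ or $\bra{x}O\ket{y}\bra{y}O\ket{x}$, bounded by $\snorm{\infty}{O^2}$ or $\snorm{\infty}{O}^2$ with multiplicities $\bigO(n)$ and $\bigO(n^2)$; they cannot yield $\tr(O^2)$, which may be as large as $B$. In the paper the $k\,\tr(O^2)$ term arises from the $j=j'$ case with \emph{both} output qudits in identity slots, where the two output registers are tied together by $\pisymt$ and the swap component gives $\tr\lp((O\otimes O)\,\swapc^{(2)}\rp)=\tr(O^2)$ per row (the $\tr(O)^2$ component vanishes only after centering). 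Correspondingly, subtracting $\E[\tr(O\Psi)]^2$ does not ``exactly remove the type-(i) contributions'': it is used, as an inequality, against the diagonal ($a=a'$) part of the $j\neq j'$ sum over \emph{all} slot combinations, which reconstructs $\E[\Psi]^{\otimes 2}$ minus a positive operator, while the same-row identity-slot terms survive and are precisely the main term. As sketched, your cancellation scheme would either drop the leading term or leave uncancelled $\tr(O)$-dependent pieces, so this bookkeeping needs to be redone along the paper's lines.
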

Before we dive into the proof, we briefly comment on the term $\bigO(n^2) \snorm{\infty}{O}^2$ in our upper bound.
In order to process $\Psi$ into an unbiased estimator of the unknown state $\chi$, we will scale it down by a factor of $n$. 
The ``scaled'' variance then becomes
$$
\frac{k}{n^2} \; \tr(O^2) + \bigO(1/n) \snorm{\infty}{O^2}
+ \bigO(1) \snorm{\infty}{O}^2.
$$
One can see that while the remaining terms approach $0$ as $n$ increases, the last term remains a constant for all $n$.
Thus, in order to make sure that our final estimator has its variance bounded from above by $\eps^2$, we need to repeat the entire process $1/\eps^2$ many times, and take the empirical average. 
This leads to the multiplicative factor $\eps^{-2}$ in our algorithm's final sample complexity.

One may wonder whether the term is simply an artifact of our analysis. To answer this, we show in \Cref{app:var-lb} that there is a natural barrier in improving the variance bound.
In particular, we demonstrate a concrete setup of quantum state $\ket{\tau} = 
\sum_{\pi} \alpha_{\pi} P_{\pi}
\ket{0}^{\otimes p} \ket{1}^{\otimes q}$, partition $\lambda = (n)$ and observable $O$, where we show that the scaled variance 
$\frac{1}{n^2} \Var\lp[ O \Psi \rp]$
is indeed lower bounded by some constant \emph{regardless of} the size of $n$ due to an unexpected ``cross-term''. 

Surprisingly, this indicates that the output of the measurement goes through a ``phase-change'' as the quantum state $\ket{\tau}$ evolves from a highly symmetric pure state  
$\ket{0}^{\otimes n}$ to another non-symmetric pure state $
\sum_{\pi} \alpha_{\pi} P_{\pi}
\ket{0}^{\otimes p} \otimes \ket{1}^{\otimes q}$ --- a state that naturally arises from rank-$2$ classical shadow.
This is because for $\lambda = (n)$,
the measurement $\mathcal M^{(\lambda)}$ is identical to 
the standard symmetric subspace joint measurement
from \cite{grier2022sample}. Thus, for the symmetric state  $\ket{\tau} = \ket{0}^{\otimes n}$, the distribution of the estimator is identical to that of the classical shadow estimator from \cite{grier2022sample} built for the pure state $U \ket{0} \bra{0} U^{\dagger}$, implying that its variance will go to $0$ as $n$ increases. But as we discussed above, the variance for the non-symmetric pure state $
\sum_{\pi} \alpha_{\pi} P_{\pi}
\ket{0}^{\otimes p} \otimes \ket{1}^{\otimes q}$ always stays lower bounded by a constant.

We now present the proof of \Cref{lem:second-moment}.
\begin{proof}
First, we argue that we can assume that $O$ is a traceless observable without loss of generality, i.e., $\tr(O) = 0$. 
In particular, consider
$O_0 = O - \tr(O) I /d$.
$O_0$ is a traceless observable.
Assume that we can show \Cref{lem:second-moment} for $O_0$, then we will have 
$$
\Var[ \tr(O \Psi) ]
= 
\Var[ \tr(O_0 \Psi) ]
\leq k \tr(O_0^2) + \bigO(n) \snorm{\infty}{O_0^2}
+ \bigO(n^2) \snorm{\infty}{O_0}^2.
$$
It then suffices to show that 
$\tr(O_0^2) \leq \tr(O^2)$,  
$\snorm{\infty}{O_0^2} \leq \bigO(1)
\snorm{\infty}{O^2}
$, and
$\snorm{\infty}{O_0} \leq \bigO(1)
\snorm{\infty}{O}
$, which has been shown in Corollary 15 of \cite{grier2022sample}.

In this proof, we will be working with an $(n+2)$-qudit quantum system. 
We will refer to the first and the second qudits as the first and the second output qudits respectively, and again associate a Young diagram with partition $\lambda \partition_d n$ to the remaining qudits.

Suppose the post-processed state is $\Un{n} \ket{\tau}$. Then, the second moment of the estimator $\Psi$ is given by
\begin{align}
&\E \lp[ \Psi \otimes \Psi \rp]     \nonumber \\
&= \hspace{-0.6em} \sum_{j, j' \in [k]} \lp( \lambda_j + d \rp)
\lp( \lambda_{j'} + d \rp)
 \tr_{ -1, -2 }
\lp( \E_{\mu_H^k}\left[\psi_j \otimes \psi_{j'} \otimes
\bigotimes_{i=1}^k \kappa_{\lambda_i} \psi_i^{\otimes \lambda_i}\right] \cdot \bigg(I \otimes I \otimes (\Un{n}\ket{\tau} \bra{\tau} \Und{n})\bigg)
\rp).
\label{eq:var-initial-expansion}
\end{align}
Again, the key step is in simplifying the quantity $\E_{\mu_H^k}\left[\psi_j \otimes \psi_{j'} \otimes
\bigotimes_{i=1}^k \kappa_{\lambda_i} \psi_i^{\otimes \lambda_i}\right]$.
Similar to the observation made in the computation of $\E[ \Psi ]$ in \Cref{lem:expectation}, the quantity can still be simplified into the tensor product of $k$ ``local'' random permutations. 
When $j \neq j'$, the $j$-th and $j'$-th 
random permutations will be over the qudits in the $j$-th row plus the first output qudit, and over the qudits in the $j'$-th row plus the second output qudit respectively.
In this case, we can decompose $\E_{\mu_H^k}\left[\psi_j \otimes \psi_{j'} \otimes
\bigotimes_{i=1}^k \kappa_{\lambda_i} \psi_i^{\otimes \lambda_i}\right]$ into $ \frac{1}{ |A_{\lambda}| } \sum_{ \pi \in A_{\lambda} } P_{\pi}$ and two extra random swap operations involving the output qudits.
The swap operations are formalized below.
\begin{definition}[$2$-Qudit Output Swap Permutation, $\swapt$] \label{def:swap2}
We use $\swapt((i_1, j_1), (i_2, j_2))$ to denote the permutation over $n+2$ qudits which swaps the first output qudit with $(i_1,j_1)$ qudit, and the second output qudit with $(i_2,j_2)$ qudit. We will overload notation to define when either $j_1 = \lambda_{i_1}+1$ or $j_2=\lambda_{i_2}+1$, that the corresponding qudit permutation is the identity.
\end{definition}
When $j = j'$, the $j$-th random permutation will be over the $j$-th row plus both of the output qudits. 
We remark that this random permutation can be decomposed into a random permutation over the qudits in the $j$-th row, a random swap over the two output qudits, and an operation that randomly selects two qudits out of the $\lambda_j + 2$ qudits and swaps them with the two output qudits in order.
The last operation is formalized below.
\begin{definition}[$2$-Qudit Output Ordered Swap, $\swapts$] \label{def:swap3}
We use $\swapts\lp( j, p, p' \rp)$ to denote the permutation over $n+2$ qudits, where $p < p' \in [\lambda_j+2]$, that \textbf{first} swaps the second output qudit with the $(j,p')$-th qudit, and \textbf{then} swaps the first output qudit with the $(j,p)$-th qudit. The main difference from the $\swapt$ permutation from \Cref{def:swap2}, is that now it is possible for the two output qudits to swap with each other. Hence, the order of the two swaps matter.
\end{definition}
The above observation is summarized in the following equality:
\begin{align}
    &\E_{\mu_H^k}\left[\psi_j \otimes \psi_{j'} \otimes
\bigotimes_{i=1}^k \kappa_{\lambda_i} \psi_i^{\otimes \lambda_i}\right] \nonumber \\
&= 
\begin{cases}
&\frac{\kappa_{\lambda_j}}{\kappa_{\lambda_j+1}} \frac{\kappa_{\lambda_{j'}}}{\kappa_{\lambda_{j'}+1}} 
    \frac{1}{ \lambda_j + 1 } \frac{1}{ \lambda_{j'} + 1 }
    \left(\sum_{p=1}^{\lambda_j+1} \sum_{p'=1}^{\lambda_{j'}+1} \swapt((j, p), (j', p'))\right)  \; \lp( I^{\otimes 2} \otimes \frac{1}{|A_{\lambda}|} \sum_{ \pi \in A_{\lambda}} P_{\pi} \rp)  
    \text{ for } j \neq j'  \\
&\frac{ 2 }{ \lambda_j (\lambda_j-1) }
\frac{\kappa_{\lambda_j}}{\kappa_{\lambda_{j}+2}}
\sum_{p' = 1 }^{\lambda_j + 2} \sum_{p = 1}^{p'-1}
\swapts(j, p, p')
\lp( 
\pisymt
\otimes \frac{1}{|A_{\lambda}|} \sum_{\pi \in A_{\lambda}}
P_{\pi} \rp) 
\text{ for } j = j'.
\end{cases}
\label{eq:var-integral-simplify}
\end{align}

We can therefore break the sum 
from \Cref{eq:var-initial-expansion}
into two partial sums $\so$ and $\st$ based on whether $j = j'$, and tackle them separately.
\begin{align*}
\so &\defeq
\sum_{j \neq j'} \lp( \lambda_j + d \rp)
\lp( \lambda_{j'} + d \rp)
 &\tr_{ -1, -2 }
\lp( \E_{\mu_H^k}\left[\psi_j \otimes \psi_{j'} \otimes
\bigotimes_{i=1}^k \kappa_{\lambda_i} \psi_i^{\otimes \lambda_i}\right] \cdot \bigg(I \otimes I \otimes (\Un{n}\ket{\tau} \bra{\tau} \Und{n})\bigg)
\rp) \, , \\
\st &\defeq 
\sum_{j \in [k]} \lp( \lambda_j + d \rp)^2
 &\tr_{ -1, -2 }
\lp( \E_{\mu_H^k}\left[\psi_j \otimes \psi_{j} \otimes
\bigotimes_{i=1}^k \kappa_{\lambda_i} \psi_i^{\otimes \lambda_i}\right] \cdot \bigg(I \otimes I \otimes (\Un{n}\ket{\tau} \bra{\tau} \Und{n})\bigg)
\rp)\, .
\end{align*}
\paragraph{Computing $\so$}
Recall the definition of $2$-qudit swap permutation for our $n+2$ qudit system, $\swapt$ (see \Cref{def:swap2}). 
For this sum, 
{it is convenient to think of the first output qudit as the $(j,\lambda_j + 1)$
qudit, and the second output qudit as the 
$(j',\lambda_{j'} + 1)$ qudit.}
Using \Cref{eq:var-integral-simplify}, we can simplify $S_1$ as: 
\begin{align}
\so&=
\sum_{j \neq j'} \sum_{p =1}^{\lambda_j + 1} \sum_{p' =1}^{\lambda_{j'} + 1}
 \tr_{-1, -2}  
\lp( \swapt((j, p), (j', p'))   \left( I \otimes I \otimes \frac{1}{|A_{\lambda}|} \sum_{\pi \in A_{\lambda}} P_{\pi}\right)  \bigg( I \otimes I \otimes (\Un{n}\ket{\tau}\bra{\tau}\Und{n})\bigg) \rp)  \nonumber \\
&= \sum_{j \neq j'} \sum_{p =1}^{\lambda_j + 1} \sum_{p' =1}^{\lambda_{j'} + 1}
\tr_{-1, -2} 
\lp(  \swapt((j, p), (j', p'))\cdot\bigg( I \otimes I \otimes (\Un{n}\ket{\tau}\bra{\tau}\Und{n})\bigg) \rp) \, ,
\label{eq:fix-j}
\end{align}
where in the first line, we use $$ \frac{ \lambda_j + d }{ \lambda_j + 1}
\frac{ \lambda_{j'} + d }{ \lambda_{j'} + 1}
\frac{\kappa_{\lambda_j}}{\kappa_{\lambda_j+1}}
\frac{\kappa_{\lambda_{j'}}}{\kappa_{\lambda_{j'}+1}} = 1\, ,$$ and in the second line we use \Cref{lem:live-in-row-sym}.


By \Cref{lem:preprocess-weight-space},
we know $\ket{\tau}$ can be written as the following linear combination:
\begin{align*}
\ket{\tau} = \sum_{ \ket{a} \in E_n(w)}
c_{a} \ket{a} \, ,    
\end{align*}
where we denote by $E_n(w)$ the set of standard basis vectors on $n$ qudits that 
lie in the weight subspace of $V(w)$. 
In particular, this gives that 
\[
\ket{\tau} \bra{\tau}
=
\sum_{a, a' \in E_n(w)} c_a c_{a'}^{\dagger} \ket{a} \bra{a'}.
\]
Substituting this back in \Cref{eq:fix-j} leads to many ``cross-terms'' $\ket{a} \bra{a'}$, where $\ket{a} \neq \ket{a'}$. We will see that most of these cross-terms will vanish under the partial-trace. 
We claim the following:
\begin{claim}
\label{clm:partial-trace-vanish}
Let $\ket{a}, \ket{a'} \in E_n(w)$.
For $j \neq j' \in [k]$,
$p \in [\lambda_j+1]$, 
$p' \in [\lambda_{j'}+1]$,  
consider the partial trace
$$
T \defeq
\tr_{-1, -2}  
\lp( \bigg( \swapt((j, p), (j', p'))\bigg) \cdot \bigg( I \otimes I \otimes (\Un{n}\ket{a}\bra{a'}\Und{n})\bigg) \rp).
$$
Then the following holds
\begin{itemize}
\item  If $p = \lambda_j+1$ or $p' = \lambda_{j'} + 1$, $T = 0$ unless $a = a'$.
\item If $p \neq \lambda_j + 1$ and $p' \neq \lambda_{j'} + 1$, $T = 0$ unless $a = a'$ or $\ket{a'} =  \swapc( (j, p), (j', p') ) \ket{a}$.
\end{itemize}
\end{claim}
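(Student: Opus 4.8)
The plan is to analyze the partial trace $T$ by unfolding the action of the output-swap permutation $\swapt((j,p),(j',p'))$ on the basis vectors $\ket{a},\ket{a'}$, and to track exactly which qudits get traced out. Write $\swapt((j,p),(j',p'))$ as the product of two transpositions: one exchanging the first output qudit with the $(j,p)$-th qudit of the main register (or the identity if $p=\lambda_j+1$), and one exchanging the second output qudit with the $(j',p')$-th qudit (or the identity if $p'=\lambda_{j'}+1$); since $j\neq j'$ these two transpositions act on disjoint qudits and commute. The operator inside the trace is $\swapt((j,p),(j',p'))\cdot\bigl(I\otimes I\otimes\Un{n}\ketbra{a}{a'}\Und{n}\bigr)$. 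First I would commute the $\Un{n}$ factors out of the partial trace over the output qudits: since $\swapt$ only permutes tensor factors and $U^{\otimes(n+2)}$ commutes with any permutation operator, and since the two output slots carry $I$ (which is $U U^\dagger$), the whole expression equals $U(\cdot)U^\dagger$ applied to the analogous expression with $\ket{a},\ket{a'}$ in place of $U^{\otimes n}\ket a$, $\bra{a'}(U^\dagger)^{\otimes n}$; in particular $T=U\,T_0\,U^\dagger$ where $T_0$ has the same vanishing behaviour, so it suffices to treat the case $U=I$.

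Next I would compute $T_0=\tr_{-1,-2}\bigl(\swapt((j,p),(j',p'))\cdot(I\otimes I\otimes\ketbra{a}{a'})\bigr)$ directly on standard basis vectors. Tracing out the two output qudits forces them, after the swap, to be equal on the bra and ket sides; this in turn imposes equality constraints on certain coordinates of $a$ and $a'$ and ``reads out'' the remaining ones onto the output register $\C^d$ (the final matrix lives on the single output qudit $\C^d$ in the $\Psi$-variance computation — here two output qudits — by the structure of the preceding lemmas). Concretely: in the case $p=\lambda_j+1$ (so the first transposition is the identity), tracing out the first output qudit directly contracts $\ket{a}$ against $\bra{a'}$ on a traced factor with an identity, which forces $a=a'$ on that coordinate and, combined with the remaining traced-out coordinate from the second output, forces $a=a'$ everywhere that is traced; but because $a,a'$ have the same weight vector $w$, agreeing on all traced qudits plus the finitely many untraced ones that survive forces full agreement unless a cross-term configuration is possible — and when one of $p,p'$ is the ``identity'' slot, no nontrivial relabeling of $a$ into $a'$ survives, giving $T_0=0$ unless $a=a'$. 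In the generic case $p\neq\lambda_j+1$ and $p'\neq\lambda_{j'}+1$, both transpositions are genuine swaps, so after tracing out the two output qudits the surviving constraint is that $a'$ equals $a$ on all qudits \emph{except possibly} the two positions $(j,p)$ and $(j',p')$, whose contents must be swapped; i.e.\ either $a=a'$ or $\ket{a'}=\swapc((j,p),(j',p'))\ket a$, exactly as claimed, and in all other cases a traced-out qudit has mismatched labels and the partial trace kills the term.

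The main obstacle I anticipate is bookkeeping: carefully identifying, for each of the two sub-cases, precisely which of the $n$ main-register qudits are traced out (namely $(j,p)$ and $(j',p')$ when those are genuine positions, or none from that side when it is the identity slot) versus which single qudit survives to carry the output, and then translating ``a traced-out qudit has different labels in $a$ and $a'$'' into the vanishing of $T_0$. This is where I would be most careful to handle the overloaded conventions $\swap((i,\lambda_i+1))=I$ and the analogous convention for $\swapt$ cleanly, since the ``identity slot'' case behaves qualitatively differently (it forces $a=a'$) from the generic case (which allows the single nontrivial cross-term). I would also note that once $a\neq a'$ and neither the swapped-coordinate relation holds, the mismatch between $a$ and $a'$ occurs in at least one \emph{traced} coordinate — this uses that $a,a'$ share the weight vector $w$, so any disagreement spans at least two coordinates — and hence $T_0=0$; reassembling with the $U(\cdot)U^\dagger$ conjugation from the first step finishes the proof.
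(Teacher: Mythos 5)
Your plan is essentially the paper's own argument: after commuting the unitaries out, every main-register position that is not swapped into an output slot is contracted directly between $\ket{a}$ and $\bra{a'}$, and since two standard basis vectors with the same weight vector must disagree in at least two positions, any $a\neq a'$ forces a mismatch on a contracted position unless (in the generic case) the disagreement sits exactly at $(j,p)$ and $(j',p')$, which together with equal weights forces $\ket{a'}=\swapc((j,p),(j',p'))\ket{a}$ --- this is precisely the counting the paper uses. One slip to correct in a full write-up: $\tr_{-1,-2}$ traces out the $n$ main-register qudits and \emph{keeps} the two output qudits (a couple of your sentences describe the opposite), but this does not affect the decisive weight-vector counting you state at the end.
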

\begin{proof}
If $p = \lambda_j+1$ or $p' = \lambda_{j'} + 1$, we are swapping at most one qudit from $\ket{a}$ with the output qudits.
If $a \neq a'$, they must disagree with each other in at least two qudits. 
Thus, at least one misaligned qudit will be traced out, leading to a vanishing partial trace.

If $p \neq \lambda_j+1$ and $p' \neq \lambda_{j'} + 1$, we are swapping two qudtis from $\ket{a}$ with the output qudits. Hence, in order for the partial trace to be non-vanishing, we must have $\ket{a} = \ket{a'}$ or that $a, a'$ are mis-aligned precisely at the $(j,p)$-th qudit and the $(j', p')$-th qudit. 
This concludes the proof of \Cref{clm:partial-trace-vanish2}.
\end{proof}

Formally, denote by $S((j, p), (j', p')) \subset E_n(w) \times E_n(w)$, the set of $n$-qudit pairs $(\ket{a}, \ket{a'})$ such that
$\ket{a} \neq \ket{a'}$ and $\ket{a'} = \swapc( (j, p), (j', p') ) \ket{a}$, where $\swapc((j, p), (j', p'))$ denotes the permutation over $n$-qudits that swaps the $(j,p)$-th qudit with $(j',p')$-th qudit \footnote{This should not to be confused with $\text{Swap}_o$ --- the permutation over $(n+2)$-qudits that swaps the output qudits with some other qudits.}.
Using \Cref{clm:partial-trace-vanish}, we see that only cross term pairs in $S((j, p), (j', p'))$ survive the partial trace. Therefore, we can then simplify
\Cref{eq:fix-j} into two partial sums
\begin{align}
\so &= 
\soo
+ \sot \, ,
\end{align}
where we define
\begin{align*}
\soo &\defeq 
\sum_{j \neq j'} \sum_{p =1}^{\lambda_j + 1} \sum_{p' =1}^{\lambda_{j'} + 1}
\quad\sum_{\ket{a} \in E_n(w)} &|c_a|^2
\tr_{-1, -2} 
\lp( \swapt((j, p), (j', p')) \cdot  \bigg( I \otimes I \otimes (\Un{n}\ket{a}\bra{a}\Und{n})\bigg) \rp) \, ,\\
\sot &\defeq 
\sum_{j \neq j'}
\sum_{p =1}^{\lambda_j} \sum_{p' =1}^{\lambda_{j'}} \sum_{\substack{ \left(\ket{a},\ket{a'}\right)\\ \in S((j,p),(j',p')) }} &c_a c_{a'}^{\dagger}
\tr_{-1, -2} 
\lp( \swapt((j, p), (j', p')) \cdot \bigg( I \otimes I \otimes (\Un{n}\ket{a}\bra{a'}\Und{n})\bigg) \rp)\, .
\end{align*}
\paragraph{Computing $\soo$}
We first analyze  $\soo$. Denote the $(j,p)$th qudit of $\ket{a}$ by $\ket{a_{j,p}}$.
We claim the following partial trace for $j\neq j'$ 
\[
\sum_{p =1}^{\lambda_j + 1} \sum_{p' =1}^{\lambda_{j'} + 1} \tr_{-1, -2} 
\lp( 
\swapt((j, p), (j', p'))\cdot \bigg( I \otimes I \otimes (\Un{n}\ket{a}\bra{a}\Und{n})\bigg)
\rp)  
\]
can be simplified into
\begin{align*}
\lp( I +  
\sum_{p=1}^{\lambda_j}
U \ket{a_{j,p}} \bra{a_{j,p}} U^\dagger\rp)
\otimes 
\lp( I +  \sum_{p'=1}^{\lambda_{j'}}
U \ket{a_{j',p'}} \bra{a_{j',p'}} U^\dagger \rp)\, .
\end{align*}
We defer the formal reasoning to \Cref{prop:tensor2}. 
Summing over all terms with $j \neq j'$ then gives us
\begin{align*}
\soo &=
\sum_{\ket{a} \in E_n(w)} |c_{a}|^2
\lp( k I + \sum_{j=1}^k 
\sum_{p=1}^{\lambda_j}
U \ket{a_{j,p}} \bra{a_{j,p}} U^\dagger \rp)^{\otimes 2} -
\sum_{\ket{a} \in E_n(w)} |c_{a}|^2
\sum_{j=1}^k
\lp( I +  
\sum_{p=1}^{\lambda_j}
U \ket{a_{j,p}} \bra{a_{j,p}} U^\dagger \rp)^{\otimes 2}
\, ,\\
&= 
\E[\Psi]^{\otimes 2}
- 
\sum_{\ket{a} \in E_n(w)} |c_{a}|^2
\sum_{j=1}^k
\lp(  I +  
\sum_{p=1}^{\lambda_j}
U \ket{a_{j,p}} \bra{a_{j,p}} U^\dagger \rp)^{\otimes 2}\, .
\end{align*}
where in the last step we use our computation for $\E[\Psi]$ (\Cref{lem:expectation}). 
Since the second term's contribution to $\tr(O^{\otimes 2}  \; \soo)$ is non-negative, we get
\begin{align*}
\tr(O^{\otimes 2}  \; \soo)
\leq \tr(O \E[\Psi])^2.
\end{align*}

\paragraph{Computing $\sot$}
We now look at the sum $\sot$.
For any $j \neq j' \in [k]$, $p \in [\lambda_j]$, $p' \in [\lambda_{j'}]$ and $\ket{a}, \ket{a'} \in S( (j, p), (j', p') )$,
we claim that the partial trace
$$
\tr_{-1, -2}  
\lp( \swapt((j, p), (j', p')) \cdot \bigg(
I \otimes I \otimes (\Un{n}\ket{a}\bra{a'} \Und{n}) \bigg) \rp)
$$
can always be simplified into
$$   
\lp( U \ket{ a_{j, p} }
\bra{ a_{j,p} } U^{\dagger}
\otimes
U \ket{ a_{j', p'} }
\bra{ a_{j',p'} } U^{\dagger} \rp)
\text{Swap}^{(2)} \, ,
$$ 
where $\textrm{Swap}^{(2)}$ is a permutation over $2$-qudits which swaps the first qudit with the second qudit.
We defer the formal reasoning to \Cref{prop:tensor3}.
Thus, we always have the bound
$$
\abs{\tr\lp( (O \otimes O) \; \tr_{-1, -2}  
\lp( \swapt((j, p), (j', p')) \cdot \bigg(
I \otimes I \otimes (\Un{n}\ket{a}\bra{a'} \Und{n}) \bigg) \rp) \rp)}
\leq \snorm{\infty}{O}^2.
$$
Besides, by Cauchy's inequality, 
we have that
$$
\sum_{(\ket{a},\ket{a'}) \in S((j, p), (j, p'))}
\abs{ c_a c_{a'}^{\dagger} }
\leq \sum_{\ket{a} \in E_n(w)} |c_a|^2 = 1.
$$
Combining the above then gives
$$
\tr \lp( (O \otimes O) \; \sot \rp)
\leq \bigO(n^2) \snorm{\infty}{O}^2.
$$
Since $\so = \soo + \sot$, we conclude that
$$
\tr \lp((O \otimes O) \; \so \rp)
\leq \tr( O \E[\Psi] )^2 + \bigO(n^2) \snorm{\infty}{O}^2.
$$

\paragraph{Computing $\st$}
We now look at the partial sum $S_2$
 which encapsulates the terms where $j = j'$.
For computing $\st$, 
{it is convenient to think of the first output qudit as the $(j,\lambda_j + 1)$-th qudit, and the second output qudit as the $(j,\lambda_{j} + 2)$-th qudit.}
Using \Cref{eq:var-integral-simplify}, $\st$ can be simplified as
\begin{align*}
&\sum_{j=1}^k 2C_{\lambda_j, d} \sum_{p' = 2}^{\lambda_j + 2} \sum_{p = 1}^{p'-1}
\tr_{-1, -2} 
\lp(  
\swapts(j, p, p') 
\lp( 
\pisymt
\otimes \frac{1}{|A_{\lambda}|} \sum_{\pi \in A_{\lambda}}
P_{\pi} \rp)  \bigg(I^{\otimes 2} \otimes (\Un{n}\ket{\tau} \bra{\tau}\Und{n})\bigg)\rp)  \\
= &\sum_{j=1}^k 2C_{\lambda_j, d} \sum_{p' = 2 }^{\lambda_j + 2} \sum_{p = 1}^{p'-1}
\tr_{-1, -2} 
\lp(  
\swapts(j, p, p') \cdot
\lp( 
\pisymt
\otimes I^{\otimes n} \rp) \cdot \bigg(I^{\otimes 2} \otimes (\Un{n}\ket{\tau} \bra{\tau}\Und{n})\bigg)\rp)\\
= 
&\sum_{\ket{a}, \ket{a'}\in E_n(w)} c_a c_{a'}^\dagger \\
&\sum_{j=1}^k 2C_{\lambda_j, d} \sum_{p' = 2}^{\lambda_j + 2} \sum_{p = 1}^{p'-1}
\tr_{-1, -2} 
\lp(  
\swapts(j, p, p') \cdot
\lp( 
\pisymt
\otimes I^{\otimes n} \rp) \cdot \bigg(I^{\otimes 2} \otimes (\Un{n}\ket{a} \bra{a'}\Und{n})\bigg)\rp)\, ,
\end{align*}
where in the first line we use $$ C_{\lambda_j, d} \defeq
\frac{ (\lambda_j + d)^2 }{ (\lambda_j+2) (\lambda_j+1) }
\frac{\kappa_{\lambda_j}}{\kappa_{\lambda_{j}+2}} < 1\, ,$$ in the second line we use 
the fact that $\ket{\tau}$ lives in the row symmetric subspace (\Cref{lem:live-in-row-sym}), and in the third line we use the fact that
 $\ket{\tau}$ lives in the weight subspace $V(w)$ (\Cref{lem:preprocess-weight-space} ).

Similar to the computation of $S_1$, we start by characterizing the non-vanishing condition of the partial trace expressions involved.
\begin{claim}
\label{clm:partial-trace-vanish2}
Let $\ket{a}, \ket{a'} \in E_n(w)$.
For $j \in [k]$, 
$p < p' \in [\lambda_j+2]$, consider the partial trace
$$
T \defeq \tr_{-1, -2} 
\lp(  
\swapts(j, p, p') \cdot
\lp( 
\pisymt
\otimes I^{\otimes n} \rp) \cdot \bigg(I \otimes I \otimes (\Un{n}\ket{a} \bra{a'}\Und{n})\bigg)\rp)\, .
$$
Then the following holds
\begin{itemize}
\item  If $p' \geq \lambda_j+1$, $T = 0$ unless $a = a'$.
\item If $p, p' \in [\lambda_j]$, $T = 0$ unless $a = a'$ or $\ket{a'} =  \swapc( (j, p), (j', p') ) \ket{a}$.
\end{itemize}
\end{claim}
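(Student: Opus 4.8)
The plan is to reuse the mechanism behind \Cref{clm:partial-trace-vanish}: I would show that the partial trace over the $n$ non-output qudits forces $\ket a$ and $\ket{a'}$ to agree except possibly on a small controlled set of diagram cells. The key elementary observation is that, since $\ket a\bra{a'}=\bigotimes_q\ket{a_q}\bra{a'_q}$ as a tensor product over the $n$ non-output qudits, the operator $\Un n\ket a\bra{a'}\Und n=\bigotimes_q U\ket{a_q}\bra{a'_q}U^\dagger$ is a product operator whose $q$-th factor has trace $\tr\bigl(U\ket{a_q}\bra{a'_q}U^\dagger\bigr)=\braket{a'_q}{a_q}=\delta_{a_q,a'_q}$.

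First I would pin down which non-output qudits are actually moved by $\swapts(j,p,p')$. With the convention (used in the computation of $\st$) that the first and second output qudits play the roles of the cells $(j,\lambda_j+1)$ and $(j,\lambda_j+2)$, a short case analysis of \Cref{def:swap3} shows that $\swapts(j,p,p')$ acts as the identity on every cell outside $R\defeq\{(j,p),(j,p')\}\cap\{\text{cells of row }j\}$: if $p'=\lambda_j+2$ the first swap is trivial and $\swapts(j,p,p')$ merely swaps the first output qudit with cell $(j,p)$; if $p'=\lambda_j+1$ the composite is a $3$-cycle on the two output qudits together with cell $(j,p)$; and if $p,p'\le\lambda_j$ it is a product of two disjoint transpositions touching only cells $(j,p)$ and $(j,p')$. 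Since $\pisymt$ is supported entirely on the output qudits, the operator $\swapts(j,p,p')\cdot(\pisymt\otimes I^{\otimes n})$ factors as an operator on the output qudits and the cells of $R$ tensored with the identity on all remaining cells. Because the partial trace factorizes over tensor factors, the trace over the cells outside $R$ then contributes exactly the scalar $\prod_{q\notin R}\tr\bigl(U\ket{a_q}\bra{a'_q}U^\dagger\bigr)=\prod_{q\notin R}\delta_{a_q,a'_q}$; hence $T=0$ unless $\ket a$ and $\ket{a'}$ agree on every cell outside $R$.

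The conclusion then follows from the weight constraint. Since $\ket a,\ket{a'}\in E_n(w)$ have the same weight vector $w$, they cannot disagree in exactly one coordinate. If $p'\ge\lambda_j+1$ then $|R|\le1$, so $\ket a$ and $\ket{a'}$ agree outside at most one coordinate and hence $a=a'$, giving the first bullet. If $p,p'\in[\lambda_j]$ then $R=\{(j,p),(j,p')\}$, so either $a=a'$ or $\ket a$ and $\ket{a'}$ disagree at both $(j,p)$ and $(j,p')$; in the latter case equality of weight vectors forces the unordered pair $\{a_{(j,p)},a_{(j,p')}\}$ to equal $\{a'_{(j,p)},a'_{(j,p')}\}$, which together with $a_{(j,p)}\ne a'_{(j,p)}$ yields $\ket{a'}=\swapc((j,p),(j,p'))\ket a$, giving the second bullet. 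I expect the main obstacle to be the middle step, namely the bookkeeping that $\swapts(j,p,p')$ never drags in a cell outside $R$ across the sub-cases $p'=\lambda_j+2$, $p'=\lambda_j+1$, and $p,p'\le\lambda_j$ (and that $\pisymt$ together with the output-qudit roles of $(j,\lambda_j+1),(j,\lambda_j+2)$ keep the factorization clean); everything else is the same product-operator and partial-trace calculation, plus weight counting, used for \Cref{clm:partial-trace-vanish}.
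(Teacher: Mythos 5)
Your proposal is correct and follows essentially the same argument as the paper: you identify which cells $\swapts(j,p,p')$ can protect from the partial trace, observe that every other tensor factor contributes $\tr\bigl(U\ket{a_q}\bra{a'_q}U^\dagger\bigr)=\delta_{a_q,a'_q}$, and then use the equal-weight constraint to rule out a single disagreement and to upgrade ``disagree exactly at $(j,p),(j,p')$'' to the swap relation. The only difference is that you make the case bookkeeping and the tensor-factorization of the partial trace explicit, whereas the paper states these steps tersely.
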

\begin{proof}
If $p' \geq \lambda_j+1$, we are swapping at most one qudit from $\ket{a}$ with the output qudits.
If $a \neq a'$, they must disagree with each other in at least two qudits. 
Thus, at least one misaligned qudit will be traced out, leading to vanishing partial trace.

If $p,p' \in [\lambda_j]$, we are swapping two qudtis from $\ket{a}$ with the output qudits. Hence, we must have $\ket{a} = \ket{a'}$ or that $a, a'$ are mis-aligned precisely at the $(j,p)$-th qudit and the $(j, p')$-th qudit.
This concludes the proof of \Cref{clm:partial-trace-vanish2}.
\end{proof}
Due to \Cref{clm:partial-trace-vanish2},
we are going to look at the following cases separately depending on the choices of $p < p' \in [\lambda_j+2]$.
\begin{enumerate}
\item Case I: $p = \lambda_j+1, p' = \lambda_j+2$. We denote the sum of the terms in the case as $S_{21}$.
\item Case II: $p \in [\lambda_j], p' = \lambda_{j}+2$. We denote the sum of the terms in the case as $S_{22}$.
\item Case III: $p \in [\lambda_j], p' = \lambda_{j}+1$. We denote the sum of the terms in the case as $S_{23}$.
\item Case IV: $p < p' \in [\lambda_{j}]$. We denote the sum of the terms in the case as $S_{24}$.
\end{enumerate}
We will examine the variance from each case individually using \Cref{clm:partial-trace-vanish2} that characterizes the non-vanishing conditions of the partial trace involved.

\textbf{Case I}: $p = \lambda_j+1, p' = \lambda_j+2$.
In this case, the partial trace gets simplified significantly
as  $\swapts(j, \lambda_j+1, \lambda_j+2)$ is defined to be the identity operation.
We thus have
\begin{align*}
\tr_{-1, -2} \lp(\lp( \pisymt
\otimes I^{\otimes n} \rp) \cdot \bigg(I \otimes I \otimes (\Un{n}\ket{a} \bra{a'}\Und{n})\bigg)\rp)
=     
\begin{cases}
\pisymt \lp( I \otimes I \rp)  
&\text{ if } a = a' \, , \\
0 &\text{ otherwise}
\end{cases}
\end{align*}
Thus, the total contribution to the variance of the terms in this case will be 
\begin{align*}
\tr( (O \otimes O) S_{21} )
=
\sum_{\ket{a}\in E_n(w)} |c_a|^2 
\sum_{j=1}^k
2C_{\lambda_j, d}
\tr \lp( (O \otimes O) \; \pisymt \lp( I \otimes I \rp) \rp)  
< 2k \; \tr \lp( O^2\rp) \, ,
\end{align*}
when $O$ is a traceless observable.

\textbf{Case II}: $p \in [\lambda_j], p' = \lambda_{j}+2$.
In this case, \Cref{clm:partial-trace-vanish2} says that we must have $a = a'$ or the partial trace vanishes. Moreover, in the $a = a'$ case, the partial trace $$
\tr_{-1, -2} 
\lp(  
\swapts(j, p, p') \cdot
\lp( 
\pisymt
\otimes I^{\otimes n} \rp) \cdot \bigg(I \otimes I \otimes (\Un{n}\ket{a} \bra{a}\Und{n})\bigg)\rp)
$$ simplifies to $$
\lp( (U \ket{a_{j,p}}\bra{a_{j,p}} U^\dagger)  \otimes I\rp)
\pisymt.
$$ 
Again, we defer the formal argument to \Cref{clm:tensor-caseII}.

The total contribution to the variance of the terms in this case is
\begin{align*}
\tr((O \otimes O) S_{22})
= 
\sum_{j=1}^k \sum_{p=1}^{\lambda_j}
\sum_{\ket{a}\in E_n(w)}  2 C_{\lambda_j, d} |c_a|^2
\tr \lp( \lp(O \otimes O\rp)
\;
\lp( U\ket{a_{j,p}}\bra{a_{j,p}}U^\dagger  \otimes I\rp)
\pisymt
\rp)
\end{align*}
Since $U\ket{a_{j,p}}\bra{a_{j,p}}U^\dagger$ is just some pure state and $C_{\lambda_j, d} < 1$, we have that
\begin{align*}
C_{\lambda_j, d} \tr \lp( \lp(O \otimes O\rp)
\;
\lp(U\ket{a_{j',p'}}\bra{a_{j',p'}}U^\dagger  \otimes I\rp)
\pisymt
\rp)
\leq 
\snorm{\infty}{O^2} \, ,
\end{align*}
where in the second inequality we use \Cref{fact:eigen-bound}.
Therefore, the total contribution to the variance of the terms in this case can be bounded from above by
$$  
\tr((O \otimes O) S_{22})
\leq \sum_{ \ket{a} \in E_n(w) } 2|c_a|^2 \sum_{j=1}^k 
\sum_{p=1}^{\lambda_j}
\snorm{\infty}{O^2}
=  2n \; \snorm{\infty}{O^2}.
$$

\textbf{Case III}: $p \in [\lambda_j], p' = [\lambda_j+1]$.
Note that in this case we have
$$
\swapts(j, p, \lambda_j+1) \cdot
\lp( 
\pisymt
\otimes I^{\otimes n} \rp)
= 
\swapts(j, p, \lambda_j+2) \cdot
\lp( 
\pisymt
\otimes I^{\otimes n} \rp)
$$
since the first step of $\swapts(j, p, \lambda_j+1)$ which
 swaps the second output qudit with the $\lambda_j+1$ qudit (which represents the first output qudit) gets absorbed by $\pisymt$.
Thus, the sum in this case is identical to that of Case II and we arrive at the same bound
$$  
\tr((O \otimes O) S_{23})
\leq 2 n \; \snorm{\infty}{O^2}.
$$

\textbf{Case IV}: $p, p' \in [\lambda_j]$. 
Fix some $\ket{a} \in E_n(w)$, 
$\ket{a'} = \ket{a}$ or
$\ket{a'} = \swapc( (j, p), (j, p') )
\ket{a}$.
In this case, we claim that the partial trace
$$
\tr_{-1, -2} 
\lp(  
\swapts(j, p, p') \cdot
\lp( 
\pisymt
\otimes I^{\otimes n} \rp) \cdot \bigg(I \otimes I \otimes (\Un{n}\ket{a} \bra{a'}\Und{n})\bigg)\rp)
$$
can always be simplified into 
$$
U^{\otimes 2} 
\ket{a_{j,p} a_{j', p'} }
\bra{a'_{j,p} a'_{j', p'} }
\Und{2}
\pisymt.
$$
The formal argument is deferred to \Cref{clm:tensor-caseIV}.
Then, we have 
\begin{align*}
&\tr \lp( \lp(O \otimes O \rp) 
\tr_{-1, -2} 
\lp(  
\swapts(j, p, p') \cdot
\lp( 
\pisymt
\otimes I^{\otimes n} \rp) \cdot \bigg(I \otimes I \otimes (\Un{n}\ket{a} \bra{a'}\Und{n})\bigg)\rp)
\rp) \\
&=
\tr \lp( \lp(O \otimes O \rp) 
U^{\otimes 2} 
\ket{a_{j,p} a_{j', p'} }
\bra{a'_{j,p} a'_{j', p'} }
\Und{2}
\pisymt
\rp)
\leq \snorm{\infty}{O}^2  \, ,
\end{align*}
where in the last inequality we again use \Cref{fact:eigen-bound}.


Let $S(j, p, p')$ be the set of $n$-qudit pairs $(\ket{a}, \ket{a'})$ such that
$\ket{a'} = \swapc((j, p), (j, p')) \ket{a}$ and $a \neq a'$.
We can therefore bound from above these terms' contributions to the variance by
\begin{align}
\tr((O \otimes O) S_{24})
&\leq
\bigO(1)
\sum_{j=1}^k
\sum_{p' = 2}^{\lambda_j}\sum_{p =1}^{p'-1}
\sum_{ (\ket{a}, \ket{a'}) \in S(j, p, p') }
\lp( |c_a| + |c_{a'}| \rp)^2
\snorm{\infty}{O}^2 \nonumber \\
&\leq 
\bigO(1) \;
\sum_{j=1}^k \lambda_j^2
\snorm{\infty}{O}^2 
\leq \bigO(n^2) \; \snorm{\infty}{O}^2 \, ,
\label{eq:a^2-term}
\end{align}
where in the first inequality we bound the non-vanishing partial-trace uniformly by $2 |c_a| |c_a'| \snorm{\infty}{O}^2$, 
in the second inequality we bound $\sum_{ a, a' \in S(j, p, p') }
\lp( |c_a| + |c_{a'}| \rp)^2$ 
from above
by $\bigO(1) \; \sum_{a \in E(w)} |c_a|^2$ via the triangle inequality, and in the third inequality we bound $\sum_{j=1}^k \lambda_j^2$ from above by $n^2$ using Cauchy's inequality.

Combining the above case analysis with the analysis of the term $S_1$ gives
\begin{align*}
\E[ \tr\lp( O  \Psi \rp)^2 ]
- \E[ \tr\lp( O  \Psi \rp) ]^2
&= 
\tr\lp( (O\otimes O) \; S_1 \rp)
+ 
\sum_{i=1}^4 \tr\lp( (O\otimes O) \; S_{2i} \rp)
- \E[ \tr\lp( O  \Psi \rp) ]^2 \\
&\leq  2k \; \tr(O^2) + 
O(n) \snorm{\infty}{O^2}
+ O(n^2) \snorm{\infty}{O}^2.    
\end{align*}
This concludes the proof of \Cref{lem:second-moment}.
\end{proof}

\subsection{Proof of \texorpdfstring{\Cref{thm:main_informal}}{main theorem}}
\label{sec:combine}
We first conclude the proof of \Cref{prop:population-shadow}.
\begin{proof}[Proof of \Cref{prop:population-shadow}]
Recall that we divide the joint state into $T = 10/\eps^2$ segments with equal size, where each segment is of the form $U^{\otimes (n/T)} \ket{e^{(t)}}$.

Denote $n' = n/T$.
If we perform the generic pre-processing step on $U^{\otimes n'} \ket{e^{(t)}}$, we obtain a Young tableau partition $\lambda^{(t)} \partition_d n'$, and end up with some state $ \Un{n'} \ket{\tau^{(t)}}$.
By \Cref{lem:preprocess-weight-space}, we have $k: = h(\lambda^{(t)}) \leq r$, where $r$ is the number of different symbols in $\ket{e}$, and $\ket{\tau^{(t)}}$ lives in the weight subspace $V(w^{(t)})$, where $w^{(t)}$ is the weight vector of $\ket{e^{(t)}}$, with probability $1$.
Conditioned on that, 
by \Cref{lem:expectation},
we have that
$$
\E\lp[ 
\lp( \Psi^{(t)} - h(\lambda^{(t)})  I \rp) \mid \ket{\tau^{(t)}} \rp]
=  
\sum_{i=1}^d 
w_i^{(t)}
U \ket{i}\bra{i}U^{\dagger}.
$$
Hence, if we take the sum over $t$, we have that
$$
\sum_{t=1}^T
\E\lp[ 
\lp( \Psi^{(t)} - h(\lambda^{(t)})  I \rp) \mid \ket{\tau^{(t)}} \rp]
= 
\sum_{t=1}^T
\sum_{i=1}^d 
w_i^{(t)}
U \ket{i}\bra{i}U^{\dagger}
= 
\sum_{i=1}^d 
w_i
U \ket{i}\bra{i}U^{\dagger} \, ,
$$
where $w$ is the weight vector of $\ket{e}$.
This shows that 
$
\tr \lp( O \; \frac{1}{n} \lp( \Psi^{(t)} - h(\lambda^{(t)})  I \rp) \rp) $ is indeed an unbiased estimator of the target quantity.

To analyze the variance of the final prediction, by \Cref{lem:second-moment}, 
we note that for all $\lambda^{(t)}$, it holds
$$
\Var\lp[  
\Psi^{(t)}
\mid \lambda^{(t)} \rp]
\leq 
2 \frac{h( \lambda^{(t)} )}{n'^2} \tr(O^2)
+ \bigO\lp(\frac{1}{n'}\rp) \snorm{\infty}{O^2}
+ \bigO \lp( \snorm{\infty}{O}^2 \rp).
$$
Since the expected value of $\Psi^{(t)}$ 
conditioned on $\lambda^{(t)}$ is the same for all possible values of $\lambda^{(t)}$, by the law of total variance, we have that the overall variance of $\Psi^{(t)}$ is bounded from above by the same quantity:
$$
\Var\lp[  
\Psi^{(t)}  \rp]
\leq 
2 \frac{h( \lambda^{(t)} )}{n'^2} \tr(O^2)
+ \bigO\lp(\frac{1}{n'}\rp) \snorm{\infty}{O^2}
+ \bigO \lp( \snorm{\infty}{O}^2 \rp).
$$
Recall that we always have 
$h( \lambda^{(t)} ) \leq r$.
This implies that whenever $n' \gg \sqrt{r \tr(O^2)}$ (which holds as long as $n \gg \sqrt{r \tr(O^2)}/\eps^2$), we will have 
$\Var\lp[  
\Psi^{(t)}  \rp] \leq \bigO(1)$.

Note that $\{ \Psi^{(t)} \}_{t=1}^T$ are all independent.
We thus have
$$
\Var\lp[  
\frac{1}{T}
\sum_{t=1}^T\Psi^{(t)}  \rp]
\leq \bigO(1/T) \leq \eps^2.
$$
The proposition then follows from Chebyshev's inequality.

\end{proof}

Finally, the proof of \Cref{thm:main_informal} follows from \Cref{prop:population-shadow} and \Cref{lem:population-reduction}.

\bibliographystyle{alpha}
\bibliography{bibliography}

\newcommand{\etalchar}[1]{$^{#1}$}
\begin{thebibliography}{BLM{\etalchar{+}}22}

\bibitem[Aar18]{aaronson2018shadow}
Scott Aaronson.
\newblock Shadow tomography of quantum states.
\newblock In {\em Proceedings of the 50th annual ACM SIGACT symposium on theory
  of computing}, pages 325--338, 2018.

\bibitem[BLM{\etalchar{+}}22]{buhrman2022quantum}
Harry Buhrman, Noah Linden, Laura Man{\v{c}}inska, Ashley Montanaro, and Maris
  Ozols.
\newblock Quantum majority vote.
\newblock {\em arXiv preprint arXiv:2211.11729}, 2022.

\bibitem[BO21]{buadescu2021improved}
Costin B{\u{a}}descu and Ryan O'Donnell.
\newblock Improved quantum data analysis.
\newblock In {\em Proceedings of the 53rd Annual ACM SIGACT Symposium on Theory
  of Computing}, pages 1398--1411, 2021.

\bibitem[CCHL22]{chen2022exponential}
Sitan Chen, Jordan Cotler, Hsin-Yuan Huang, and Jerry Li.
\newblock Exponential separations between learning with and without quantum
  memory.
\newblock In {\em 2021 IEEE 62nd Annual Symposium on Foundations of Computer
  Science (FOCS)}, pages 574--585. IEEE, 2022.

\bibitem[CEM99]{cirac1999optimal}
J~Ignacio Cirac, AK~Ekert, and Chiara Macchiavello.
\newblock Optimal purification of single qubits.
\newblock {\em Physical review letters}, 82(21):4344, 1999.

\bibitem[CHW07]{weakschur2007}
Andrew~M. Childs, Aram~W. Harrow, and Pawe{\l} Wocjan.
\newblock Weak fourier-schur sampling, the hidden subgroup problem, and the
  quantum collision problem.
\newblock In Wolfgang Thomas and Pascal Weil, editors, {\em STACS 2007}, pages
  598--609, Berlin, Heidelberg, 2007. Springer Berlin Heidelberg.

\bibitem[EGH{\etalchar{+}}11]{etingof2011introduction}
Pavel~I Etingof, Oleg Golberg, Sebastian Hensel, Tiankai Liu, Alex Schwendner,
  Dmitry Vaintrob, and Elena Yudovina.
\newblock {\em Introduction to representation theory}, volume~59.
\newblock American Mathematical Soc., 2011.

\bibitem[FH13]{fulton2013representation}
William Fulton and Joe Harris.
\newblock {\em Representation theory: a first course}, volume 129.
\newblock Springer Science \& Business Media, 2013.

\bibitem[GPS22]{grier2022sample}
Daniel Grier, Hakop Pashayan, and Luke Schaeffer.
\newblock Sample-optimal classical shadows for pure states.
\newblock {\em arXiv preprint arXiv:2211.11810}, 2022.

\bibitem[Har05]{harrow2005}
Aram~W Harrow.
\newblock Applications of coherent classical communication and the schur
  transform to quantum information theory.
\newblock {\em PhD thesis, Massachusetts Institute of Technology}, 2005.

\bibitem[Har13]{harrow2013church}
Aram~W Harrow.
\newblock The church of the symmetric subspace.
\newblock {\em arXiv preprint arXiv:1308.6595}, 2013.

\bibitem[Hay98]{hayashi1998asymptotic}
Masahito Hayashi.
\newblock Asymptotic estimation theory for a finite-dimensional pure state
  model.
\newblock {\em Journal of Physics A: Mathematical and General}, 31(20):4633,
  1998.

\bibitem[HHJ{\etalchar{+}}16]{haah2016sample}
Jeongwan Haah, Aram~W Harrow, Zhengfeng Ji, Xiaodi Wu, and Nengkun Yu.
\newblock Sample-optimal tomography of quantum states.
\newblock In {\em Proceedings of the forty-eighth annual ACM symposium on
  Theory of Computing}, pages 913--925, 2016.

\bibitem[HKP20]{hkp_shadows}
Hsin-Yuan Huang, Richard Kueng, and John Preskill.
\newblock Predicting many properties of a quantum system from very few
  measurements.
\newblock {\em Nature Physics}, 16(10):1050--1057, 2020.

\bibitem[Hol11]{holevo2011probabilistic}
Alexander~S Holevo.
\newblock {\em Probabilistic and statistical aspects of quantum theory},
  volume~1.
\newblock Springer Science \& Business Media, 2011.

\bibitem[KW01]{keyl2001rate}
Michael Keyl and Reinhard~F Werner.
\newblock The rate of optimal purification procedures.
\newblock In {\em Annales Henri Poincare}, volume~2, pages 1--26. Springer,
  2001.

\bibitem[OW15]{testing2015}
Ryan O'Donnell and John Wright.
\newblock Quantum spectrum testing.
\newblock In {\em Proceedings of the Forty-Seventh Annual ACM Symposium on
  Theory of Computing}, STOC '15, page 529–538, New York, NY, USA, 2015.
  Association for Computing Machinery.

\bibitem[OW16]{odonnellwright2016}
Ryan O'Donnell and John Wright.
\newblock Efficient quantum tomography.
\newblock In {\em Proceedings of the forty-eighth annual ACM symposium on
  Theory of Computing}, pages 899--912, 2016.

\bibitem[RY17]{roberts2017chaos}
Daniel~A Roberts and Beni Yoshida.
\newblock Chaos and complexity by design.
\newblock {\em Journal of High Energy Physics}, 2017(4):1--64, 2017.

\bibitem[Sco06]{scott2006tight}
Andrew~J Scott.
\newblock Tight informationally complete quantum measurements.
\newblock {\em Journal of Physics A: Mathematical and General}, 39(43):13507,
  2006.

\end{thebibliography}
\appendix
\section{More Representation Theory}
\label{app:more-rep-theory}
In this section, we provide additional background material on the representation theory of $\U(d)$ and $\S_n$.
Recall that in \Cref{def:irrep-isomorphism}, we define an isomorphism between irreps of the same vector space. We can extend this notion to any pair of representations even when they are defined on different vector spaces:
\begin{definition}[$G$-linear maps and Isomorphic Representations]
\label{def:general-isomorphism}
Let $V, W$ be two vector spaces.
Let $\rho_V:G \mapsto \GL(V)$, $\rho_W:G \mapsto \GL(W)$ be two representations of the group $G$.
We say that the linear map $F \colon V \to W$ is \emph{$G$-linear} if $ F ( \rho_V(g) \ket{x} ) = \rho_W(g) F( \ket{x} )$ for all $g \in G, \ket{x} \in V$. 
In other words, the diagram in \Cref{fig:equivariant_map_commutation_diagram} commutes.

We say that $\rho_V$ and $\rho_W$ are \emph{isomorphic} if there exists an \emph{invertible} $G$-linear map between them. We denote the isomorphism as $\rho_V \cong \rho_W$.
\end{definition}
\begin{figure}
\begin{center}
    \begin{tikzcd}[row sep=large,column sep=large,nodes={inner sep=5pt}]
        V \ar[r, "\rho_V(g)"] \ar[d, swap, "F"] &  V \ar[d, "F"] \\
        W \ar[r, swap, "\rho_W(g)"] & W .
    \end{tikzcd}
\end{center}
\caption{$F \colon V \to W$ is $G$-linear iff the diagram commutes.}
\label{fig:equivariant_map_commutation_diagram}
\end{figure}

A foundational lemma we will be using throughout the section is Schur's lemma:
\begin{lemma}[Schur's Lemma, see Proposition 1.16 from \cite{etingof2011introduction} e.g.]
\label{lem:schurs_lemma}
Let $\rho_V, \rho_W$ be two irreducible representations of a group $G$. We have:
\begin{enumerate}
    \item If $\rho_V \not\cong \rho_W$, then there are no non-trivial\footnote{Trivial $G$-linear maps send each element to $0$.} $G$-linear maps between them.
    \item If $V = W$ and $\rho_V \cong \rho_W$, the only non-trivial $G$-linear maps are scalar multiples of the identity.
\end{enumerate}
\end{lemma}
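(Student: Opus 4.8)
The plan is to derive both parts of \Cref{lem:schurs_lemma} from a single structural observation: for any $G$-linear map $F$, the kernel $\ker F$ is a subrepresentation of the source and the image $\Img F$ is a subrepresentation of the target. This is immediate from the intertwining relation: if $F(\rho_V(g)\ket{x}) = \rho_W(g)F(\ket{x})$ for all $g$, then $\ket{x}\in\ker F$ gives $F(\rho_V(g)\ket{x}) = \rho_W(g)\cdot 0 = 0$, so $\rho_V(G)\ker F \subseteq \ker F$; and every $\rho_W(g)F(\ket{x}) = F(\rho_V(g)\ket{x})$ lies in $\Img F$, so $\rho_W(G)\Img F \subseteq \Img F$. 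Once this is in hand, everything else is bookkeeping against the definition of irreducibility.

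For part~1, suppose $\rho_V \not\cong \rho_W$ and let $F\colon V\to W$ be $G$-linear and nonzero. Since $V$ is irreducible and $\ker F \neq V$ (as $F\neq 0$), irreducibility forces $\ker F = 0$, so $F$ is injective; since $W$ is irreducible and $\Img F \neq 0$, it forces $\Img F = W$, so $F$ is surjective. Then $F$ is an invertible $G$-linear map, i.e.\ an isomorphism $\rho_V\cong\rho_W$, contradicting the hypothesis. Hence the only $G$-linear map $V\to W$ is the zero map.

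For part~2, take $V = W$ with $\rho_V\cong\rho_W$ and let $F\colon V\to V$ be $G$-linear. Because the ground field is $\C$ (algebraically closed) and $\dim V < \infty$, $F$ has an eigenvalue $\alpha\in\C$ with a nonzero eigenvector. The map $F - \alpha I$ is again $G$-linear --- the identity commutes with every $\rho_V(g)$ and $G$-linear maps are closed under linear combinations --- and it has nonzero kernel. Applying the part~1 argument to the $G$-linear endomorphism $F-\alpha I$ of the irreducible space $V$ (trivially isomorphic to itself), a nonzero kernel cannot be proper, so $\ker(F-\alpha I) = V$, giving $F = \alpha I$.

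The routine parts are the two subrepresentation closure checks and the injective/surjective deductions; the one genuinely essential ingredient is the existence of an eigenvalue in part~2, which is exactly where algebraic closedness of $\C$ is used, and the statement fails over $\R$ (e.g.\ a rotation acting irreducibly on $\R^2$). I would also make the reading of the statement explicit at the start: ``non-trivial'' means ``not the zero map'', so part~1 asserts $\mathrm{Hom}_G(\rho_V,\rho_W) = 0$ and part~2 asserts $\mathrm{Hom}_G(\rho_V,\rho_V) = \C\cdot I$.
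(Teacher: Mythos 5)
Your proof is correct: the kernel/image subrepresentation argument for part~1 and the eigenvalue argument (using that the paper works over $\C$ in finite dimension) for part~2 are exactly the standard proof of Schur's lemma, which the paper does not prove itself but simply cites from \cite{etingof2011introduction}; your reading of part~2 as the endomorphism statement $\mathrm{Hom}_G(\rho_V,\rho_V)=\C\cdot I$ is the intended one.
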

Recall that we use the following representations of $\S_n$ and {$\U(d)$} on the space $(\C^d)^{\otimes n}$:
\begin{align}
\label{eq:natural-representation}
\pi \mapsto P_\pi \coloneqq \sum_{i_1, \ldots, i_n \in [d]} \ket{i_{\pi^{-1}(1)}, i_{\pi^{-1}(2)}, \ldots, i_{\pi^{-1}(n)}}\bra{i_1, i_2, \ldots, i_n} \hspace{50pt} 
U \mapsto U^{\otimes n}.
\end{align}
Importantly, under these two representations, $(\C^d)^{\otimes n}$ can be decomposed into a direct sum of irreps. To be clear, for two subspaces $W_1,W_2$ of the same vector space $V$, we will say that $V$ is the direct sum of $W_1$ and $W_2$ (i.e., $V = W_1 \oplus W_2$) whenever every element $v \in V$ can be written uniquely as $v = w_1 + w_2$ where $w_1 \in W_1$ and $w_2 \in W_2$ (or in other words, $W_1$ and $W_2$ are orthogonal).
\begin{lemma}[Complete Reducibility]
\label{lem:complete_reducibility}
Let $G$ be $\S_n$ or $\U(d)$. Then, under its corresponding representation in Equation~\eqref{eq:natural-representation}, we have
$(\C^d)^{\otimes n} = \bigoplus_i V_i $ where the $V_i$s are irreps of $G$.
\end{lemma}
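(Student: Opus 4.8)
The plan is to exploit the fact that both representations in \Cref{eq:natural-representation} are \emph{unitary} with respect to the standard Hermitian inner product on $(\C^d)^{\otimes n}$, and then run the usual ``orthogonal complement'' argument for complete reducibility. This avoids any appeal to Maschke's theorem (which would not literally apply to $\U(d)$, as it is infinite) and treats both groups uniformly.

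First I would observe that for every $\pi \in \S_n$ the operator $P_\pi$ is a permutation matrix in the standard basis of $(\C^d)^{\otimes n}$, hence unitary; and for every $U \in \U(d)$ the operator $U^{\otimes n}$ is unitary, being a tensor product of unitaries. Thus in either case the representation $\rho$ satisfies $\rho(g)^\dagger = \rho(g)^{-1} = \rho(g^{-1})$ for all $g \in G$.

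Next comes the key point: if $W \subseteq (\C^d)^{\otimes n}$ is a subrepresentation space, then so is its orthogonal complement $W^\perp$. Indeed, take $\ket{v} \in W^\perp$ and $g \in G$; for any $\ket{w} \in W$ we have $\braket{w}{\rho(g) v} = \braket{\rho(g^{-1}) w}{v} = 0$, since $\rho(g^{-1})\ket{w} \in W$ because $W$ is a subrepresentation and $\ket{v} \perp W$. Hence $\rho(g)\ket{v} \in W^\perp$, so $\rho(G) W^\perp \subseteq W^\perp$. Combined with $(\C^d)^{\otimes n} = W \oplus W^\perp$, this lets us peel off subrepresentations one at a time.

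Finally I would conclude by induction on $\dim V$ over all subrepresentations $V$ of $(\C^d)^{\otimes n}$; the lemma is the case $V = (\C^d)^{\otimes n}$. If $V$ is irreducible we are done. Otherwise $V$ contains a proper nonzero subrepresentation $W$; applying the previous paragraph inside $V$ gives $V = W \oplus W^\perp$ with $W^\perp$ a subrepresentation, and $\dim W, \dim W^\perp < \dim V$. By the inductive hypothesis each of $W$ and $W^\perp$ decomposes as a direct sum of irreps, and concatenating these decompositions yields one for $V$. I do not expect any genuine obstacle here; the only point worth stating explicitly is that the explicit unitarity of $\pi \mapsto P_\pi$ and $U \mapsto U^{\otimes n}$ is what makes the orthogonal-complement step work without an averaging argument.
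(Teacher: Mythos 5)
Your proof is correct, but it takes a genuinely different route from the paper. The paper disposes of the lemma by citation: Maschke's theorem for $\S_n$ and Weyl's complete-reducibility theorem for $\U(d)$, with no argument given beyond pointing to the textbooks. You instead give a uniform, self-contained proof that works for both groups at once, observing that the specific representations $\pi \mapsto P_\pi$ and $U \mapsto U^{\otimes n}$ are unitary in the standard inner product on $(\C^d)^{\otimes n}$, so the orthogonal complement of any subrepresentation space is again a subrepresentation space, and then finishing by induction on dimension (with the understanding that inside a subrepresentation $V$ one takes the complement of $W$ within $V$). This is sound: unitarity makes the complement step work with no averaging over the group, which is exactly why the infinitude of $\U(d)$ causes no trouble, and finite-dimensionality guarantees the induction terminates. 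What your approach buys is elementarity and uniformity --- the reader needs no external theorem, and the two groups are not treated separately; what the paper's approach buys is brevity and the fact that the cited theorems cover representations that are not a priori unitary (irrelevant here, since the representations at hand are manifestly unitary). Either argument fully establishes the lemma.
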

\begin{proof}
The statement for $\U(d)$ follows from Weyl's theorem (see Theorem 10.9 of \cite{fulton2013representation} e.g.) on complete reducibility, and the result for $\S_n$ is Maschke's Theorem (see Theorem 4.1.1 from \cite{etingof2011introduction} e.g.).
\end{proof}
For each decomposition of $(\C^d)^{\otimes n}$ in \Cref{lem:complete_reducibility}, we can group the subspaces that are isomorphic to each other. The benefit of this grouping is that the decomposition becomes unique in the following sense:
\begin{lemma}[Uniqueness of Isotypic Components, see Proposition 1.8 of \cite{fulton2013representation} e.g.]
\label{lem:iso-unique}
Let $G$ be $\S_n$ or $\U(d)$. Let $\bigoplus_{i,j} V_{i,j}$ be an orthogonal decomposition of $(\C^d)^{\otimes n}$ with the following properties:
\begin{itemize}[itemsep = 0pt]
    \item Each $V_{i,j}$ is an irrep of $G$.
    \item $V_{i,j} \cong V_{i,j'}$ for all $i, j, j'$.
    \item $V_{i,j} \not \cong V_{i',j'}$ for all $i \neq i'$ and $j, j'$.
\end{itemize}
In other words, each $\Pi_i \coloneqq \bigoplus_j V_{i,j}$ is direct sum of isomorphic irreps. The \emph{isotypic decomposition} $(\C^d)^{\otimes n} = \bigoplus_i \Pi_i$ is unique. In particular, for any irrep $W \subseteq (\C^d)^{\otimes n}$ with $W \cong V_{i,j}$, we have that $W \subseteq \Pi_i$. We refer to the subspace $\Pi_i$ as the \emph{$i$-th isotypic component}.
\end{lemma}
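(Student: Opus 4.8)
The plan is to prove the two assertions, deriving the uniqueness statement from the ``in particular'' containment claim. The only representation-theoretic input is Schur's lemma (\Cref{lem:schurs_lemma}); existence of at least one decomposition of the stated form is already guaranteed by \Cref{lem:complete_reducibility} after grouping mutually isomorphic summands, so only the containment claim and uniqueness remain. First I would prove: if $W \subseteq (\C^d)^{\otimes n}$ is an irrep of $G$ with $W \cong V_{i,j}$, then $W \subseteq \Pi_i$. For each pair $(i',j')$, let $p_{i',j'}$ denote the projection onto $V_{i',j'}$ with kernel $\bigoplus_{(i'',j'') \ne (i',j')} V_{i'',j''}$; this is well defined because the given decomposition is a genuine (orthogonal) direct sum. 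Each $p_{i',j'}$ is $G$-linear: every $\rho(g)$ preserves each summand $V_{i'',j''}$, hence acts block-diagonally, so $p_{i',j'}(\rho(g)v) = \rho(g)\,p_{i',j'}(v)$. Restricting to $W$, the map $p_{i',j'}|_W \colon W \to V_{i',j'}$ is $G$-linear between irreps; when $i' \ne i$ the third hypothesis gives $W \cong V_{i,j} \not\cong V_{i',j'}$, so \Cref{lem:schurs_lemma}(1) forces $p_{i',j'}|_W = 0$. Thus every $w \in W$ has vanishing component in every $V_{i',j'}$ with $i' \ne i$, i.e. $w \in \bigoplus_{j'} V_{i,j'} = \Pi_i$. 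Here the second hypothesis ensures the isomorphism type ``$\cong V_{i,j}$'' depends only on $i$, and the third ensures the isotypic index $i$ attached to a given irrep is well defined.

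Next I would deduce uniqueness. Suppose $(\C^d)^{\otimes n} = \bigoplus_i \Pi_i = \bigoplus_a \Pi'_a$ are two decompositions satisfying the stated properties, with $\Pi'_a = \bigoplus_b V'_{a,b}$. Fix $a$: each $V'_{a,b}$ is an irrep, hence isomorphic to some $V_{i,j}$ for a unique $i$ (unique by the third hypothesis on the unprimed decomposition), and by the containment claim $V'_{a,b} \subseteq \Pi_i$. Since the $V'_{a,b}$ over $b$ are mutually isomorphic, this index $i =: \sigma(a)$ is independent of $b$, so $\Pi'_a = \bigoplus_b V'_{a,b} \subseteq \Pi_{\sigma(a)}$. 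Swapping the roles of the two decompositions yields a map $\tau$ with $\Pi_i \subseteq \Pi'_{\tau(i)}$, whence $\Pi'_a \subseteq \Pi_{\sigma(a)} \subseteq \Pi'_{\tau(\sigma(a))}$; as the $\Pi'$'s are in orthogonal direct sum, a nonzero subspace cannot lie inside two distinct ones, forcing $\tau(\sigma(a)) = a$ and equality throughout, so $\Pi'_a = \Pi_{\sigma(a)}$. Hence the two families of subspaces coincide up to relabeling of the index set, which is exactly the asserted uniqueness.

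The argument is mostly bookkeeping once Schur's lemma is in hand; the point that needs a little care is verifying that the coordinate projections $p_{i',j'}$ really are $G$-linear (this rests on the decomposition being into $G$-invariant subspaces, so that every $\rho(g)$ is block-diagonal with respect to it), together with tracking that an irrep's isomorphism class pins down its isotypic index, so that the maps $\sigma$ and $\tau$ in the uniqueness step are well defined and mutually inverse. No Weingarten-type computation or representation theory beyond \Cref{lem:schurs_lemma} and \Cref{lem:complete_reducibility} enters.
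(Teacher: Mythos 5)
Your proof is correct and takes essentially the same route as the paper: the coordinate projections onto the summands are $G$-linear, Schur's lemma then forces the containment $W \subseteq \Pi_i$, and uniqueness is deduced from that containment (your uniqueness bookkeeping is in fact more explicit than the paper's one-line version). The only step you gloss is that each irrep $V'_{a,b}$ of a second decomposition is isomorphic to \emph{some} $V_{i,j}$ of the first; this follows from the very same argument (if it were isomorphic to none, all of its coordinate projections would vanish by Schur's lemma, forcing $V'_{a,b}=0$), so the omission is cosmetic rather than a genuine gap.
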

\begin{proof}
Let $W \subseteq (\C^d)^{\otimes n}$ be an arbitrary non-trivial irrep.
We claim that $W \subseteq \Pi_i$ for some $i$.
This immediately implies \Cref{lem:iso-unique}
since if the decomposition were not unique, we could  always find some non-trivial irrep $W \not \subseteq \Pi_i$ for any of the $\Pi_i$ above.

It remains to show the claim $W \subseteq \Pi_i$ for some $i$.
We define the set of maps $\phi_{i,j} \colon W \to V_{i,j}$ that are the orthogonal projectors from $W$ onto $V_{i,j}$. We claim that each map $\phi_{i,j}$ is $G$-linear. 
Let $\ket{x}$ be an arbitrary vector in $W$.
Denote by $\ket{\bar x}$ and $\ket{x^{\perp}}$ its component within $V_{i,j}$ and perpendicular to $V_{i,j}$ respectively.

We claim that $ \rho(g) \ket{x^{\perp}} $ is still perpendicular to $V_{i,j}$ for any $g \in G$.
 Let $\ket{v_1}, \cdots \ket{v_b}$ be an orthogonal basis of $V_{i,j}$.
We have $\ket{x^{\perp}}, \ket{v_1}, \cdots, \ket{v_b}$ are pair-wise orthogonal.
Since $\rho(g)$ is a unitary matrix, it preserves orthogonality.
Thus, $ \rho(g) \ket{x^{\perp}}, \rho(g) \ket{v_1}, \cdots, \rho(g) \ket{v_b}$ are still pair-wise orthogonal. 
Yet, $\rho(g) \ket{v_1}, \cdots, \rho(g) \ket{v_b}$ are all within $V_{i,j}$ and pair-wise orthogonal, implying that they still form a basis of $V_{i,j}$ after transformed by $\rho(g)$.
This shows that $\rho(g) \ket{x^{\perp}}$ is still perpendicular to $V_{i,j}$.
Thus, we have
$$
\phi_{i,j} \lp(  \rho(g) \ket{x} \rp)= 
\rho(g) \ket{ \bar x }
= \rho(g)  \phi_{i,j} \lp( \ket{x} \rp)
$$
for all $\ket{x} \in W$ and $g \in G$.

For any $\ket{w} \in W$ such that $\ket{w} \neq 0$, 
there exists some $i^*,j^*$ such that 
$\phi_{i^*,j^*}(\ket{w}) \neq 0$ since $\bigoplus_{i=1}^m \bigoplus_{j} V_{i,j}$ is a decomposition of the space $(\C^d)^{\otimes n}$.
Then $\phi_{i^*,j^*}$ is a non-trivial $G$-linear map between $W$ and $V_{i^*,j^*}$.
By Schur's Lemma (\Cref{lem:schurs_lemma}), we have $W \cong V_{i^*,j^*}$, so $W$ is an irrep of type $i^*$.
In fact, we can repeat the argument for \emph{any} $i', j'$ such that $\phi_{i',j'}(\ket{w}) \neq 0$ to show that $W \cong V_{i', j'}$. By composing the two isomorphisms, we get $V_{i^*, j^*} \cong V_{i', j'}$.
Hence, $\phi_{i',j'}(\ket{w}) \neq 0$ implies $i' = i^*$.
Therefore, $\ket{w}$ completely lies in the space $\bigoplus_{j} V_{i^*, j}$, showing that $W \subseteq \Pi_{i^*}$.
\end{proof}

We next provide constructions for the irreps of $\S_n$ and $\U(d)$ in $(\C^d)^{\otimes n}$. 
A canonical family of irreps for $\U(d)$ are obtained by the Young symmetrizer $Y_{\lambda}$ applied on  $(\C^d)^{\otimes n}$.
Since it is central to the following constructions, we first recall the definition of the Young symmetrizer:
\youngsymmetrizerdefinition*

\begin{definition}[Canonical Irreps of $\U(d)$]
\label{def:Q-canonical}
For $\lambda \partition_d n$, the \emph{canonical irrep of $\U(d)$} is $\mathcal Q_{\lambda} \coloneqq Y_{\lambda} (\C^d)^{\otimes n}$.
\end{definition}

\begin{lemma}[Labeling Irreps of $\U(d)$ (see 
Theorem 6.3  of \cite{fulton2013representation} e.g.)] \label{thm:young-Q-projector}
$Y_{\lambda} ( \C^d )^{\otimes n}$ is a non-trivial irrep of $\U(d)$ for every partition $\lambda \partition_d n$.
Moreover, every non-trivial irrep of $\U(d)$ in $(\C^d)^{\otimes n}$ under the group homomorphism in \Cref{eq:natural-representation}
is isomorphic to $Y_{\lambda} (\C^d )^{\otimes n}$ for exactly one choice of $\lambda$.\footnote{The theorem is usually stated with the notion of \emph{Schur Functors}, which are defined with the Young symmetrizer. We expand the definition of the Schur Functors here in our statement.}
\end{lemma}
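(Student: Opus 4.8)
The plan is to reduce both assertions to two classical facts about the Young symmetrizer viewed as an element of the group algebra $\C[\S_n]$, and then transport them to $(\C^d)^{\otimes n}$ through the Schur--Weyl decomposition. The two facts (see, e.g., \cite{fulton2013representation}) are: (i) $Y_{\lambda}^2 = c_{\lambda}Y_{\lambda}$ with $c_{\lambda} = n!/\dim\mathcal P_{\lambda}\neq 0$, so that $e_{\lambda} := Y_{\lambda}/c_{\lambda}$ is idempotent; and (ii) $Y_{\lambda}\,\C[\S_n]\,Y_{\mu} = \delta_{\lambda\mu}\,\C Y_{\lambda}$, where moreover the left ideal $\C[\S_n]Y_{\lambda}$ is an irreducible $\S_n$-module isomorphic to $\mathcal P_{\lambda}$ (the Specht module). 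Their proofs are purely combinatorial---they use only the action of row and column permutations on tableaux together with the dominance order on partitions---so I would cite them rather than reprove them.

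First I would note that $Y_{\lambda}$ is a linear combination of the permutation operators $P_{\pi}$, and hence commutes with $\Un{n}$ for every $U\in\U(d)$; therefore $\mathcal Q_{\lambda} := Y_{\lambda}(\C^d)^{\otimes n}$ is a $\U(d)$-subrepresentation of $(\C^d)^{\otimes n}$. To see that it is nonzero for every $\lambda\partition_d n$, I would invoke \Cref{lem:1d-min-weight-space}, which gives $\dim Y_{\lambda}V(w) = 1$ where $w$ is the weight vector whose nonzero coordinates are exactly the parts of $\lambda$.

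For irreducibility, I would write $(\C^d)^{\otimes n} = \bigoplus_{\mu\partition_d n}\Pi_{\mu}$ as in \Cref{thm:schur-duality}, with $\Pi_{\mu}\cong Q_{\mu}\otimes P_{\mu}$, where $P_{\mu}\cong\mathcal P_{\mu}$ is the $\mu$-irrep of $\S_n$ and $Q_{\mu}\neq 0$ is the associated $\U(d)$-multiplicity space (itself a $\U(d)$-irrep). Since $Y_{\lambda}\in\C[\S_n]$, it acts on $\Pi_{\mu}$ as $\mathrm{id}_{Q_{\mu}}\otimes(Y_{\lambda}|_{P_{\mu}})$; identifying $P_{\mu}$ with the left ideal $\C[\S_n]Y_{\mu}$, left multiplication by $Y_{\lambda}$ has image $Y_{\lambda}\C[\S_n]Y_{\mu}$, which by fact (ii) is $0$ when $\mu\neq\lambda$ and is the one-dimensional space $\C Y_{\lambda}$ when $\mu=\lambda$. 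Hence $\mathcal Q_{\lambda} = Y_{\lambda}(\C^d)^{\otimes n} = Q_{\lambda}\otimes\C Y_{\lambda}\cong Q_{\lambda}$ as $\U(d)$-representations, so it is a nontrivial irrep. Finally, for the ``exactly one $\lambda$'' part, I would use that, by \Cref{thm:schur-duality} together with \Cref{lem:irrep_summary}, the irreps $Q_{\mu}$ for $\mu\partition_d n$ are pairwise non-isomorphic and exhaust, up to isomorphism, all nontrivial $\U(d)$-irreps occurring in $(\C^d)^{\otimes n}$; since $\mathcal Q_{\lambda}\cong Q_{\lambda}$, the family $\{\mathcal Q_{\lambda}\}_{\lambda\partition_d n}$ has the same property, which is exactly the claim.

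I expect the only genuine obstacle to be fact (ii): that $e_{\lambda} = Y_{\lambda}/c_{\lambda}$ is a primitive idempotent with $e_{\lambda}\C[\S_n]e_{\lambda} = \C e_{\lambda}$, and that Young symmetrizers of distinct shapes annihilate each other's left ideals. This is the combinatorial heart of the Specht-module construction; it is completely standard and is already used implicitly by the paper's definition of the canonical irreps (\Cref{def:Q-canonical}), so I would invoke it as a black box. Everything else amounts to bookkeeping with the Schur--Weyl decomposition of \Cref{thm:schur-duality}.
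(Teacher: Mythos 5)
Your proposal is essentially correct, but be aware that the paper does not prove this lemma at all: it is imported background, cited to Theorem~6.3 of \cite{fulton2013representation}, with the footnote merely translating ``Schur functor'' language into the Young-symmetrizer form. So the comparison is between your derivation and a bare citation. Your route---treating $Y_{\lambda}$ as an element of $\C[\S_n]$, using $Y_{\lambda}^2=c_{\lambda}Y_{\lambda}$ and $Y_{\lambda}\,\C[\S_n]\,Y_{\mu}=\delta_{\lambda\mu}\,\C Y_{\lambda}$, and pushing these through the bimodule decomposition $(\C^d)^{\otimes n}\cong\bigoplus_{\mu}Q_{\mu}\otimes P_{\mu}$ so that $Y_{\lambda}$ acts as $\mathrm{id}_{Q_{\mu}}\otimes Y_{\lambda}|_{P_{\mu}}$ and has image $Q_{\lambda}\otimes\C Y_{\lambda}\cong Q_{\lambda}$---is valid, and within the paper's logical structure it is not circular, since \Cref{lem:irrep_summary} and \Cref{thm:schur-duality} are themselves black-box citations (and can be established, e.g., via the double commutant theorem without Young symmetrizers). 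Your appeal to \Cref{lem:1d-min-weight-space} for non-triviality is also legitimate and genuinely needed: its proof is purely combinatorial and independent of \Cref{thm:young-Q-projector}, and it is what guarantees that the Specht module of shape $\lambda$ actually occurs in $(\C^d)^{\otimes n}$ (this is where $\lambda\partition_d n$, i.e.\ at most $d$ parts, enters). Two caveats if you write this up: first, your step ``$Y_{\lambda}$ acts as $\mathrm{id}_{Q_{\mu}}\otimes(Y_{\lambda}|_{P_{\mu}})$'' uses the tensor-product (bimodule) form of Schur--Weyl duality, which is slightly stronger than the basis-level statement of \Cref{thm:schur-duality} as written, though it is exactly what the cited references prove; second, in \cite{fulton2013representation} the decomposition you invoke is itself established \emph{using} the irreducibility of $Y_{\lambda}(\C^d)^{\otimes n}$, so your argument is a repackaging of cited material rather than an independent proof---which is acceptable here, since the paper itself only cites the result, but it means the group-algebra facts (your fact (ii)) remain the real mathematical content, exactly as you acknowledge.
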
 
To describe the irreps of $\S_n$, we will need to recall the concepts of Young diagrams/tableau, introduce the additional notions of \emph{semi-standard, standard} Young tableau, 
and relate them to the standard basis vectors of $(\C^d )^{\otimes n}$:
\begin{definition}[Young diagrams / tableaux]
A \emph{Young diagram} is a visualization of an ordered partition $\lambda = (\lambda_1, \ldots, \lambda_k) \partition n$ with $\lambda_1 \ge \lambda_2 \ge \ldots \ge \lambda_k$ in which there are $\lambda_i$ boxes in the $i$th row of the diagram. 
For example, the partition $(4,3,1)$ corresponds to the diagram
\begin{center}
    \ydiagram{4,3,1}
\end{center}
A \emph{Young tableau} with partition $\lambda$ is a Young diagram where each box has been labeled by some number. For example,
\begin{center}
    \begin{ytableau}
    1 & 2 & 2 & 2 \\
    3 & 4 & 5 \\
    5
\end{ytableau}
\end{center}
A \emph{semi-standard Young tableau} is such that the labels across each row (from left to right) are non-decreasing, and the labels across each column (from top to bottom) are strictly increasing. In a \emph{standard Young tableau}, each label $1, \ldots, n$ occurs exactly once, and the labels across both the rows and the columns are strictly increasing.
The \emph{canonical standard tableau} is a special standard tableau where the numbers $1 \cdots n$ are filled from left to right, and then from top to bottom in order. For example, the canonical standard tableau for partition $(4,3,1)$ is:
\begin{center}
    \begin{ytableau}
    1 & 2 & 3 & 4 \\
    5 & 6 & 7 \\
    8
\end{ytableau}
\end{center}
\end{definition}

\begin{definition}[Correspondence between standard basis vectors and tableau]
\label{def:corresponding_tableau}
To each partition $\lambda \partition n$ and standard basis vector $\ket{e} = \ket{e_1} \otimes \cdots \otimes \ket{e_n} \in (\C^d)^{\otimes n}$ with $e_i \in [d]$, we define their corresponding Young tableau {$T_{\lambda}(\ket{e})$} as follows: fill the numbers $e_1, \ldots, e_n$ into the boxes of the Young diagram for $\lambda$ from 
left to right, and then from top to bottom in order.\footnote{The order here needs to be consistent with the order the boxes of the tableau were indexed in the definitions of the permutation subgroups $A_{\lambda}, B_{\lambda}$ (see \Cref{def:row-column-permute}).}
\end{definition}


To begin with, we will need to fix a special standard basis vector $\ket{(\lambda)}$:
\begin{definition}
\label{def:cannonical-basis}
For $\lambda \partition n$, we define $\ket{(\lambda)}$ as the standard basis vector whose corresponding tableau of partition $\lambda$ has its $i$th row filled with $i$.
For example, for $\lambda = (3,2,1)$, we have
\[
\ket{(\lambda)} = \ket{1,1,1,2,2,3}.
\]
whose corresponding tableau $T_{\lambda}(\ket{(\lambda)})$ is (see \Cref{def:corresponding_tableau})
\begin{center}
\ytableausetup{smalltableaux}
\begin{ytableau}
    1 & 1 & 1 \\
    2 & 2  \\
    3
\end{ytableau}
\end{center}
\end{definition}

We note that by \Cref{lem:1d-min-weight-space}, the weight vector of $\ket{(\lambda)}$ is the lexicographically largest weight vector amongst standard basis vectors in the image of $Y_\lambda$. In particular, $Y_\lambda\ket{(\lambda)} \neq 0$.
We construct the irreps of $\S_n$ by looking at the minimum permutation invariant subspace containing the vector $Y_{\lambda} \ket{(\lambda)}$.
\begin{definition}[Canonical irrep of $\S_n$]
\label{def:P-canonical}
For $\lambda \partition n$, 
    the \emph{canonical irrep of $\S_n$} is \[\mathcal P_{\lambda} := \Span\{ P_{\pi} Y_{\lambda} \ket{(\lambda)} \mid \pi \in \S_n \}.\]
\end{definition}

The irreps of $\S_n$ are usually described as ``Specht modules'' for which the representation space is the group algebra $\C[\S_n]$. For our purposes, it will be useful to understand the irreps over the space $(\C^d)^{\otimes n}$. We will borrow the much of the background for Specht modules before specializing to our scenario.

\begin{definition}[Group Algebra]
Let $\C[\S_n]$ be the $(n!)$-dimensional complex vector space with the basis vectors labeled by permutations $\pi \in \S_n$. That is, each element of $\C[\S_n]$ is a linear combination of permutations: $\sum_{\pi \in \S_n} \alpha_\pi \pi $ for coefficients $\alpha_{\pi} \in \C$.
The vector space admits a natural representation of $\S_n$ where the group action is defined by
$$
\sigma \cdot \sum_{\pi \in \S_n} \alpha_{\pi} \pi = 
 \sum_{\pi \in \S_n} \alpha_{\pi} (\sigma \cdot \pi)
$$
where $\sigma \cdot \pi$ is the usual composition of permutations.
\end{definition}

We now define an important family of subspaces of $\C[\S_n]$ via the Young symmetrizer $Y_\lambda$. It's worth noting that we will abuse notation and use $Y_{\lambda}$ to denote both 
$  \sum_{a \in A_{\lambda}} a \cdot \sum_{b \in B_{\lambda}}  \sgn(b) b$ as an element in the group algebra $\C[\S_n]$, 
and its representation 
$\sum_{a \in A_{\lambda}} P_a  \sum_{b \in B_{\lambda}} \sgn(b) P_b$
in $(\C^{d})^{\otimes n}$ via the permutation operator (see \Cref{def:young_symmetrizer}).
\begin{definition}[Specht module]
\label{def:specht}
Let $\lambda \partition n$. Define the subspace (called a \emph{Specht module})
\[
\C[\S_n] Y_{\lambda} :=
\lp\{ \sum_{\pi} \alpha_\pi \pi \cdot  \sum_{a \in A_{\lambda}} a \cdot  \sum_{b \in B_{\lambda}} \sgn(b)\; b \mid \sum_{\pi} \alpha_\pi \pi \in \C[\S_n]  \rp\}.
\]
\end{definition}
It is well known that the Specht modules are the irreps of $\S_n$ up to isomorphism:
\begin{lemma}[Characterization of Irreps of $\S_n$ via Specht Modules (e.g.,  
Theorem 5.12.2 from \cite{etingof2011introduction})]
\label{thm:young-P-projector}
For $\lambda \partition n$, the Specht module $\C[\S_n] Y_{\lambda}$ is an irrep of $\S_n$.
Moreover, every irrep of $\S_n$ is isomorphic to $\C[\S_n] Y_{\lambda}$ for exactly one choice of $\lambda$.
\end{lemma}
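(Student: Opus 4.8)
The plan is to deduce everything from elementary properties of the Young symmetrizer $Y_\lambda$ inside the group algebra $\C[\S_n]$ with its left-regular action. Write $a_\lambda = \sum_{p \in A_\lambda} p$ and $b_\lambda = \sum_{q \in B_\lambda} \sgn(q)\, q$, so $Y_\lambda = a_\lambda b_\lambda$, with $p\, a_\lambda = a_\lambda\, p = a_\lambda$ and $q\, b_\lambda = b_\lambda\, q = \sgn(q)\, b_\lambda$ for $p \in A_\lambda$, $q \in B_\lambda$. The first step is to record two ``separation'' facts, for partitions $\lambda, \mu \partition n$: \textbf{(i)} $Y_\lambda\, x\, Y_\mu = 0$ for every $x \in \C[\S_n]$ unless $\mu$ dominates $\lambda$ (dominance order $\trianglerighteq$); and \textbf{(ii)} $Y_\lambda\, x\, Y_\lambda \in \C\, Y_\lambda$ for every $x$. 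Both reduce to $a_\lambda\, \pi\, b_\mu$ and the combinatorial \emph{Dominance Lemma} for Young tableaux: if no two numbers lying in a common row of the canonical shape-$\lambda$ tableau lie in a common column of the tableau $\pi \cdot(\text{shape-}\mu\text{ tableau})$, then $\mu \trianglerighteq \lambda$; and when $\lambda = \mu$, the extra input is that if additionally $\pi \notin A_\lambda B_\lambda$ such a collision is forced. In either scenario a collision yields a transposition $t$ with $a_\lambda\, t = a_\lambda$ and $\pi^{-1} t \pi \in B_\mu$, hence $a_\lambda\, \pi\, b_\mu = a_\lambda\, t\, \pi\, b_\mu = -\,a_\lambda\, \pi\, b_\mu = 0$; and when $\pi \in A_\lambda B_\lambda$ one computes $a_\lambda\, \pi\, b_\lambda = \pm\, Y_\lambda$ directly. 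I will also note $Y_\lambda^2 = c_\lambda Y_\lambda$ with $c_\lambda \neq 0$: being a multiple of $Y_\lambda$ is the case $\lambda = \mu$ of (ii), and $c_\lambda \neq 0$ follows from a trace computation — right multiplication by $Y_\lambda$ on $\C[\S_n]$ has trace $n!$, since the coefficient of the identity in $Y_\lambda$ is $1$ (because $A_\lambda \cap B_\lambda = \{e\}$), whereas a nilpotent operator would have trace $0$.

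\textbf{Irreducibility of $\C[\S_n] Y_\lambda$.} By Maschke's theorem (the $\S_n$ half of \Cref{lem:complete_reducibility}), $\C[\S_n]$ is semisimple, so $\C[\S_n] Y_\lambda$ is a direct sum of irreducible submodules. If $Y_\lambda$ annihilated all of them it would annihilate $\C[\S_n] Y_\lambda \ni Y_\lambda$, forcing $Y_\lambda^2 = 0$ and contradicting $c_\lambda \neq 0$; hence some irreducible summand $V$ has $Y_\lambda V \neq 0$. Since $V \subseteq \C[\S_n] Y_\lambda$, fact (ii) gives $Y_\lambda V \subseteq Y_\lambda\, \C[\S_n]\, Y_\lambda = \C\, Y_\lambda$, so $Y_\lambda V = \C\, Y_\lambda$, whence $Y_\lambda \in Y_\lambda V \subseteq V$ (as $V$ is a left ideal), whence $\C[\S_n] Y_\lambda \subseteq \C[\S_n] V \subseteq V$. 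Therefore $V = \C[\S_n] Y_\lambda$, i.e.\ $\C[\S_n] Y_\lambda$ is irreducible.

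\textbf{The ``exactly one'' part.} For $\lambda \neq \mu$, dominance is a partial order, so not both $\mu \trianglerighteq \lambda$ and $\lambda \trianglerighteq \mu$ hold; say $\mu \not\trianglerighteq \lambda$. By fact (i), $Y_\lambda\, \C[\S_n]\, Y_\mu = 0$. Now any $\S_n$-module homomorphism $\phi \colon \C[\S_n] Y_\lambda \to \C[\S_n] Y_\mu$ is determined by $\phi(Y_\lambda)$, and using $Y_\lambda = c_\lambda^{-1} Y_\lambda^2$ together with $\phi(Y_\lambda) \in \C[\S_n] Y_\mu$ we get $\phi(Y_\lambda) = c_\lambda^{-1} Y_\lambda\, \phi(Y_\lambda) \in Y_\lambda\, \C[\S_n]\, Y_\mu = 0$, so $\phi = 0$. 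Thus there is no nonzero map in this direction (and symmetrically no nonzero map $\C[\S_n] Y_\mu \to \C[\S_n] Y_\lambda$ in the other case), so $\C[\S_n] Y_\lambda \not\cong \C[\S_n] Y_\mu$. Hence $\{\C[\S_n] Y_\lambda\}_{\lambda \partition n}$ are pairwise non-isomorphic irreps. Finally, the number of isomorphism classes of complex irreps of $\S_n$ equals the number of conjugacy classes, which is the number $p(n)$ of partitions of $n$; since we have exhibited $p(n)$ pairwise non-isomorphic irreps indexed by partitions, every irrep of $\S_n$ is isomorphic to $\C[\S_n] Y_\lambda$ for exactly one $\lambda$.

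\textbf{Main obstacle.} The genuinely delicate ingredient is the combinatorial Dominance Lemma and its refinement (the equivalence between ``$\pi \notin A_\lambda B_\lambda$'' and the existence of a row/column collision), which power facts (i) and (ii); getting the direction of the dominance inequality correct — rows of one shape versus columns of the other — and the bookkeeping with conjugated tableaux $\pi\cdot S$ both need care. Everything else is routine manipulation of the row/column symmetrizers plus the two standard black-box inputs already cited in the paper: semisimplicity of $\C[\S_n]$ and the count of conjugacy classes.
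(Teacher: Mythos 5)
The paper offers no proof of this lemma to compare against: it is invoked purely as a black-box citation (Theorem 5.12.2 of Etingof et al.), and its only role is in \Cref{thm:canonic-P}, where irreducibility is transported from the Specht module $\C[\S_n]Y_\lambda$ to the concrete subspace $\mathcal P_\lambda \subseteq (\C^d)^{\otimes n}$. Your proposal therefore supplies a self-contained argument where the paper has none, and it is the standard proof of this classical theorem, correctly executed: the dominance lemma is stated with the right orientation (rows of the $\lambda$-tableau against columns of the permuted $\mu$-tableau, concluding $\mu \trianglerighteq \lambda$), the transposition trick $a_\lambda \pi b_\mu = a_\lambda t\,\pi\,(\pi^{-1}t\pi)\cdots = -\,a_\lambda \pi b_\mu$ is used correctly in both separation facts, the non-vanishing $c_\lambda \neq 0$ via the trace $n!$ of right multiplication (coefficient of the identity in $Y_\lambda$ equals $1$ because $A_\lambda \cap B_\lambda = \{e\}$) is sound, the irreducibility argument (pick an irreducible summand $V$ of the left ideal with $Y_\lambda V \neq 0$, use $Y_\lambda \C[\S_n] Y_\lambda \subseteq \C Y_\lambda$ to force $Y_\lambda \in V$ and hence $V = \C[\S_n]Y_\lambda$) is the classical one, and pairwise non-isomorphism plus the conjugacy-class count gives exhaustion. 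What you leave as black boxes---the tableau dominance lemma with its von Neumann refinement, Maschke's theorem, and the fact that the number of irreps equals the number of conjugacy classes---are standard and no heavier than what the paper itself assumes; in fact the column-repetition/sign-cancellation mechanism you rely on is the same one the paper uses in its own \Cref{clm:non-vanish}. So the proposal is correct (modulo those standard inputs), and simply more explicit than the paper, which delegates the entire statement to the literature.
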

We are now ready to show that 
the subspaces $\mathcal P_{\lambda}$ from \Cref{def:P-canonical} are indeed irreps of $\S_n$ in $(\C^d)^{\otimes n}$.
This is achieved by showing isomorphism between $\mathcal P_{\lambda}$ and the Specht Modules.
\begin{lemma}[Labeling Irreps of the $\S_n$] \label{thm:canonic-P}
Let $\lambda \partition_d n$.
Then $\mathcal P_{\lambda}$ is an irrep of $\S_n$.
Moreover, every non-trivial irrep of $S_n$ in $(\C^d)^{\otimes n}$ is isomorphic to some $\mathcal P_{\lambda}$ for exactly one choice of $\lambda$.
\end{lemma}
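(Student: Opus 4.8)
The plan is to realize each subspace $\mathcal P_\lambda$ from \Cref{def:P-canonical} as an isomorphic copy of the abstract Specht module $\C[\S_n] Y_\lambda$ from \Cref{def:specht}, and then read off both assertions of the lemma from the classifications in \Cref{thm:young-P-projector} and \Cref{lem:irrep_summary}.

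First I would set up the natural $\S_n$-linear ``evaluation'' map $T \colon \C[\S_n] \to (\C^d)^{\otimes n}$ given by $T(\sum_\pi \alpha_\pi \pi) = \sum_\pi \alpha_\pi \, P_\pi Y_\lambda \ket{(\lambda)}$. Writing $\rho$ for the linear extension of $\pi \mapsto P_\pi$ to the group algebra (a genuine representation, since $P_\pi P_\sigma = P_{\pi\sigma}$), this is just $T(x) = \rho(x) Y_\lambda\ket{(\lambda)}$, which is manifestly $\S_n$-linear for the left-multiplication action on $\C[\S_n]$ and the permutation action on $(\C^d)^{\otimes n}$. Restricting $T$ to the Specht module $M \defeq \C[\S_n] Y_\lambda$ and using multiplicativity of $\rho$, one has $T(x Y_\lambda) = \rho(x)\, Y_\lambda \ket{(\lambda)}$ for all $x \in \C[\S_n]$, so $T(M) = \Span\{ P_\pi Y_\lambda \ket{(\lambda)} \mid \pi \in \S_n \} = \mathcal P_\lambda$; thus $T|_M$ maps onto $\mathcal P_\lambda$. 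Since $Y_\lambda \in M$ and $T(Y_\lambda) = Y_\lambda \ket{(\lambda)} \neq 0$ by \Cref{lem:1d-min-weight-space}, the map $T|_M$ is nonzero, and as $M$ is irreducible (\Cref{thm:young-P-projector}) its kernel --- a subrepresentation of $M$ --- must be $\{0\}$. Hence $T|_M \colon M \to \mathcal P_\lambda$ is an $\S_n$-isomorphism, so $\mathcal P_\lambda \cong \C[\S_n] Y_\lambda$ is an irrep of $\S_n$; this is the first claim.

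For the ``moreover'' part I would reason as follows. Distinct $\lambda, \lambda' \partition_d n$ yield non-isomorphic Specht modules by \Cref{thm:young-P-projector}, hence $\mathcal P_\lambda \not\cong \mathcal P_{\lambda'}$, so no non-trivial irrep is isomorphic to more than one of the $\mathcal P_\lambda$. For existence, observe that the $\{\mathcal P_\lambda\}_{\lambda \partition_d n}$ are pairwise-inequivalent non-trivial $\S_n$-irreps sitting \emph{inside} $(\C^d)^{\otimes n}$, while \Cref{lem:irrep_summary} tells us the isomorphism classes of $\S_n$-irreps occurring in $(\C^d)^{\otimes n}$ are in bijection with $\{\lambda : \lambda \partition_d n\}$; having exhibited exactly that many inequivalent classes, they must be all of them, so every non-trivial $\S_n$-irrep $W \subseteq (\C^d)^{\otimes n}$ is isomorphic to $\mathcal P_\lambda$ for exactly one $\lambda$.

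The step I expect to require the most care is this existence direction --- i.e.\ ensuring no partition with more than $d$ parts sneaks in. Citing \Cref{lem:irrep_summary} handles it in one line, but if a self-contained argument is wanted I would instead invoke the abstract classification \Cref{thm:young-P-projector} ($W \cong \C[\S_n] Y_\mu$ for a unique $\mu \partition n$) and rule out $\mu$ having more than $d$ rows: on $(\C^d)^{\otimes n}$ the operator $Y_\mu$ vanishes identically whenever $\mu$ has a column of length exceeding $d$ (the antisymmetrization over that column kills every standard basis vector, by pigeonhole --- the same mechanism as in \Cref{clm:non-vanish}), whereas any $\S_n$-embedding $f \colon \C[\S_n] Y_\mu \hookrightarrow (\C^d)^{\otimes n}$ would force $Y_\mu f(Y_\mu) = f(Y_\mu Y_\mu) = c_\mu f(Y_\mu) \neq 0$ using $Y_\mu^2 = c_\mu Y_\mu$ with $c_\mu \neq 0$, the injectivity of $f$, and its $\S_n$-linearity --- a contradiction. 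Either route combines with the pairwise non-isomorphism of the $\mathcal P_\lambda$ to close the argument.
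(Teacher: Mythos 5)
Your proof is correct, but it reaches the key isomorphism $\mathcal P_{\lambda} \cong \C[\S_n] Y_{\lambda}$ by a genuinely different mechanism than the paper. You note that the evaluation map is $\S_n$-linear, nonzero on the Specht module (since $Y_{\lambda}\ket{(\lambda)} \neq 0$), and surjective onto $\mathcal P_{\lambda}$, and then kill its kernel using irreducibility of $\C[\S_n]Y_{\lambda}$ from \Cref{thm:young-P-projector} --- a Schur-lemma-style intertwiner argument. The paper instead proves equality of dimensions: it takes the classical basis $\{\pi Y_{\lambda} \mid \pi \in \mathrm{Std}(\lambda)\}$ of the Specht module, puts a total order on standard tableaux, and shows via explicit inner products $\bra{(\lambda)}P_{\pi}^{\dagger}P_{\sigma}\ket{(\lambda)}$ that the images $P_{\pi}Y_{\lambda}\ket{(\lambda)}$ stay linearly independent. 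Your route is shorter and dispenses with that combinatorics; the paper's route buys an explicit standard-tableau basis of $\mathcal P_{\lambda}$ as a byproduct. For the ``moreover'' clause you are in fact more careful than the paper, which leaves the restriction to partitions with at most $d$ parts to \Cref{lem:irrep_summary}; your self-contained exclusion of $\mu$ with more than $d$ rows (the operator $Y_{\mu}$ vanishes on $(\C^d)^{\otimes n}$ by the column-repetition argument underlying \Cref{clm:non-vanish}, contradicting $f(Y_{\mu}^2) = c_{\mu} f(Y_{\mu}) \neq 0$) is sound and avoids any circularity, though note that $Y_{\mu}^2 = c_{\mu}Y_{\mu}$ with $c_{\mu}\neq 0$ is an imported standard fact the paper never states. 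One cosmetic slip: with your definition $T(x) = \rho(x)Y_{\lambda}\ket{(\lambda)}$ one gets $T(xY_{\lambda}) = \rho(x)Y_{\lambda}^{2}\ket{(\lambda)} = c_{\lambda}\,\rho(x)Y_{\lambda}\ket{(\lambda)}$, not $\rho(x)Y_{\lambda}\ket{(\lambda)}$; either drop the extra $Y_{\lambda}$ from the definition of $T$ (this is exactly the paper's map $F$, and then no idempotence fact is needed in this part) or absorb the nonzero constant $c_{\lambda}$ --- the image, the nonvanishing of $T$ on the Specht module, and the rest of your argument are unaffected.
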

\begin{proof}
Note that for $\ket{\psi} \in \mathcal P_{\lambda}$, 
\[
P_\sigma \ket{\psi} \in \Span \{ P_\sigma P_\pi Y_{\lambda} \ket{(\lambda)} \mid \pi \in \S_n\}
= \Span \{  P_{\pi}Y_{\lambda} \ket{(\lambda)} \mid \pi \in \S_n\}
= \mathcal P_{\lambda}
\]
for any $\ket{\psi} \in W(\lambda)$ and $\sigma \in \S_n$.
Hence, $\mathcal P_{\lambda}$ is a representation space of $\S_n$.


To conclude the proof of \Cref{thm:canonic-P}, we will show that $\mathcal P_{\lambda}$ is isomorphic to $\C[\S_n] Y_{\lambda}$.
We first define a linear function $F \colon \C[\S_n] \to (\C^d)^{\otimes n}$ acting on the basis vectors of $\C[\S_n]$ as
\[
F( \pi ) = P_{\pi} \ket{(\lambda)}.
\]
First notice that $F$ is $G$-linear since
$$
F( \sigma \cdot \pi ) = P_{\sigma \cdot \pi} \ket{(\lambda)} = P_{\sigma} P_{\pi} \ket{(\lambda)}
= P_{\sigma} F( \pi ).
$$
To show that $F$ is invertible, we note that 
the image of the set $\C[\S_n]Y_{\lambda}$ under $F$ is exactly $\mathcal P_{\lambda}$ by the definitions.
It then suffices to show $\C[\S_n]Y_{\lambda}$ and 
$\mathcal P_{\lambda}$
have the same dimensions.
Denote $\text{Std}(\lambda) \subseteq \S_n$ as the subset of permutations that map the canonical standard tableau to a standard tableau; that is, $\pi \in \text{Std}(\lambda)$ if the group action (see \Cref{def:row-column-permute}) of $\pi$ on the canonical standard tableau is standard. For example, the permutation that swaps elements $5$ and $6$ is in $\text{Std}((3,2,1))$ since 
\[
\ytableausetup{centertableaux}
(5 \, 6) \cdot 
\begin{ytableau}
    1 & 2 & 3 \\
    4 & 5  \\
    6
\end{ytableau}
= 
\begin{ytableau}
    1 & 2 & 3 \\
    4 & 6  \\
    5
\end{ytableau}.
\]

A classic result says that
$\{ \pi Y_{\lambda} \mid \pi \in \text{Std}(\lambda) \} \subseteq \C[\S_n]$ forms a basis of the Specht Module (see Proposition~15.55 of \cite{fulton2013representation} e.g.). 
We will see the same set of permutations also gives us a basis of $\mathcal P_{\lambda}$. 
In particular, we will show that the set of vectors
$\{ F(\pi Y_{\lambda}) \mid \pi \in \text{Std}(\lambda) \}$ in $(\C^d)^{\otimes n}$ are linearly independent (and consequently a basis of $\mathcal P_{\lambda}$ since the dimension of $\mathcal P_{\lambda}$, as the image of the map $F$, is at most that of the Specht Module $\C[\S_n] Y_{\lambda}$).  

To do so, we first derive an order over permutations: given two permutations $\pi, \sigma \in \S_n$ which map the canonical standard tableau for $\lambda \partition n$ to $T_{\pi}, T_\sigma$, we say $\pi <_{\lambda} \sigma$ if there exists $i$ such that
\begin{itemize}[itemsep = 0pt]
    \item $i$ appears in a higher\footnote{By ``higher'', we mean it is visually above the other row in the Young diagram.} row in $T_\pi$ than $T_\sigma$.
    \item For all $j > i$, $j$ lies in the same row in $T_\pi$ and $T_\sigma$
\end{itemize}

First, we claim that $<_\lambda$ induces a total ordering on the elements of $\text{Std}(\lambda)$. To see that any two permutations in $\text{Std}(\lambda)$ are ordered, consider the last row of their corresponding tableaux. If the set of numbers in each row is not identical, then we can find the largest number $i$ which is in one tableau but not the other. By definition, one tableau is greater than the other in the $<_\lambda$ ordering. If the set of numbers is identical, there is only one way to order the numbers since the tableaux are standard. That is, the tableaux are identical on the last row. Therefore, we can repeat the same process on the second to last row, and so on. We claim that $<_\lambda$ is also transitive, but leave this as an exercise.

Let $\pi_1 <_{\lambda} \cdots  <_{\lambda} \pi_k$ be the permutations in $\text{Std}(\lambda)$ induced by this total ordering.
It is straightforward to check that 
$\pi_i \cdot b <_{\lambda} \pi_i$ for any non-identity $b  \in B_{\lambda}$, and $\pi_i \cdot a <_{\lambda} \pi_{j}$ for any $a \in A_{\lambda}$ if $\pi_i <_{\lambda} \pi_j$ from the definition of the ordering. 
Consequently, $\pi_j$ dominates any term
$
\pi_i \cdot a \cdot b
$, where $i < j$, $a \in A_{\lambda}$, and $b \in B_{\lambda}$.
This shows that $\pi_j Y_{\lambda}$, which has a non-zero coefficient on $\pi_j$, cannot lie in $\Span\{ \pi_i Y_{\lambda} \mid i < j \} \subseteq \Span\{ \pi_i \cdot a \cdot b \mid i < j, a \in A_{\lambda}, b \in B_{\lambda}\} $. 
Linear independence of the set of vectors $\{ \pi Y_{\lambda}  \mid \pi \in \text{Std}(\lambda)\}$ hence follows.

If $\pi >_{\lambda} \sigma$, then there exists $i$ such that
$\pi^{-1}(i)$ lies in a higher row than $\sigma^{-1}(i)$ in the canonical standard tableau.
Recall that $\ket{(\lambda)}$ is defined to have only $i$ in the $i$-th row. It then follows that
$P_{\pi} \ket{ (\lambda) }$ is different from
$P_{\sigma} \ket{(\lambda)}$ 
in the $i$-th tensor factor, which implies that
$$
\bra{(\lambda)} P_{\pi}^\dag P_{\sigma} \ket{(\lambda)} = 0.
$$
It then follows that 
$P_{\pi_k} Y_{\lambda} \ket{(\lambda)}$
contains a non-zero component of a standard basis vector
that is orthogonal to the space spanned by 
$P_{\pi_m} Y_{\lambda} \ket{(\lambda))}$ for all $m <k$.
This then allows us to conclude that the following set of vectors
$$
\left\{    P_\pi \cdot \sum_{b \in B_{\lambda}}  \sgn(b) P_b \cdot \sum_{ a \in A_{\lambda} }  P_a \ket{(\lambda))} \mid \pi \in \text{Std}(\lambda)  \right\}
$$
are linearly independent, implying that $\mathcal P_{\lambda}$ has the same dimension as the Specht Module, and hence the isomorphism between them.
\end{proof}


Combining \ref{thm:young-Q-projector} and \ref{thm:young-P-projector} then yield a proof of \Cref{lem:irrep_summary}: $\mathcal Q_{\lambda}$ from \Cref{def:Q-canonical} and $\mathcal P_{\lambda}$ from \Cref{def:P-canonical} serve as canonical irreps for $\U(d)$ and $\S_n$ respectively.

We will see next that irreps of $\U(d)$ and $\S_n$ that are isomorphic to $\mathcal P_{\lambda}$ and $\mathcal Q_{\lambda}$ respectively have tight connections: the pairs that have non-trivial intersections must share the same ``label'' $\lambda$.
This connection is made possible by the Schur-Weyl Duality.

\begin{lemma}[Irrep correspondence]
\label{lemma:grid}
Let $\ket{\psi} \in Q \cong \mathcal Q_{\lambda}$ and $\ket{\varphi} \in P \cong \mathcal P_{\lambda}$ for $P, Q \subseteq (\C^d)^{\otimes n}$.
Then, 
\begin{align*}
    \mathcal P_{\lambda} &\cong \Span \{ P_{\pi} \ket{\psi} \mid \pi \in \S_n  \} \\
    \mathcal Q_{\lambda} &\cong \Span \{ U^{\otimes n} \ket{\varphi} \mid  U \in \U(d)  \}
\end{align*}
are irreps of $\S_n$ and $\U(d)$, respectively.
\end{lemma}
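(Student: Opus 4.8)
The plan is to reduce both claims to the internal structure of a single isotypic component $\Pi_\lambda$, and then to read them off a tensor decomposition $\Pi_\lambda \cong \mathcal Q_\lambda \otimes \mathcal P_\lambda$ on which $\U(d)$ acts through the first factor and $\S_n$ through the second. First I would note that, since $Q$ is an irrep of $\U(d)$ in $(\C^d)^{\otimes n}$ with $Q \cong \mathcal Q_\lambda$, \Cref{lem:iso-unique} (applied to the decomposition of $(\C^d)^{\otimes n}$ into $\U(d)$-isotypic components supplied by \Cref{thm:schur-duality}) forces $Q \subseteq \Pi_\lambda$; the $\S_n$-version of the same lemma forces $P \subseteq \Pi_\lambda$, since the $\Pi_\lambda$ of \Cref{thm:schur-duality} are simultaneously the $\S_n$-isotypic components. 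As each $P_\pi$ commutes with each $\Un{n}$, the spans $\Span\{P_\pi \ket{\psi} \mid \pi \in \S_n\}$ and $\Span\{\Un{n}\ket{\varphi} \mid U \in \U(d)\}$ also sit inside $\Pi_\lambda$, so it suffices to work there.

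The main step is to build the tensor decomposition of $\Pi_\lambda$. Take the multiplicity space $M := \operatorname{Hom}_{\U(d)}(\mathcal Q_\lambda, \Pi_\lambda)$. Since \Cref{thm:schur-duality} presents $\Pi_\lambda$ as an orthogonal direct sum of copies of $\mathcal Q_\lambda$, the evaluation map $\mathcal Q_\lambda \otimes M \to \Pi_\lambda$, $\ket{w} \otimes f \mapsto f(\ket{w})$, is a $\U(d)$-isomorphism, with $\U(d)$ acting only on the $\mathcal Q_\lambda$ factor. Because every $P_\pi$ commutes with every $\Un{n}$, the rule $f \mapsto P_\pi \circ f$ carries $\U(d)$-linear maps to $\U(d)$-linear maps and hence defines an $\S_n$-action on $M$; one then checks that the evaluation map is $\S_n$-equivariant, with $\S_n$ acting trivially on $\mathcal Q_\lambda$ and through $M$. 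Thus $\Pi_\lambda \cong (\dim \mathcal Q_\lambda) \cdot M$ as $\S_n$-representations. But \Cref{thm:schur-duality} also tells us $\Pi_\lambda$ is $\S_n$-isotypic of type $\mathcal P_\lambda$ and $\dim \Pi_\lambda = \dim \mathcal Q_\lambda \cdot \dim \mathcal P_\lambda$; comparing with the previous line forces $M$ to be $\S_n$-isotypic of type $\mathcal P_\lambda$ of dimension exactly $\dim \mathcal P_\lambda$, i.e.\ $M \cong \mathcal P_\lambda$ as a single copy. This yields a $(\U(d)\times\S_n)$-equivariant identification $\Pi_\lambda \cong \mathcal Q_\lambda \otimes \mathcal P_\lambda$ under which $\Un{n} \leftrightarrow \rho_{\mathcal Q_\lambda}(U) \otimes I$ and $P_\pi \leftrightarrow I \otimes \rho_{\mathcal P_\lambda}(\pi)$.

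Finally I would conclude using this identification. By Schur's lemma (\Cref{lem:schurs_lemma}), the $\U(d)$-linear maps $\mathcal Q_\lambda \to \mathcal Q_\lambda \otimes \mathcal P_\lambda$ are precisely $\ket{w} \mapsto \ket{w} \otimes \ket{v}$ for $\ket{v} \in \mathcal P_\lambda$, so every $\U(d)$-subrepresentation of $\mathcal Q_\lambda \otimes \mathcal P_\lambda$ isomorphic to $\mathcal Q_\lambda$ has the form $\mathcal Q_\lambda \otimes \ket{v}$ with $\ket{v} \neq 0$. Writing $\ket{\psi} = \ket{w_0} \otimes \ket{v_0}$ accordingly (with $\ket{w_0},\ket{v_0}\neq 0$ as $\ket{\psi}\neq 0$), we get $P_\pi \ket{\psi} = \ket{w_0} \otimes \rho_{\mathcal P_\lambda}(\pi)\ket{v_0}$, and irrep connectivity (\Cref{lem:irrep-decomposition}) gives $\Span\{\rho_{\mathcal P_\lambda}(\pi)\ket{v_0} \mid \pi \in \S_n\} = \mathcal P_\lambda$, whence $\Span\{P_\pi \ket{\psi} \mid \pi \in \S_n\} = \ket{w_0} \otimes \mathcal P_\lambda \cong \mathcal P_\lambda$, in particular an irrep of $\S_n$. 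The second claim is symmetric: every $\S_n$-subrepresentation of $\mathcal Q_\lambda \otimes \mathcal P_\lambda$ isomorphic to $\mathcal P_\lambda$ has the form $\ket{w} \otimes \mathcal P_\lambda$, so $\ket{\varphi} = \ket{w_1} \otimes \ket{v_1}$, hence $\Un{n}\ket{\varphi} = (\rho_{\mathcal Q_\lambda}(U)\ket{w_1}) \otimes \ket{v_1}$ and $\Span\{\Un{n}\ket{\varphi} \mid U \in \U(d)\} = \mathcal Q_\lambda \otimes \ket{v_1} \cong \mathcal Q_\lambda$, again by \Cref{lem:irrep-decomposition}.

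The hard part is the middle paragraph: pinning the multiplicity space $M$ down as a single copy of $\mathcal P_\lambda$ rather than a direct sum of several copies. This rests entirely on combining Schur's lemma with the two quantitative inputs from \Cref{thm:schur-duality} — that $\Pi_\lambda$ is simultaneously $\U(d)$-isotypic of type $\mathcal Q_\lambda$ and $\S_n$-isotypic of type $\mathcal P_\lambda$, and that $\dim \Pi_\lambda = \dim \mathcal Q_\lambda \cdot \dim \mathcal P_\lambda$. Once the clean picture $\Pi_\lambda \cong \mathcal Q_\lambda \otimes \mathcal P_\lambda$ is available, both halves of the lemma drop out from Schur's lemma and \Cref{lem:irrep-decomposition}; the only routine verifications I am deferring are that the evaluation map is an isomorphism of $\U(d)$-modules and is $\S_n$-equivariant, both standard facts about isotypic components.
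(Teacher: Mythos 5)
Your argument is sound, but it takes a genuinely different route from the paper's. You first establish the full tensor structure of the isotypic component, $\Pi_\lambda \cong \mathcal Q_\lambda \otimes \mathcal P_\lambda$ with $\Un{n}$ acting on the first factor and $P_\pi$ on the second, by introducing the multiplicity space $\operatorname{Hom}_{\U(d)}(\mathcal Q_\lambda,\Pi_\lambda)$ and pinning it down as a single copy of $\mathcal P_\lambda$ via the dimension count $\dim\Pi_\lambda = \dim\mathcal Q_\lambda\cdot\dim\mathcal P_\lambda$; both claims then fall out because, by Schur's lemma, an embedded copy of $\mathcal Q_\lambda$ must have the form $\mathcal Q_\lambda\otimes\ket{v}$, so $\ket{\psi}$ is a pure tensor and its $\S_n$-orbit spans $\ket{w_0}\otimes\mathcal P_\lambda$ (and symmetrically for $\ket{\varphi}$). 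The paper instead argues directly in $(\C^d)^{\otimes n}$: for $\ket{\psi}$ in the canonical copy $\mathcal Q_\lambda = Y_\lambda(\C^d)^{\otimes n}$ it uses irrep connectivity (\Cref{lem:irrep-decomposition}) to write $\ket{\psi} = \sum_U \alpha_U \Un{n} Y_\lambda \ket{(\lambda)}$, concludes that $\Span\{P_\pi\ket{\psi}\}$ lies in the $\S_n$ $\lambda$-isotypic component and is irreducible by dimension counting (it is an image of $\mathcal P_\lambda$, so a proper subrepresentation would produce an irrep of dimension below $\dim\mathcal P_\lambda$ inside that component), and then extends to an arbitrary copy $Q\cong\mathcal Q_\lambda$ by writing the $\U(d)$-linear intertwiner as $\sum_\sigma \beta_\sigma P_\sigma$ via Schur--Weyl duality (\Cref{lem:schur-weyl}). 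Your route buys a stronger structural statement proved once (the product action on $\mathcal Q_\lambda\otimes\mathcal P_\lambda$), from which the lemma is immediate; the cost is heavier reliance on background the paper does not develop: the canonical isotypic decomposition $V\cong S\otimes\operatorname{Hom}_G(S,V)$, and the facts that the $\U(d)$- and $\S_n$-isotypic components coincide and that $\dim\Pi_\lambda = \dim\mathcal Q_\lambda\,\dim\mathcal P_\lambda$. The dimension formula is also invoked by the paper elsewhere, so that is consistent with its toolkit, but the coincidence of the two isotypic decompositions is something the paper derives by hand (in the proof of \Cref{lem:interpolate-basis}) rather than reads off \Cref{thm:schur-duality}, so you should either cite that derivation or note that it follows from item~\ref{schur_basis_condition} of \Cref{thm:schur-duality} together with \Cref{lem:iso-unique}. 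Your two deferred verifications (the evaluation map is a $\U(d)$-isomorphism and is $\S_n$-equivariant) are indeed routine.
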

\begin{proof}
For convenience, we will use the notation $P(\ket{\psi}) \coloneqq \Span \{ P_{\pi} \ket{\psi} \mid \pi \in \S_n  \}$. Let's focus on the claim that $\ket{\psi} \in Q \cong \mathcal Q_{\lambda}$ implies $ P(\ket{\psi}) \cong \mathcal P_{\lambda}$.
We begin by showing that the claim holds for $\ket{\psi} \in \mathcal Q_{\lambda}$, rather than, say, a subspace $Q$ isomorphic to $\mathcal Q_{\lambda}$. We have that for all $\ket{\psi} \in \mathcal Q_{\lambda}$, $P(\ket{\psi}) $ is an irrep isomorphic to $\mathcal P_{\lambda}$, which is the canonical irrep of type $\lambda$ as shown in \Cref{thm:canonic-P}.
We first note that $U^{\otimes n} \mathcal P_{\lambda}$ 
for any $U \in \U(d)$
is an irrep isomorphic to $\mathcal P_{\lambda}$ since $U^{\otimes n}$ is an invertible $G$-linear map between the two with respect to $\S_n$.
By \Cref{lem:irrep-decomposition}, for all $\ket{\psi} \in \mathcal Q_{\lambda}$, 
there exists a set of unitary matrices $\mathcal U$ and coefficients $\{\alpha_U\}_{U \in \mathcal U}$ such that
$$
\ket{\psi} = \sum_{U \in \mathcal U} \alpha_U U^{\otimes n} Y_{\lambda} \ket{(\lambda)}.
$$
since $Y_{\lambda} \ket{(\lambda)}$ is a non-zero vector in $\mathcal Q_{\lambda}$.
Hence, it can be seen that for any $\ket{\psi} \in \mathcal Q_{\lambda}$, $P(\ket{\psi})$ lies in the span of the set of irreps $\{ 
P( U^{\otimes n} Y_{\lambda} \ket{(\lambda)} ) \mid U \in \mathcal U \}$, 
which are all isomorphic to $\mathcal P_{\lambda}$.
Therefore, $P(\ket{\psi})$ lies in the $\lambda$-isotypic component of $\S_n$.
If $P(\ket{\psi})$ is irreducible (it can be checked $P(\ket{\psi})$ is a representation), it then follows from \Cref{lem:iso-unique} that $P(\ket{\psi}) \cong \mathcal P_{\lambda}$. 

It suffices to show $P(\ket{\psi})$ is irreducible.
Note that $P(\ket{\psi})$ is within the image of the set $\mathcal P_{\lambda}$ under the map $\sum_{U \in \mathcal U} \alpha_U U^{\otimes n}$. Hence, the dimension of $P(\ket{\psi})$ is at most that of $\mathcal P_{\lambda}$.
If $P(\ket{\psi})$ were to have a proper subrepresentation, that will give us an irrep within the $\lambda$ isotypic component that has smaller dimension than $\mathcal P_{\lambda}$, which is impossible.
This concludes the claim that $P(\ket{\psi}) \cong \mathcal P_{\lambda}$

We now extend the claim to any $\ket{\phi} \in Q \cong \mathcal Q_{\lambda}$.
Since $Q \cong \mathcal Q_{\lambda}$, there exists an invertible $\U(d)$-linear map $F \colon  \mathcal Q_\lambda \to Q$ between the two. Furthermore, since $Q, \mathcal Q_\lambda \subseteq (\C^d)^{\otimes n}$, $F$ is just a matrix, i.e., $F \in \GL((\C^d)^{\otimes n})$.
Since $F$ is surjective, there exists $\ket{\psi} \in \mathcal Q_{\lambda}$ such that 
$P(\ket{\phi}) = P( F \ket{\psi} )$.
Since $F$ is $\U(d)$-linear, it commutes with all $U^{\otimes n}$ for $U \in \U(d)$. Therefore, by \nameref{lem:schur-weyl}, $F = \sum_{\sigma \in \S_n} \beta_\sigma P_{\sigma}$.
Hence, 
\[
P(\ket{\phi}) = \Span \lp\{ 
P_{\pi} \sum_{\sigma \in \S_n} \beta_\sigma P_{\sigma} \ket{\psi} \middle \vert \pi \in \S_n \rp\} \subseteq \Span \{ 
P_{\pi} \ket{\psi} \mid \pi \in \S_n
\} \cong \mathcal P_{\lambda}.
\]
Since $\ket{\phi} \neq 0$, $P(\ket{\phi})$ is obviously not a trivial representation. On the other hand, we just showed it is a subrepresentation of some irrep isomorphic to $\mathcal P_{\lambda}$. Therefore, it must be equal to this irrep.

The other part of the claim concerning the irreps of $\U(d)$ can be shown with a similar argument.
In particular, we note that
$\mathcal Q_{\lambda}$ can be equivalently defined as 
$\Span \{ U^{\otimes n} Y_{\lambda} \ket{(\lambda)}
\mid U \in \U(d)\}$ by \Cref{lem:irrep-decomposition}.
This puts irreps of $\U(d)$ in a position that is completely symmetric to those of $\S_n$. The proof follows as all the properties of $\S_n$ used in the above argument are also properties of $\U(d)$.
\end{proof}

\section{Nice Schur basis construction}
\label{app:nice-schur-basis}

This appendix is devoted to proving \Cref{prop:nice-schur-basis}, restated below, which constructs a nice Schur basis (see \Cref{defn:nice_schur_basis}):
\niceschurbasistheorem*
We separate the proof of the construction of the nice Schur basis into two sections, justifying \Cref{func:schur_basis_completion} and \Cref{alg:orthonormal_schur_basis}, respectively.

\subsection{Schur Basis Completion}
\label{app:schur-basis-interpolation}
We first show that any set of vectors constructed via the process of \Cref{func:schur_basis_completion} will be a Schur basis.
\interpolatebasis*
\begin{proof}
We start by recalling the definition of the canonical irrep of $\U(d)$ of type $\lambda \partition_d n$:
\[
\mathcal Q_{\lambda} \coloneqq Y_{\lambda} (\C^d)^{\otimes n}
= \Span \{U^{\otimes n} Y_{\lambda} \ket{(\lambda)} \mid U \in \U(d)\}
\]
where the second equality follows from \Cref{lem:irrep-decomposition}.
By \Cref{lem:complete_reducibility} and \Cref{lem:iso-unique}
, $(\C^d)^{\otimes n}$ uniquely decomposes into $\bigoplus_{ \lambda \partition_d n}
\Pi_{\lambda}
$, where $\Pi_{\lambda}$ is the isotypic component of $\U(d)$ of type $\mathcal Q_{\lambda}$.
We will show it is exactly of the form
\begin{align} \label{eq:isotypic-def}
\Pi_{\lambda} = \Span \{
P_{\pi} U^{\otimes n} Y_{\lambda} \ket{(\lambda)}
\mid U \in \U(d), \pi \in \S_n
\}.
\end{align}
We start with the fact that $\mathcal Q_{\lambda} \subseteq \Pi_{\lambda}$.
Moreover, 
$P_{\pi} \mathcal Q_{\lambda}$ gives an irrep that is isomorphic to $\mathcal Q_{\lambda}$
since $P_{\pi}$ is invertible and $\U(d)$-linear.
Therefore, we have $P_{\pi} \mathcal Q_{\lambda} \subseteq \Pi_{\lambda}$ by \Cref{lem:iso-unique}.
It follows that
 $$
 \Span \{
P_{\pi} U^{\otimes n} Y_{\lambda} \ket{(\lambda)}
\mid U \in \U(d), \pi \in \S_n
\} \subseteq \Pi_{\lambda}.
$$
On the other hand, for any irrep $Q \subseteq (\C^d)^{\otimes n}$ such that
$Q \cong \mathcal Q_{\lambda}$, there exists some $\U(d)$-linear map $F$ between $Q$ and $\mathcal Q_{\lambda}$.
By the definition of $\U(d)$-linearity, $F$ commutes with all $U^{\otimes n}$.
Hence, by \nameref{lem:schur-weyl}, 
we know $F$ must be of the form
$\sum_{ \pi \in \S_n } \alpha_{\pi} P_{\pi}$.
In other words, we have
$Q = \sum_{ \pi \in \S_n } \alpha_{\pi} P_{\pi} \mathcal Q_{\lambda}$.
Hence, we also have
$$
  \Pi_{\lambda} \subseteq \Span \{
P_{\pi} U^{\otimes n} Y_{\lambda} \ket{(\lambda)}
\mid U \in \U(d), \pi \in \S_n
\}.
$$
Equation~\eqref{eq:isotypic-def} therefore follows.

The argument will be exactly the same if we start with $\Pi_{\lambda}$ being the isotypic component of $\S_n$ and we will arrive at
$$
\Pi_{\lambda} = \Span \{
U^{\otimes n} P_{\pi}  Y_{\lambda} \ket{(\lambda)}
\mid U \in \U(d), \pi \in \S_n
\}.
$$
The two defining equations are obviously equivalent and hence we show that the isotypic subspaces of $\U(d)$ and $\S_n$ coincide.




Recall that for each $\lambda \partition_d n$, we are given as input a basis $\{\ket{(\lambda, i,0)}\}_i$ for $\mathcal Q_\lambda$. From the basis vector $\ket{(\lambda, i,0)}$, \Cref{func:schur_basis_completion} creates a basis for $\mathcal P = \Span\{  P_{\pi} \ket{(\lambda, 0, 0)}  \mid \pi \in \S_n\}$:
\[
\ket{(\lambda, 0, j)} \coloneqq \sum_{\pi \in \S_n} \alpha_{ \pi}^{(\lambda, j)} P_{\pi} \ket{(\lambda, 0, 0)}
\]
with coefficients $\alpha_{\lambda, \pi}^{(j)} \in \C$ for all $j \in \{0, \ldots, \dim \mathcal P_\lambda - 1\}$. Notice that \Cref{lemma:grid} implies that $\mathcal P \cong \mathcal P_\lambda$.

Let us now define the subspaces $P_{\lambda, i}, Q_{\lambda, j}$ as
\begin{align*}
P_{\lambda, i} &\coloneqq \Span \{ P_{\pi} \; \ket{(\lambda, i, 0)} \mid \pi \in \S_n\} , \\ 
Q_{\lambda, j} &\coloneqq \Span \{ U^{\otimes n} \; \ket{(\lambda, 0, j)} \mid U \in \U(d) \}.
\end{align*}
Once again, by \Cref{lemma:grid} we have $P_{\lambda, i} \cong \mathcal P_{\lambda}$ and $Q_{\lambda, j} \cong \mathcal Q_{\lambda}$. Moreover, combining the fact that the isotypic components of $\S_n$ and $\U(d)$ collide with the uniqueness of the isotypic decomposition, we get that both $Q_{\lambda, j}$ and $P_{\lambda, i}$ are in fact subspaces of $\Pi_{\lambda}$.


To show the lemma, we make the following claims:
\begin{enumerate}
    \item \label{claim:i1} The $P_{\lambda, i}$ span the space of $\Pi_{\lambda}$ (the same for $Q_{\lambda, j}$), i.e., $$\Pi_{\lambda} = \Span \bigg \{ \bigcup_{i=0}^{\dim \mathcal Q_{\lambda}-1} P_{\lambda, i} \bigg \} = \Span \bigg \{\bigcup_{j=0}^{\dim \mathcal P_{\lambda} -1} Q_{\lambda, j}\bigg\}$$
    \item \label{claim:i2} The subspaces $P_{\lambda, i}$ are linearly independent. That is,  $\dim \Span\{ \cup_i P_{\lambda, i} \} = \sum_i \dim P_{\lambda, i}$. The analogous statement holds for the $Q_{\lambda, j}$.
    \item \label{claim:i3} $\ket{(\lambda, i, j)} \in P_{\lambda, i} \cap Q_{\lambda, j}$.
\end{enumerate}

If the above holds, 
$\ket{(\lambda, i, j)}$ must be linearly independent from all $\ket{(\lambda, i', j')}$ for all $j'$ and $i' < i$ since 
$\ket{(\lambda, i, j)} \in P_{\lambda, i}$, $\ket{(\lambda, i', j')} \in P_{\lambda, i'}$, and claim (2) says that $P_{\lambda, i}$ is linearly independent from all $P_{\lambda, i'}$ for $i' < i$.
Similar argument gives $\ket{(\lambda, i, j)}$ must be linearly independent from all $\ket{(\lambda, i', j')}$ for all $i'$ and $j' < j$.
Hence, it follows that the set of vectors constructed are all linearly independent.
We then immediately have $\{\ket{(\lambda, i, j)} \}_{i=0}^{\dim \mathcal Q_{\lambda}-1}$ spans $Q_{\lambda, j}$, and 
$\{\ket{(\lambda, i, j)} \}_{j=0}^{\dim \mathcal P_{\lambda}-1}$ spans $P_{\lambda, i}$ since $Q_{\lambda, j}$ and $P_{\lambda, i}$ must have the same dimension as $\mathcal Q_{\lambda}$ and $\mathcal P_{\lambda}$ respectively. 
Together, this implies that the vectors $\ket{(\lambda, i, j)}$ form a Schur basis and conclude the proof of \Cref{lem:interpolate-basis}. We now proceed to show the three claims.

\paragraph{Proof of claim~\ref{claim:i1}} 
Notice that we could define $\Pi_\lambda$ in Equation~\eqref{eq:isotypic-def} starting from any state in $\mathcal Q_\lambda$ (not just $Y_\lambda \ket{(\lambda)}$) using the exact same analysis. By assumption, $\ket{(\lambda, 0, 0)}$ is a basis vector for $\mathcal Q_\lambda$, so we can write
$$
\Pi_{\lambda} = \Span \{ P_{\pi} U^{\otimes n} \ket{(\lambda, 0, 0)} 
\mid \pi \in \S_n, U \in \U(d)
\}.
$$
Therefore, we can write any vector $\ket{\psi} \in \Pi_{\lambda}$ as
\begin{align*}
   \ket{\psi} &= 
   \sum_{\pi} P_{\pi}
   \lp(\sum_{U} \beta_{U, \pi}
   U^{\otimes n} \ket{(\lambda, 0, 0)} \rp).
\end{align*}
Since $\ket{(\lambda, 0, 0)} \in \mathcal Q_\lambda$ and $\mathcal Q_\lambda$ is an irrep of $\U(d)$, we can expand any state $U^{\otimes n}\ket{(\lambda, 0, 0)}$ in the basis of $\mathcal Q_\lambda$. Hence, the inner summation above can be expanded as 
\[
\sum_{U} \beta_{U, \pi}
   U^{\otimes n} \ket{(\lambda, 0, 0)}
= \sum_{i} \gamma_{\pi}^{(i)} \ket{(\lambda, i, 0)}.
\]
for coefficients $\gamma_{\pi}^{(i)} \in \C$. Together, this gives
\begin{align*}
\ket{\psi} &= \sum_{\pi} P_{\pi}
\sum_{i} \gamma_{\pi}^{(i)} \ket{(\lambda, i, 0)}
= \sum_{i, \pi} \gamma_{\pi}^{(i)}
P_{\pi} \ket{(\lambda, i, 0)}.
\end{align*}
Note that for each of the summand we have
$P_{\pi} \ket{(\lambda, i, 0)} \in 
P_{\lambda, i}
$ by definition.
Therefore, it follows that 
$\Pi_{\lambda} = \Span \bigg \{ \bigcup_{i=0}^{\dim \mathcal Q_{\lambda} - 1} P_{\lambda, i} \bigg \}$. Similar argument shows $\Pi_{\lambda} = \Span \bigg \{\bigcup_{j=0}^{\dim \mathcal P_{\lambda}-1} Q_{\lambda, j}\bigg\}$.


\paragraph{Proof of claim~\ref{claim:i2}} 
Here, we invoke the well-known fact that $\lambda$-isotypic component of $(\C^d)^{\otimes n}$ has dimension $\dim \mathcal Q_\lambda \times \dim \mathcal P_\lambda$ (see, e.g., \cite{haah2016sample}). By dimension counting, we know the only way for $\{P_{\lambda, i}\}_{i=0}^{\dim \mathcal Q_{\lambda}-1}$ to span the subspace
$\Pi_{\lambda}$ of dimension at least $\dim \mathcal P_{\lambda} \times \dim \mathcal Q_{\lambda}$
is that they are all linearly independent.
The same argument gives $Q_{\lambda, j}$ must also be linearly independent.

\paragraph{Proof of claim~\ref{claim:i3}} Recall that we define the basis vector
$$
\ket{(\lambda, i, j)}
\coloneqq \sum_{ \pi \in \S_n } 
\alpha_{\pi}^{(\lambda, j)}
P_{\pi} \ket{(\lambda, i, 0)}.
$$
Hence, $\ket{(\lambda, i, j)} \in P_{\lambda, i}$ follows by definition.
We claim that we also have
$ \ket{(\lambda, i, j)} \in Q_{\lambda, j} $.
Again using \Cref{lem:irrep-decomposition}, we have that
\[
\ket{(\lambda, i, 0)}
= \sum_{ U }  \zeta_U^{(i)} U^{\otimes n} \ket{(\lambda, 0, 0)}.
\]
for coefficients $\zeta_U^{(i)} \in \C$. This further implies that
$$
\ket{(\lambda, i, j)} = 
\sum_{ U }  \zeta_U^{(i)} U^{\otimes n}
\sum_{ \pi \in \S_n } 
\alpha_{\pi}^{(\lambda, j)}
P_{\pi}  \ket{(\lambda, 0, 0)}
= \sum_{ U }  \zeta_U^{(i)} U^{\otimes n}  \ket{(\lambda, 0, j)}\, ,
$$
which is in $Q_{\lambda, j}$ by definition.
\end{proof}

\subsection{Orthonormality of Nice Schur Basis}
\label{app:orthogonality}

We start by showing the vanishing condition of the Young symmetrizer:
\nullspaceyoungsymmetrizer*
\begin{proof}
Let $\ket{e}$ be a standard basis vector and $T_{\lambda}(\ket{e})$ be its corresponding tableau (see \Cref{def:corresponding_tableau}).
We start with a simple observation: if some number $x \in [d]$ appears more than once in a column of $T_{\lambda}(\ket{e})$, then the signed sum over the column permutation group $\sum_{b \in B_{\lambda}} \sgn(b) P_b \ket{e}$ is $0$, and consequentially, $Y_\lambda \ket{e} = 0$. To finish the proof, we claim that when $\mathcal W(\ket{e})$ is not majorized by $\lambda$, there must be some column of that has a repeated element.

To prove the claim, first notice that we can permute the columns of $T_{\lambda}(\ket{e})$ without affecting the number of elements in each column. Therefore, permute each column according to the sorted weight vector $\mathcal W(\ket{e})_{(i)}^{\downarrow}$ so that the elements $x \in [d]$ with higher frequencies in $\ket{e}$ appear higher in the column. 

Since $\mathcal W(\ket{e})$ is not majorized by $\lambda$, there exists $m$ such that $\sum_{i=1}^m \mathcal W(\ket{e})_{(i)}^{\downarrow} > \sum_{i=1}^m \lambda_i$. In other words, the sum of the first $m$ elements with the highest frequencies is greater than the number of boxes in the first $m$ rows of the tableau. Therefore, some element $x \in [d]$ which is in the first $m$ elements with highest frequency must appear in row $m+1$ or greater. However, since the columns are sorted, this implies that there are at least $m$ elements above it with higher (or equal) frequency. This implies some element must have been repeated.
\end{proof}

Notice that in \Cref{alg:orthonormal_schur_basis}, we iterate through weight vectors in reverse lexicographical order to construct the nice Schur basis. Importantly, the first weight vector $w$ whose weight space $V(w)$ has nontrivial intersection with the image of the Young symmetrizer (i.e., $Y_\lambda V(w) \neq 0$) will also be our first basis vector in our nice Schur basis. Using \Cref{clm:non-vanish}, we can explicitly construct this weight vector:
\onedminweightspace*
\begin{proof}
By \Cref{clm:non-vanish}, if $\dim Y_{\lambda} V(w) > 0$, then $w$ is majorized by $\lambda$.
Besides, we also need the sum of the entries in $w$ to be $n$.
Among such weight vectors, it is not hard to see that the $w$ described in the statement of the lemma is the lexicographically smallest.

Next we show the space is $1$-dimensional. It is easy to see that the intersection is not the null space since one can verify that the special basis vector $\ket{(\lambda)}$ (see \Cref{def:cannonical-basis}), whose corresponding tableau has the $i$-th row filled with the number $i$ is not in the kernel of $Y_{\lambda}$ and it has weight exactly $w$. 
Hence, it suffices to show $V(w) \cap \Img(Y_{\lambda})$ is at most $1$-dimensional. In fact, we will show that
\begin{align}
\label{eq:column-stablizer-dimension}
\text{dim} \lp( \lp(\sum_{b \in B_{\lambda}} \sgn(b) P_b\rp) V(w)
\rp) = 1. 
\end{align}
Given that, it follows that 
$Y_{\lambda} V(w) = 
\sum_{ a \in A_{\lambda} } P_{a} \lp( \Img\lp(\sum_{b \in B_{\lambda}} \sgn(b) P_b\rp) V(w)
\rp)$ is at most $1$-dimensional (since the linear operator 
{$\sum_{ a \in A_{\lambda} } P_a $} can only further decrease the dimension). 

It remains to show \Cref{eq:column-stablizer-dimension}. We will show below if two standard basis vector $\ket{u}, \ket{v} \in V(w)$ satisfies $\sum_{b \in B_{\lambda}} \sgn(b) P_b \ket{u} \neq 0$ and $\sum_{b \in B_{\lambda}} \sgn(b) P_b \ket{v} \neq 0$, then there exists $\pi \in B_\lambda$ such that $$
\sum_{b \in B_{\lambda}} \sgn(b) P_b \ket{v}
    =
    \sgn(\pi)
    \sum_{b \in B_{\lambda}} \sgn(b) P_b \ket{u}.$$
This will be enough for us because if we pick an arbitrary vector $\ket{l} \in V(w)$ and suppose $\sum_{b \in B_{\lambda}} \sgn(b) P_b \ket{v} \neq 0$. Then, we can write it as a linear combination of standard basis vectors in $V(w)$ i.e. $\ket{l} = \sum_{e \in V(w)} \alpha_e \ket{e}$, and therefore 
$$
\sum_{b \in B_{\lambda}} \sgn(b) P_b \ket{l} = \sum_{e \in V(w)} \alpha_e \lp(\sum_{b \in B_{\lambda}} \sgn(b) P_b  \ket{e}\rp)
$$ where each of the inner terms are the same vector upto a sign change.

Now, we prove above claim for the standard basis vector $\ket{v}$. Let $\ket{v}$ be a standard basis vector in $V(w)$. Furthermore, suppose $\sum_{b \in B_{\lambda}} \sgn(b) P_b \ket{v} \neq 0$.
Then the first 
$ \lambda_k $ columns in the corresponding tableau (these are the columns with $k$ cells) have to each contain the symbols $\{1, 2, \cdots k\}$ (in any order) since this is the only way for the symbols to not repeat.
At this point, we have already run out of all the symbols $k$.
Hence, the next $\lambda_{k-1} - \lambda_k$ columns (these are the columns with $(k-1)$ cells) have to each contain the symbols $\{1, 2, \cdots k-1\}$ (in any order) with a similar argument.
One can repeat the argument to show that the $\lambda_{i+1} + 1$ to $\lambda_{i}$ columns must each contain the symbols $\{1, \cdots, i\}$.
Hence, we can conclude that
for two vectors $\ket{v}, \ket{u}$, if $
\sum_{b \in B_{\lambda}} \sgn(b) P_b \ket{v} \neq 0$ and
$\sum_{b \in B_{\lambda}} \sgn(b) P_b \ket{u} \neq 0$, then each pair of columns in the corresponding tableaux of $\ket{u}, \ket{v}$ must be the same up to column-wise permutation. 
Hence, there exists $\pi \in B_{\lambda}$ such that
$ \ket{v} = P_{\pi} \ket{u}$.
Thus, we have
\begin{align*}
    \sum_{b \in B_{\lambda}} \sgn(b) P_b \ket{v}
    =  
    \sum_{b \in B_{\lambda}} \sgn(b) P_b P_{\pi} \ket{u}
    =
    \sgn(\pi) \; \sum_{b \in B_{\lambda}} \sgn(b \cdot \pi ) P_{b \cdot \pi}  \ket{u}
    = 
    \sgn(\pi)
    \sum_{b \in B_{\lambda}} \sgn(b) P_b \ket{u}.
\end{align*}
One can see the two resulting vectors are the same up to a sign change.
This shows the intersection space is $1$-dimensional, and concludes the proof of \Cref{lem:1d-min-weight-space}.
\end{proof}

As an immediate corollary, we have that vectors of the form $\ket{(\lambda, 0, k)}$ and $\ket{(\lambda, i, \ell)}$ for $i > 0$ must live in different weight subspace. Their orthogonality hence follows.
\basecaseorthogonality*
\begin{proof}
This follows from the definition:
we define $\ket{(\lambda, 0, 0)} \coloneqq \ket{(\lambda)}$, i.e., the standard basis vector with lexicographically largest weight vector in $\Img(Y_{\lambda})$.
{Since this space is $1$-dimensional by \Cref{lem:1d-min-weight-space} and $\Img(Y_{\lambda})$ is spanned by linearly independent vectors $\{\ket{(\lambda, i, 0)}\}_{i\geq 0}$ by construction in \Cref{alg:orthonormal_schur_basis}, there is only one vector among $\{\ket{(\lambda, i, 0)}\}_{i\geq 0}$ that can lies in such $V(w)$. Since, $\ket{(\lambda, 0, 0)}$ lies in $V(w)$, weight of $\ket{(\lambda, i, 0)}$ for $i\neq 0$ must be different from that of $\ket{(\lambda, 0, 0)}$. Moreover, since $\ket{(\lambda, i, l)} = P_\pi \ket{(\lambda, i, 0)}$ for some permutation $\pi$, the weight of $\ket{(\lambda, i, l)}$ for $i\neq 0, l\geq 0$ must be different from that of $\ket{(\lambda,0 , 0)}$. The orthogonality immediately follows.}
\end{proof}


We are now ready to show orthogonality of the full basis.
Fix a partition $\lambda \partition [n]$.
To show the orthogonality of the constructed basis vectors, 
we will introduce the following function {$\gamma_{\lambda}: \{0,\ldots,\dim Q_{\lambda} -1\} \times \U(d) 
 \times \{0,\ldots,\dim Q_{\lambda} -1\} \mapsto \C$ }defined as
$$
\gamma_{\lambda}(j, U, i) = \bra{ (\lambda, j, 0) } U^{\otimes n} \ket{ (\lambda, i, 0) }.
$$
In plain language, fixing $U$ and $i$ and varying $j$, 
the function gives us the coefficients of the vector
$U^{\otimes n} \ket{(\lambda, i, 0)}$ represented under the set of basis we have chosen for $Q_{\lambda}$. 
In particular, we have
\begin{equation}
\label{eq:gamma-expansion}
U^{\otimes n} \ket{(\lambda, i, 0)}    
= \sum_{ q=0 }^{ \dim Q_{\lambda}-1 }
\gamma_{\lambda}(q, U, i) \ket{ (\lambda, q, 0) }.
\end{equation}
\completeschurbasisorthogonality*
\begin{proof}
We can expand the definition of $\ket{(\lambda, i, k)}, \ket{(\lambda, j, \ell)}$ (see \Cref{func:schur_basis_completion}) and get
\begin{align}
\label{eq:permute-expansion}
\braket{(\lambda, i, k)}{(\lambda, j, \ell)}
= \sum_{ \pi, \sigma \in \S_n} 
\overline{\alpha_{\lambda, \pi}^{(k)}}
\alpha_{\lambda, \sigma}^{(\ell)} 
\bra{ (\lambda, i, 0) } P_{\pi}^{\dagger} P_{\sigma}
\ket{(\lambda, j, 0)}.
\end{align}
Note that $\ket{(\lambda, j, 0)}$, $\ket{(\lambda, 0, 0)}$
both lie in $Q_{\lambda}$. 
Hence, using \Cref{lem:irrep-decomposition}, there exists a finite set of unitary matrices $\mathcal U_{\lambda, j} \subset \U(d)$ and a set of coefficients $\beta_{U,j}$ indexed by them such that
$$
\ket{(\lambda, j, 0)} = \sum_{ U \in \mathcal U_{\lambda, j} } \beta_{U,j} \; U^{\otimes n} \; \ket{(\lambda, 0, 0)}.
$$
Substituting this into \Cref{eq:permute-expansion} gives that
\begin{align}
\braket{(\lambda, i, k)}{(\lambda, j, \ell)}
&= 
\sum_{U \in \mathcal U_{\lambda, j}} \beta_{U,j}
\sum_{ \pi, \sigma } 
\overline{\alpha_{\lambda, \pi}^{(k)}}
\alpha_{\lambda, \sigma}^{(\ell)}
\bra{ (\lambda, i, 0) }
P_{\pi}^{\dagger} P_{\sigma}
U^{\otimes n} \ket{(\lambda, 0, 0)} \nonumber \\
&= 
\sum_{U \in \mathcal U_{\lambda, j}} \beta_{U,j}
\sum_{ \pi, \sigma } 
\overline{\alpha_{\lambda, \pi}^{(k)}}
\alpha_{\lambda, \sigma}^{(\ell)} 
 \lp( (U^{\otimes n})^{\dagger} \ket{ (\lambda, i, 0) } \rp)^{\dagger}
P_{\pi}^{\dagger} P_{\sigma}
\ket{(\lambda, 0, 0)} \, ,
\label{eq:second-expansion}
\end{align}
where in the second equality we use the fact that $U^{\otimes n}$ commute with any permutation operator.
Note that $(U^{\otimes n})^{\dagger} \ket{ (\lambda, i, 0)}$ is still within the subspace $Q_{\lambda}$. 
Hence, using the definition of $\gamma_{\lambda}$ function (specifically \Cref{eq:gamma-expansion}), we can write
$$
(U^{\otimes n})^{\dagger} \ket{ (\lambda, i, 0)}
= \sum_{q=0}^{ \dim Q_{\lambda} - 1 } \gamma_{\lambda}(q, U^{\dagger}, i) \; \ket{(\lambda, q, 0)}.
$$
Substituting this into \Cref{eq:second-expansion} gives that
\begin{align*}
\braket{(\lambda, i, k)}{(\lambda, j, \ell)}
&= 
\sum_{U \in \mathcal U_{\lambda, j}  }
\sum_{q=0}^{\dim Q_{\lambda} - 1}
\beta_{U,j}  
\overline{\gamma_{\lambda}(q, U^{\dagger}, i)}
\sum_{ \pi, \sigma } 
\overline{\alpha_{\lambda, \pi}^{(k)}}
\alpha_{\lambda, \sigma}^{(\ell)} 
 \bra{\lambda, q, 0}
P_{\pi}^{\dagger} P_{\sigma}
\ket{(\lambda, 0, 0)} \\
&= \sum_{U \in \mathcal U_{\lambda, j},q} 
\beta_{U,j}
\overline{\gamma_{\lambda}(q, U^{\dagger}, i)}
\braket{(\lambda, q, k)}{ (\lambda, 0, \ell) }.
\end{align*} 
When $k \neq l$, we claim each term in the summation must be $0$.
If $q = 0$, we know $\ket{(\lambda, 0, k)}$ and $\ket{(\lambda, 0, \ell)}$ must be orthogonal for $k \neq \ell$ by construction {in \Cref{func:schur_basis_completion}}.
If $q \neq 0$, the orthogonality follows from \Cref{cor:base-orthogonality}.

When $k = \ell$ and $i \neq j$, we need a more careful analysis. 
When $q \neq 0$, we still have $\ket{\lambda, q, k}$, $\ket{\lambda, 0, k}$ are orthogonal
by \Cref{cor:base-orthogonality}. 
Therefore, the only remaining terms in the summation are those with $q = 0$.
Observe that the vectors $\ket{\lambda, q , k} $ and $\ket{\lambda, 0 , \ell} $ are the same vector when $q = 0$ and $k = \ell$.
Hence, we can simplify the expression as
$$
\braket{(\lambda, i, k)}{(\lambda, j, \ell)} = \braket{(\lambda, i, k)}{(\lambda, j, k)} = 
\sum_{U \in \mathcal U_{\lambda, j}} \beta_{U,j}
\overline{\gamma_{\lambda}(0, U^{\dagger}, i)}.
$$
We claim this is exactly the inner product of $\ket{\lambda,i ,0}$ and $\ket{\lambda, j, 0}$.
Observe that
\begin{align*}
\braket{\lambda,i ,0}{\lambda, j, 0}
= \sum_{U \in \mathcal U_{\lambda, j}} \beta_{U,j} \bra{\lambda, i, 0} U^{\otimes n} \ket{\lambda, 0, 0}
= \sum_{U \in \mathcal U_{\lambda, j}} 
\sum_{q = 0}^{ \dim \mathcal Q_{\lambda} - 1 }
\beta_{U,j}  
\overline{\gamma_{\lambda}(q, U^{\dagger}, i)} \braket{\lambda, q, 0}{\lambda, 0, 0}.
\end{align*} 
Still, $\braket{\lambda, q, 0}{\lambda, 0, 0}$ is $1$ if $q = 0$ and $0$ otherwise. 
Hence, the summation can also be simplified, which yields the identity
$$
\braket{(\lambda, i, 0)}{(\lambda,j, 0)}
=
\sum_{U \in \mathcal U_{\lambda, j}} \beta_{U,j}
\overline{\gamma_{\lambda}(0, U^{\dagger}, i)}.
$$
This expression is exactly $0$ since $\braket{(\lambda, i, 0)}{(\lambda,j, 0)} = 0$ for $i \neq j$ by construction {in \Cref{alg:orthonormal_schur_basis}}. 
Therefore, we conclude that $\braket{(\lambda, i, k)}{(\lambda, j, \ell)} = 0$ if $i \neq j$ and $k = \ell$ as well. Combining the two cases allows us to conclude the proof of \Cref{lem:basis-orthogonality}.

Lastly, when $k = \ell$ and $i = j$, we argue that the inner product equals $1$.
In particular, we can reuse the above computation and conclude that
$$
\braket{(\lambda, i, k)}{(\lambda, i, k)} =
\sum_{U \in \mathcal U_{\lambda, i}} \beta_{U,i}
\overline{\gamma_{\lambda}(0, U^{\dagger}, i)}
= 
\braket{(\lambda, i, 0)}{(\lambda,i, 0)} \, ,
$$
and the right hand side is $1$ by definition since $\{  \ket{(\lambda, i, 0)} \}_{i}$ are orthonormal basis 
found through the Gram Schmidt process.
\end{proof}

\begin{proof}[Proof of \Cref{prop:nice-schur-basis}]
It follows from \Cref{lem:interpolate-basis}  that the basis are Schur basis.
By \Cref{lem:basis-orthogonality}, the vectors within each isotypic component $\Pi_{\lambda}$ are pair-wise orthogonal. 
For pairs of vectors that live in different $\Pi_{\lambda}$, their orthogonality follow from the orthogonality of the isotypic components, which is implied by \Cref{lem:iso-unique}.
\end{proof}

\section{Omitted proofs for Variance Computation}
In this subsection, we will working with an $(n+2)$-qudit quantum system. 
We will refer to the first and second qudits as the first and second output qudits respectively, and again associate a Young diagram with partition $\lambda \partition_d n$ to the remaining qudits. We use $(i,j)$ to index the qudit corresponding to  the $j$-th cell in the $i$-th row of the Young diagram. We define swap permutations for our $n+2$ qudit system.

\begin{claim}
\label{prop:tensor2}
Let $\ket{a}$ be a standard basis vector on $n$ qudits. Then, for $j \neq j'$ and $p \in [\lambda_j+1], p' \in [\lambda_{j'} + 1]$,
\begin{align*}
&
\tr_{-(1,2)} 
\lp( 
\swapt((j, p), (j', p')) \cdot \bigg( I \otimes I \otimes (\Un{n}\ket{a}\bra{a}\Und{n})\bigg)
\rp)  
\\
& = 
\begin{cases}
    \bigg( 
U \ket{a_{j,p}} \bra{a_{j,p}} U^\dagger\bigg)
\otimes 
\bigg( 
U \ket{a_{j',p'}} \bra{a_{j',p'}} U^\dagger \bigg) & p\neq \lambda_j + 1,~p' \neq \lambda_{j'}+1 \\
I 
\otimes 
\bigg( 
U \ket{a_{j',p'}} \bra{a_{j',p'}} U^\dagger \bigg) & p = \lambda_j+1,~p' \neq \lambda_{j'}+1\\
\bigg( 
U \ket{a_{j,p}} \bra{a_{j,p}} U^\dagger\bigg)
\otimes 
I & p \neq \lambda_j+1,~p' = \lambda_{j'} + 1\\
I \otimes I & p = \lambda_j + 1,~p' = \lambda_{j'} + 1
\end{cases}
\, .
\end{align*}
\end{claim}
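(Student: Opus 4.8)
The plan is to reduce everything to the product structure of the objects involved. Since $|a\rangle$ is a standard basis vector, $\Un{n}|a\rangle\bra{a}\Und{n} = \bigotimes_{(i,q)} U|a_{i,q}\rangle\bra{a_{i,q}}U^\dagger$ is a tensor product of single-qudit pure density matrices, each of trace $1$; write $\rho_{(i,q)} \defeq U|a_{i,q}\rangle\bra{a_{i,q}}U^\dagger$. First I would record the elementary two-qudit identities $\tr_B\lp(\swapc_{AB}(I_A \otimes Y_B)\rp) = Y$ and $\tr_B(I_A \otimes Y_B) = \tr(Y)\,I_A$, which are immediate from expanding the swap operator in the standard basis.

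The key structural observation is that $j \neq j'$: this is exactly what makes $\swapt((j,p),(j',p'))$ a product of two transpositions with disjoint support — the transposition of the first output qudit with $(j,p)$ (the identity if $p = \lambda_j+1$) and the transposition of the second output qudit with $(j',p')$ (the identity if $p' = \lambda_{j'}+1$). Regroup the $(n+2)$ qudits into block $A = \{\text{output }1\} \cup \{(j,q)\}_{q}$, block $B = \{\text{output }2\} \cup \{(j',q)\}_{q}$, and block $C$ = all remaining qudits. Under this regrouping $\swapt((j,p),(j',p')) = P^A \otimes P^B \otimes I^C$ and $I\otimes I\otimes(\Un{n}|a\rangle\bra{a}\Und{n}) = \sigma^A \otimes \sigma^B \otimes \sigma^C$ with $\sigma^A = I\otimes\bigotimes_q\rho_{(j,q)}$, etc. Hence the partial trace factors, and the block-$C$ factor contributes $\tr(\sigma^C) = \prod \tr(\rho_{(i,q)}) = 1$.

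It then remains to evaluate $\tr_{\{(j,q)\}_q}\lp(P^A\sigma^A\rp)$ (and the analogous block-$B$ expression). If $p \neq \lambda_j+1$, tracing out $(j,q)$ for $q\neq p$ kills those factors since each $\rho_{(j,q)}$ has trace $1$, and the first identity above with $Y = \rho_{(j,p)}$ leaves $U|a_{j,p}\rangle\bra{a_{j,p}}U^\dagger$ on the first output qudit; if $p = \lambda_j+1$ then $P^A = I$ and the whole block traces to $\tr\lp(\bigotimes_q \rho_{(j,q)}\rp) I = I$. Combining the two blocks yields precisely the four cases in the statement. There is no genuine obstacle here — the content is entirely the disjointness coming from $j\neq j'$ plus bookkeeping — so the only care needed is in handling the convention $\swapt((j,\lambda_j+1)) = I$ uniformly (one may simply treat the first output qudit as the ``$(j,\lambda_j+1)$-th cell'' and the second as the ``$(j',\lambda_{j'}+1)$-th cell'', collapsing all four cases into one computation).
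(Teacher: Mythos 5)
Your proof is correct. It arrives at the same result by what is essentially the paper's own elementary computation, only packaged differently: the paper inserts $I = \sum_k U\ket{k}\bra{k}U^\dagger$ on each output qudit, pulls $U^{\otimes 2}$ out of the partial trace by unitary invariance, and then argues term by term which pairs $(k,\ell)$ survive the trace (non-vanishing only if $k = a_{j,p}$, resp.\ $\ell = a_{j',p'}$), whereas you exploit the block factorization forced by $j \neq j'$ (output qudit $1$ with row $j$, output qudit $2$ with row $j'$, and the untouched remainder) together with the identities $\tr_B\lp(\swapc_{AB}(I_A\otimes Y_B)\rp) = Y$ and $\tr_B(I_A\otimes Y_B) = \tr(Y)\,I_A$. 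Your packaging handles the convention cases $p = \lambda_j+1$ and/or $p' = \lambda_{j'}+1$ uniformly and avoids the explicit basis sum, which is a small gain in cleanliness; the paper's basis-expansion template is reused essentially verbatim for the later partial-trace claims (\Cref{prop:tensor3}, \Cref{clm:tensor-caseII}, \Cref{clm:tensor-caseIV}), where genuine cross terms $\ket{a}\bra{a'}$ and the two-qudit symmetrizer appear and the purely product/swap-trick argument would need to be adapted. In both versions the substance is the same: since $\ket{a}$ is a standard basis vector, the state is a tensor product of single-qudit pure states of unit trace, and the two disjoint transpositions transport $U\ket{a_{j,p}}\bra{a_{j,p}}U^\dagger$ and $U\ket{a_{j',p'}}\bra{a_{j',p'}}U^\dagger$ onto the output qudits (or leave $I$ when the swap is the identity).
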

\begin{proof}
Decompose
$I = \sum_{k=1}^d U \ket{k}\bra{k} U^{\dagger}$.
Then the left hand side is equal to
\begin{align}
&
\sum_{ k, \ell \in [d] }
\tr_{-(1,2)} 
\lp(  
\Un{(n+2)} \;
\text{Swap}_o( (j, p), (j', p') )
 \lp( \ket{k \; \ell \; a} \bra{k \; \ell \; a} \rp)
\Und{(n+2)}
\rp) \nonumber \\
&= 
U^{\otimes 2}
\sum_{ k, \ell \in [d] }
\tr_{-(1,2)} 
\bigg(  
\text{Swap}_o( (j, p), (j', p') )
 \lp( \ket{k \; \ell \; a} \bra{k \; \ell \; a}  \rp)
\bigg)
\lp(U^{\otimes 2}\rp)^{\dagger}.
\label{eq:extract-U}
\end{align}
When $p \neq \lambda_j+1$, it is not hard to see that the partial-trace is non-vanishing only if $k = a_{j, p}$.
Otherwise, if $p = \lambda_j+1$, the first qudit is independent from all  other qudits, and hence it will be simplified into $I$.
Similarly, when $p' \neq \lambda_{j'}+1$, the partial-trace is non-vanishing only if $\ell = a_{j', p'}$. If $p' = \lambda_{j'}+1$, the second qudit will be simplified into $I$. 
Combining these observations with a careful case analysis then gives
\begin{align*}
&
\sum_{k, \ell \in [d]}
\tr_{-(1,2)} 
\bigg(  
\text{Swap}_o( (j, p), (j', p') )
 \lp( \ket{k \; \ell \; a} \bra{k \; \ell \; a}  \rp)
\bigg)
\\
& = 
\begin{cases}
\ket{a_{j,p}} \bra{a_{j,p}} 
\otimes 
 \ket{a_{j',p'}} \bra{a_{j',p'}} & p\neq \lambda_j + 1,~p' \neq \lambda_{j'}+1 \\
I 
\otimes 
\ket{a_{j',p'}} \bra{a_{j',p'}}  & p = \lambda_j+1,~p' \neq \lambda_{j'}+1\\ 
 \ket{a_{j,p}} \bra{a_{j,p}} 
\otimes 
I & p \neq \lambda_j+1,~p' = \lambda_{j'} + 1\\
I \otimes I & p = \lambda_j + 1,~p' = \lambda_{j'} + 1
\end{cases}
\, .
\end{align*}
Substituting the above back into \Cref{eq:extract-U} concludes the proof of \Cref{prop:tensor2}.
\end{proof}

\begin{claim}
\label{prop:tensor3}
Let $j \neq j'$, $p \in [\lambda_j], p' \in [\lambda_{j'}]$, and $\ket{a}, \ket{a'}$ be two standard basis vector on $n$ qudits such that
$\ket{a'} = \swapc( (j, p), (j', p') ) \ket{a}$
and $\ket{a'} \neq \ket{a}$.
Then we have
\begin{align*}
&
\tr_{-(1,2)}  
\lp( \swapt((j, p), (j', p')) \cdot \bigg(
I \otimes I \otimes (\Un{n}\ket{a}\bra{a'} \Und{n}) \bigg) \rp)
\\
& = 
\lp( U \ket{ a_{j, p} }
\bra{ a_{j,p} } U^{\dagger}
\otimes
U \ket{ a_{j', p'} }
\bra{ a_{j',p'} } U^{\dagger} \rp)
\text{Swap}^{(2)} \, ,
\end{align*}
where $\textrm{Swap}^{(2)}$ is a permutation over $2$-qudits which swaps the first qudit with the second qudit.
\end{claim}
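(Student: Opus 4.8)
The plan is to follow the proof of \Cref{prop:tensor2} almost verbatim: peel the unitaries off the partial trace, run a short combinatorial computation on standard basis vectors, and reinsert the unitaries at the end. First I would decompose $I \otimes I = \sum_{k,\ell \in [d]} U\ket{k}\bra{k}U^\dagger \otimes U\ket{\ell}\bra{\ell}U^\dagger$, so that
\[
\swapt((j,p),(j',p'))\cdot\bigg( I \otimes I \otimes (\Un{n}\ket{a}\bra{a'}\Und{n}) \bigg) = \sum_{k,\ell\in[d]} \Un{(n+2)}\,\swapt((j,p),(j',p'))\,\ket{k,\ell,a}\bra{k,\ell,a'}\,\Und{(n+2)},
\]
using that the permutation operator $\swapt((j,p),(j',p'))$ commutes with $\Un{(n+2)}$. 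Since $\tr_{-(1,2)}$ traces out exactly the $n$ registers carrying the $\Un{n}$ factor, conjugation by that factor disappears under the trace by cyclicity while the $\Un{2}$ on the two output registers passes outside; this reduces the claim to the identity
\[
\sum_{k,\ell\in[d]} \tr_{-(1,2)}\!\bigg( \swapt((j,p),(j',p'))\,\ket{k,\ell,a}\bra{k,\ell,a'} \bigg) = \bigg( \ket{a_{j,p}}\bra{a_{j,p}} \otimes \ket{a_{j',p'}}\bra{a_{j',p'}} \bigg)\,\mathrm{Swap}^{(2)}.
\]

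For the combinatorial core I would invoke the hypothesis $j \neq j'$: the two transpositions comprising $\swapt((j,p),(j',p'))$ then act on disjoint register pairs, so $\swapt((j,p),(j',p'))\ket{k,\ell,a}$ is the standard basis vector carrying $a_{j,p}$ and $a_{j',p'}$ in the first and second output registers, $k$ and $\ell$ in registers $(j,p)$ and $(j',p')$, and equal to $\ket{a}$ on every other register; write $\ket{\widetilde a^{(k,\ell)}}$ for its $n$-register part. Tracing out the last $n$ registers contracts the outer product to $\braket{a'}{\widetilde a^{(k,\ell)}}\,\ket{a_{j,p},a_{j',p'}}\bra{k,\ell}$. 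Now $\ket{\widetilde a^{(k,\ell)}}$ differs from $\ket{a}$ only at positions $(j,p)$ and $(j',p')$, and by the hypothesis $\ket{a'} = \swapc((j,p),(j',p'))\ket{a}$ so does $\ket{a'}$, with $a'_{j,p} = a_{j',p'}$ and $a'_{j',p'} = a_{j,p}$; hence $\braket{a'}{\widetilde a^{(k,\ell)}} = 1$ exactly when $k = a_{j',p'}$ and $\ell = a_{j,p}$, and vanishes otherwise. The sum over $(k,\ell)$ therefore collapses to the single term $\ket{a_{j,p},a_{j',p'}}\bra{a_{j',p'},a_{j,p}}$, which a one-line check shows equals $\bigg(\ket{a_{j,p}}\bra{a_{j,p}}\otimes\ket{a_{j',p'}}\bra{a_{j',p'}}\bigg)\mathrm{Swap}^{(2)}$. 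Conjugating back by $\Un{2}$ and using that $\Un{2}$ commutes with $\mathrm{Swap}^{(2)}$ then produces the claimed expression $\bigg( U\ket{a_{j,p}}\bra{a_{j,p}}U^\dagger \otimes U\ket{a_{j',p'}}\bra{a_{j',p'}}U^\dagger \bigg)\mathrm{Swap}^{(2)}$.

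The argument is entirely routine; the only delicate point is the register bookkeeping in the combinatorial step — correctly tracking which symbol lands in which of the $n+2$ registers after the double transposition, and pinning down the unique pair $(k,\ell)$ whose term survives the partial trace. The two hypotheses are exactly what make this go through: $j \neq j'$ makes the two transpositions independent, and $\ket{a'} = \swapc((j,p),(j',p'))\ket{a}$ (together with $\ket{a'}\neq\ket{a}$) forces the surviving pair to be the ``crossed'' one, which is precisely what produces the trailing $\mathrm{Swap}^{(2)}$ here, in contrast to the rank-one projectors that appear in \Cref{prop:tensor2}.
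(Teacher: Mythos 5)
Your proposal is correct and follows essentially the same route as the paper: decompose the two identity factors in the $U$-rotated basis, commute the unitaries past the permutation and out of the partial trace, and observe that the only surviving term is the one with $k=a_{j',p'}=a'_{j,p}$ and $\ell=a_{j,p}=a'_{j',p'}$, which yields the trailing $\mathrm{Swap}^{(2)}$. Your write-up merely spells out the register bookkeeping and the identity $\ket{x,y}\bra{y,x}=(\ket{x}\bra{x}\otimes\ket{y}\bra{y})\mathrm{Swap}^{(2)}$ a bit more explicitly than the paper, which states the simplified operator as $U\ket{a_{j,p}}\bra{a'_{j,p}}U^{\dagger}\otimes U\ket{a_{j',p'}}\bra{a'_{j',p'}}U^{\dagger}$ and then invokes $\ket{a'}=\swapc((j,p),(j',p'))\ket{a}$ — the two endpoints are the same.
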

\begin{proof}
Decompose $I = \sum_{k=1}^d U \ket{k} \bra{k} U^{\dagger}$. Then the left hand side is equal to   \begin{align*}
&
\sum_{ k, \ell \in [d] }
\tr_{-(1,2)} 
\lp(  
\Un{(n+2)} \;
\text{Swap}_o( (j, p), (j', p') )
 \lp( \ket{k \; \ell \; a} 
 \bra{k \; \ell \; a'} \rp)
\Und{(n+2)}
\rp) \nonumber \\
&= 
U^{\otimes 2}
\sum_{ k, \ell \in [d] }
\tr_{-(1,2)} 
\bigg(  
\text{Swap}_o( (j, p), (j', p') )
 \lp( \ket{k \; \ell \; a} 
 \bra{k \; \ell \; a'}  \rp)
\bigg)
\lp( U^{\otimes 2} \rp)^{\dagger}.
\end{align*}
It is not hard to verify that the partial-trace is non-vanishing only if $k = a'_{j, p}$
and $\ell = a'_{j', p'}$.
Thus, the expression simplifies into
$$
U \ket{ a_{j, p} }
\bra{ a'_{j,p} } U^{\dagger}
\otimes
U \ket{ a_{j', p'} }
\bra{ a'_{j',p'} } U^{\dagger}.
$$
We note that this is exactly the right hand side from the claim since $\ket{a'} = \swapc( (j, p), (j', p') ) \ket{a}$.
\end{proof}


We next define some basic operators over $2$-qudit system. 
\begin{definition}\label{def:pisymt}
    Define $\textrm{Swap}^{(2)}$ a permutation over $2$-qudits which swaps the first qudit with second qudit. Define $\pisymt$, a projector on the symmetric subspace over $2$-qudit system, as the sum $(\textrm{Swap}^{(2)} + I)/2$.
\end{definition}

\begin{claim}
\label{clm:tensor-caseII}
Let $\ket{a} \in E_n(w)$.
Let $j \in [k]$, $p \in [\lambda_j]$, $p' = \lambda_j+2$.
Then we have
\begin{align*}
&\tr_{-(1,2)} 
\lp(  
\swapts(j, p, p') \cdot
\lp( 
\pisymt
\otimes I^{\otimes n} \rp) \cdot \bigg(I \otimes I \otimes (\Un{n}\ket{a} \bra{a}\Und{n})\bigg)\rp) \\
&= 
\lp( (U \ket{a_{j,p}}\bra{a_{j,p}} U^\dagger)  \otimes I\rp)
\pisymt.
\end{align*}
\end{claim}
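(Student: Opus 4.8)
The plan is to reduce the partial trace to a direct computation over standard basis vectors, exactly as in the proofs of \Cref{prop:tensor2} and \Cref{prop:tensor3}. First I would pull the unitaries out of the partial trace: since $\swapts(j,p,p')$, $\pisymt \otimes I^{\otimes n}$, and $\tr_{-(1,2)}$ all commute with $U^{\otimes(n+2)}$, the left-hand side equals $U^{\otimes 2}\,\bigl(\tr_{-(1,2)}(\swapts(j,p,\lambda_j+2)\cdot(\pisymt\otimes I^{\otimes n})\cdot(I\otimes I\otimes \ketbra{a}{a})\bigr)(U^{\otimes 2})^{\dagger}$. So it suffices to prove the statement for $U=I$ and then conjugate by $U^{\otimes 2}$ at the end.

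Next I would unpack the operators. Recall $p' = \lambda_j+2$ is, by the overloading in \Cref{def:swap3}, the index of the \emph{second output qudit}; hence $\swapts(j,p,\lambda_j+2)$ first swaps the second output qudit with qudit $(j,\lambda_j+2)$ — i.e.\ with itself, a no-op — and then swaps the first output qudit with the $(j,p)$-th qudit of $\ket{a}$. Writing $\pisymt = (I + \mathrm{Swap}^{(2)})/2$ (see \Cref{def:pisymt}) and expanding $\ketbra{a}{a}$ over its qudits, the only qudit of $\ket{a}$ that gets moved into the output registers is $\ket{a_{j,p}}$. Tracing out all $n$ qudits of the $A_\lambda$ block: the trace is non-vanishing only when the first output qudit receives $\ket{a_{j,p}}$, and the remaining factor is precisely $\bigl(\ketbra{a_{j,p}}{a_{j,p}}\otimes I\bigr)$ for the identity branch of $\pisymt$ and $\bigl(\ketbra{a_{j,p}}{a_{j,p}}\otimes I\bigr)\mathrm{Swap}^{(2)}$ — wait, more carefully, the $\mathrm{Swap}^{(2)}$ in $\pisymt$ acts on the two output qudits, so after the output-swap the surviving term is $\mathrm{Swap}^{(2)}\bigl(\ketbra{a_{j,p}}{a_{j,p}}\otimes I\bigr)$; combining the two branches gives $\bigl(\ketbra{a_{j,p}}{a_{j,p}}\otimes I\bigr)\pisymt$ after noting $\pisymt$ commutes appropriately with the tensor structure. (This bookkeeping is routine but must be done with the ordered-swap convention held fixed.) Re-inserting the $U^{\otimes 2}$ conjugation turns $\ketbra{a_{j,p}}{a_{j,p}}$ into $U\ketbra{a_{j,p}}{a_{j,p}}U^\dagger$ and leaves $\pisymt$ invariant since $\pisymt$ commutes with $U^{\otimes 2}$, yielding the claimed expression.

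The main obstacle — such as it is — is keeping the two output-qudit labels and the ordered nature of $\swapts$ straight while expanding $\pisymt$, so that one correctly sees that only $\ket{a_{j,p}}$ survives and that the leftover two-qudit operator is $\bigl(U\ketbra{a_{j,p}}{a_{j,p}}U^\dagger\otimes I\bigr)\pisymt$ rather than, say, its $\mathrm{Swap}^{(2)}$-conjugate or a version with the factors in the wrong tensor slot. Once the conventions are pinned down, the computation is the same elementary partial-trace argument used for \Cref{prop:tensor2,prop:tensor3}, and no Weingarten-type input is needed because the Haar integral was already collapsed to a permutation sum in \Cref{eq:var-integral-simplify}.
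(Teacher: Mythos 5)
Your plan follows essentially the same route as the paper: pull the $U^{\otimes(n+2)}$ conjugation out of the partial trace, observe that for $p'=\lambda_j+2$ the operator $\swapts(j,p,\lambda_j+2)$ is just the transposition of the first output qudit with the $(j,p)$-th qudit, split $\pisymt=(I+\mathrm{Swap}^{(2)})/2$ into its two branches, and match the traced-out slots to see that only $k=a_{j,p}$ (resp.\ $\ell=a_{j,p}$) survives; this is exactly the paper's computation.

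One bookkeeping point in your sketch needs fixing, though it does not affect the viability of the approach. The swap branch evaluates directly to $\tfrac12\sum_{k}\ket{a_{j,p}\,k}\bra{k\,a_{j,p}} = \tfrac12\lp(\ketbra{a_{j,p}}{a_{j,p}}\otimes I\rp)\mathrm{Swap}^{(2)}$, not to $\tfrac12\,\mathrm{Swap}^{(2)}\lp(\ketbra{a_{j,p}}{a_{j,p}}\otimes I\rp)$ as you wrote; the two differ, and your fallback justification (``$\pisymt$ commutes appropriately with the tensor structure'') is not valid, since $\ketbra{a_{j,p}}{a_{j,p}}\otimes I$ does not commute with $\pisymt$ (one has $\lp(A\otimes I\rp)\mathrm{Swap}^{(2)}=\mathrm{Swap}^{(2)}\lp(I\otimes A\rp)$). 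With the branch written in the correct order, the two terms combine immediately to $\lp(\ketbra{a_{j,p}}{a_{j,p}}\otimes I\rp)\pisymt$ with no commutation argument needed, and re-inserting the $U^{\otimes 2}$ conjugation gives the claim, just as in the paper.
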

\begin{proof}
Decompose $I = U \sum_{k \in [d]} \ket{k}\bra{k} U^{\dagger}$. 
The left hand side is equal to
\begin{align*}
&
\sum_{k \in [d]}
\tr_{-(1,2)} 
\lp(  
\Un{(n+2)}
\swapts(j, p, p') \cdot
\lp( 
\pisymt
\otimes I^{\otimes n} \rp) \cdot \bigg( 
\ket{k \; \ell \; a} \bra{k \; \ell \; a}
\bigg)
\Und{(n+2)}
\rp) \\
&= 
\frac{1}{2}
\sum_{k, \ell \in [d]}
U^{\otimes 2}
\tr_{-(1,2)} 
\lp(  
\swapts(j, p, p') \cdot \bigg( 
\ket{k \; \ell \; a} \bra{k \; \ell \; a}
\bigg)
\rp) 
\Und{2}  \\
&+
\frac{1}{2}
\sum_{k, \ell \in [d]}
U^{\otimes 2}
\tr_{-(1,2)} 
\lp(  
\swapts(j, p, p') \cdot 
\bigg( 
\ket{\ell \; k \; a} \bra{k \; \ell \; a}
\bigg)
\rp) 
\Und{2}.
\end{align*}
For the first term, we note that it is non-vanishing only if $k = a_{j,p}$.
Then the first term can be simplified into 
$$
\frac{1}{2}
U^{\otimes 2}
\sum_{\ell \in [d]}
\ket{ a_{j, p} \ell }
\bra{ a_{j, p} \ell }
\Und{2}
= 
\frac{1}{2}
U \ket{a_{j, p}}
\bra{a_{j, p}} U^{\dagger} \otimes I.
$$
For the first term, we note that it is non-vanishing only if $\ell = a_{j,p}$.
Then the second term can be simplified into 
$$
\frac{1}{2}
U^{\otimes 2}
\sum_{k \in [d]}
\ket{ a_{j, p} k }
\bra{ k a_{j, p} }
\Und{2}
= 
\frac{1}{2}
 U \ket{a_{j', p'}}
\bra{a_{j', p'}} U^{\dagger} 
\otimes I
\text{Swap}^{(2)}
\, ,
$$
where $\text{Swap}^{(2)}$ is the permutation that swaps the two qudits.
Combining the two terms then concludes the proof of the claim.
\end{proof}

\begin{claim}
\label{clm:tensor-caseIV}
Let $j \in [k]$, $p, p' \in [\lambda_j]$.
Let $\ket{a} \in E_n(w)$, 
$\ket{a'} = \ket{a}$ or $
\ket{a'} = \swapc((j,p), (j', p')) \ket{a}$.
Then we have
\begin{align*}
&
\tr_{-(1,2)} 
\lp(  
\swapts(j, p, p') \cdot
\lp( 
\pisymt
\otimes I^{\otimes n} \rp) \cdot \bigg(I \otimes I \otimes (\Un{n}\ket{a} \bra{a'}\Und{n})\bigg)\rp)
\\
&= 
U^{\otimes 2} \ket{a_{j,p} \; a_{j', p'}
}\bra{a'_{j,p} \; a'_{j', p'}}
\Und{2}
\pisymt.
\end{align*}
\end{claim}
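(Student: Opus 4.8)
The plan is to follow the same template used for \Cref{clm:tensor-caseII} and \Cref{prop:tensor3}. First I would write $I = U\sum_{k\in[d]}\ket{k}\bra{k}U^{\dagger}$ on each of the two output qudits, so that $I\otimes I\otimes(\Un{n}\ket{a}\bra{a'}\Und{n}) = \sum_{k,\ell\in[d]}\Un{(n+2)}\ket{k\,\ell\,a}\bra{k\,\ell\,a'}\Und{(n+2)}$. Since $\Un{(n+2)}$ commutes with every permutation operator by Schur--Weyl duality (\Cref{lem:schur-weyl}) --- in particular with $\swapts(j,p,p')$ and with $\pisymt\otimes I^{\otimes n}$ --- I can pull $\Un{(n+2)}=U^{\otimes 2}\otimes\Un{n}$ out to the left (and $\Und{(n+2)}$ to the right), and because $\tr_{-(1,2)}$ traces away exactly the $n$ ``real'' qudits on which the $\Un{n}$ block acts, the whole expression reduces to $U^{\otimes 2}\,\tr_{-(1,2)}\!\big(\swapts(j,p,p')\,(\pisymt\otimes I^{\otimes n})\sum_{k,\ell}\ket{k\,\ell\,a}\bra{k\,\ell\,a'}\big)\,\Und{2}$.

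Next I would expand $\pisymt=\tfrac{1}{2}(I^{\otimes 2}+\textrm{Swap}^{(2)})$ (see \Cref{def:pisymt}), which splits the trace into a term that keeps the output qudits as $\ket{k\,\ell\,a}$ and a term that flips them to $\ket{\ell\,k\,a}$. Because $p,p'\in[\lambda_j]$, the operator $\swapts(j,p,p')$ is simply the product of the two disjoint transpositions swapping output qudit $1$ with the real qudit $(j,p)$ and output qudit $2$ with the real qudit $(j,p')$; applying it to $\ket{k\,\ell\,a}$ moves $a_{j,p}$ into output qudit $1$ and $a_{j,p'}$ into output qudit $2$, while overwriting the real qudits at positions $(j,p),(j,p')$ with $k$ and $\ell$. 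Pairing the resulting ket against $\bra{k\,\ell\,a'}$ and tracing out the real qudits leaves only the standard-basis overlap between $a'$ and this modified string, which equals $1$ exactly when $a'$ agrees with it on all $n$ real positions (and analogously, with $k\leftrightarrow\ell$ swapped, for the second term). This is the same mechanism already recorded in \Cref{clm:partial-trace-vanish2}.

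Now I would read off the non-vanishing conditions. By hypothesis $a'$ equals $a$ or equals $\swapc((j,p),(j,p'))\ket{a}$, so in either case $a$ and $a'$ agree on every qudit outside positions $(j,p),(j,p')$; hence the overlap forces $(k,\ell)=(a'_{j,p},a'_{j,p'})$ in the first term and $(\ell,k)=(a'_{j,p},a'_{j,p'})$ in the second. Substituting these back, the output-qudit part of the sum becomes $\ket{a_{j,p},a_{j,p'}}\big(\bra{a'_{j,p},a'_{j,p'}}+\bra{a'_{j,p'},a'_{j,p}}\big) = \ket{a_{j,p},a_{j,p'}}\bra{a'_{j,p},a'_{j,p'}}(I^{\otimes 2}+\textrm{Swap}^{(2)})$, using $\bra{x,y}\textrm{Swap}^{(2)}=\bra{y,x}$. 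Since $\textrm{Swap}^{(2)}\pisymt=\pisymt$, this equals $2\,\ket{a_{j,p},a_{j,p'}}\bra{a'_{j,p},a'_{j,p'}}\pisymt$, so the $\tfrac{1}{2}$ in $\pisymt$ cancels; finally $\pisymt$ commutes with $\Und{2}$, so restoring the $U^{\otimes 2}$ and $\Und{2}$ factors gives exactly $U^{\otimes 2}\ket{a_{j,p}\,a_{j,p'}}\bra{a'_{j,p}\,a'_{j,p'}}\Und{2}\pisymt$, the claimed identity (with the understanding that the ``$j'$'' appearing in the statement is the same row index $j$, since in Case IV both swap positions lie in row $j$).

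The bulk of the work is routine index-tracking; the one genuinely delicate point is verifying that the $a'=a$ case and the $a'=\swapc((j,p),(j,p'))\ket{a}$ case --- and the two halves of $\pisymt$ --- all collapse to the very same symmetric-subspace expression. This works precisely because the permutation operators act only on the ket side of $\ket{a}\bra{a'}$, and because of the two identities $\bra{x,y}\textrm{Swap}^{(2)}=\bra{y,x}$ and $\textrm{Swap}^{(2)}\pisymt=\pisymt$, which force the final answer to be symmetric in the two output qudits regardless of which of the two cases one started from.
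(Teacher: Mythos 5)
Your proposal is correct and follows essentially the same route as the paper's own proof: decompose each output-register identity as $U\sum_k\ket{k}\bra{k}U^{\dagger}$, pull the $U^{\otimes(n+2)}$ through the permutation operators so the traced-out unitaries cancel, split $\pisymt$ into $\tfrac12(I+\textrm{Swap}^{(2)})$, and read off the unique non-vanishing choices of $(k,\ell)$ in each half before recombining into $\pisymt$. The only cosmetic differences are that you make the commutation/cancellation step and the final $\pisymt$-$\Und{2}$ commutation explicit, and you correctly note that the $j'$ in the statement is just $j$.
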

\begin{proof}
Decompose $I = U \sum_{k \in [d]} \ket{k}\bra{k} U^{\dagger}$. 
The left hand side is equal to
\begin{align*}
&
\sum_{k, \ell \in [d]}
\tr_{-(1,2)} 
\lp(
\Un{(n+2)}
\swapts(j, p, p') \cdot
\lp( 
\pisymt
\otimes I^{\otimes n} \rp) \cdot \bigg(
\ket{k \; \ell \; a} \bra{k \; \ell \; a'}
\bigg)
\Und{(n+2)}
\rp)\\
&= 
\frac{1}{2}
\sum_{k, \ell \in [d]}
U^{\otimes 2}
\tr_{-(1,2)} 
\lp(  
\swapts(j, p, p') \cdot \bigg( 
\ket{k \; \ell \; a} \bra{k \; \ell \; a'}
\bigg)
\rp) 
\Und{2}  \\
&+
\frac{1}{2}
\sum_{k, \ell \in [d]}
U^{\otimes 2}
\tr_{-(1,2)} 
\lp(  
\swapts(j, p, p') \cdot 
\bigg( 
\ket{\ell \; k \; a} \bra{k \; \ell \; a'}
\bigg)
\rp) 
\Und{2}.
\end{align*}
For the first term, we note that the partial-trace is non-vanishing only if 
$\ket{k} = \ket{ a'_{j,p} }$
and $\ket{\ell} = \ket{a'_{j', p'}}$. 
Thus, it can be simplified into 
$$
\frac{1}{2}
U^{\otimes 2}
\ket{ a_{j, p} \; a_{j' p'} }
\bra{ a'_{j, p} \; a'_{j' p'} }
\Und{2}.
$$
For the second term, we note that the partial trace is non-vanishing only if $\ket{\ell} = \ket{ a'_{j,p} }$
and $\ket{k} = \ket{a'_{j', p'}}$.
Thus, it can be simplified into
$$
\frac{1}{2}
U^{\otimes 2}
\ket{ a_{j, p} \; a_{j', p'} }
\bra{ a'_{j', p'} \; a'_{j, p} }
\Und{2}
=
\frac{1}{2}
U^{\otimes 2}
\ket{ a_{j, p} \; a_{j', p'} }
\bra{ a'_{j, p} \; a'_{j', p'} }
\Und{2}
\text{Swap}^{(2)}
$$
where $\text{Swap}^{(2)}$ is the permutation that swaps the two qudits.
Combining the two terms then concludes the proof of the claim.
\end{proof}
\begin{fact}
\label{fact:eigen-bound}
Let $\ket{x}, \ket{y}$ be two unit vectors. 
Furthermore, assume $\tr(O) = 0$.
Then we have
\begin{align*}
&\tr \lp( \lp(O \otimes O\rp)
\;
\lp( \ket{x}\bra{x}   \otimes I\rp)
\pisymt
\rp)
\leq 
\frac{1}{2} \snorm{\infty}{O^2}  \\
&\tr \lp( \lp(O \otimes O\rp)
\;
\lp( \ket{x}\bra{x}   \otimes 
\ket{y}\bra{y} \rp)
\pisymt
\rp)
\leq 
\snorm{\infty}{O}^2.
\end{align*}
\end{fact}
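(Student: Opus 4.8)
The plan is to prove both inequalities by expanding the two‑qudit symmetric projector via its definition $\pisymt = \tfrac{1}{2}\lp(I + \mathrm{Swap}^{(2)}\rp)$ (\Cref{def:pisymt}) and treating the identity and swap contributions separately. Writing $S = \mathrm{Swap}^{(2)}$, the only tool needed beyond linearity of the trace is the standard single‑qudit swap identity $\tr\lp((A\otimes B)\,S\rp) = \tr(AB)$ for operators $A,B$ on $\C^d$, together with the observation that $(O\otimes O)\lp(\ket{x}\bra{x}\otimes I\rp) = \lp(O\ket{x}\bra{x}\rp)\otimes O$ and $(O\otimes O)\lp(\ket{x}\bra{x}\otimes\ket{y}\bra{y}\rp) = \lp(O\ket{x}\bra{x}\rp)\otimes\lp(O\ket{y}\bra{y}\rp)$.

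For the first inequality, I would split $\tr\lp((O\otimes O)(\ket{x}\bra{x}\otimes I)\,\pisymt\rp)$ into $\tfrac12\tr\lp((O\ket{x}\bra{x})\otimes O\rp) + \tfrac12\tr\lp(((O\ket{x}\bra{x})\otimes O)\,S\rp)$. The first term factorizes as $\tfrac12\,\tr(O\ket{x}\bra{x})\,\tr(O) = 0$ by the assumption $\tr(O)=0$ — this is exactly where the extra factor of $\tfrac12$ in the bound comes from. The second term equals $\tfrac12\tr\lp(O\ket{x}\bra{x}\,O\rp) = \tfrac12\,\bra{x}O^2\ket{x}$, and since $O^2$ is positive semidefinite its largest eigenvalue is $\snorm{\infty}{O^2}$, so this is at most $\tfrac12\snorm{\infty}{O^2}$, giving the claim.

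For the second inequality, the same split yields $\tfrac12\,\tr(O\ket{x}\bra{x})\,\tr(O\ket{y}\bra{y}) + \tfrac12\tr\lp(O\ket{x}\bra{x}\,O\ket{y}\bra{y}\rp) = \tfrac12\,(\bra{x}O\ket{x})(\bra{y}O\ket{y}) + \tfrac12\,\lvert\bra{x}O\ket{y}\rvert^2$, where the last step uses that $O$ is Hermitian. Each of $\lvert\bra{x}O\ket{x}\rvert$, $\lvert\bra{y}O\ket{y}\rvert$, and $\lvert\bra{x}O\ket{y}\rvert$ is bounded by $\snorm{\infty}{O}$, so the whole expression is at most $\tfrac12\snorm{\infty}{O}^2 + \tfrac12\snorm{\infty}{O}^2 = \snorm{\infty}{O}^2$. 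There is no real obstacle here — the argument is entirely elementary; the only point worth stating carefully is the role of tracelessness in producing the sharper $\tfrac12$ factor in the first bound, and that $\bra{x}O^2\ket{x}\le\snorm{\infty}{O^2}$ relies on $O^2\succeq 0$.
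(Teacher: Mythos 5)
Your proposal is correct and follows essentially the same route as the paper: expand $\pisymt = \tfrac12(I+\mathrm{Swap}^{(2)})$, evaluate the identity and swap contributions with the trace/swap identities, use $\tr(O)=0$ to kill the factorized term in the first bound, and bound $\bra{x}O^2\ket{x}$ and the matrix elements of $O$ by the corresponding spectral norms. If anything, your write-up is slightly cleaner, since you state the Hermiticity step giving $\lvert\bra{x}O\ket{y}\rvert^2$ explicitly, where the paper's displayed cross term contains a small notational slip.
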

\begin{proof}
To show the first inequality, we note that
$$
\tr \lp( \lp(O \otimes O\rp)
\;
\lp( \ket{x}\bra{x}   \otimes I\rp)
\pisymt
\rp)
= 
\frac{1}{2}
\tr(O) \; \bra{x} O \ket{x}
+ \frac{1}{2}
\bra{x} O^2 \ket{x}
\leq \frac{1}{2} \snorm{\infty}{O^2}.
$$
To show the second inequality, we note that
$$
\tr \lp( \lp(O \otimes O\rp)
\;
\lp( \ket{x}\bra{x}   \otimes 
\ket{y}\bra{y} \rp)
\pisymt
\rp)
= 
\frac{1}{2}
\bra{x} O \ket{x}
\bra{y} O \ket{y}
+
\frac{1}{2}
\bra{y} O \ket{x}
 \bra{y} O \ket{x}
\leq  \snorm{\infty}{O}^2.
$$

\end{proof}

\section{Variance Lower Bound}
\label{app:var-lb}
In this section, we present a concrete setup of 
the post-processing state $U^{\otimes n} \ket{\tau}$, partition outcome $\lambda \partition n$, and observable $O$ such that the bound from \Cref{lem:second-moment} on $\Var[O \Psi]$ is tight up to a constant factor.

First, assume the generic pre-processing $\Gamma$ returns the partition $\lambda = (n)$ and a state of the form 
$U^{\otimes n} \ket{(\lambda, i, 0)}$, where $\ket{(\lambda, i, 0)}$ lies in the same weight subspace as $\ket{0}^{\otimes p} \otimes \ket{1}^{\otimes q}$ for some $p, q \in \N$ with $p + q = n$. 
We note that this is a valid output for our generic pre-processing $\Gamma$, because in this case each subspace $Y_{\lambda} V(w)$ is at most $1$-dimensional. Hence, without loss of generality, we can assume that the output state is of the form mentioned above (with a change in unitary).

We proceed to compute the variance of the shadow matrix $\Psi$ after running the POVM $\mathcal M^{\lambda}$ on state $U^{\otimes n} \ket{(\lambda, i, 0)}$.
To simplify the analysis even further, we assume that $U = I$ and $O$ is a traceless observable.
\begin{assumption}
\label{ass:setup}
The generic pre-processing step returns 
$\lambda = (n)$, and the state $
\ket{(\lambda, i, 0)}$, where $\ket{(\lambda, i, 0)} \in \Span \{ P_{\pi} \ket{0}^{\otimes p} \ket{1}^{\otimes q} \mid \pi \in S_n \}$ for some natural numbers $p + q = n$.
Moreover, the observable $O$ is traceless, i.e., $\tr(O) = 0$.
\end{assumption}
\begin{lemma}
Let $O$ be the observable received and
$\Psi$ be the shadow matrix defined as in \Cref{lem:expectation}.
Under \Cref{ass:setup}, it holds that
\begin{align*}
\Var[ O \Psi ]    
= 
&\frac{n+d}{n+d+1} 
\bigg( 
\tr (O^2) + 2p \bra{0} O^2 \ket{0}
+ 2q \bra{1} O^2 \ket{1}
+ 2 p q  \; \lp( \bra{0} O \ket{1} \rp)^2
\bigg) \\
- &\frac{1}{n+d+1}
\lp( 
+ p^2  \lp( \bra{0} O \ket{0} \rp)^2
+ q^2 \lp( \bra{1} O \ket{1} \rp)^2 
+ 2 p q \;  \bra{0} O \ket{0} \;  \bra{1} O \ket{1}
\rp).
\end{align*}
Moreover, whenever
$O = \begin{bmatrix}
1 &1 & \cdots \\
1 &-1 & \cdots \\
\vdots &\vdots &\ddots
\end{bmatrix}$, $p, q = \Theta(n)$, we have that
$$
\frac{1}{n^2} \Var[ O \Psi ]   \geq 
\Omega \lp( \frac{pq}{n^2}  \; \lp( \bra{0} O \ket{1} \rp)^2
\rp) \geq \Omega(1).
$$
for all sufficiently large $n$.
\end{lemma}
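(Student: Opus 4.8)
The plan is to compute $\Var[\tr(O\Psi)]$ directly by specializing the general machinery already developed in the excerpt to the setup of \Cref{ass:setup}, and then to plug in the explicit observable. Since $\lambda = (n)$ has a single part ($k=1$), the row symmetric subspace is just the full symmetric subspace on $n$ qudits, and the measurement $\mathcal M^{(\lambda)}$ is precisely the standard symmetric-subspace joint measurement. The state $\ket{\tau} = \ket{(\lambda,i,0)}$ lives in the weight subspace $V(w)$ with $w=(p,q,0,\ldots,0)$, so we may write $\ket{\tau} = \sum_{\ket{a}\in E_n(w)} c_a \ket{a}$ where each $\ket{a}$ is a standard basis string with exactly $p$ zeros and $q$ ones. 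With $U=I$, the expressions from \Cref{lem:second-moment} collapse: $\E[\Psi] = I + p\proj{0} + q\proj{1}$ by \Cref{lem:expectation}, and the second moment decomposes into the $j=j'$ term $S_2$ only (there is no $S_1$ since $k=1$). The plan is to carry out the four-case analysis of $S_2$ (Cases I--IV from the proof of \Cref{lem:second-moment}) exactly, keeping all terms rather than bounding them, using \Cref{prop:tensor2}, \Cref{prop:tensor3}, \Cref{clm:tensor-caseII}, and \Cref{clm:tensor-caseIV}, together with the coefficient identities $C_{n,d} = \tfrac{(n+d)^2}{(n+2)(n+1)}\tfrac{\kappa_n}{\kappa_{n+2}}$ simplifying — since $\kappa_n/\kappa_{n+2} = \tfrac{(n+1)(n+2)}{(n+d)(n+d+1)}$ — to $C_{n,d} = \tfrac{n+d}{n+d+1}$.

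Concretely, I would track four contributions to $\tr((O\otimes O)S_2)$: the identity/trace term (Case I), the single-swap terms involving the diagonal qudit entries $a_{1,p}$ (Cases II and III), and the genuine cross-terms (Case IV) involving pairs $\ket{a},\ket{a'}$ differing by a transposition of a $0$ and a $1$. The key combinatorial inputs are: (a) for each $\ket{a}\in E_n(w)$, the number of positions holding a $0$ is $p$ and holding a $1$ is $q$, so $\sum_{\text{positions }p} \proj{a_{1,p}} = p\proj 0 + q\proj 1$; (b) $\sum_{\ket{a}} |c_a|^2 = 1$ with the sum over single-qudit marginals being uniform across $\ket{a}$; and (c) for Case IV, $\sum_{\ket{a},\ket{a'}} c_a\overline{c_{a'}} (\text{trace term})$ — here we need to recognize that $\sum_{(\ket a,\ket{a'})\in S(1,p,p')} c_a\overline{c_{a'}}\tr((O\otimes O)\,U^{\otimes 2}\ket{a_{1,p}a_{1,p'}}\bra{a'_{1,p}a'_{1,p'}}\Und 2\pisymt)$ telescopes, because the state $\ket\tau$ is permutation-symmetric enough (it is a superposition over all permutations of $\ket 0^{\otimes p}\ket 1^{\otimes q}$), that the cross-term sum evaluates to $p q\,\bra 0 O\ket 1\bra 1 O\ket 0 + \ldots$ rather than just being bounded by it. The subtlety is that the $\Var$ formula in the statement has a clean closed form, which means the $|c_a|^2$ weights and the $c_a\overline{c_{a'}}$ cross terms must conspire — this relies on the specific structure of $\ket{(\lambda,i,0)} \propto Y_\lambda(\ket 0^{\otimes p}\ket 1^{\otimes q})$ (which for $\lambda=(n)$ is just the uniform superposition $\ket{s_n(q)}$, since $Y_{(n)} = \sum_{\pi\in S_n}P_\pi$), so that all $|c_a|^2 = 1/\binom n q$ and all cross-coefficients are also $1/\binom n q$; counting the number of surviving pairs gives the stated $pq$, $p^2$, $q^2$ coefficients. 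After assembling the four cases, subtract $\E[\tr(O\Psi)]^2 = (\tr(O(I + p\proj 0 + q\proj 1)))^2 = (p\bra 0 O\ket 0 + q\bra 1 O\ket 1)^2$ using $\tr O = 0$, and collect terms to match the claimed identity.

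For the lower bound, once the closed-form expression for $\Var[\tr(O\Psi)]$ is in hand, I would substitute $O = \bigl[\begin{smallmatrix}1&1&\cdots\\1&-1&\cdots\\\vdots&\vdots&\ddots\end{smallmatrix}\bigr]$ — more precisely the traceless $d\times d$ observable whose top-left $2\times 2$ block is $\bigl[\begin{smallmatrix}1&1\\1&-1\end{smallmatrix}\bigr]$ (and with the rest of the spectrum arranged to keep $\tr O = 0$ and $\snorm\infty O = O(1)$). Then $\bra 0 O\ket 1 = 1$, so $(\bra 0 O\ket 1)^2 = 1$, and with $p,q = \Theta(n)$ the dominant positive term is $\tfrac{n+d}{n+d+1}\cdot 2pq(\bra 0 O\ket 1)^2 = \Theta(n^2)$. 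The plan is then to check that the negative terms (each at most $O(n^2)$, scaled by $\tfrac 1{n+d+1} = O(1/n)$) are negligible — specifically the $-\tfrac{1}{n+d+1}(p^2(\bra 0 O\ket 0)^2 + q^2(\bra 1 O\ket 1)^2 + 2pq\bra 0 O\ket 0\bra 1 O\ket 1)$ term is only $O(n)$ since it carries the $1/(n+d+1)$ prefactor against $O(n^2)$ numerators — so $\Var[\tr(O\Psi)] \geq \Omega(pq) = \Omega(n^2)$, hence $\tfrac 1{n^2}\Var[\tr(O\Psi)] \geq \Omega(1)$.

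The main obstacle I anticipate is the bookkeeping in Case IV: showing that the cross-term sum over transposition-related pairs $\ket a,\ket{a'}$ does not cancel but instead accumulates to a genuinely positive $\Theta(pq)$ contribution. This requires being careful about (i) which pairs survive the partial trace (exactly those differing in one $0$ and one $1$ at positions $p, p'$), (ii) the sign and phase of $c_a\overline{c_{a'}}$ (all equal and real positive for the uniform superposition, which is why the term doesn't vanish — contrast with what would happen for a generic $\ket\tau$), and (iii) correctly combining with the $\pisymt$ factor which contributes both an $I$ and a $\text{Swap}^{(2)}$ piece via \Cref{clm:tensor-caseIV}, each of which must be traced against $O\otimes O$. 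Getting the constant factors to line up exactly with the stated formula is the routine-but-delicate part; the conceptual content — that this cross-term is what distinguishes rank-$2$ from rank-$1$ and forces the constant variance floor — is already explained in the prose of \Cref{sec:var-comp}.
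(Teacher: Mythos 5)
Your proposal follows essentially the same route as the paper's own proof: specialize the second-moment computation of \Cref{lem:second-moment} to $k=1$ (so only the $j=j'$ sum survives), simplify the prefactor to $\tfrac{n+d}{n+d+1}$, carry out the four-case partial-trace analysis exactly rather than as upper bounds, and exploit the fact that for $\lambda=(n)$ the state $\ket{(\lambda,i,0)}$ is the uniform superposition of permutations of $\ket{0}^{\otimes p}\ket{1}^{\otimes q}$ so the Case IV cross-terms have equal positive coefficients and accumulate to the $\Theta(pq)\,(\bra{0}O\ket{1})^2$ contribution, after which subtracting $(p\bra{0}O\ket{0}+q\bra{1}O\ket{1})^2$ and plugging in the stated observable gives the $\Omega(1)$ scaled-variance floor. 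This matches the paper's argument in \Cref{app:var-lb}, including the identification of the Case IV bookkeeping as the only delicate step.
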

\begin{proof}
We begin by computing the second moment $\E[ \Psi \otimes \Psi  ]$ in a fashion similar to the proof of \Cref{lem:second-moment}.
The computation is easier as we now assume $\lambda$ only has one part.
In particular, following identical steps, we reach the expression
\begin{align*}
&\E [\Psi \otimes \Psi]     \\
&= \frac{2 (n+d)^2 }{n(n+1)}
\frac{ \kappa_{n} }{ \kappa_{n+2} }
\sum_{p'=1}^{n+2}
\sum_{p = 1}^{p'-1}
\tr_{-1, -2}
\lp( 
\swapts(1, p, p')
\lp( 
\pisymt
\otimes I^{\otimes 2}\rp)
\lp(
I^{\otimes 2} \otimes
\ket{(\lambda, i, 0)}\bra{(\lambda, i, 0)}\rp) \rp) \\
&= 2  \frac{n+d}{n+d+1}
\sum_{p'=1}^{n+2}
\sum_{p = 1}^{p'-1}
\tr_{-1, -2}
\lp( 
\swapts(1, p, p')
\lp( 
\pisymt
\otimes I^{\otimes 2}\rp)
\lp(
I^{\otimes 2} \otimes
\ket{(\lambda, i, 0)}\bra{(\lambda, i, 0)}\rp) \rp) \, ,
\end{align*}
where in the second line we use
$ \frac{1}{n(n+1)} \frac{\kappa_n}{\kappa_{n+2}} 
= \frac{  n+d }{ n+d+1 }.
$

Let $\ket{a}, \ket{a'}$ be two arbitrary standard basis vector.
Define
\begin{align*}
&T_{p, p'}(\ket{a}, \ket{a'}) := \tr_{-1, -2} \lp(\lp( \pisymt
\otimes I^{\otimes n} \rp) \cdot \bigg(I \otimes I \otimes (\Un{n}\ket{a} \bra{a'}\Und{n})\bigg)\rp) \, ,\\
&T_{p, p'}:= \tr_{-1, -2} \lp(\lp( \pisymt
\otimes I^{\otimes n} \rp) \cdot \bigg(I \otimes I \otimes  \ket{(\lambda, i, 0)}\bra{(\lambda, i, 0)} \bigg)\rp).    
\end{align*}
To simplify the partial trace, we again break into four cases depending on the values of $p, p'$.

\textbf{Case I}: $p = n+1, p' = n+2$. Recall that in this case we have the following more general expression:
\begin{align*}
T_{p, p'}( \ket{a}, \ket{a'} )
=     
\begin{cases}
\pisymt \lp( I \otimes I \rp)  
&\text{ if } a = a' \, , \\
0 &\text{ otherwise}
\end{cases}
\end{align*}
where $\ket{a}, \ket{a'}$ are two standard basis vectors that lie in the same weight subspace.
In the current setting, this implies that
\begin{align*}
T_{n+1, n+2}
= \tr_{-1, -2} \lp(\lp( \pisymt
\otimes I^{\otimes n} \rp) \cdot \bigg(I \otimes I \otimes \ket{(\lambda, i, 0)}\bra{(\lambda, i, 0)} \bigg)\rp)
=   \pisymt \lp( I \otimes I \rp).
\end{align*}
Thus, its contribution to variance is
$
\E[ O \otimes O T_{n+1, n+2} ]
= \frac{1}{2} \tr(O^2).
$

\textbf{Case II}: $p \in [n], p' = n+2$. Recall that in this case we have the following more general expression:
\begin{align*}
T_{p, p'}(\ket{a}, \ket{a'})
=     
\begin{cases}
\lp( (U \ket{a_{j,p}}\bra{a_{j,p}} U^\dagger)  \otimes I\rp)
\pisymt. 
&\text{ if } a = a' \, , \\
0 &\text{ otherwise}
\end{cases}
\end{align*}
In the current setting, this implies that
\begin{align*}
\sum_{p=1}^n T_{p, n+2}
=   
\lp(\lp(  p \ket{0}\bra{0} + q \ket{1}\bra{1}  \rp)  \otimes I\rp) \pisymt. 
\end{align*}
Thus, its contribution to variance is
$
\E\lp[ O \otimes O \sum_{p=1}^n T_{p, n+2} \rp]
= \frac{1}{2} \lp( p \bra{0} O^2 \ket{0} + q \bra{1} O^2 \ket{1} \rp).
$

\textbf{Case III}: $p \in [n], p' = n+1$. Recall that the general expression in this case is identical to Case II. So its contribution to the variance is also given by
$$
\E[ O \otimes O \sum_{p=1}^n T_{p, n+1} ]
= \frac{1}{2} \lp( p \bra{0} O^2 \ket{0} + q \bra{0} O^2 \ket{0} \rp).
$$

\textbf{Case IV}: 
$p \in [n], p' = [n]$.
Recall that in this case we have the following more general expression:
\begin{align*}
T_{p, p'}(\ket{a}, \ket{a'})
=     
\begin{cases}
U^{\otimes 2} 
\ket{a_{j,p} a_{j', p'} }
\bra{a'_{j,p} a'_{j', p'} }
\Und{2}
\pisymt
&\text{ if } a = a' \text{ or } \ket{a'} = \swapc( (1, p), (1, p') ) \ket{a}
\, , \\
0 &\text{ otherwise }.
\end{cases}
\end{align*}
Note that when $\lambda = (n)$ and $\ket{(\lambda, i, 0)}$ lies in the the same weight subspace as $\ket{0}^{\otimes p} \otimes \ket{1}^{\otimes q}$, $\ket{(\lambda, i, 0)}$, $\ket{(\lambda, i, 0)}$ has a particularly nice structure --- it is the uniform super position of all permutations of
$\ket{0}^{\otimes p} \otimes \ket{1}^{\otimes q}$.
This then implies that
$$
\sum_{p'=2}^n \sum_{p=1}^{p'-1} T_{p, p'}
=   
\frac{p(p-1)}{2} \lp( \ket{0} \bra{0} \rp)^{\otimes 2}
+ 
\frac{q(q-1)}{2} \lp( \ket{1} \bra{1} \rp)^{\otimes 2}
+ 2 \; p q \ket{0} \bra{0} \otimes \ket{1} \bra{1} \pisymt.
$$
Hence, the contribution to the variance is given by
\begin{align*}
&\E\lp[ O \otimes O \sum_{p'=2}^n \sum_{p=1}^{p'-1} T_{p, p'} \rp] \\
&= 
\frac{p(p-1)}{2}  \lp( \bra{0} O \ket{0} \rp)^2
+
\frac{q(q-1)}{2}  \lp( \bra{1} O \ket{1} \rp)^2
+ 
p q  \; \lp( \bra{0} O \ket{1} \rp)^2
+ p q \;  \bra{0} O \ket{0} \;  \bra{1} O \ket{1}.    
\end{align*}
Overall, we thus have
\begin{align*}
&\E[ \tr \lp(  \lp( O \otimes O \rp) \; \lp( \Psi \otimes \Psi \rp) \rp)] \\
&= \frac{n+d}{n+d+1} 
\bigg( 
\tr (O^2) + 2p \bra{0} O^2 \ket{0}
+ 2q \bra{1} O^2 \ket{1}
+ p^2  \lp( \bra{0} O \ket{0} \rp)^2
+ q^2 \lp( \bra{1} O \ket{1} \rp)^2 \\
&+ 2 p q  \; \lp( \bra{0} O \ket{1} \rp)^2
+ 2 p q \;  \bra{0} O \ket{0} \;  \bra{1} O \ket{1}
\bigg).
\end{align*}
By \Cref{lem:expectation}, we have
$$
\E[\Psi] = (I + p \ket{0}\bra{0} + q \ket{0} \bra{0}).
$$

Thus, we have
\begin{align*}
\Var[ O \Psi ]    
= 
&\frac{n+d}{n+d+1} 
\bigg( 
\tr (O^2) + 2p \bra{0} O^2 \ket{0}
+ 2q \bra{1} O^2 \ket{1}
+ \\
&2 p q  \; \lp( \bra{0} O \ket{1} \rp)^2
- p \lp( \bra{0} O \ket{0} \rp)^2
-  q \lp( \bra{1} O \ket{1} \rp)^2
\bigg) \\
- &\frac{1}{n+d+1}
\lp( 
p^2  \lp( \bra{0} O \ket{0} \rp)^2
+ q^2 \lp( \bra{1} O \ket{1} \rp)^2 
+ 2 p q \;  \bra{0} O \ket{0} \;  \bra{1} O \ket{1}
\rp).
\end{align*}
It is not hard to see that all negative terms are dominated by the positive term 
$\frac{n+d}{n+d+1} \; 2 p q  \; \lp( \bra{0} O \ket{1} \rp)^2 $ when $p, q = \Theta(n)$.  
Thus, if we set $O = \begin{bmatrix}
1 &1 & \cdots \\
1 &-1 & \cdots \\
\vdots &\vdots &\ddots
\end{bmatrix}$, the variance will be 
bounded from below by $ \Theta(pq / n^2)  \geq \Omega(1)$.
\end{proof}

\end{document}